\numberwithin{equation}{section} \setlength{\textwidth} {15cm}
\def\beq{\begin{equation}}
\def\eeq{\end{equation}}
\def\bE{ {{\mathbb{E}}}}
\def\bC{ {{\mathbb{C}}}}
\def\bR{ {{\mathbb{R}}}}
\def\Tr{ {{\rm{Tr}}} }
\def\ad{ {{\rm{ad}}} }
\newcommand{\dB}{{\mathbb{B}}}
\newcommand{\sgn}{{\rm sgn}}
\newcommand{\pk}[1]{p_{\kappa}}
\newcommand{\lba}{\lceil}
\newcommand{\rba}{\rceil}
\newcommand{\lca}{\lfloor}
\newcommand{\rca}{\rfloor}
\newcommand\myfrac[2]{\genfrac{}{}{0pt}{}{#1}{#2}}
\def\fd{ {\mathfrak{d}} }
\newtheorem{defn}{{\bf Definition}}[section]
\newtheorem{thm}[defn]{{\bf Theorem}}
\newtheorem{cor}[defn]{{\bf Corollary}}
\newtheorem{lem}[defn]{{\bf Lemma}}
\newtheorem{prop}[defn]{{\bf Proposition}}
\newtheorem{rem}[defn]{{\bf Remark}}
\newtheorem{example}[defn]{Example}
\newtheorem{notation}[defn]{Notation}
\newenvironment{proof}[1][Proof]{\textbf{#1.} }{\hfill \rule{0.5em}{0.5em}}
\begin{document}

\title{Positive mass gap of quantum Yang-Mills Fields}
\author{Adrian P. C. Lim \\
Email: ppcube@gmail.com
}

\date{}

\maketitle

\begin{abstract}
We construct a 4-dimensional quantum field theory on a Hilbert space, dependent on a simple Lie Algebra of a compact Lie group, that satisfies Wightman's axioms. This Hilbert space can be written as a countable sum of non-separable Hilbert spaces, each indexed by a non-trivial, inequivalent irreducible representation of the Lie Algebra.

In each component Hilbert space, a state is given by a triple, a space-like rectangular surface $S$ in $\mathbb{R}^4$, a measurable section of the Lie Algebra bundle over this surface $S$, represented irreducibly as a matrix, and a Minkowski frame. The inner product is associated with the area of the surface $S$.

In our previous work, we constructed a Yang-Mills measure for a compact semi-simple gauge group. We will use a Yang-Mills path integral to quantize the momentum and energy in this theory. During the quantization process, renormalization techniques and asymptotic freedom will be used. Each component Hilbert space is the eigenspace for the momentum operator and Hamiltonian, and the corresponding Hamiltonian eigenvalue is given by the quadratic Casimir operator. The eigenvalue of the corresponding momentum operator will be shown to be strictly less than the eigenvalue of the Hamiltonian, hence showing the existence of a positive mass gap in each component Hilbert space. We will further show that the infimum of the set containing positive mass gaps, each indexed by an irreducible representation, is strictly positive.

In the last section, we will show how the positive mass gap will imply the Clustering Theorem.
\end{abstract}

\hspace{.35cm}{\small {\bf MSC} 2020: 81T13, 81T08, 81T70} \\
\indent \hspace{.35cm}{\small {\bf Keywords}: Mass gap, Yang-Mills, Wightman's axioms, compact simple Lie group,  \\
\indent \hspace{2.4cm} renormalization, asymptotic freedom, clustering, space-like surface, \\
\indent \hspace{2.4cm} time-like surface, Casimir operator, Lorentz transformation, ${\rm SL}(2,\bC)$}, \\
\indent \hspace{2.4cm} Callan-Symanzik Equation, path integral, spinor representation



\tableofcontents

\section{Preliminaries}\label{s.pre}

Let $M$ be a 4-manifold, with $\Lambda^q(T^\ast M)$ being the $q$-th exterior power of the cotangent bundle over the manifold $M$. Fix a Riemannian metric $g$ on $M$ and this in turn defines an inner product $\langle \cdot, \cdot \rangle_q$ on $\Lambda^q(T^\ast M)$, for which we can define a volume form $d\omega$ on $M$. This allows us to define a Hodge star operator $\ast$ acting on $k$-forms, $\ast: \Lambda^k(T^\ast M) \rightarrow \Lambda^{4-k}(T^\ast M)$ such that for $u, v \in \Lambda^k(T^\ast M)$, we have \beq u \wedge \ast v = \langle u, v \rangle_k\ d\omega. \label{e.x.7} \eeq  An inner product on the set of smooth sections $\Gamma(\Lambda^k(T^\ast M))$ is then defined as \beq \langle u, v \rangle = \int_M u \wedge \ast v = \int_M \langle u, v \rangle_k\ d\omega. \label{e.x.7a} \eeq See \cite{MR1312606}.

Introduce a compact and simple gauge group $G$. Without loss of generality, we will assume that $G$ is a Lie subgroup of ${\rm U}(\bar{N})$, $\bar{N} \in \mathbb{N}$. We will identify the (real) Lie Algebra $\mathfrak{g}$ of $G$ with a Lie subalgebra of the Lie Algebra $\mathfrak{u}(\bar{N})$ of ${\rm U}(\bar{N})$ throughout this article. Suppose we write the trace as $\Tr_{{\rm Mat}(\bar{N}, \mathbb{C})}$, which we will abbreviate as $\Tr$ in future. Then we can define a positive, non-degenerate bilinear form by \beq \langle A, B \rangle = -\Tr_{{\rm Mat}(\bar{N}, \mathbb{C})}[AB] \label{e.i.2} \eeq for $A,B \in \mathfrak{g}$. Its Lie bracket will be denoted by $[A, B]\equiv \ad(A)B$.

Let $P \rightarrow M$ be some trivial vector bundle, with structure group $G$. The vector space of all smooth $\mathfrak{g}$-valued 1-forms on the manifold $M$ will be denoted by $\mathcal{A}_{M, \mathfrak{g}}$.
Denote the group of all smooth $G$-valued mappings on $M$ by $\mathcal{G}$, called the gauge group. The gauge group induces a gauge transformation on $\mathcal{A}_{M, \mathfrak{g}}$,  $\mathcal{A}_{M, \mathfrak{g}} \times \mathcal{G} \rightarrow \mathcal{A}_{M, \mathfrak{g}}$ given by \beq A \cdot \Omega := A^{\Omega} = \Omega^{-1}d\Omega + \Omega^{-1}A\Omega \nonumber \eeq for $A \in \mathcal{A}_{M, \mathfrak{g}}$, $\Omega \in \mathcal{G}$. The orbit of an element $A \in \mathcal{A}_{M, \mathfrak{g}}$ under this operation will be denoted by $[A]$ and the set of all orbits by $\mathcal{A}_{M, \mathfrak{g}}/\mathcal{G}$.

For $A \in \mathcal{A}_{M, \mathfrak{g}}$, the curvature $dA + A \wedge A$ is a smooth $\mathfrak{g}$-valued 2-form on $M$, whereby $dA$ is the differential of $A$ and $A \wedge A$ is computed using the Lie Bracket of $\mathfrak{g}$ and the wedge product on $\Lambda^1(T^\ast M)$, at each fibre of the tensor bundle $\Lambda^1(T^\ast M) \otimes (M \times \mathfrak{g}  \rightarrow M)$. The Yang-Mills Lagrangian is given by \beq S_{{\rm YM}}(A) = \int_{M}  \left|dA + A \wedge A \right|^2\ d\omega. \nonumber \eeq Here, the induced norm $|\cdot|$ is defined from the tensor product of $\langle \cdot, \cdot \rangle_2$ and the inner product on $\mathfrak{g}$, computed on each fiber of the bundle $\Lambda^2(T^\ast M) \otimes (M \times \mathfrak{g} \rightarrow M)$. The integral over $M$ is then defined using Equation (\ref{e.x.7a}). Note that this Lagrangian is invariant under gauge transformations.

The 4-manifold we will consider in this article is $\bR \times \bR^3\equiv \bR^4$, with tangent bundle $T\bR^4$. Note that $\bR$ will be referred to as the time-axis and $\bR^3$ is the spatial 3-dimensional Euclidean space. We will fix the coordinate axes with corresponding coordinates $\vec{x} \equiv (x^0, x^1, x^2, x^3)^T$, $x^0$ is the time coordinate and spatial coordinates $(x^1, x^2, x^3)^T$, hence defining an orthonormal basis $\{e_a\}_{a=0}^3$ on $\bR^4 \equiv \bR \times \bR^3$. We will also choose the standard Riemannian metric (Euclidean metric) on $T\mathbb{R}^4$, denoted as $\langle \cdot, \cdot \rangle$.

Let $\Lambda^q(\bR^4)$ denote the fiber of the $q$-th exterior power of the cotangent bundle over $\bR^4$, and we choose the canonical basis
$\{dx^0, dx^1, dx^2, dx^3\}$ for $\Lambda^1(\bR^4)$. Let $\Lambda^1(\bR^3)$ denote the subspace in $\Lambda^1(\bR^4)$ spanned by $\{dx^1, dx^2, dx^3\}$. There is an obvious inner product defined on $\Lambda^1(\bR^4)$, i.e. $\langle dx^a, dx^b \rangle = 0$ if $a \neq b$, 1 otherwise. Finally, a basis for $\Lambda^2(\bR^4)$ is given by \beq \{dx^0 \wedge dx^1, dx^0 \wedge dx^2, dx^0 \wedge dx^3, dx^1 \wedge dx^2, dx^3 \wedge dx^1, dx^2\wedge dx^3 \}. \nonumber \eeq

Using the volume form $d\omega = dx^0 \wedge dx^1\wedge dx^2 \wedge dx^3$, the Hodge star operator $\ast$ is a linear isomorphism between $\Lambda^2(\bR^4)$ and $\Lambda^2(\bR^4)$, i.e.
\begin{align*}
\ast(dx^0 \wedge dx^1) = dx^2 \wedge dx^3,\quad \ast(dx^0 \wedge dx^2) = dx^3 \wedge dx^1,\quad  \ast(dx^0 \wedge dx^3) = dx^1\wedge dx^2, \\
\ast(dx^2 \wedge dx^3) = dx^0 \wedge dx^1,\quad \ast(dx^3 \wedge dx^1) = dx^0 \wedge dx^2,\quad  \ast(dx^1 \wedge dx^2) = dx^0\wedge dx^3.
\end{align*}

We adopt Einstein's summation convention, i.e. we sum over repeated superscripts and subscripts. We set the speed of light $\mathbf{c} = 1$. We can define the Minkowski metric, given by \beq \vec{x}\cdot \vec{y} = -x^0y^0 + \sum_{i=1}^3 x^iy^i. \label{e.inn.4} \eeq Note that our Minkowski metric is negative of the one used by physicists. A vector $\vec{x}$ is time-like (space-like) if $\vec{x} \cdot \vec{x} < 0$ ($\vec{x} \cdot \vec{x} > 0$). It is null if $\vec{x} \cdot \vec{x} = 0$. When $\vec{x}$ and $\vec{y}$ are space-like separated, it means that \beq -(x^0-y^0)^2 + \sum_{i=1}^3(x^i-y^i)^2 \geq 0. \nonumber \eeq

A Lorentz transformation $\Lambda$ is a linear transformation mapping space-time $\bR^4$ onto itself, which preserves the Minkowski metric given in Equation (\ref{e.inn.4}). Indeed, the Lorentz transformations form a group, referred to as Lorentz group $L$. It has 4 components, and we will call the component containing the identity, as the restricted Lorentz group, denoted $L_+^\uparrow$.

\subsection{Why should one read this article}

At the time of this writing, an online search will reveal that several authors have attempted to solve the Yang-Mills mass gap problem, which is one of the millennium problems, as described in \cite{jaffe}. The problem is to construct a Hilbert space satisfying Wightman's axioms for a compact, simple Yang-Mills gauge theory in 4-dimensional Euclidean space. See \cite{streater} for a complete description of the axioms. The axioms are also stated in \cite{glimm1981}. Furthermore, this theory has a minimum positive mass gap. This means that besides the zero eigenvalue, the Hamiltonian has a minimum positive eigenvalue. The momentum operator is also a non-negative operator, and the difference between the squares of their eigenvalues, is known as the mass gap squared. Despite all these attempts, no solution has been widely accepted by the scientific community. So, why should one spend time reading this article?

It is known that the set of inequivalent, non-trivial, irreducible representation $\{\rho_n: n\in \mathbb{N}\}$ of a simple Lie Algebra $\mathfrak{g}$ is indexed by highest weights, hence countable. See \cite{hall2015}. Using all the inequivalent, non-trivial, irreducible  representations of $\mathfrak{g}$, we can construct a Hilbert space \beq \mathbb{H}_{{\rm YM}}(\mathfrak{g}) = \{1\} \oplus \bigoplus_{n \geq 1}\mathscr{H}(\rho_n), \nonumber \eeq for which Wightman's axioms are satisfied. The vacuum state will be denoted by 1 and $\{1\}$ will denote its linear span. The Hilbert space $\mathscr{H}(\rho_n)$ is defined using the non-trivial irreducible representation $\rho_n$.

Experiments have shown that quantum fields are highly singular. Hence, we need to smear the field, which we will represent it as a $\rho(\mathfrak{g})$-valued vector field, over a rectangular surface. This will define a state in $\mathscr{H}(\rho)$.

Each state in $\mathscr{H}(\rho)$ is described by a space-like rectangular surface $S$ equipped with a Minkowski frame, and a measurable section of $S \times [\rho(\mathfrak{g}) \otimes \bC] \rightarrow S$ defined on it. See Definition \ref{d.a.1}. This is where the geometry comes in, as surfaces have a well-defined physical quantity, which is the area. Incidentally, when we compute the average flux of non-abelian Yang-Mills gauge fields through a surface in \cite{YM-Lim02}, we will obtain a formula for the area of the surface. Further justification for its connection with the Yang-Mills action will be given in Remark \ref{r.f.1}.

But area is not invariant under the action of the Poincare group. To define an unitary action by the Poincare group, we will consider time to be purely imaginary and hence define a physical quantity on the surface $S$, associated with the area of the surface. See Definition \ref{d.r.2}.

We can now allow a test function to act on this `smeared' field. This test function will be represented as a field operator, to be defined later in Section \ref{s.qfo}, and it acts on a dense subset in the Hilbert space $\mathbb{H}_{{\rm YM}}(\mathfrak{g})$, as required in Wightman axioms.

A non-abelian Yang-Mills measure was constructed in \cite{YM-Lim02}. Using a Yang-Mills path integral, we will proceed in Section \ref{s.iop} to quantize momentum, which will yield its quantum eigenvalues, associated with the quadratic Casimir operator. A path integral approach to quantize the Yang-Mills theory was described in \cite{Skript2}. During the quantization process, we will use renormalization techniques and asympotic freedom. Incidentally, $\mathscr{H}(\rho_n)$ is the eigenspace for both the Hamiltonian and quantized momentum operator. We will now list down the following reasons why we think this is a correct approach to prove the existence of a mass gap.

The quadratic Casimir operator is dependent on the non-trivial irreducible representation of the simple Lie Algebra $\mathfrak{g}$. It is a positive operator and proportional to the identity. If our gauge group is ${\rm SU}(2)$, then the quadratic Casimir operator is given by $j(j+1)$, for a given representation $\rho: \mathfrak{su}(2) \rightarrow {\rm End}(\bC^{2j+1})$, $j$ is a non-negative half integer or integer. Its square root is well known to be the eigenvalue of total momentum in quantum mechanics.

The square of the Hamiltonian will be defined later to be proportional to the dimension of the representation times the quadratic Casimir operator. Hence, we will correlate the Casimir operator directly with the energy levels squared. Large values of the Casimir operator mean high energy levels. The momentum eigenvalues will be computed via a path integral. The set containing the eigenvalues will be discrete, and because the eigenvalues of both operators go to infinity, both are unbounded. See subsection \ref{ss.pt}. By comparing the eigenvalues of the squares of the Hamiltonian and the quantized momentum operator, we will prove the existence of a mass gap. See subsection \ref{ss.mg}. Note that it is not enough to just show that the Hamiltonian has a strictly positive minimum eigenvalue, besides the zero eigenvalue. Our construction will also show that the the vacuum state is an eigenstate of the Hamiltonian and quantized momentum operator with eigenvalue 0 respectively, implying the vacuum state is massless.

A successful quantum Yang-Mills theory should explain the following:
\begin{description}
  \item[1.] There is a mass gap, which will explain why the strong force is short range.
  \item[2.] The theory should incorporate asymptotic freedom, i.e. at high energies and short distances, the theory is like a free theory.
\end{description}

In the case of gauge group ${\rm SU}(3)$, which describes the strong force, it must also explain the following:
\begin{description}
  \item[3.] The theory should demonstrate quark confinement, i.e. the potential between a quark and anti-quark grows linearly;
  \item[4.] The theory should incorporate chiral symmetry breaking, which means that the vacuum is potentially invariant only under a certain subgroup of the full symmetry group that acts on the quark fields.
\end{description}

Item 2 was proved in a prequel \cite{YM-Lim02}, but we only showed asymptotic freedom in the context for short distances. In that prequel, we derived the Wilson Area Law formula of a time-like surface $S$ using a (non-abelian) Yang-Mills path integral. This will show quark confinement and that the potential energy between quarks grows linearly, which is Item 3. We also like to mention that the Area Law formula does not hold in the abelian gauge group case, as was shown in \cite{YM-Lim01}.

In this article, our main goal will be to prove Item 1, and also show that asymptotic freedom holds at high energies. Item 4 can be demonstrated in Example \ref{ex.f.1}. If one assumes that asymptotic freedom holds for non-abelian simple compact gauge group, then a positive mass gap is implied, which we will furnish the details later in this article. On pages 541-543 in \cite{peskin1995}, and also in \cite{APP1977231}, the authors gave a qualitative explanation of asymptotic freedom, using gauge group ${\rm SU}(2)$. See also \cite{PhysRevLett31851}.

The weak interaction is described by a ${\rm SU}(2)$ gauge theory. An unification of the weak and electromagnetic interaction is described by the electroweak theory, which is ${\rm SU}(2) \times {\rm U}(1)$ gauge theory, first put forth by Sheldon Glashow in 1961, then completed later by Abdus Salam and Steven Weinberg in 1967. Mathematically formulated as Yang-Mills fields, the gauge bosons in this theory have to be massless. But experimentally, it was shown that the gauge bosons responsible for weak interaction are massive, hence short range; whereas the photons, responsible for electromagnetic interaction, are massless, hence long range. To resolve this issue, the Higgs mechanism was introduced. See page 330 in \cite{grif2008}. But to date, there is no experimental evidence to suggest that gluons have any physical mass, even though strong interaction is short range.

A successful quantum Yang-Mills theory that satisfies Wightman's axioms and exhibits a positive mass gap, will then imply and explain the short range nature of the weak and strong interaction, without assuming the existence of a physical non-zero mass of these gauge bosons. This will be mathematically formulated later as the Clustering Theorem in Section \ref{s.cluster}.

This article is focused on a construction of a 4-dimensional Yang-Mills quantum field theory that satisfies Wightman's axioms, without assuming the existence of a positive mass. The proof that a positive mass gap exists, requires a construction of a Yang-Mills path integral, which is detailed in \cite{YM-Lim01, YM-Lim02}.  An abelian Yang-Mills path integral was constructed in \cite{YM-Lim01}. The former details the construction of an infinite dimensional Gaussian probability space using Abstract Wiener space formalism, developed by Gross. See \cite{MR0461643}. The construction of an abelian Yang-Mills path integral will then allow us to construct a non-abelian Yang-Mills path integral in \cite{YM-Lim02}. The latter is instrumental in proving the mass gap.

These two prequals are technical in nature. By removing all the technical aspects of the construction, we hope that this article, will be more palatable to both physicists and mathematicians, who only want to understand a construction of a 4-dimensional quantum field, satisfying Wightman's axioms.

To further convince the reader that the construction of a 4-dimensional quantum field theory is correct, we will proceed to prove the Clustering Theorem, that will imply that the vacuum expectation has an exponential decay along a space-like separation, which implies that the force represented by the Lie Algebra $\mathfrak{g}$ is short ranged in nature. Because our construction of a 4-dimensional quantum theory for Yang-Mills fields satisfies a modified version of Wightman's axioms, we will give an alternative proof of the Clustering Theorem, that fits into our context.

\section{A Description of Quantum Hilbert space}\label{s.bdhs}

To construct our Hilbert space, we need a compact Lie group $G$, with its real simple Lie Algebra $\mathfrak{g}$, of which $\{E^\alpha\}_{\alpha=1}^N$ is an orthonormal basis using the inner product defined in Equation (\ref{e.i.2}), fixed throughout this article.

Extend the inner product defined in Equation (\ref{e.i.2}) to be a sesquilinear complex inner product, over the complexification of $\mathfrak{g}$, denoted as $\mathfrak{g}_{\bC} \equiv \mathfrak{g} \otimes_{\bR} \bC$. Hence, it is linear in the first variable, conjugate linear in the second.

A finite dimensional representation of $\mathfrak{g}$ is a Lie Algebra homomorphism $\rho$ of $\mathfrak{g}$ into ${\rm End}(\bC^{\tilde{N}})$ or into ${\rm End}(V)$ for some complex vector space $V$. All our representations will be considered to be non-trivial and irreducible. The dimension of the representation is given by $\tilde{N}$. Because $G$ is a compact Lie group, every finite dimensional representation of $G$ is equivalent to a unitary representation. See Theorems 9.4 and 9.5 in \cite{hall2015}.

Hence, we can always assume that our Lie Algebra representation $\rho: \mathfrak{g} \rightarrow {\rm End}(\bC^{\tilde{N}})$ is represented as skew-Hermitian matrices. Thus, the eigenvalues of $\rho(E)$ will be purely imaginary, $E \in \mathfrak{g}$.

Suppose each non-trivial irreducible representation $\rho_n: \mathfrak{g} \rightarrow {\rm End}(\bC^{\tilde{N}_n})$ is indexed by $n$, whereby $n \in \mathbb{N}$ and no two representations are equivalent. We will order them as described in subsection \ref{ss.mg}.

Introduce state 1, which will be referred to as the vacuum state, and $\{1\}$ refers to the linear span of $1$, over the complex numbers. Let $\mathscr{H}(\rho_0) := \{1\} $, which is an one dimensional complex inner product space, with a complex sesquilinear inner product defined by $\langle 1, 1 \rangle = 1$.

Define \beq \mathbb{H}_{{\rm YM}}(\mathfrak{g}) := \bigoplus_{n=0}^\infty \mathscr{H}(\rho_n), \label{e.h.2} \eeq whereby $\rho_n : \mathfrak{g} \rightarrow {\rm End}(\bC^{\tilde{N}_n})$, $n \geq 1$.  The inner product defined on this direct sum is given by \beq \left\langle \sum_{n=0}^\infty v_n, \sum_{n=0}^\infty u_n \right\rangle := \sum_{n=0}^\infty \langle v_n, u_n \rangle, \nonumber \eeq whereby $\langle v_n, u_n \rangle$ is the inner product defined on $\mathscr{H}(\rho_n)$.

We will now give the full description of each Hilbert space $\mathscr{H}(\rho)$, $\rho$ is an irreducible representation. Later, we will see that $\mathscr{H}(\rho)$ is an eigenspace for the momentum operator and Hamiltonian.

\subsection{Time-like and space-like surfaces}\label{ss.tsl}

\begin{notation}
We will let $I = [0,1]$ be the unit interval, and $I^2 \equiv I \times I$. Denote $\hat{s} = (s,\bar{s}),\ \hat{t} = (t,\bar{t})\in I^2$, $d\hat{s} \equiv ds d\bar{s}$, $d\hat{t} \equiv dt d\bar{t}$. Typically, $s, \bar{s}, t, \bar{t}$ will be reserved as the variable for some parametrization, i.e. $\rho: s \in I \mapsto \rho(s) \in \bR^4$.
\end{notation}

\begin{defn}\label{d.ts.1}(Time-like and space-like)\\
Let $S$ be a bounded rectangular surface in $\bR^4$, contained in some plane. By rotating the spatial axes if necessary, we may assume without any loss of generality, a parametrization of $S$ is given by \beq \left\{(a^0 + sb^0, a^1 , a^2 + sb^2, a^3 + tb^3 )^T \in \bR^4:\ s, t \in I\right\}, \label{e.d.2} \eeq for constants $a^\alpha, b^\alpha \in \bR$. Now, the surface $S$ is spanned by two directional vectors $(b^0, 0, b^2, 0)^T$ and $(0,0, 0, b^3)^T$. Note that $(b^0, 0, b^2, 0)^T$ lie in the $x^0-x^2$ plane and is orthogonal to $(0,0, 0, b^3)^T$.

We say a rectangular surface is space-like, if $|b^0| < |b^2|$, i.e. the acute angle which the vector $(b^0, 0, b^2, 0)^T$ makes with the $x^2$-axis in the $x^0-x^2$-plane is less than $\pi/4$.

We say a rectangular surface is time-like, if $|b^0| > |b^2|$, i.e. the acute angle which the vector $(b^0, 0, b^2, 0)^T$ makes with the $x^0$-axis in the $x^0-x^2$-plane is less than $\pi/4$.

Let $S$ be a rectangular surface in $\bR^4$ contained in some plane, and $TS$ denote the set of directional vectors that lie inside $S$.

Write $\vec{v} = (v^0, v) \equiv (v^0, v^1, v^2, v^3)^T \in TS$, and define $|v|^2 = v^{1,2} + v^{2,2} + v^{3,2}$. An equivalent way to say that $S$ is time-like is \beq \inf_{\vec{0} \neq \vec{v} \in TS} \frac{|v|^2}{v^{0,2}} < 1. \label{e.inf.1} \eeq

And we say that $S$ is space-like if \beq \inf_{\vec{0} \neq \vec{v} \in TS} \frac{|v|^2}{v^{0,2}} > 1. \label{e.inf.2} \eeq
\end{defn}

\begin{rem}
\begin{enumerate}
  \item By definition, a time-like surface must contain a time-like directional vector in it. Since under Lorentz transformation, a time-like vector remains time-like, we see that a time-like surface remains time-like under Lorentz transformation. Similarly, a surface is space-like means all its directional vectors in the surface are space-like. Under Lorentz transformation, all its directional vectors spanning $S$ remain space-like, hence a space-like surface remains space-like under Lorentz transformation.
  \item By a boost, any space-like rectangular surface contained in a plane can be transformed into a surface lying strictly inside $\{c\} \times \bR^3$, for some constant $c$.
  \item Recall $e_0$ spans the time-axis. By a boost, any time-like rectangular surface contained in a plane can be transformed into a surface which is spanned by $e_0$ and an orthogonal directional vector $v \in \bR^3$.
  \item Any two distinct points on a space-like surface is space-like separated; but two distinct points on a time-like surface may not be time-like separated.
\end{enumerate}
\end{rem}

When we say surface $S$ in this article, we mean it is a countable, disjoint union of rectangular surfaces in $\bR^4$, which are space-like, containing none, some or all of its boundary points.

\begin{defn}(Surface)\label{d.sur.1}\\
Any surface $S \equiv \{S_u\}_{u\geq 1} \subset \bR^4$ satisfies the following conditions:
\begin{itemize}
  \item each component $S_u$ is a space-like rectangular surface contained in some plane;
  \item each connected component $S_u$ may contain none, some or all of its boundary;
  \item $S_u \cap S_v = \emptyset$ if $u \neq v$;
  \item $S_u$ is contained in some bounded set in $\bR^4$.
\end{itemize}
\end{defn}

\begin{defn}\label{d.a.1}
Let $S_0$ be a compact rectangular space-like surface inside the $x^2-x^3$ plane. From Equation (\ref{e.d.2}), we see that any rectangular space-like surface $S$ contained in a plane, can be transformed to $S_0$ by Lorentz transformations and translation.

Recall $\{e_a\}_{a=0}^3$ is an orthonormal basis on $\bR^4$. We say that $\{\hat{f}_a\}_{a=0}^3$ is a Minkowski frame for a compact space-like surface $S$ contained in some plane, if there exists a sequence of Lorentz transformations $\Lambda_1, \cdots, \Lambda_n$ and a translation by $\vec{a} \in \bR^4$, such that
\begin{itemize}
  \item $S = \Lambda_n \cdots \Lambda_1 S_0 + \vec{a}$;
  \item $\hat{f}_a = \Lambda_n \cdots \Lambda_1 e_a \in \bR^4$, $a=0, \cdots, 3$.
\end{itemize}
\end{defn}

\begin{rem}
Observe that $\hat{f}_0$ is time-like and for $i=1,2,3$, $\hat{f}_i$ is space-like, satisfying the following properties:
\begin{itemize}
  \item $\hat{f}_a \cdot \hat{f}_b = 0$ if $a \neq b$; and
  \item $\hat{f}_0\cdot \hat{f}_0 = -1$ and $\hat{f}_i \cdot \hat{f}_i = 1$.
\end{itemize}
Note that $\{\hat{f}_2, \hat{f}_3\}$ spans $S$. Later we will let $S^\flat$ be a time-like plane, spanned by $\{\hat{f}_0, \hat{f}_1\}$. Clearly, $\{\hat{f}_a\}_{a=0}^3$ is a basis on $\bR^4$.
\end{rem}

Each component Hilbert space $\mathscr{H}(\rho)$ will consists of vectors of the form \newline $\sum_{u=1}^\infty\left(S_u, f_\alpha^u \otimes \rho(E^\alpha), \{\hat{f}_a^u\}_{a=0}^3\right)$, whereby $S= \bigcup_{u=1}^\infty S_u$ is some countable union of compact, rectangular surfaces in $\bR^4$, $f_\alpha^u$ will be some (measurable) complex-valued function, which is defined on the surface $S_u$. And $\{\hat{f}_a^u\}_{a=0}^3$ is a Minkowski frame for each compact rectangular surface $S_u$ contained in a plane as described in Definition \ref{d.a.1}.

Let $\sigma: [0,1] \times [0,1] \rightarrow \bR^4$ be a parametrization of a compact $S$. The complex-valued function $f_\alpha$ is measurable on $S$, if $f_\alpha \circ \sigma: [0,1]^2 \rightarrow \bC$ is measurable.

Each of these surfaces is assumed to be space-like, as defined above. We sum over repeated index $\alpha$, from $\alpha=1$ to $N$. One should think of $f_\alpha^u \otimes \rho(E^\alpha)$ as a section of the vector bundle $S_u \times \rho(\mathfrak{g})_\bC \rightarrow S_u$, defined over the surface $S_u$. In terms of the parametrization $\sigma$, the section at $\sigma(\hat{s})$ is given by $f_\alpha^u(\sigma(\hat{s}))\otimes \rho(E^\alpha)$.

\begin{rem}
Another way to view this vector $\left(S_u, f_\alpha^u \otimes \rho(E^\alpha), \{\hat{f}_a\}_{a=0}^3\right)$ is via taking its Fourier Transform into energy-momentum space. We will refer the reader to Section \ref{s.cluster}.
\end{rem}

Let $S$ and $\tilde{S}$ be rectangular space-like surfaces contained in a plane. Given complex scalars $\lambda$ and $\mu$, we define the addition and scalar multiplication as
\begin{align}
\lambda&\left(S, f_\alpha \otimes \rho(E^\alpha), \{\hat{f}_a\}_{a=0}^3\right) + \mu\left(\tilde{S}, g_\alpha \otimes \rho(E^\alpha), \{\hat{f}_a\}_{a=0}^3 \right) \nonumber \\
&:= \left(S \cup \tilde{S}, \left(\lambda\tilde{f}_\alpha + \mu\tilde{g}_\alpha\right) \otimes \rho(E^\alpha), \{\hat{f}_a\}_{a=0}^3\right). \label{e.v.1}
\end{align}
Here, we extend $f_\alpha$ to be $\tilde{f}_\alpha: S \cup \tilde{S} \rightarrow \bC$ by $\tilde{f}_\alpha(p) = f_\alpha(p)$ if $p \in S$; $\tilde{f}(p) = 0$ otherwise. Similarly, $\tilde{g}_\alpha$ is an extension of $g_\alpha$, defined as $\tilde{g}_\alpha(p) = g_\alpha(p)$, if $p \in \tilde{S}$, $\tilde{g}_\alpha(p)=0$ otherwise.

\begin{rem}
For the above addition to hold, we require that the Minkowski frame $\{\hat{f}_a\}_{a=0}^3$ on $S$ and $\tilde{S}$ to be identical.
\end{rem}

Given 2 surfaces, $S$ and $\tilde{S}$, we need to take the intersection and union of these surfaces. Now, the union of these 2 surfaces can always be written as a disjoint union of connected sets, each such set is a space-like surface, containing none, some or all of its boundary points. However, the intersection may not be a surface. For example, the two surfaces may intersect to give a line. In such a case, we will take the intersection to be the empty set $\emptyset$.

Given a surface $S$, let $\sigma$ be any parametrization of $S$. We can define $\int_Sd\rho$ using this parametrization $\sigma$ as given in Definition \ref{d.r.1}. Now, replace $\sigma \equiv (\sigma_0, \sigma_1, \sigma_2, \sigma_3)^T$ with $\acute{\sigma} = (i\sigma_0, \sigma_1, \sigma_2, \sigma_3)^T$ and hence define $\int_S d|\acute{\rho}|$ as given in Definition \ref{d.r.2}. With this, define the following inner product on $\mathscr{H}(\rho)$.

\begin{defn}\label{d.inn.2}
Given a surface $S= \bigcup_{u=1}^n S_u$ equipped with a collection of frames $\{\hat{f}_a^u: a=0,\cdots, 3\}_{u\geq 1}$, and a set of bounded and continuous complex-valued functions $\{f_\alpha^u: \alpha=1, \cdots, N\}_{u \geq 1}$ for $u \geq 1$, defined on $S $, form a vector \\
$\sum_{u=1}^n \left(S_u, f_\alpha^u \otimes \rho(E^\alpha), \{\hat{f}_a^u\}_{a=0}^3 \right)$, also referred to as Yang-Mills field. Note that for each $u$, $f_\alpha^u \otimes \rho(E^\alpha)$ is a section of $S_u \times [\rho(\mathfrak{g})\otimes \bC] \rightarrow S_u$, with $S_u$ contained in some plane, equipped with a Minkowski frame $\{\hat{f}_a^u\}_{a=0}^3$.

Let $V$ be a (complex) vector space containing such vectors, with addition and scalar multiplication defined in Equation (\ref{e.v.1}). The zero vector can be written as $\left(S, 0, \{\hat{f}_a\}_{a=0}^3 \right)$ for any space-like rectangular surface $S$ contained in a plane, equipped with any suitable Minkowski frame $\{\hat{f}_a\}_{a=0}^3$.

Refer to Definition \ref{d.r.2}. Assume that $S$ and $\tilde{S}$ be space-like surfaces, contained in a plane. Define an inner product $\langle \cdot, \cdot \rangle$ for $\left(S, f_\alpha \otimes \rho(E^\alpha), \{\hat{f}_a\}_{a=0}^3 \right) \in V$, and $\left(\tilde{S}, g_\beta \otimes \rho(E^\beta), \{\hat{g}_a\}_{a=0}^3 \right) \in V$, given by
\begin{align}
&\left\langle \left(S, f_\alpha \otimes \rho(E^\alpha) , \{\hat{f}_a\}_{a=0}^3\right), \left(\tilde{S}, g_\beta \otimes \rho(E^\beta), \{\hat{g}_a\}_{a=0}^3 \right) \right\rangle \nonumber \\
&:=\int_{S \cap \tilde{S}} [f_\alpha \overline{g_\beta}] \cdot d|\acute{\rho}| \cdot \Tr[-\rho(E^\alpha) \rho(E^\beta)]\label{e.i.3} \\
&\equiv   \sum_{\alpha=1}^N C(\rho)\int_{I^2}  [f_{\alpha}\cdot\overline{g_{\alpha}}](\sigma(\hat{s}))\left|
\sum_{0\leq a < b \leq 3}\acute{\rho}_\sigma^{ab}(\hat{s})\left[\det \acute{J}_{ab}^\sigma(\hat{s})\right]\right| d\hat{s}, \nonumber
\end{align}
provided $\hat{f}_a = \hat{g}_a$, for $a=0, \ldots , 3$. Otherwise, it is defined as zero.
Here, $\sigma: I^2 \rightarrow \bR^4$ is a parametrization of $S \cap \tilde{S}$ and $C(\rho)$ is defined later in Notation \ref{n.co.1}.

Denote its norm by $|\cdot|$. Let $\mathscr{H}(\rho)$ denote the Hilbert space containing $V$.
\end{defn}

\begin{rem}\label{r.ym.1}
The quantity $\int_S d\rho$ was first derived in \cite{YM-Lim01}, which motivates the definition of this quantity $\int_S d|\acute{\rho}|$. It was obtained from computing an abelian Yang-Mills path integral, by computing the average square of the flux of a Yang-Mills gauge field, over a time-like or space-like surface, using an infinite-dimensional Gaussian measure.

This quantity can also be derived from the Wilson Area Law formula in \cite{YM-Lim02}, computed using a non-abelian gauge group. The reader will later observe that this quantity $\int_S d|\acute{\rho}|$, plays an important role in all the 4 Wightman's axioms.

Using this quantity is one reason why we call the states as Yang-Mills fields, as the above inner product has its origins from a Yang-Mills path integral. There is also a second reason for calling them Yang-Mills fields. Refer to Remark \ref{r.f.1}.
\end{rem}

\begin{prop}
The Hilbert space $\mathscr{H}(\rho)$ is non-separable.
\end{prop}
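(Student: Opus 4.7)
The strategy is to exhibit an uncountable orthonormal family in $\mathscr{H}(\rho)$, which precludes the existence of any countable dense subset. The crucial observation, built into Definition \ref{d.inn.2}, is that the inner product of two Yang-Mills field states is declared to vanish whenever their Minkowski frames differ. Hence states carrying distinct Minkowski frames are automatically orthogonal, regardless of the underlying surfaces or sections.

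First I would build a single reference vector of positive norm. Take $S_0$ to be the canonical compact rectangle $\{0\}\times\{0\}\times[0,1]\times[0,1]$ in the $x^2$-$x^3$ plane, equipped with the standard frame $\{e_a\}_{a=0}^3$, and pick $f_\alpha = \delta_{\alpha,1}$. Then $v_0 := (S_0, f_\alpha\otimes\rho(E^\alpha), \{e_a\}_{a=0}^3)$ has $|v_0|^2 > 0$ by Equation (\ref{e.i.3}): the Lie-algebraic factor $\Tr[-\rho(E^1)\rho(E^1)] = \langle E^1, E^1\rangle = 1$ is positive because $\rho$ is non-trivial with skew-Hermitian image, and for the natural parametrization $\sigma(s,\bar{s}) = (0,0,s,\bar{s})$ of $S_0$ only the $(a,b) = (2,3)$ summand in $\sum_{0\le a<b\le 3}\acute{\rho}_\sigma^{ab}\det \acute{J}_{ab}^\sigma$ survives and is non-vanishing. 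After rescaling, assume $|v_0|=1$.

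Next, for each $\Lambda$ in the restricted Lorentz group $L_+^\uparrow$, define
$$
v_\Lambda := \bigl(\Lambda S_0,\; f_\alpha\otimes\rho(E^\alpha),\; \{\Lambda e_a\}_{a=0}^3\bigr),
$$
which is a legitimate element of $\mathscr{H}(\rho)$ by Definition \ref{d.a.1}, since $\Lambda S_0$ remains a space-like rectangle and $\{\Lambda e_a\}$ is a bona fide Minkowski frame for it. Each $v_\Lambda$ has finite positive norm, so after rescaling I may assume $|v_\Lambda|=1$. If $\Lambda \neq \Lambda'$ in $L_+^\uparrow$, then $\Lambda^{-1}\Lambda'$ is a non-identity linear map and therefore cannot fix every basis vector $e_a$; hence the frames $\{\Lambda e_a\}$ and $\{\Lambda' e_a\}$ are distinct, and Definition \ref{d.inn.2} immediately yields $\langle v_\Lambda, v_{\Lambda'}\rangle = 0$. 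Because $L_+^\uparrow$ is a connected six-dimensional real Lie group, it is uncountable, so $\{v_\Lambda : \Lambda \in L_+^\uparrow\}$ is an uncountable orthonormal family in $\mathscr{H}(\rho)$, forcing non-separability.

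The one point requiring genuine care is the positivity step $|v_0|^2 > 0$; this reduces to checking that $\acute{\rho}_\sigma^{23}\det \acute{J}_{23}^\sigma$ does not vanish identically on $I^2$ for the natural parametrization of $S_0$, which should be immediate from the explicit formula for $\acute{\rho}$ referenced in Remark \ref{r.ym.1} and spelled out in Definition \ref{d.r.2}. Every other ingredient---the existence and Lorentz-covariance of Minkowski frames, the uniqueness $\Lambda \mapsto \{\Lambda e_a\}$, and the positivity of $\langle E^1, E^1 \rangle$---follows directly from the definitions already laid out in the preceding subsection.
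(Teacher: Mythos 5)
Your proposal is correct: like the paper, you exhibit an uncountable family of pairwise orthogonal, nonzero vectors, but the family and the orthogonality mechanism differ. The paper keeps the frame $\{e_a\}_{a=0}^3$ fixed and takes the translates $S_0+\vec{a}$, so orthogonality comes from the fact that distinct translates are disjoint and hence $S\cap\tilde{S}$ contributes zero area in Equation (\ref{e.i.3}); your family instead keeps the section fixed and varies the Minkowski frame over $L_+^\uparrow$, so orthogonality comes entirely from the clause in Definition \ref{d.inn.2} declaring the inner product zero when the frames disagree. Both are legitimate given that definition; the paper's version is slightly more robust in that it would survive even if that frame-mismatch convention were dropped, whereas yours leans on it, and the paper's also avoids any need to invoke Lorentz invariance of the area density (your normalization of $v_\Lambda$ implicitly uses Lemma \ref{l.b.5} or Lemma \ref{l.b.2}). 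One small slip: the Lie-algebraic factor in $|v_0|^2$ is $\Tr[-\rho(E^1)\rho(E^1)]=C(\rho)\,\Tr[-E^1E^1]=C(\rho)$, not $\langle E^1,E^1\rangle=1$; since $\rho$ is a non-trivial representation of a simple Lie algebra it is faithful and $\rho(E^1)$ is a nonzero skew-Hermitian matrix, so $C(\rho)>0$ and your positivity conclusion is unaffected.
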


\begin{proof}
Consider a compact rectangular surface $S_0$ contained in the $x^2-x^3$ plane, with $\{e_a\}_{a=0}^3$ as its Minkowski frame. Then, we see that \beq \left\{\left(S_0 + \vec{a}, \rho(E^\alpha), \{e_a\}_{a=0}^3\right):\ \vec{a} \in \mathbb{R} \right\} \nonumber \eeq is an uncountable set of orthogonal vectors in $\mathscr{H}(\rho)$, since \beq \left\langle \Big(S_0 + \vec{a}, \rho(E^\alpha), \{e_a\}_{a=0}^3\Big), \left(S_0 + \vec{b}, \rho(E^\alpha), \{e_a\}_{a=0}^3\right) \right\rangle = 0, \nonumber \eeq if $\vec{a} \neq \vec{b}$. Hence the Hilbert space is non-separable.
\end{proof}

\subsection{Unitary representation of inhomogeneous ${\rm SL}(2,\bC)$}\label{ss.ur}

Given a continuous group acting on $\bR^4$, we can consider its corresponding inhomogeneous group, whose elements are pairs consisting of a translation and a homogeneous transformation. For example, the Poincare group $\mathscr{P}$ containing the Lorentz group $L$, will have elements $\{\vec{a}, \Lambda\}$, where $\Lambda \in L$ and $\vec{a}$ will represent translation in the direction $\vec{a}$. The multiplication law for the Poincare group is given by \beq \{\vec{a}_1, A_1\}\{\vec{a}_2, A_2\} = \{\vec{a}_1 + A_1\vec{a}_2, A_1A_2\}. \nonumber \eeq
Associated with the restricted Lorentz group $L_+^\uparrow$ is the group of $2 \times 2$ complex matrices of determinant one, denoted by ${\rm SL}(2, \bC)$. There is an onto homomorphism $Y: {\rm SL}(2,\bC) \rightarrow L_+^\uparrow$. Thus, given $\Lambda \in {\rm SL}(2,\bC)$, $Y(\Lambda) \in L_+^\uparrow \subset L$. See \cite{streater}.

Instead of the Poincare group, we can consider the inhomogeneous ${\rm SL}(2,\bC)$ in its place, which we will also denote by ${\rm SL}(2, \bC)$, and use it to construct unitary representations. Its elements will consist of $\{\vec{a}, \Lambda\}$ and its multiplication law is given by \beq \{\vec{a}_1, \Lambda_1\}\{\vec{a}_2, \Lambda_2 \} = \{\vec{a}_1+Y(\Lambda_1)\vec{a}_2, \Lambda_1 \Lambda_2\}. \nonumber \eeq By abuse of notation, for any $\Lambda \in {\rm SL}(2,\bC)$, we will write $\Lambda \vec{a}$ to mean $\Lambda$ being represented as a $4 \times 4$ matrix, acting on $\vec{a} \in \bR^4$. This means we will write $\Lambda_1 \vec{a}_2 \equiv Y(\Lambda_1)\vec{a}_2$.

Given a vector $\vec{x} \in \bR^4$, $\{\vec{a}, \Lambda\}$ acts on $\vec{x}$ by $\vec{x} \mapsto \Lambda\vec{x} + \vec{a}$. By abuse of notation, for a surface $S$, $\{\vec{a}, \Lambda\}$ acts on $S$ by $S \mapsto \Lambda S + \vec{a}$, which means apply a Lorentz transformation $Y(\Lambda)$ to every position vector on the surface $S$, followed by translation in the direction $\vec{a}$. In terms of some parametrization $\sigma: I^2 \rightarrow \bR^4$ for $S$, the surface $\Lambda S + \vec{a}$ is parametrized by $\Lambda \sigma + \vec{a}: I^2 \rightarrow \bR^4$.

In general, the only finite dimensional unitary representation of ${\rm SL}(2, \bC)$ is the trivial representation. See Theorem 16.2 in \cite{knapp2016}. Thus, to construct a unitary representation, we must consider an infinite dimensional space. See \cite{MR1968551}.

\begin{defn}(Unitary Representation of the inhomogeneous ${\rm SL}(2,\bC)$)\\
Let $\hat{H}(\rho)$, $\hat{P}(\rho)$ be positive numbers, dependent on the representation $\rho$, to be defined later in Definition \ref{d.ma.1}. Let $\Lambda$ be in ${\rm SL}(2,\bC)$.

There is an unitary representation of the inhomogeneous ${\rm SL}(2,\bC)$, $\{\vec{a}, \Lambda\} \mapsto U(\vec{a}, \Lambda)$. Now, $U(\vec{a},\Lambda)$ acts on the Hilbert space $\mathscr{H}(\rho)$, \beq \left(S, f_\alpha \otimes \rho(E^\alpha), \{\hat{f}_a\}_{a=0}^3\right) \longmapsto U(\vec{a},\Lambda)\left(S, f_\alpha \otimes \rho(E^\alpha), \{\hat{f}_a\}_{a=0}^3\right), \nonumber \eeq by
\begin{align}
U&(\vec{a},\Lambda)\left(S, f_\alpha \otimes \rho(E^\alpha), \{\hat{f}_a\}_{a=0}^3\right) \nonumber \\
&:= \left(\Lambda S + \vec{a}, e^{-i[\vec{a} \cdot (\hat{H}(\rho_n)\Lambda\hat{f}_0  + \hat{P}(\rho_n)\Lambda\hat{f}_1 )]}
f_{\alpha}(\Lambda^{-1}(\cdot-\vec{a}))\otimes \rho(E^\alpha), \{\Lambda\hat{f}_a\}_{a=0}^3\right). \label{e.u.4}
\end{align}
Here, $S$ is a space-like surface contained in some plane.
\end{defn}

\begin{rem}
\begin{enumerate}
 \item Notice that $U(\vec{a}, \Lambda)$ acts trivially on $\rho(\mathfrak{g})$. In classical Yang-Mills equation, the fields over $\bR^4$, are $\mathfrak{g}$-valued. The Lie group $G$ describes the internal symmetry, on each fiber of the vector bundle. See \cite{GUT}. When we apply a Lorentz transformation, we expect that $G$ remains invariant under Lorentz transformation. After all, $G$ acts fiberwise on the vector bundle over $\bR^4$.
  \item Let us explain the formula on the RHS of Equation (\ref{e.u.4}). Suppose $\sigma: I^2 \rightarrow S$ is a parametrization for $S$. Then $\Lambda \sigma + \vec{a} \equiv Y(\Lambda)\sigma + \vec{a}$ will be a parametrization for $Y(\Lambda)S+ \vec{a}$. And, the field at the point $\vec{x} := Y(\Lambda)\sigma(\hat{s}) + \vec{a} \in Y(\Lambda) S + \vec{a}$, is given by
      \begin{align*}
      e^{-i[\vec{a}\cdot (\hat{H}(\rho_n)Y(\Lambda)\hat{f}_0 + \hat{P}(\rho_n)Y(\Lambda)\hat{f}_1)]}& f_\alpha[Y(\Lambda^{-1})(\vec{x}- \vec{a})] \otimes \rho(E^{\alpha}) \\
      &\equiv e^{-i[\vec{a}\cdot (\hat{H}(\rho_n)Y(\Lambda)\hat{f}_0 + \hat{P}(\rho_n)Y(\Lambda)\hat{f}_1)]} f_\alpha[\sigma(\hat{s})]\otimes \rho(E^\alpha).
      \end{align*}
  \item When there is no translation, the vector field over $Y(\Lambda) S$ is the pushforward of the vector field $f_\alpha \otimes \rho(E^\alpha)$ over $S$.
  \item When there is only translation, the unitary operator can be simplified to be
\begin{align*}
U(\vec{a},1)&\left(S, f_\alpha \otimes \rho(E^\alpha), \{\hat{f}_a\}_{a=0}^3\right)\\
\equiv& e^{-i[\vec{a} \cdot (\hat{H}(\rho_n)\hat{f}_0  + \hat{P}(\rho_n)\hat{f}_1 )]}\left( S + \vec{a}, f_\alpha(\cdot- \vec{a}) \otimes \rho\left(E^{\alpha}\right), \{\hat{f}_a\}_{a=0}^3\right),
\end{align*}
directly from Equation (\ref{e.u.4}).
\item In general, the directional derivative for $\left(S, f_\alpha \otimes \rho(E^\alpha), \{\hat{f}_a\}_{a=0}^3\right)$ does not exist for arbitrary direction $\vec{a}$. Thus, in the computation of the generators for translation in the $\hat{f}_0$ and $\hat{f}_1$ directions, the derivative does not appear. The author in \cite{hepph6330} also talked about problems when taking the derivative of a quantum field, at short distances. The differences in fields at different spatial points actually diverges as the separation gets smaller. The fields become infinitely rough at small distance scales and it means that it is impossible experimentally to probe the field at a single point.
\end{enumerate}
\end{rem}

It is straightforward to check that this map $\{\vec{a}, \Lambda\} \mapsto U(\vec{a}, \Lambda)$ is a representation, to be left to the reader.

\begin{lem}\label{l.b.2}
The map $U(\vec{a}, \Lambda)$ defined on $\mathscr{H}(\rho)$ is unitary, using the inner product $\langle \cdot, \cdot \rangle$ as defined in Definition \ref{d.inn.2}.
\end{lem}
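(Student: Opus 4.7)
The plan is to verify three things about $U(\vec{a},\Lambda)$: that it is linear on $\mathscr{H}(\rho)$, that it has a two-sided inverse, and that it preserves the inner product of Definition \ref{d.inn.2}. Linearity is immediate from the addition rule in Equation (\ref{e.v.1}), since pushing a surface forward, pulling a function back, and rotating a Minkowski frame all respect the component-wise structure of the vectors in $\mathscr{H}(\rho)$; and the scalar phase factor multiplies through as a common constant provided the frame is shared (which it is for summands whose frames match). Invertibility I would deduce from the representation property (to be checked separately, as the excerpt notes is routine) together with $U(\vec{0},\mathrm{id})=\mathrm{id}$, which gives $U(\vec{a},\Lambda)^{-1}=U(-Y(\Lambda)^{-1}\vec{a},\Lambda^{-1})$.

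The core task is the preservation of the inner product. Fix $v=(S,f_\alpha\otimes\rho(E^\alpha),\{\hat{f}_a\})$ and $w=(\tilde S,g_\beta\otimes\rho(E^\beta),\{\hat{g}_a\})$. First I would observe that the frames $\{\Lambda\hat{f}_a\}$ and $\{\Lambda\hat{g}_a\}$ coincide precisely when $\{\hat f_a\}=\{\hat g_a\}$, because $Y(\Lambda)$ is invertible, so the ``zero unless frames match'' clause on both sides matches up. In the matching case the two phase prefactors in $Uv$ and $Uw$ are literally equal, both being $\exp\!\bigl(-i\vec{a}\cdot(\hat{H}(\rho)\Lambda\hat f_0+\hat{P}(\rho)\Lambda\hat f_1)\bigr)$, so sesquilinearity makes them cancel. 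The trace factor $\Tr[-\rho(E^\alpha)\rho(E^\beta)]$ is untouched because $U$ acts trivially on $\rho(\mathfrak g)$, as the remark following the definition emphasises. Hence everything collapses to showing
\[
\int_{\Lambda(S\cap\tilde S)+\vec a}f_\alpha(\Lambda^{-1}(\cdot-\vec a))\,\overline{g_\alpha(\Lambda^{-1}(\cdot-\vec a))}\,d|\acute\rho|
\;=\;\int_{S\cap\tilde S}f_\alpha\,\overline{g_\alpha}\,d|\acute\rho|.
\]
Parametrising $S\cap\tilde S$ by $\sigma:I^2\to\bR^4$, the pushed-forward surface is parametrised by $\Lambda\sigma+\vec a$, and the pulled-back coefficients at parameter $\hat s$ are $f_\alpha(\sigma(\hat s))$ and $g_\alpha(\sigma(\hat s))$; so the integrand agrees pointwise in $\hat s$ and the question is purely about the measure factor $d|\acute\rho|$.

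The main obstacle is therefore showing that the geometric density
\[
\Bigl|\sum_{0\leq a<b\leq 3}\acute\rho^{ab}_\sigma(\hat s)\det\acute J^{\sigma}_{ab}(\hat s)\Bigr|\,d\hat s
\]
is invariant under replacing $\sigma$ by $\Lambda\sigma+\vec a$. Translation invariance is immediate because only derivatives of $\sigma$ enter. For the Lorentz part, I would expand $\det\acute J^{\Lambda\sigma}_{ab}$ by multilinearity of the determinant (equivalently by Cauchy--Binet on the two columns of partial derivatives of $\acute{\Lambda\sigma}$) as a linear combination of the minors $\det\acute J^{\sigma}_{cd}$ with coefficients extracted from $Y(\Lambda)$ acting on the Wick-rotated components, and then recognise the contraction $\sum_{a<b}\acute\rho^{ab}_{\Lambda\sigma}\det\acute J^{\Lambda\sigma}_{ab}$ as the same bivector paired with the transported $2$-form, so that the pairing matches $\sum_{c<d}\acute\rho^{cd}_\sigma\det\acute J^{\sigma}_{cd}$ up to a unit complex phase coming from the Wick rotation of the time component. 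Taking absolute values removes this phase. The delicate point is that $\Lambda$ is Minkowski-orthogonal, not Euclidean-orthogonal, so what is actually being used is that the integrand is an invariant of the Wick-rotated surface under the complexified rotation group induced by $Y(\Lambda)$; this invariance is the content of Remark \ref{r.ym.1} and is where the construction of $\int_S d|\acute\rho|$ from the Yang-Mills path integral pays off. Once this geometric invariance is in hand the equality of integrals is complete and $U(\vec a,\Lambda)$ is unitary.
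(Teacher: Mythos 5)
Your overall skeleton matches the paper's proof: reduce to the case of matching Minkowski frames (the mismatched case being trivially zero on both sides), cancel the two equal phase prefactors by sesquilinearity, note that $U$ acts trivially on $\rho(\mathfrak g)$ so the trace factor is untouched, change variables via $\Lambda\sigma+\vec a$, and reduce everything to the invariance of the density $d|\acute\rho|$ under translations and Lorentz transformations. The paper does exactly this and then closes the argument by citing Lemma \ref{l.b.5}.

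The gap is in the one step you yourself flag as delicate: your justification of the Lorentz invariance of $\bigl|\sum_{a<b}\acute\rho^{ab}_\sigma\det\acute J^{\sigma}_{ab}\bigr|\,d\hat s$ does not work as stated. The quantity $\acute\rho^{ab}_\sigma$ is not the component of a fixed $2$-form paired against the surface bivector; by Definition \ref{d.r.2} it is $\det\acute J^{\sigma}_{ab}\big/\sqrt{\det[\acute J_{ab}^{\sigma,T}\acute J^{\sigma}_{ab}+\acute J_{cd}^{\sigma,T}\acute J^{\sigma}_{cd}]}$, a nonlinear function of the Jacobian minors, so the sum $\sum_{a<b}\acute\rho^{ab}_\sigma\det\acute J^{\sigma}_{ab}$ is not a linear pairing that transforms covariantly under $Y(\Lambda)$, and a Cauchy--Binet expansion of the minors alone does not control it. Nor does Remark \ref{r.ym.1} supply the invariance --- that remark only explains the path-integral origin of $\int_S d|\acute\rho|$. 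The paper's actual argument (Lemma \ref{l.b.5}, Appendix A) is to compute the sum in closed form,
\begin{equation*}
\sum_{0\leq a<b\leq 3}\acute\rho^{ab}_{\sigma}\,[\det\acute J^{\sigma}_{ab}]
=\sqrt{[\sigma'\cdot\sigma'][\dot\sigma\cdot\dot\sigma]-[\sigma'\cdot\dot\sigma]^2},
\end{equation*}
with the dots denoting the Minkowski product, which is manifestly invariant under $\sigma\mapsto\Lambda\sigma+\vec a$ --- exactly, not merely up to a unit phase removed by the absolute value (cf.\ Remark \ref{r.a.3}). To make your proof complete you should either prove this identity or cite Lemma \ref{l.b.5} directly; the remaining parts of your proposal (linearity, invertibility via the representation property, phase cancellation, change of variables) are fine.
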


\begin{proof}
It suffices to show for $S$ and $\tilde{S}$, both contained in some space-like plane respectively. Let $\{\hat{f}_a\}_{a=0}^3$, $\{\hat{g}_a\}_{a=0}^3$ be as defined in Definition \ref{d.a.1} for $\Lambda S$ and $\Lambda\tilde{S}$ respectively.

By Definition \ref{d.inn.2}, it suffices to prove when $\Lambda S \cap \Lambda\tilde{S}$ has non-zero area. We only consider the case when $\hat{f}_a = \hat{g}_a$, $a=0, \cdots, 3$, since the result is trivial otherwise. Thus,
\begin{align*}
&\left\langle U(\vec{a}, \Lambda)\left(S, f_\alpha \otimes \rho(E^\alpha), \{\hat{f}_a\}_{a=0}^3\right), U(\vec{a}, \Lambda)\left(\tilde{S}, g_\beta \otimes \rho(E^\beta), \{\hat{f}_a\}_{a=0}^3 \right) \right\rangle \\
&:=\int_{[\Lambda S + \vec{a}] \cap [\Lambda \tilde{S} + \vec{a}]}\ d|\acute{\rho}|\  e^{-i[\vec{a} \cdot (\hat{H}(\rho_n)\Lambda \hat{f}_0 + \hat{P}(\rho_n)\Lambda \hat{f}_1)]}e^{i[\vec{a} \cdot (\hat{H}(\rho_n)\Lambda \hat{f}_0 + \hat{P}(\rho_n)\Lambda \hat{f}_1)]}\\
&\hspace{5.5cm} \times [f_\alpha \overline{g_\beta}](\Lambda^{-1}(\cdot-\vec{a})) \cdot \Tr[-\rho(E^\alpha) \rho(E^\beta)] \\
&=\int_{\Lambda (S \cap \tilde{S})+ \vec{a}} [f_\alpha \overline{g_\beta}](\Lambda^{-1}(\cdot-\vec{a})) \cdot d|\acute{\rho}| \cdot \Tr[-\rho(E^\alpha) \rho(E^\beta)] \\
&=\int_{S \cap  \tilde{S}} [f_\alpha \overline{g_\beta}](\cdot) \cdot d|\acute{\rho}| \cdot \Tr[-\rho(E^\alpha) \rho(E^\beta)],
\end{align*}
after applying Lemma \ref{l.b.5}.
\end{proof}

We have just described the Hilbert space $\mathscr{H}(\rho)$. Now let us focus on the vacuum state, which we denoted it as 1. Recall our Yang-Mills fields are described by a triple $\sum_{u=1}^\infty \left(S_u, f_\alpha^u \otimes \rho(E^\alpha), \{\hat{f}_a^u\}_{a=0}^3 \right)$, whereby $S_u$ is some non-empty surface.

\begin{defn}
The vacuum state 1 is synonymous with the empty set $\emptyset$, i.e. $1 \equiv (\emptyset)$. We define $\langle 1, 1 \rangle := 1$.
\end{defn}

\begin{rem}\label{r.v.1}
\begin{enumerate}
  \item On the empty set, it does not make sense to have a measurable function or Minkowski frame defined on it.
  \item Recall that $\{1\}$ is a one-dimensional subspace spanned by the vacuum state. The Lie Algebra $\mathfrak{g}$ acts trivially on $\{1\}$, via the trivial representation $\rho_0: \mathfrak{g} \rightarrow \bC$. Refer to Equation (\ref{e.h.2}).
\end{enumerate}
\end{rem}

Clearly the empty set is invariant under the Poincare action. Hence, it is invariant under $U(\vec{a}, \Lambda)$.

Given a Schwartz function $g$ on $\bR^4$, we can define a quantum field operator $\phi^{\alpha,n}(g)$, acting on the vacuum state and $\sum_{u=1}^\infty\left(S, f_\alpha^u \otimes \rho_n(E^\alpha), \{\hat{f}_a^u\}_{a=0}^3\right)$ in Section \ref{s.qfo}. We will show later that $\phi^{\alpha,n}(g)$ is densely defined on $\mathbb{H}_{{\rm YM}}(\mathfrak{g})$. Furthermore, $\{\phi^{\alpha,n}(g):\ 1 \leq \alpha \leq \underline{N}\}$ defines a spinor of dimension $\underline{N}$. Using this definition and the definition of unitary transformation, one can prove the transformation law given in Proposition \ref{p.w.1}, and causality in Section \ref{s.c} using this action.

\begin{rem}
In \cite{glimm1981}, the function $g$ is interpreted as an observable, which is elevated to be a quantum field operator $\phi^{\alpha,n}(g)$. One can also understand it as applying canonical quantization to a classical field $g$.
\end{rem}

In our description of the Hilbert space containing Yang-Mills fields, nowhere did we use the Yang-Mills action, so it is not clear if $\mathbb{H}_{{\rm YM}}(\mathfrak{g})$ is a Hilbert space for a quantum Yang-Mills gauge theory.

The set $\{\hat{H}(\rho_n), \hat{P}(\rho_n)\}_{n \geq 1}$, indexed by the non-trivial irreducible, inequivalent representations, will give us a discrete set of eigenvalues. The choices of $\hat{H}(\rho)$ and $\hat{P}(\rho)$ will be given later, referred to as the eigenvalues of the Hamiltonian $\hat{H}$ and momentum operator $\hat{P}$ respectively. Indeed, these discrete eigenvalues will give us a countable spectrum for the translation operator $U(\vec{a}, 1)$, provided $\vec{a} = a^0 \hat{f}_0 + a^1 \hat{f}_1$. There are many choices of $\{\hat{H}(\rho), \hat{P}(\rho)\}$ and it is not clear how we should choose these numbers.

It is only in Wightman's zeroth axiom, where the mass of the theory appears. This axiom requires that $\hat{H}(\rho)^2 - \hat{P}(\rho)^2 = m(\rho)^2$, for some mass gap $m(\rho) \geq 0$.\footnote{See Remark \ref{r.m.1}.} Furthermore, these eigenvalues are required to be unbounded. The mass gap problem is equivalent to show that $m_0 := \inf_{n \in \mathbb{N}} m(\rho_n) > 0$, and this is only true for a compact simple gauge group.

For each eigenstate in $\mathscr{H}(\rho_n)$, $\hat{H}$ and $\hat{P}$ will be multiplication by scalars $\hat{H}(\rho_n)$ and $\hat{P}(\rho_n)$ respectively. Then $\hat{H}^2 - \hat{P}^2 = m^2$ translates to \beq \frac{\hat{P}(\rho_n)^2}{\hat{H}(\rho_n)^2}-1 = -\frac{m(\rho_n)^2}{\hat{H}(\rho_n)^2}. \label{e.mg.3} \eeq To prove that the Hamiltonian and momentum operators are unbounded, and there is a positive mass gap $m_0$, it suffices to show that for all $n \geq 1$, \beq 0 > \frac{\hat{P}(\rho_n)^2}{\hat{H}(\rho_n)^2}-1 \longrightarrow 0, \nonumber \eeq as $n \rightarrow \infty$, and $\lim_{n \rightarrow \infty}m(\rho_n) = \infty$.

To show that the operators are unbounded and the existence of a positive mass gap, we need to make use of the Yang-Mills path integral to quantize, so that we will obtain Equation (\ref{e.mg.3}). During the quantization process, we will use renormalization techniques and asymptotic freedom. Note that asymptotic freedom only holds for a non-abelian gauge group. The compact simple gauge group will give us a quadratic Casimir operator, dependent on the representation $\rho_n$. As the set containing all Casimir operators, each corresponding to a non-equivalent irreducible representation of $\mathfrak{g}$, is countably infinite and unbounded from above, the Hamiltonian will be shown to be an unbounded operator.

The existence of a positive mass gap is a consequence of Equation (\ref{e.mg.1}), first proved in \cite{YM-Lim02}. To prove this equation, we need
\begin{itemize}
  \item renormalization techniques,
  \item asymptotic freedom,
  \item the compactness of the gauge group allows us to represent the Lie Algebra as skew-Hermitian matrices,
  \item the structure constants and the quadratic Casimir operator of the simple Lie Algebra, and
  \item the quartic term in the Yang-Mills action,
\end{itemize}
all of which are collectively responsible for the existence of a mass gap. We also need to impose the Callan-Symanzik Equation to prove the existence of a mass gap $m_0$.

\begin{rem}
The idea that the quartic term in the Yang-Mills action might be responsible for the mass gap was suggested in \cite{fey1981479}.
\end{rem}

Using a Yang-Mills path integral to define the Hamiltonian and momentum operator eigenvalues justify our construction as a 4-dimensional Yang-Mills quantum gauge theory. In Section \ref{s.cluster}, we will show how the positive mass gap, will imply the Clustering Theorem.

As for the rest of the axioms, we see that it does not make use of the Yang-Mills action. We will postpone the proof of the mass gap till Section \ref{s.iop}. For now, we will move on to the remaining Wightman's axioms.

\section{Quantum Field Operators}\label{s.qfo}

We will now begin our discussion on the field operators that act on the Hilbert space $\bigoplus_{n=0}^\infty \mathscr{H}(\rho_n)$, which contains our Yang-Mills fields.

Every irreducible finite dimensional representation of ${\rm SL}(2, \bC)$ is denoted by $D^{(j, k)}$, $j, k$ are non-negative integers or half integers. This representation is known as the spinor representation of ${\rm SL}(2, \bC)$. Under this representation, one sees that $D^{(j,k)}(-1) = (-1)^{2(j+k)}$. Incidently, when restricted to ${\rm SU}(2)$, $\Lambda \in {\rm SU}(2) \mapsto D^{(j,0)}(\Lambda)$ is equivalent to an irreducible representation of ${\rm SU}(2)$. See \cite{streater}.

Recall we have a Minkowski frame $\{\hat{f}_a\}_{a=0}^3$ defined on a rectangular space-like surface $S$ in Definition \ref{d.a.1}. Now, $\hat{f}_a = \Lambda e_a \equiv Y(\Lambda)e_a$ for some $\Lambda \in {\rm SL}(2, \bC)$. Suppose we have another Minkowski frame $\{\hat{g}_a\}_{a=0}^3$ on $S$ such that $\hat{f}_a = \hat{g}_a$ for all $a$. Then, we have that $Y(\Lambda)e_a = Y(\tilde{\Lambda})e_a$ for some $\tilde{\Lambda} \in {\rm SL}(2, \bC)$. Hence, $Y(\Lambda) = Y(\tilde{\Lambda})$, which can be shown that $\Lambda = \pm \tilde{\Lambda}$. When $j+k$ is an integer, we see that $D^{(j,k)}(\pm1) = 1$, thus $D^{(j,k)}$ is a representation describing vector bosons.

\begin{defn}(Test functions)\\
We let $\mathscr{P}$ denote the Schwartz space consisting of infinitely differentiable complex-valued functions on $\bR^4$, which converge to 0 at infinity faster than any powers of $|\vec{x}|$. We will refer $f \in \mathscr{P}$ as a test function, which is bounded.
\end{defn}

\begin{notation}
For $\vec{k} = (k^0, k^1, k^2, k^3)$, $k^a \in \{0\} \cup \mathbb{N}$, we will write \beq D^{\vec{k}} = \left(\frac{\partial}{\partial x^0}\right)^{k^0}\left(\frac{\partial}{\partial x^1}\right)^{k^1}\left(\frac{\partial}{\partial x^2}\right)^{k^2}\left(\frac{\partial}{\partial x^3}\right)^{k^3}, \quad \vec{x}^{\vec{k}} = (x^0)^{k^0}(x^1)^{k^1}(x^2)^{k^2}(x^3)^{k^3}. \nonumber \eeq And $|\vec{k}| = \sum_{a=0}^3 |k^a|$.
\end{notation}

\begin{defn}(Norm on $\mathscr{P}$)\label{d.n.1}\\
Let $r,s$ be whole numbers. Suppose $f \in \mathscr{P}$. With the above notation, define a norm $\parallel \cdot \parallel_{r,s}$ on $\mathscr{P}$ as \beq \parallel f \parallel_{r,s} := \sum_{ |\vec{k}| \leq r}\sum_{|\vec{l}| \leq s}\sup_{\vec{x}\in \bR^4} |\vec{x}^{\vec{k}} D^{\vec{l}}f(\vec{x})|. \nonumber \eeq
\end{defn}

\subsection{Creation operators}

\begin{defn}\label{d.p.1}(Time-like plane)\\
Refer to Definition \ref{d.a.1}. Let $S$ be a connected space-like rectangular surface contained in some plane.

Let $S^\flat$ be a time-like plane spanned by $\{\hat{f}_0, \hat{f}_1\}$, parametrized by \beq  \hat{s}= (s,\bar{s}) \mapsto \sigma(s,\bar{s}) = s\hat{f}_0 + \bar{s}\hat{f}_1 ,\ s, \bar{s} \in \bR. \label{e.p.1} \eeq And we will write $d\hat{s} = ds d\bar{s}$.

Refer to Definition \ref{d.r.2}. Define using a Minkowski metric, $\tilde{\eta}: \vec{v}\in \bR^4 \mapsto \vec{v} \cdot (\hat{H}(\rho_n) \hat{f}_0 + \hat{P}(\rho_n)\hat{f}_1) \in \bR$. Suppose we are given a $\tilde{f} \in \mathscr{P}$. We will define a new function $\tilde{f}^{\{\hat{f}_0, \hat{f}_1\}}: \bR^4 \rightarrow \bC$ by
\begin{align}
\vec{x}\in \bR^4 &\longmapsto \tilde{f}^{\{\hat{f}_0, \hat{f}_1\}}(\hat{H}(\rho_n), \hat{P}(\rho_n))(\vec{x}) :=
\int_{S^\flat} \frac{e^{-i\tilde{\eta}(\cdot)}}{2\pi}\tilde{f}(\vec{x} + \cdot)\ d|\acute{\rho}| \nonumber \\
&=\int_{\hat{s} \in \bR^2}\frac{e^{-i[\sigma(\hat{s}) \cdot (\hat{H}(\rho_n) \hat{f}_0 + \hat{P}(\rho_n)\hat{f}_1)]}}{2\pi}\tilde{f}\left( \vec{x} + \sigma(\hat{s}) \right)\cdot |\acute{\rho}_\sigma|(\hat{s})\ d\hat{s}, \label{e.p.3}
\end{align}
integration over a time-like plane $S^\flat$, using the parametrization given in Equation (\ref{e.p.1}), for representation $\rho_n$.
\end{defn}

\begin{rem}\label{r.f.3}
\begin{enumerate}
  \item Note that $\tilde{f}^{\{\hat{f}_0, \hat{f}_1\}} \notin \mathscr{P}$, unless $\tilde{f} \equiv 0$.
  \item Even though we compute the integral using a given parametrization $\sigma$, it is actually independent of the parametrization and only depends on the time-like plane span by $\{\hat{f}_0, \hat{f}_1\}$.
      Because we are doing a Fourier Transform on time-like and one space-like variable, evaluated at $\hat{H}(\rho_n)\hat{f}_0 + \hat{P}(\rho_n)\hat{f}_1$, we see that the transformed function depends on $\{\hat{f}_0, \hat{f}_1\}$, not just on $S^\flat$.
  \item\label{r.f.3c} If $\vec{x} = \sum_{a=0}^3x^a\hat{f}_a$, then
  \begin{align*}
  \tilde{f}^{\{\hat{f}_0, \hat{f}_1\}}&(\hat{H}(\rho_n), \hat{P}(\rho_n))(\vec{x}) \\
  &= e^{-i[x^0\hat{H}(\rho_n) - x^1\hat{P}(\rho_n)]}\tilde{f}^{\{\hat{f}_0, \hat{f}_1\}}(\hat{H}(\rho_n), \hat{P}(\rho_n))(x^2\hat{f}_2 + x^3\hat{f}_3).
  \end{align*}
  Hence, $\hat{H}(\rho_n)$ ($\hat{P}(\rho_n)$) is the generator for translation, in the $\hat{f}_0$ ($\hat{f}_1$) direction.
\end{enumerate}
\end{rem}

\begin{defn}\label{d.ac.2}
Let $\mathcal{A} := \{F^\alpha: 1 \leq \alpha \leq \underline{N}\} \subset \mathfrak{g}$ be a finite set, and define recursively for $n \geq 2$, $\mathcal{A}^n := [\mathcal{A}^{n-1}, \mathcal{A}]$, $\mathcal{A}^1 := \mathcal{A}$, such that $\mathfrak{g}$ can be spanned by $\bigcup_{j=1}^{\underline{n}}\mathcal{A}^j$ for some $\underline{n} \geq 1$.
\end{defn}

Given $\tilde{f} \in \mathscr{P}$, we now wish to describe the field operator $\phi^{\alpha,n}(\tilde{f})$, $1 \leq \alpha \leq \underline{N}$. Suppose we have a spinor representation $A: {\rm SL}(2, \bC) \rightarrow {\rm End}(\bC^{\underline{N}})$. For each $n \in \mathbb{N}$, the field operator $\{\phi^{\alpha,n}(\tilde{f}):\ 1\leq \alpha \leq \underline{N}\}$ transforms like a spinor under the action of $A(\Lambda)$, for any $\Lambda \in {\rm SL}(2, \bC)$. Hence, a Schwartz function $\tilde{f}$ will be `promoted' to be some spinor $\sum_{\alpha=1}^{\underline{N}}c_\alpha \phi^{\alpha, n}(\tilde{f})$. But first, how does $\phi^{\alpha,n}(\tilde{f})$ act on $1$, for some Schwartz function $\tilde{f}$?

\begin{defn}\label{d.co.1}(Creation operators)\\
Recall we indexed our non-trivial irreducible representation by a natural number $n$. Fix a connected space-like plane $S_0 \subset \bR^4$. We will choose $S_0$ to be the $x^2-x^3$ plane. Using spatial rotation, we can rotate $S_0$ to be the $x^i-x^j$ plane, for $i,j =1, 2, 3$. Together with translation and boost, we can transform any surface contained inside $S_0$, to be any space-like surface, using the unitary representation of $SL(2, \bC)$. By Definition \ref{d.a.1}, we will choose $\{e_a\}_{a=0}^3$ to be a Minkowski frame for $S_0$.

For any $\tilde{f} \in \mathscr{P}$, we define an operator $\phi^{\alpha,n}(\tilde{f})$, $\alpha=1,2, \cdots, \underline{N}$, $n \in \mathbb{N}$, which acts on the vacuum state 1 by
\begin{align*}
\phi^{\alpha,n}(\tilde{f}) 1 :=& \left(S_0,   \tilde{f}^{\{e_0, e_1\}}  \otimes \rho_n(F^\alpha), \{e_a\}_{a=0}^3\right) \\
\equiv&\left(S_0,   \tilde{f}^{\{e_0, e_1\}}(\hat{H}(\rho_n), \hat{P}(\rho_n))  \otimes \rho_n(F^\alpha), \{e_a \}_{a=0}^3\right) \in \mathscr{H}(\rho_{n}),
\end{align*}
for $F^\alpha \in \mathcal{A} \subset \mathfrak{g}$. See Definition \ref{d.ac.2}.

In the notation above, it is understood that $\tilde{f}^{\{e_0, e_1\}} \equiv \tilde{f}^{\{e_0, e_1\}}(\hat{H}(\rho_n), \hat{P}(\rho_n))$. And we restrict the domain of
$\tilde{f}^{\{e_0, e_1\}}(\hat{H}(\rho_n), \hat{P}(\rho_n)): \bR^4 \rightarrow \bC$ to be on the surface $S_0$.
\end{defn}

\begin{rem}
The field operators can be indexed by a countably infinite set. See \cite{Ruelle}.
\end{rem}

\subsection{Domain and continuity}

\begin{defn}
Let $\mathscr{S}$ denote the set consisting of countable union of space-like rectangular surfaces in $\bR^4$, each component surface is compact or a plane. For $S \in \mathscr{S}$, let $\mathscr{P}_S$ denote the set of Schwartz functions defined on $S$.

In particular, if $\sigma: I^2 \rightarrow \bR^4$ is a parametrization of a compact space-like surface $S$, we say $f$ is a Schwartz function on $S$ when $f\circ \sigma$ is an infinitely differentiable function on $I^2$, and is compactly supported in the interior of $I^2$, i.e. it decays to zero at the boundary.

If $S$ is a space-like plane, then for some parametrization $\sigma: \bR^2 \rightarrow S$, we say $f$ is a Schwartz function on $S$ when $f\circ \sigma$ is a Schwartz function on $\bR^2$.
\end{defn}

\begin{defn}(Domain of field operators)\\
Define a domain $\mathscr{D} \subset  \bigoplus_{n=0}^\infty \mathscr{H}(\rho_n)$ as \beq \left\{a_01 + \sum_{n,u=1}^\infty \left(S_{n,u}, f_{n,\alpha}^u\otimes \rho_{n}(E^{\alpha}), \{\hat{f}_a^{n,u}\}_{a=0}^3 \right):\ a_0 \in \bC,\ f_{n,\alpha}^u \in \mathscr{P}_{S_{n,u}},\ S_{n,u} \in \mathscr{S}  \right\}. \nonumber \eeq
\end{defn}

Given a parametrization $\sigma: I^2 \rightarrow \bR^4$ for a surface $S$, we say that $f \in L^2(S)$ if $f$ is measurable on $S$ and \beq \int_{I^2} |f\circ \sigma|^2(\hat{s}) \left|
\sum_{0\leq a < b \leq 3}\acute{\rho}_\sigma^{ab}(\hat{s})\left[\det \acute{J}_{ab}^\sigma(\hat{s})\right]\right| d\hat{s} < \infty. \nonumber \eeq
By construction of $\mathscr{H}(\rho)$, we only consider space-like surfaces, equipped with a Minkowski frame.  For any surface $S \in \mathscr{S}$, $\mathscr{P}_S$ is dense inside $L^2(S)$. So we see that $\mathscr{D}$ is actually a dense set inside $\mathbb{H}_{{\rm YM}}(\mathfrak{g})$, and it contains the vacuum state.

\begin{rem}
In the proof of Proposition \ref{p.d.1}, we will show that $\mathscr{P}_S$ is dense inside $L^2(S)$, when $S$ is compact.
\end{rem}

From Equation (\ref{e.d.2}), a compact space-like surface in the $x^2-x^3$ plane can be transformed to any compact space-like surface under translation, spatial rotation or boost. Thus, $\mathscr{S}$ remains invariant under the action of ${\rm SL}(2, \bC)$. From Definition \ref{d.a.1}, a Minkowski frame associated with a rectangular surface, is generated by Lorentz transformations of $\{e_a\}_{a=0}^3$. Hence $U(\vec{a}, \Lambda)\mathscr{D} \subset \mathscr{D}$.

We can now define the field operator, whose domain is given by $\mathscr{D}$, as follows. Recall from Definition \ref{d.p.1}, how we can define a new function $\tilde{f}^{\{\hat{f}_0, \hat{f}_1\}}$, from $\tilde{f} \in \mathscr{P}$, using $\{\hat{f}_0, \hat{f}_1\}$ contained in a Minkowski frame, associated with a space-like surface $S$.

From the opening paragraph in Section \ref{s.qfo}, we saw that $\{\hat{f}_a\}_{a=0}^3$ uniquely determines $\Lambda \in {\rm SL}(2, \bC)$, up to $\pm 1$. When $j+k$ is an integer, we see that $D^{(j, k)}(\pm \Lambda) = D^{(j, k)}( \Lambda)$.

\begin{defn}\label{d.co.2}
Let $\sum_{u=1}^\infty\left(S_u, g_\beta^u \otimes \rho(E^\beta), \{\hat{f}_a^u\}_{a=0}^3\right) \in \mathscr{H}(\rho)$, whereby each $S_u$ is a connected rectangular space-like surface contained inside some plane, and $g_\beta^u \in \mathscr{P}_{S_u}$. Refer to Definition \ref{d.p.1}. For each Minkowski frame $\{\hat{f}_a^u\}_{a=0}^3$, let $\Lambda^u e_a = \hat{f}_a^u$ for some $\Lambda^u \in {\rm SL}(2, \bC)$. The adjoint representation $\ad$ of $\rho(\mathfrak{g})$ is defined as  $\ad(\rho(E^\alpha))\rho(E^\beta) = [\rho(E^\alpha), \rho(E^\beta)]$.

Recall from Definition \ref{d.ac.2}, we defined a set $\mathcal{A} \subset \mathfrak{g}$ with cardinality $\underline{N}$. Suppose we have a spinor representation $A: {\rm SL}(2, \bC) \rightarrow {\rm End}(\bC^{\underline{N}})$. This representation can be written as a sum of irreducible representations of the form $\bigoplus_{\alpha=1}^m D^{(j_\alpha, k_\alpha)}$, $D^{(j_\alpha, k_\alpha)}$ is an irreducible representation of ${\rm SL}(2, \bC)$ as described earlier. We will further assume that for each $1 \leq \alpha \leq m$, $j_\alpha+k_\alpha$ is an integer.

For $\Lambda \in {\rm SL}(2, \bC)$, let $A(\Lambda)_\alpha^\beta$ denote the entry at the $\beta$-th row, $\alpha$-th column. Given a test function $f \in \mathscr{P}$, define a field operator $\phi^{\alpha,n}(f)$ as
\begin{align*}
\phi^{\alpha,n}&(f)\sum_{u=1}^\infty \left(S_u, g_\beta^u \otimes \rho_m(E^\beta), \{\hat{f}_a^u\}_{a=0}^3\right) \\
&:=
\left\{
  \begin{array}{ll}
    \sum_{u=1}^\infty\left(S_u, f^{\{\hat{f}_0^u, \hat{f}_1^u\}}_n A(\Lambda^u)_\gamma^\alpha\cdot g_\beta^u \otimes  \rho_{n}\left([F^\gamma, E^\beta]\right), \{\hat{f}_a^u\}_{a=0}^3 \right), & \hbox{$m=n$;} \\
    0, & \hbox{$m\neq n$.}
  \end{array}
\right.
\end{align*}
There is an implied sum over $\gamma$ from 1 to $\underline{N}$, and over $\beta$ from 1 to $N$. Here, $f^{\{\hat{f}_0^u, \hat{f}_1^u\}}_n \equiv f^{\{\hat{f}_0^u, \hat{f}_1^u\}}(\hat{H}(\rho_n), \hat{P}(\rho_n))$ and the restricted vector field $f^{\{\hat{f}_0^u, \hat{f}_1^u\}}_n\Big|_{S_u} A(\Lambda^u)_\gamma^\alpha \otimes \rho_n(F^\gamma)$ acts on $g_\beta^u \otimes \rho_n(E^\beta)$, by
$\ad\left( f^{\{\hat{f}_0^u, \hat{f}_1^u\}}_n|_{S_u} A(\Lambda^u)_\gamma^\alpha\otimes \rho_n(F^\gamma)\right)$, over the surface $S_u$ fiberwise, $\ad(\rho_n(E))$ is the adjoint representation of $\rho_n(E)$.
\end{defn}

\begin{rem}\label{r.ub.1}
\begin{enumerate}
 \item The alpha in $\phi^{\alpha,n}(f)$ is referred to as the spinor index. Indeed, for each $n \in \mathbb{N}$ and $f \in \mathscr{P}$, we have that \beq \{\phi^{\alpha,n}(f):\ 1 \leq \alpha \leq \underline{N}\} \nonumber \eeq is a spinor of dimension $\underline{N}$. If $A = D^{(s,0)}$, then we must have that $s$ is an integer, i.e. the spinors are vector bosons in this case.
 \item\label{i.m.1} When we have $\{e_a\}_{a=0}^3$ as a Minkowski frame, then $\Lambda$ is the identity matrix. Thus, $A(\Lambda)$ will be the identity.
 \item By linearity, we define
  \begin{align*}
  \phi^{\alpha,n}(f) \sum_{m=0}^\infty v_m &:= \sum_{m=0}^\infty \phi^{\alpha,n}(f) v_m \\
  &= a_0\Big( S_0, f^{\{e_0, e_1\}}_n \otimes \rho_n(E^\alpha), \{e_a\}_{a=0}^3 \Big) +  \phi^{\alpha,n}(f) v_n ,
  \end{align*}
  $v_0= a_01$ is a scalar multiple of the vacuum state and $v_n \in \mathscr{H}(\rho_n)$. The domain for $\phi^{\alpha,n}(f)$ will be $\mathscr{D}$. Note that it is a bounded operator.
  \item Suppose $g_\beta$ is measurable, or $L^2$ integrable on a space-like rectangular surface $S$, contained in some plane. Since $f^{\{\hat{f}_0, \hat{f}_1\}}$ is bounded and continuous, we see that $f^{\{\hat{f}_0, \hat{f}_1\}}\cdot g_\beta$ is measurable and $L^2$ integrable. So,
      \begin{align*}
      \phi^{\alpha,n}(f)&\left(S, g_\beta \otimes \rho_n(E^\beta), \{\hat{f}_a\}_{a=0}^3\right) \\
      &:= \left(S, f^{\{\hat{f}_0, \hat{f}_1\}}_n\cdot A(\Lambda)_\gamma^\alpha g_\beta \otimes \rho_n([F^\gamma, E^\beta]), \{\hat{f}_a\}_{a=0}^3\right),
      \end{align*}
      and $f^{\{\hat{f}_0, \hat{f}_1\}}_n\cdot g_\beta$ is defined almost everywhere on $S$. But we will run into problems later, when proving Proposition \ref{p.td.1}. This is because multiplying a measurable function with a tempered distribution, may not be a tempered distribution. It is thus necessary to restrict the domain for the field operators to be on $\mathscr{D}$, to avoid technical difficulties later on.
\end{enumerate}
\end{rem}

We showed how $\phi^{\alpha,n}(f)$ is defined on $\mathscr{D}$. We can now define its adjoint.

\begin{defn}(Annihilation operators)\label{d.ado}\\
Using the inner product in Definition \ref{d.inn.2}, we define the adjoint $\phi^{\alpha,n}(g)^\ast$ on a space-like surface $S$ contained in some plane, as
\begin{align*}
\phi^{\alpha,n}(g)^\ast &\left(S, f_\beta \otimes \rho_{n}(E^\beta), \{\hat{f}_a\}_{a=0}^3 \right) \\
=& - \left(S, \overline{g^{\{\hat{f}_0, \hat{f}_1\}}} \overline{A(\Lambda)_\gamma^\alpha}\cdot f_\beta \otimes  \rho_n([F^\gamma,E^\beta]), \{\hat{f}_a\}_{a=0}^3 \right) \\
&+ \left\langle \left(S, f_\beta \otimes \rho_{n}(E^\beta), \{\hat{f}_a\}_{a=0}^3\right), \phi^{\alpha,n}(g) 1 \right\rangle 1,
\end{align*}
whereby $\hat{f}_a = \Lambda e_a$ for $a=0, \cdots, 3$, and we sum over repeated indices $\gamma$ from 1 to $\underline{N}$, and over $\beta$ from 1 to $N$. This is because \beq \left\langle \ad(\rho(E^\alpha)) \rho(E^\beta), \rho(E^\gamma) \right\rangle = -\left\langle
\rho(E^\beta), \ad(\rho(E^\alpha)) \rho(E^\gamma) \right\rangle. \nonumber \eeq

And
\begin{itemize}
  \item $\phi^{\alpha,n}(g)^\ast$ will send $\left(S, f_\beta \otimes \rho_{m}(E^\beta), \{\hat{f}_a\}_{a=0}^3 \right)$ to 0 if $m \neq n$;
  \item $\phi^{\alpha,n}(g)^\ast 1 = 0$.
\end{itemize}
\end{defn}

\begin{rem}
Clearly, we can choose the domain to be $\mathscr{D}$.
\end{rem}

\begin{example}
Consider the Lie group ${\rm SU}(2)$. Its Lie algebra can be generated by the three Pauli matrices. Suppose $\mathcal{A}$ as described in Definition \ref{d.ac.2} is linearly independent. Hence, $2\leq |\mathcal{A}| \leq 3$. If $A = D^{(j,k)}$, then $j =1$, $k=0$, since $j+k$ must be an integer. Thus, $|\mathcal{A}|=3$ and the Pauli matrices, which represent $W^\pm$ and $Z$ bosons responsible for weak force interactions, transform like spin 1 vector bosons.
\end{example}

\begin{example}\label{ex.f.1}
Consider the Lie group ${\rm SU}(3)$. Its Lie algebra can be generated by the Gell-Mann matrices
\begin{align*}
\lambda_1 =&
\left(
  \begin{array}{ccc}
    0 &\ 1 &\ 0 \\
    -1 &\ 0 &\ 0 \\
    0 &\ 0 &\ 0 \\
  \end{array}
\right), \quad \lambda_2 =
\left(
  \begin{array}{ccc}
    0 &\ i &\ 0 \\
    i &\ 0 &\ 0 \\
    0 &\ 0 &\ 0 \\
  \end{array}
\right), \quad \lambda_3 =
\left(
  \begin{array}{ccc}
    i &\ 0 &\ 0 \\
    0 &\ -i &\ 0 \\
    0 &\ 0 &\ 0 \\
  \end{array}
\right), \\
\lambda_4 =&
\left(
  \begin{array}{ccc}
    0 &\ 0 &\ i \\
    0 &\ 0 &\ 0 \\
    i &\ 0 &\ 0 \\
  \end{array}
\right), \quad
\lambda_5 =
\left(
  \begin{array}{ccc}
    0 &\ 0 &\ 1 \\
    0 &\ 0 &\ 0 \\
    -1 &\ 0 &\ 0 \\
  \end{array}
\right), \\
\lambda_6 =&
\left(
  \begin{array}{ccc}
    0 &\ 0 &\ 0 \\
    0 &\ 0 &\ i \\
    0 &\ i &\ 0 \\
  \end{array}
\right), \quad \lambda_7 =
\left(
  \begin{array}{ccc}
    0 &\ 0 &\ 0 \\
    0 &\ 0 &\ 1 \\
    0 &\ -1 &\ 0 \\
  \end{array}
\right), \quad \lambda_8 = \frac{1}{\sqrt3}
\left(
  \begin{array}{ccc}
    i &\ 0 &\ 0 \\
    0 &\ i &\ 0 \\
    0 &\ 0 &\ -2i \\
  \end{array}
\right),
\end{align*}
each representing the gluons responsible for strong force interaction.

Observe that $\{\lambda_1, \lambda_2, \lambda_3\}$ is the Lie algebra of a subgroup $H_1 \subset {\rm SU}(3)$, isomorphic to ${\rm SU}(2)$. Furthermore, $\lambda_8$ generates an abelian subgroup $H_2$, such that the elements in $H_1$ commutes with elements in $H_2$, because $\lambda_8$ commutes with $\{\lambda_1, \lambda_2, \lambda_3\}$. Let $\tilde{H}$ be a Lie subgroup in ${\rm SU}(3)$, generated by $\{\lambda_1, \lambda_2, \lambda_3, \lambda_8\}$.

Suppose the vacuum state $1$ is invariant under this unbroken subgroup $\tilde{H}$ and $\mathcal{A}$ as described in Definition \ref{d.ac.2} is linearly independent. This means that $3 \leq |\mathcal{A}|\leq 4$ and cannot contain any elements that are linear combination of $\{\lambda_1, \lambda_2, \lambda_3, \lambda_8\}$, also referred to as unbroken generators. Therefore, this ${\rm SU}(3)$ gauge theory is spontaneously broken. Furthermore, if $A = D^{(j,k)}$, then either $j = 1$, $k = 0$ or $j = k = 1/2$.

Choose $\mathcal{A} = \{\lambda_4, \lambda_5, \lambda_7\}$. A direct computation shows that
\begin{align*}
[\lambda_7, \lambda_4] =& \lambda_2, \quad [\lambda_7, \lambda_5] = \lambda_1, \quad
\ad(\lambda_4)\ad(\lambda_5) \lambda_7 = -\lambda_6, \\
\ad(\lambda_7)\ad(\lambda_4)\ad(\lambda_5) \lambda_7 =& \lambda_3 - \sqrt3 \lambda_8, \quad [\lambda_5, \lambda_4] = \lambda_3 + \sqrt3 \lambda_8.
\end{align*}
Thus, we see that $\bigcup_{j=1}^4 \mathcal{A}^j$ spans $\mathfrak{su}(3)$. In this case, we can choose $A = D^{(1,0)}$ as an irreducible representation and the vectors in the span of $\mathcal{A}$ are called spin 1 vectors.

If we choose $\mathcal{A} = \{\lambda_4, \lambda_5, \lambda_6, \lambda_7\}$, which are the broken generators in $\mathfrak{su}(3)$, then $A = D^{(1/2,1/2)}$ is equivalent to $Y$, and the vectors in the span of $\mathcal{A}$ transform like 4-vectors.
\end{example}

\subsection{Cyclicity}

\begin{notation}
Let $S = [0,1] \times [0,1] \equiv I^2$ and $\mathscr{P}_S(n) \subset \mathscr{P}_S$, whereby $g \in  \mathscr{P}_S(n)$ if $g = f_1 f_2 \cdots f_n$, each $f_i \in \mathscr{P}_S$.

Let $C_c(S, \bC)$ and $C(S, \bC)$ denote the set of compactly supported continuous functions and the set of continuous functions on $S$ respectively.

We also write $\parallel \cdot \parallel_{L^2}$ to denote the $L^2$ norm on $S$, i.e. $\parallel f \parallel_{L^2} = \left[\int_{I^2} |f(\hat{s})|^2\ d\hat{s}\right]^{1/2}$.
\end{notation}

\begin{lem}\label{l.b.6}
We have that $\mathscr{P}_S$ is dense in $L^2(S)$. Let $\mathcal{G}(n)$ be the smallest algebra containing $\mathscr{P}_S(n)$, $n \in \mathbb{N}$ fixed. Then,
$\mathcal{G}(n)$ is dense in $\mathscr{P}_S$ using the $L^2$ norm. Hence, $\mathcal{G}(n)$ is dense in in $L^2(S)$.
\end{lem}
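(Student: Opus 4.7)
The plan is to handle the two density claims in sequence, with the first being a standard mollification argument and the second following from a small cut-off trick that in fact identifies $\mathcal{G}(n)$ with $\mathscr{P}_S$ itself.

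First I would prove that $\mathscr{P}_S$ is dense in $L^2(S)$ in the usual way. Given $f\in L^2(I^2)$ and $\varepsilon>0$, approximate $f$ in $L^2$ by the truncation $f\cdot\mathbf{1}_{K_\delta}$, where $K_\delta=[\delta,1-\delta]^2\subset\mathrm{int}(I^2)$; dominated convergence makes the error tend to $0$ as $\delta\downarrow 0$. Next convolve with a standard mollifier $\eta_\varepsilon\in C_c^\infty(\mathbb{R}^2)$ supported in a ball of radius $\varepsilon<\delta$. For $\varepsilon$ small enough, $(f\cdot\mathbf{1}_{K_\delta})\ast\eta_\varepsilon$ is smooth and supported in $\mathrm{int}(I^2)$, hence lies in $\mathscr{P}_S$, and the standard $L^2$ mollification estimate gives convergence to $f\cdot\mathbf{1}_{K_\delta}$ in $L^2$. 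A diagonal choice of $\delta,\varepsilon$ finishes the first claim.

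The second claim reduces to showing $\mathcal{G}(n)=\mathscr{P}_S$. Note first that $\mathscr{P}_S$ is itself an algebra under pointwise addition and multiplication (the product of two smooth functions compactly supported in $\mathrm{int}(I^2)$ has these same properties), and it contains $\mathscr{P}_S(n)$. Hence $\mathcal{G}(n)\subset\mathscr{P}_S$. The reverse inclusion uses the following observation: for any $f\in\mathscr{P}_S$, the set $\mathrm{supp}(f)$ is a compact subset of $\mathrm{int}(I^2)$, so by a standard partition-of-unity argument one can choose $\chi\in\mathscr{P}_S$ with $\chi\equiv 1$ on a neighbourhood of $\mathrm{supp}(f)$. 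Then
\[
f=f\cdot\underbrace{\chi\cdot\chi\cdots\chi}_{n-1\ \text{factors}},
\]
which realises $f$ as a product of $n$ elements of $\mathscr{P}_S$, so $f\in\mathscr{P}_S(n)\subset\mathcal{G}(n)$. Thus $\mathscr{P}_S\subset\mathcal{G}(n)$, and the two sets coincide. The density of $\mathcal{G}(n)$ in $\mathscr{P}_S$ in the $L^2$ norm is then immediate, and composing with the first step yields density of $\mathcal{G}(n)$ in $L^2(S)$.

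I do not expect any real obstacle here: the content of the lemma is essentially the classical $C_c^\infty$--$L^2$ density plus the cut-off observation that every compactly supported smooth function on $\mathrm{int}(I^2)$ is already an $n$-fold product in $\mathscr{P}_S$. The only care needed is in choosing the cut-off $\chi$ so that $\chi\in\mathscr{P}_S$ (i.e. smooth, supported strictly inside $I^2$) while $\chi\equiv 1$ on $\mathrm{supp}(f)$; this is a routine convolution of the indicator of a slight thickening of $\mathrm{supp}(f)$ with a small mollifier.
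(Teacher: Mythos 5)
Your proposal is correct under the paper's stated definition of $\mathscr{P}_S$, but it proceeds quite differently from the paper. For the first claim the paper does not mollify: it invokes the complex Stone--Weierstrass theorem to get density of $\mathscr{P}_S$ in the compactly supported continuous functions, and then handles the missing unit by explicitly approximating the constant function $1$ in $L^2$ with the tent function $\breve{\varphi}_\delta(t)\breve{\varphi}_\delta(\bar t)$; your truncation-plus-mollification argument is the more elementary route and reaches the same conclusion (the weight $|\acute{\rho}_\sigma|$ is harmless here since the norm in the relevant Notation is the plain Lebesgue $L^2$ norm on $I^2$, and in any case the weight of an affine parametrization is a positive constant). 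For the second claim the difference is sharper: the paper only proves density of $\mathcal{G}(n)$ in $\mathscr{P}_S$, by writing $\tilde g_\epsilon = g_\delta^{\,n-1} f$ with $\|1-g_\delta^{\,n-1}\|_{L^2}$ small and estimating $\|f-\tilde g_\epsilon\|_{L^2}\le M\|1-g_\delta^{\,n-1}\|_{L^2}$, whereas your cut-off identity $f=f\cdot\chi^{\,n-1}$ gives the exact equality $\mathcal{G}(n)=\mathscr{P}_S$, which is stronger and makes the rest immediate. The one caveat is that your identity relies on reading the definition of $\mathscr{P}_S$ literally as ``compactly supported in the interior of $I^2$''; the author's gloss ``i.e.\ it decays to zero at the boundary'' (and the fact that the paper's own proof avoids the cut-off trick) suggests the intended class may include functions whose support is all of $I^2$ but which vanish at the boundary, in which case no $\chi\in\mathscr{P}_S$ can equal $1$ on $\operatorname{supp} f$ and you would have to fall back on an approximate-unit argument of exactly the paper's type. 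So either state explicitly that you use the compact-support reading, or replace the cut-off by a sequence $\chi_k\in\mathscr{P}_S$ with $0\le\chi_k\le1$ and $\chi_k\to1$ a.e., which yields $\|f-f\chi_k^{\,n-1}\|_{L^2}\to0$ by dominated convergence and covers both readings.
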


\begin{proof}
Now $\mathscr{P}_S$ is a complex algebra and clearly it separates interior points in $S$. Unfortunately, it does not contain the unit 1 on $S \equiv I^2$. Let $\mathcal{C}$ be the smallest algebra containing $1$ and $\mathscr{P}_S$.

By complex Stone Weierstrass Theorem, $\mathscr{P}_S$ and $\mathcal{C}$ are respectively dense in $C_c(S, \bC)$ and $C(S, \bC)$. Furthermore, since continuous functions are dense in $L^2(S)$, we see that $\mathcal{C}$ will be dense in $L^2(S)$. To show that the space containing polynomials of functions in $\mathscr{P}_S$ will generate $L^2(S)$, we will show that we can approximate 1 via the $L^2$ norm, using a sequence of functions in $\mathscr{P}_S$.

Define $\breve{\varphi}_\delta: [0,1] \rightarrow \bR$ by \beq \breve{\varphi}_\delta(t):= \left\{
  \begin{array}{ll}
    \frac{1}{\delta}t, & \hbox{$0 \leq t \leq \delta$;} \\
    1, & \hbox{$\delta < t \leq 1-\delta$;} \\
    \frac{1}{\delta}(1 - t) , & \hbox{$1-\delta < t \leq 1$.}
  \end{array}
\right. \nonumber \eeq Then, $\hat{t} = (t, \bar{t}) \in I^2 \mapsto \phi_\delta(\hat{t}):= \breve{\varphi}_\delta(t)\breve{\varphi}_\delta(\bar{t})$ is continuous. Let $\epsilon > 0$. We can find a $\delta > 0$ such that $\parallel \phi_\delta - 1 \parallel_{L^2} < \epsilon/2$.

Since $\phi_\delta \in C_c(S, \bC)$, we can find a $0 \leq g_\epsilon \in \mathscr{P}_S$ such that $\parallel \phi_\delta - g_\epsilon\parallel_{L^2} < \epsilon/2$. Thus, $\parallel 1 - g_\epsilon\parallel_{L^2} < \epsilon$. This proves that $\mathscr{P}_S$ is dense in $L^2(S)$.

To prove the third statement, let $f \in \mathscr{P}_S$ and let $\epsilon > 0$. There exists a $M > 0$ such that $|f|(\hat{s}) < M$ for all $\hat{s} \in S$. Choose a $\delta > 0$ such that $\parallel 1 - g_\delta^{n-1} \parallel_{L^2} < \epsilon/M$, $g_\delta \in \mathscr{P}_S$. Let $\tilde{g}_\epsilon := g_\delta^{n-1}f$. Then,
\begin{align*}
\parallel f - \tilde{g}_\epsilon \parallel_{L^2} &= \left[\int_{I^2} |1 - g_\delta^{n-1}|^2(\hat{t}) |f|^2(\hat{t})\ d\hat{t} \right]^{1/2} \\
&\leq M \parallel 1  - g_\delta^{n-1} \parallel_{L^2} < \epsilon.
\end{align*}
Thus, $\mathcal{G}(n)$ is dense in $\mathscr{P}_S$ using the $L^2$ norm.
\end{proof}

Because $\mathscr{P}_S$ is an algebra, we see that \beq \psi^{\alpha_1,m_1}(g_1)\cdots \psi^{\alpha_k, m_k}(g_k) \mathscr{D} \subset \mathscr{D}, \nonumber \eeq whereby $\psi^{\alpha_i,k_i}(g_i) = \phi^{\alpha_i, k_i}(g_i)$ or its adjoint $\phi^{\alpha_i, k_i}(g_i)^\ast$.

Let $\mathscr{D}_0$ be a subspace inside $\mathbb{H}_{{\rm YM}}(\mathfrak{g})$, generated by the action of $U(\vec{a}, \Lambda)$ and polynomials containing $\phi^{\alpha_1,n}(f_1), \cdots ,\phi^{\alpha_r,n}(f_r)$, acting on the vacuum state 1. Clearly, $\mathscr{D}_0 \subset \mathscr{D}$.

\begin{prop}\label{p.d.1}
The set $\mathscr{D}_0$ is dense inside $\mathbb{H}_{{\rm YM}}(\mathfrak{g})$.
\end{prop}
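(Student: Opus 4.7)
The vacuum $1$ already lies in $\mathscr{D}_0$, and by Definition \ref{d.co.2} each creation operator $\phi^{\alpha,n}(f)$ respects the direct-sum decomposition of $\mathbb{H}_{{\rm YM}}(\mathfrak{g})$ (it annihilates $\mathscr{H}(\rho_m)$ for $m \neq n$). Hence it suffices to prove that $\mathscr{D}_0 \cap \mathscr{H}(\rho_n)$ is dense in $\mathscr{H}(\rho_n)$ for each $n \geq 1$. Since a surface $S \in \mathscr{S}$ is a disjoint union of its components (Definition \ref{d.sur.1}) and two simple vectors supported on disjoint surfaces are orthogonal by Definition \ref{d.inn.2}, every element of $\mathscr{H}(\rho_n)$ is a norm-convergent orthogonal series of simple vectors $(S, f_\alpha \otimes \rho_n(E^\alpha), \{\hat{f}_a\}_{a=0}^3)$ with $S$ a single compact rectangular space-like surface in a plane. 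Truncating this series reduces the problem to approximating one such simple vector.

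Next I reduce to standard position. By Definition \ref{d.a.1} there exists $\{\vec{a}, \Lambda\}$ with $S \subset \Lambda S_0 + \vec{a}$ and $\hat{f}_a = \Lambda e_a$; applying $U(\vec{a}, \Lambda)^{-1}$ (unitary by Lemma \ref{l.b.2}, and preserving $\mathscr{D}_0$) transforms the target into a vector of the form $(\tilde{S}, g_\alpha \otimes \rho_n(E^\alpha), \{e_a\}_{a=0}^3)$ with $\tilde{S} \subset S_0$. Extending each $g_\alpha$ by zero to all of $S_0$ leaves the vector unchanged (the extension contributes nothing to the inner product), and since $\mathscr{P}_{S_0}$ is dense in $L^2(S_0)$, it suffices to approximate vectors $(S_0, g \otimes \rho_n(E), \{e_a\}_{a=0}^3)$ for arbitrary $g \in \mathscr{P}_{S_0}$ and $E \in \mathfrak{g}$.

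The heart of the argument is then to build such vectors via polynomials in field operators. Iterating Definition \ref{d.co.2} in the standard frame (where $A(\Lambda) = I$) gives
\[
\phi^{\alpha_1,n}(\tilde{f}_1) \cdots \phi^{\alpha_k,n}(\tilde{f}_k)\, 1 = \Bigl( S_0,\; \prod_{j=1}^k \tilde{f}_j^{\{e_0,e_1\}} \otimes \rho_n\bigl([F^{\alpha_1}, [\cdots, [F^{\alpha_{k-1}}, F^{\alpha_k}]] \cdots ]\bigr),\; \{e_a\}_{a=0}^3 \Bigr),
\]
where $\tilde{f}_j^{\{e_0,e_1\}}$ abbreviates $\tilde{f}_j^{\{e_0,e_1\}}(\hat{H}(\rho_n), \hat{P}(\rho_n))$ restricted to $S_0$. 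Since $\bigcup_j \mathcal{A}^j$ spans $\mathfrak{g}$ (Definition \ref{d.ac.2}), I decompose $E = \sum_i c_i E_i$ as a sum of nested commutators and treat each summand separately, reassembling via the linearity rule (\ref{e.v.1}). For a product-form test function $\tilde{f}_j(x^0, x^1, x^2, x^3) = h_j(x^0, x^1) \psi_j(x^2, x^3)$, Definition \ref{d.p.1} reduces to $\tilde{f}_j^{\{e_0,e_1\}}\big|_{S_0}(x^2, x^3) = \kappa(h_j, n)\, \psi_j(x^2, x^3)$ for a scalar $\kappa(h_j, n)$ that can be normalized to $1$ by scaling $h_j$, so any $\psi_j \in \mathscr{P}_{S_0}$ is realized. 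Lemma \ref{l.b.6} then supplies density of $\mathcal{G}(k_i)$ in $L^2(S_0)$, so a suitable polynomial in the $\phi^{\alpha,n}$ operators approximates $g \otimes \rho_n(E_i)$ in norm, and summing over $i$ gives the desired approximation.

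The principal obstacle is the scalar identity $\kappa(h_j, n) \neq 0$ for a generic $h_j$: one must verify that the partial Fourier transform of $h_j$ integrated against the weight $|\acute{\rho}_\sigma|$ from Definition \ref{d.r.2}, evaluated at the fixed momentum point $(\hat{H}(\rho_n), \hat{P}(\rho_n))$, does not identically vanish on $\mathscr{P}(\bR^2)$. Once this is checked, the density of $\mathscr{D}_0$ in $\mathbb{H}_{{\rm YM}}(\mathfrak{g})$ follows by assembling the per-component reduction, the $U(\vec{a}, \Lambda)$-reduction to $S_0$, the commutator-span hypothesis of Definition \ref{d.ac.2}, and Lemma \ref{l.b.6}.
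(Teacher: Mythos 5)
Your proposal follows essentially the same route as the paper's proof: reduce via $U(\vec{a},\Lambda)$ to the standard surface $S_0$ with frame $\{e_a\}_{a=0}^3$, realize elements of $\rho_n(\mathfrak{g})$ through iterated commutators of elements of $\mathcal{A}$ produced by products of field operators acting on the vacuum, and invoke Lemma \ref{l.b.6} for $L^2$-density of products of Schwartz functions. The one step you flag as an open obstacle — that the scalar $\kappa(h_j,n)$ coming from the partial Fourier transform of Definition \ref{d.p.1} evaluated at $(\hat{H}(\rho_n),\hat{P}(\rho_n))$ is nonzero — is exactly where the paper makes an explicit choice: take $h_j(x^0,x^1)=\frac{1}{c_n}p_1(x^0)\frac{1}{d_n}p_1(x^1)$ with $p_1$ a Gaussian, so that $c_n=\hat{p}_1(\hat{H}(\rho_n))\neq 0$ and $d_n=\hat{p}_1(\hat{P}(\rho_n))\neq 0$ because the Fourier transform of a Gaussian never vanishes, and then $F_j^{\{e_0,e_1\}}$ restricted to $S_0$ equals $\psi_j$ exactly (the weight $|\acute{\rho}_\sigma|$ is $1$ for the standard parametrization of the $x^0$-$x^1$ plane). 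With that choice inserted your argument closes and coincides with the paper's.
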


\begin{proof}
Under $U(\vec{a}, \Lambda)$, we can transform any space-like rectangular surface, into other space-like rectangular surface, via translation, spatial rotation and boost. See Equation (\ref{e.d.2}). Thus it suffices to show that for any compact space-like rectangular surface $S \subset \bR^4$ contained in some plane, we can approximate any measurable section in $S \times \rho_n(\mathfrak{g})_\bC \rightarrow S$, for each fixed $n$, using the field operators. Without loss of generality, we assume that $S$ is $I^2$, lying inside the $x^2-x^3$ plane.

Refer to Definition \ref{d.ac.2}. First we assume that the span of $\mathcal{A}$ is $\mathfrak{g}$. In this case, we see that $\underline{N} \geq N$. Recall $\ad(E)F = [E, F] \in \mathfrak{g}$. Since $\mathfrak{g}$ is simple, we see that for a fixed $k \geq 1$, \beq {\rm span}\ \left\{\ad( F^{\alpha_1})\cdots \ad( F^{\alpha_k})F^{\beta}:\ 1\leq \beta\leq \underline{N},\ 1\leq \alpha_i \leq \underline{N},\ i=1,2, \cdots, k \right\} = \mathfrak{g}. \nonumber \eeq

Therefore, for each $1 \leq \gamma \leq N$, we can write \beq E^\gamma = \sum_{\beta=1}^{N(\gamma)} d_{m,\beta}^\gamma\ad( F^{\alpha_1^{\gamma, \beta}})\cdots \ad( F^{\alpha_{m-1}^{\gamma, \beta}})F^{\alpha_m^{\gamma, \beta}},
\nonumber \eeq for real coefficients $d_{m,\beta}^\gamma$ and natural numbers $1 \leq \alpha_i^{\gamma, \beta} \leq \underline{N}$.

Let $p_\kappa(x) = \frac{\kappa}{\sqrt{2\pi}}e^{-\kappa^2 x^2/2}$ be an one-dimensional Gaussian function, mean 0, variance $1/\kappa^2$. Its Fourier Transform is $\hat{p}_\kappa(q) = \kappa p_{1/\kappa}(q)$. Let $c_n = \hat{p}_1(\hat{H}(\rho_n)) \neq 0$, $d_n = \hat{p}_1(\hat{P}(\rho_n)) \neq 0$ be fixed.

For a given set of Schwartz functions $\{f_1, \cdots, f_m\}$ defined on $S$, extend each one to be a function \beq F_i \in \mathscr{P}: \vec{x} \in \bR^4 \mapsto \frac{1}{c_n}p_1(x^0)\frac{1}{d_n}p_1(x^1)f_i(x^2, x^3),\ i=1, \cdots, m. \nonumber \eeq We also have \beq F_i^{\{e_0, e_1\}}(\hat{H}(\rho_n), \hat{P}(\rho_n))(0, 0, x^2, x^3) = \frac{1}{c_n}\hat{p}_1(\hat{H}(\rho_n))\frac{1}{d_n}\hat{p}_1(\hat{P}(\rho_n))f_i(x^2, x^3) = f_i(x^2, x^3). \nonumber \eeq

Hence,
\begin{align*}
[F_1^{\{e_0, e_1\}}& \cdot \cdots F_m^{\{e_0, e_1\}}](\hat{H}(\rho_n), \hat{P}(\rho_n))(0, 0, x^2, x^3)
= \prod_{i=1}^m f_i(x^2, x^3) .
\end{align*}

For any set of Schwartz functions $\{f_1, \cdots, f_m\}$ on $S$, we can extend them to be Schwartz functions on $\bR^4$ as described above, and thus we have
\begin{align*}
\sum_{\beta=1}^{N(\gamma)} d_{m,\beta}^\gamma \phi^{\alpha_1^{\gamma,\beta}, n}(F_1)\cdots \phi^{\alpha_m^{\gamma, \beta}, n}(F_m)1
&= \left(S, \prod_{i=1}^m f_i \otimes \rho_n(E^\gamma), \{e_a\}_{a=0}^3 \right).
\end{align*}
See Item \ref{i.m.1} in Remark \ref{r.ub.1}.

If we let \beq C_m = {\rm span}\ \left\{ \phi^{\alpha_1,n}(F_1)\cdots \phi^{\alpha_m,n}(F_m)1:\ F_i\in \mathscr{P},\ 1\leq \alpha_i \leq \underline{N} \right\},  \nonumber \eeq we see that the sum of subspaces, $\sum_{m=1}^\infty C_m $, is dense in $L^2(S) \otimes \rho_n(\mathfrak{g})$. This follows from Lemma \ref{l.b.6}.

This will show that we can find a sequence of vectors in the sum $\sum_{m=1}^\infty C_m $ and approximate any vector of the form $\left(S, f_\alpha \otimes \rho_n(E^\alpha), \{e_a\}_{a=0}^3 \right)$, whereby $f_\alpha$ is $L^2$ integrable on a compact rectangular surface $S$ inside $x^2-x^3$ plane.

Now suppose span of $\mathcal{A}$ is not equal to $\mathfrak{g}$. Thus, $\underline{n} \geq 2$.
By definition of $\mathcal{A}$, we see that \beq {\rm span}\ \left\{\ad( F^{\alpha_1})\cdots \ad( F^{\alpha_{\tilde{n}-1}})F^{\alpha_{\tilde{n}}}:\ 1\leq \alpha_i \leq \underline{N},\ 1 \leq \tilde{n} \leq \underline{n} \right\} = \mathfrak{g}. \nonumber \eeq Thus, for each $1 \leq \gamma \leq N$, we can write for some $1 \leq \tilde{n} \leq \underline{n}$, \beq \tilde{E}^\gamma = \sum_{\xi=1}^{N(\gamma)} d_{\tilde{n}}(\gamma,\xi)\ad( F^{\alpha_1(\gamma,\xi)})\cdots \ad( F^{\alpha_{\tilde{n}-1}(\gamma,\xi)})
F^{\alpha_{\tilde{n}}(\gamma,\xi)}, \nonumber \eeq
for real coefficients $d_{\tilde{n}}(\gamma,\xi)$ and natural numbers $1 \leq \alpha_i(\gamma,\xi) \leq \underline{N}$. Here, $\{ \tilde{E}^\alpha:\ 1 \leq \alpha \leq N\}$ is a basis containing unit vectors for $\mathfrak{g}$.

Let $\epsilon > 0$ and $f \in \mathscr{P}_S$. From the proof of Lemma \ref{l.b.6}, we can find $g_1, \cdots, g_{\tilde{n}} \in \mathscr{P}_S$ such that \beq \parallel f - g_1\cdots g_{\tilde{n}}\parallel_{L^2} < \frac{\epsilon}{C(\rho_n)}. \nonumber \eeq Using an earlier argument, there exists $G_1, \cdots, G_{\tilde{n}}$ such that $G_i^{\{e_0, e_1\}} \equiv g_i$ for $1 \leq i \leq \tilde{n}$.

Hence, we have that
\begin{align*}
\sum_{\xi=1}^{N(\gamma)}& d_{\tilde{n}}(\gamma,\xi) \phi^{\alpha_1(\gamma,\xi),n}(G_1)\cdots \phi^{\alpha_{\tilde{n}-1}(\gamma,\xi),n}(G_{\tilde{n}-1})
\phi^{\alpha_{\tilde{n}}(\gamma,\xi),n}
(G_{\tilde{n}})1 \\
=& \left( S, \prod_{i=1}^{\tilde{n}}g_i \otimes \rho_n(\tilde{E}^\gamma), \{e_a\}_{a=0}^3 \right).
\end{align*}

A direct computation will show that \beq \left| \Big( S, f \otimes \rho(\tilde{E}^\gamma), \{e_a\}_{a=0}^3 \Big) - \left( S, \prod_{i=1}^{\tilde{n}}g_i \otimes \rho_n(\tilde{E}^\gamma), \{e_a\}_{a=0}^3 \right) \right| < \epsilon. \nonumber \eeq This completes the proof.
\end{proof}

\begin{rem}\label{r.f.1}
Given any $\left(S, f_\alpha \otimes \rho(E^\alpha), \{\hat{f}_a\}_{a=0}^3\right)$, we can find a sequence of \\ Schwartz functions $\{g_\alpha^m: \bR^4 \rightarrow \bC\ |\ m \in \mathbb{N}\}$, for which its partial Fourier Transform approximates $f_\alpha$, for each $1\leq \alpha\leq N$. Here, for each $g_\alpha^m: \bR^4 \rightarrow \bC$, we can take the Fourier Transform on the time-like and space-like variables, i.e. $g_\alpha^m \mapsto g_\alpha^{m,\{\hat{f}_0, \hat{f}_1\}}(\hat{H}(\rho), \hat{P}(\rho))(\vec{x})$, $\vec{x} \in S$. As explained at the end of subsection \ref{ss.ur}, the eigenvalues $\{\hat{H}(\rho), \hat{P}(\rho) \}$ will be defined using a Yang-Mills path integral in subsection \ref{ss.mg}. Therefore, $\left(S, f_\alpha \otimes \rho(E^\alpha), \{\hat{f}_a\}_{a=0}^3\right)\in \mathscr{H}(\rho)$ can be written as the limit of $\left\{\left(S, g_\alpha^{m, \{\hat{f}_0, \hat{f}_1\}} \otimes \rho(E^\alpha), \{\hat{f}_a\}_{a=0}^3\right)\right\}_{m=1}^\infty$, using the inner product given in Definition \ref{d.inn.2}, and hence they are referred to as Yang-Mills fields.
\end{rem}

\subsection{Tempered Distribution}

\begin{prop}\label{p.td.1}
Let $S, \tilde{S}$ be bounded space-like surfaces lying inside some plane. Let $\left(S, g_\beta \otimes \rho_n(E^\beta), \{\hat{f}_a\}_{a=0}^3\right), \left(\tilde{S}, \tilde{g}_\beta \otimes \rho_n(E^\beta), \{\hat{f}_a\}_{a=0}^3 \right) \in \mathscr{D}$. Suppose $\hat{f}_a = \Lambda e_a$ for $\Lambda \in {\rm SL}(2, \bC)$.

Write \beq c_\alpha^{\gamma, \beta} = A(\Lambda)_\delta^\alpha\Tr\left[-[\ad(\rho_n(F^\delta)) \rho_n(E^\gamma)] \rho_n(E^\beta)\right], \nonumber \eeq with an implied sum over $\delta$. Given a test function $f \in \mathscr{P}$, we define for a smooth parametrization $\sigma: I^2 \rightarrow S \cap \tilde{S} \subset \bR^4$,
\begin{align*}
T(f) :=& \left\langle \phi^{\alpha,n}(f)\left(\tilde{S}, \tilde{g}_\gamma \otimes \rho_n(E^\gamma), \{\hat{f}_a\}_{a=0}^3\right), \left(S, g_\beta \otimes \rho_n(E^\beta), \{\hat{f}_a\}_{a=0}^3\right)\right\rangle \\
=& c_\alpha^{\gamma, \beta} \int_{I^2} d\hat{t}\ \left[f^{\{\hat{f}_0, \hat{f}_1\}}\cdot \tilde{g}_\gamma \cdot \overline{g_\beta}\right](\sigma(\hat{t}))
\cdot|\acute{\rho}_\sigma|(\hat{t}).
\end{align*}
Then $T$ is a linear functional on $\mathscr{P}$. Furthermore, it is a tempered distribution.
\end{prop}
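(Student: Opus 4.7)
The plan is to establish the two claims separately: linearity is immediate from the definitions, while continuity (the substantive content of ``tempered distribution'') will follow once we find a single seminorm $\parallel \cdot \parallel_{r,s}$ of Definition \ref{d.n.1} dominating $|T(f)|$. Linearity in $f$ is clear because the map $f \mapsto f^{\{\hat{f}_0,\hat{f}_1\}}$, being defined via integration of $f$ against the fixed kernel $e^{-i\tilde{\eta}(\cdot)}|\acute{\rho}_\sigma|/(2\pi)$ over the time-like plane $S^\flat$, is linear in $f$; the remaining integral over $I^2$ defining $T(f)$ is then linear in $f^{\{\hat{f}_0,\hat{f}_1\}}$.

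For continuity the main step is an $L^\infty$-type estimate: I would show that there exist $r \in \mathbb{N}$ and a constant $C_1 > 0$, depending only on $S\cap\tilde{S}$, the frame $\{\hat{f}_0,\hat{f}_1\}$ and the representation $\rho_n$, such that
\beq
\sup_{\vec{x} \in S \cap \tilde{S}} \left|f^{\{\hat{f}_0,\hat{f}_1\}}(\hat{H}(\rho_n),\hat{P}(\rho_n))(\vec{x})\right| \leq C_1 \parallel f \parallel_{r,0}. \nonumber
\eeq
To get this I would use the defining formula (\ref{e.p.3}) and observe that because $\sigma(s,\bar{s}) = s\hat{f}_0 + \bar{s}\hat{f}_1$ is linear in $(s,\bar s)$, the density $|\acute{\rho}_\sigma|(\hat{s})$ grows at most polynomially in $|\hat{s}|$. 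Since $f\in\mathscr{P}$, for any integer $M$ the product $(1+|\vec{x}+\sigma(\hat{s})|^2)^M |f(\vec{x}+\sigma(\hat{s}))|$ is controlled uniformly by $\parallel f \parallel_{r,0}$ for $r$ large enough, so choosing $M$ large enough to dominate the growth of $|\acute{\rho}_\sigma|$ and to produce an $L^1(\bR^2)$ factor in $\hat{s}$ gives the absolute convergence of the integral and the desired pointwise bound, uniformly for $\vec{x}$ ranging in the bounded set $S \cap \tilde{S}$.

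Next I would combine this with boundedness of the remaining ingredients. Since $\sigma: I^2 \to \bR^4$ is smooth and $I^2$ is compact, $|\acute{\rho}_\sigma|$ is bounded on $I^2$, and $\tilde{g}_\gamma$, $g_\beta \in \mathscr{P}_{S\cap\tilde{S}}$ are bounded there. Thus
\beq
|T(f)| \leq \sum_{\alpha,\beta,\gamma} |c_\alpha^{\gamma,\beta}| \cdot \parallel \tilde{g}_\gamma \parallel_{L^\infty} \parallel g_\beta \parallel_{L^\infty} \cdot \left(\int_{I^2} |\acute{\rho}_\sigma|(\hat{t})\,d\hat{t}\right) \cdot C_1 \parallel f \parallel_{r,0} \leq C \parallel f \parallel_{r,0}, \nonumber
\eeq
with $C$ independent of $f$. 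This is precisely the estimate characterizing continuity of $T$ on $\mathscr{P}$, so $T$ is a tempered distribution.

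The step I expect to be the main obstacle is the explicit polynomial bound on $|\acute{\rho}_\sigma|(\hat{s})$ over the full plane $\hat{s} \in \bR^2$, since this requires unpacking Definition \ref{d.r.2} (which replaces the time coordinate by $i$ times it) for the parametrization $\sigma(s,\bar s)=s\hat{f}_0+\bar s\hat{f}_1$. Once the explicit form is written down it is manifestly a polynomial of degree at most two in $(s,\bar{s})$, and the rest of the argument is routine Schwartz-space analysis. Note that only the $\parallel \cdot \parallel_{r,0}$ seminorm enters the final bound — no derivative controls on $f$ are needed — reflecting the fact that the kernel is smooth and the surface $S\cap\tilde{S}$ is bounded.
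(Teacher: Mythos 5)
Your proof is correct, and its engine is the same as the paper's: unpack $f^{\{\hat{f}_0,\hat{f}_1\}}$ via Equation (\ref{e.p.3}), use the Schwartz decay of $f$ to control the non-compact $\hat{s}$-integration over the time-like plane, and use boundedness of $\tilde{g}_\gamma$, $g_\beta$, $|\acute{\rho}_\sigma|$ together with compactness of $I^2$ for the remaining integration. The packaging differs: the paper substitutes the definition of $f^{\{\hat{f}_0,\hat{f}_1\}}$ into $T(f)$, rewrites it as a double integral over $\bR^2 \times I^2$ against an explicit kernel, and then simply asserts that this kernel is a tempered distribution; you instead prove the defining continuity estimate $|T(f)| \leq C \parallel f \parallel_{r,0}$ directly, which is the same fact made quantitative. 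What your route buys is an explicit seminorm bound (and the observation that no derivative seminorms are needed, since nothing differentiates $f$); what the paper's formulation buys is the kernel expression it reuses immediately afterwards to define the $\rho_n(\mathfrak{g})$-valued distribution in the remark following the proposition. One small correction to the step you flagged as the main obstacle: it is in fact vacuous. Since $\tilde{\sigma}(\hat{s}) = s\hat{f}_0 + \bar{s}\hat{f}_1$ is affine, every entry of the Jacobian matrices in Definition \ref{d.r.2} is a constant, so $|\acute{\rho}_{\tilde{\sigma}}|$ is constant in $\hat{s}$ (equal to $1$ for a Minkowski frame, as in the corollary after the proposition and Remark \ref{r.a.3}), not a polynomial of degree two; your estimate goes through a fortiori, and the only genuinely needed geometric input is that $\hat{f}_0, \hat{f}_1$ are linearly independent, so $|s\hat{f}_0 + \bar{s}\hat{f}_1|$ grows linearly in $|\hat{s}|$, giving integrability of the Schwartz tail uniformly over the bounded set $S \cap \tilde{S}$.
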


\begin{proof}
It is clear that it is a linear functional on $\mathscr{P}$. It remains to show that it is a tempered distribution.

Let $\sigma: I^2 \rightarrow \bR^4$ be a parametrization of $\hat{S} := S \cap \tilde{S}$, and write \beq h  =   [\tilde{g}_\gamma\cdot\overline{g_\beta}] \circ \sigma \cdot \left|\acute{\rho}_{\sigma}  \right|c_\alpha^{\gamma, \beta}
. \nonumber \eeq Then,
\begin{align*}
T(f) =& \int_{I^2} f^{\{\hat{f}_0, \hat{f}_1\}}(\sigma(\hat{t}))h(\hat{t})\ d\hat{t}.
\end{align*}

Let $\tilde{\sigma} : \hat{s} \mapsto \tilde{\sigma}(\hat{s}) = s\hat{f}_0 + \bar{s}\hat{f}_1$, $s,\bar{s} \in \bR$, whereby $\{\hat{f}_a\}_{a=0}^3$ is a Minkowski frame for $\hat{S}$. By definition,
\begin{align*}
\vec{x} \longmapsto f^{\{\hat{f}_0, \hat{f}_1\}}(\vec{x}) &\equiv f^{\{\hat{f}_0, \hat{f}_1\}}(\hat{H}(\rho_n), \hat{P}(\rho_n))(\vec{x}) \\
&= \int_{\hat{s}\in \bR^2} \frac{e^{-i[\tilde{\sigma}(\hat{s}) \cdot (\hat{H}(\rho_n)\hat{f}_0 + \hat{P}(\rho_n)\hat{f}_1)]}}{2\pi}f\left(\vec{x} + \tilde{\sigma}(\hat{s})\right)\left|\acute{\rho}_{\tilde{\sigma}}\right|(\hat{s})\ d\hat{s}.
\end{align*}

Write $\vec{\alpha} = \hat{H}(\rho_n)\hat{f}_0 + \hat{P}(\rho_n)\hat{f}_1$. Thus
\begin{align*}
T(f) =& \int_{I^2} f^{\{\hat{f}_0, \hat{f}_1\}}(\sigma(\hat{t}))h(\hat{t})\ d\hat{t} \\
=& \int_{\hat{s} \in \bR^2, \hat{t} \in I^2}d\hat{t}d\hat{s}\ f(\sigma(\hat{t}) + \tilde{\sigma}(\hat{s}))  \left|\acute{\rho}_{\tilde{\sigma}}\right|(\hat{s})
\left|\acute{\rho}_{\sigma}  \right|(\hat{t})\cdot \frac{e^{-i[\tilde{\sigma}(\hat{s}) \cdot \vec{\alpha}]}}{2\pi} [\tilde{g}_\gamma\cdot\overline{g_\beta}] \circ \sigma (\hat{t}) \cdot c_\alpha^{\gamma, \beta}.
\end{align*}

Note that
\begin{align*}
(\hat{s}, \hat{t})\in \bR^2 \times I^2 &\longmapsto \left|\acute{\rho}_{\tilde{\sigma}}\right|(\hat{s})
\left|\acute{\rho}_{\sigma}  \right|(\hat{t})\cdot \frac{e^{-i[\tilde{\sigma}(\hat{s}) \cdot \vec{\alpha}]}}{2\pi}
[\tilde{g}_\gamma\cdot\overline{g_\beta}] \circ \sigma (\hat{t}) \cdot c_\alpha^{\gamma, \beta},
\end{align*}
is a tempered distribution, because $\{g_\beta, \tilde{g}_\gamma\}_{\beta, \gamma=1}^N$ are Schwartz functions on $\hat{S}$. Therefore, $T$ is a tempered distribution.
\end{proof}

\begin{rem}
The map
\begin{align*}
f &\in \mathscr{P} \\
&\longmapsto \int_{\hat{s}\in \bR^2, \hat{t} \in I^2}d\hat{t}d\hat{s}\ f(\sigma(\hat{t}) + \tilde{\sigma}(\hat{s})) \left|\acute{\rho}_{\tilde{\sigma}}\right|(\hat{s})
\left|\acute{\rho}_{\sigma}  \right|(\hat{t})\cdot \frac{e^{-i[\tilde{\sigma}(\hat{s}) \cdot \vec{\alpha}]}}{2\pi}
[\tilde{g}_\gamma\cdot\overline{g_\beta}] \circ \sigma (\hat{t})  \\
&\hspace{8.4cm}\times \Tr\left[-[\ad(\cdot) \rho_n(E^\gamma)] \rho_n(E^\beta)\right],
\end{align*}
defines a $\rho_n(\mathfrak{g})$-valued distribution, using the inner product on $\rho_n(\mathfrak{g})$ defined in Equation (\ref{e.ck.2}).
\end{rem}

\begin{cor}
Suppose $S = \tilde{S} = S_0$. For any $f \in \mathscr{P}$, we can write
\begin{align*}
&T(f) \\
&= \int_{\bR^4}f(\vec{x})\frac{e^{i[x^0 \hat{H}(\rho_n)- x^1\hat{P}(\rho_n)]}}{2\pi}[\tilde{g}_\gamma \cdot \overline{g_\beta}](x^2, x^3)\ d\vec{x} \cdot \Tr\left[-[\ad(\rho_n(F^\alpha)) \rho_n(E^\gamma)] \rho_n(E^\beta)\right].
\end{align*}
\end{cor}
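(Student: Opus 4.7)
The plan is to specialize Proposition \ref{p.td.1} to the case $S=\tilde S=S_0$ and merge the two integrals into a single Lebesgue integral over $\bR^4$ via an explicit change of variables.

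First, because the Minkowski frame attached to $S_0$ is the standard frame $\{e_a\}_{a=0}^3$, the element $\Lambda\in{\rm SL}(2,\bC)$ with $\Lambda e_a=\hat f_a$ is $\pm 1$; under the standing hypothesis that each $j_\alpha+k_\alpha$ is an integer, the spinor representation $A$ satisfies $A(\pm 1)=1$, so $A(\Lambda)_\delta^{\alpha}=\delta^{\alpha}_{\delta}$. Consequently the coefficient $c_\alpha^{\gamma,\beta}$ of Proposition \ref{p.td.1} collapses to $\Tr\left[-[\ad(\rho_n(F^\alpha))\rho_n(E^\gamma)]\rho_n(E^\beta)\right]$, matching the trace factor on the right-hand side of the claim.

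Next, I would use the canonical coordinate parametrizations $\sigma:\hat t=(t,\bar t)\in I^2\mapsto(0,0,t,\bar t)$ for $S_0=S\cap\tilde S$, and $\tilde\sigma:\hat s=(s,\bar s)\in\bR^2\mapsto s e_0+\bar s e_1=(s,\bar s,0,0)$ for the time-like plane $S^\flat$ appearing in Definition \ref{d.p.1}. For these flat parametrizations lying entirely in a coordinate $2$-plane, a direct unwinding of the density $|\acute\rho_\sigma|$ appearing in Equation (\ref{e.i.3}) shows that only one wedge-component $\acute\rho_\sigma^{ab}[\det\acute J^\sigma_{ab}]$ is non-zero, namely the one whose index pair matches the coordinate plane, and that this component equals $1$. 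Hence $|\acute\rho_\sigma|\equiv 1$ on $I^2$ and $|\acute\rho_{\tilde\sigma}|\equiv 1$ on $\bR^2$.

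With these choices, $\sigma(\hat t)+\tilde\sigma(\hat s)=(s,\bar s,t,\bar t)$, and using the Minkowski metric of Equation (\ref{e.inn.4}),
\[
\tilde\sigma(\hat s)\cdot\vec\alpha=-s\hat H(\rho_n)+\bar s\hat P(\rho_n),\qquad e^{-i[\tilde\sigma(\hat s)\cdot\vec\alpha]}=e^{i[s\hat H(\rho_n)-\bar s\hat P(\rho_n)]}.
\]
Introducing the change of variables $\vec x=(x^0,x^1,x^2,x^3):=(s,\bar s,t,\bar t)$ converts the product measure $d\hat s\,d\hat t$ into $d\vec x$, the exponential factor becomes $e^{i[x^0\hat H(\rho_n)-x^1\hat P(\rho_n)]}$, and $[\tilde g_\gamma\overline{g_\beta}]\circ\sigma(\hat t)$ becomes $[\tilde g_\gamma\overline{g_\beta}](x^2,x^3)$. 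The integrand initially lives on $\bR^2\times I^2$, but since $\tilde g_\gamma,g_\beta\in\mathscr{P}_{S_0}$ are compactly supported in the interior of $I^2$, extending the domain to all of $\bR^4$ does not change the value. Assembling these substitutions in the formula from Proposition \ref{p.td.1} yields the stated identity.

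The only real technical obstacle is verifying that $|\acute\rho_\sigma|\equiv 1$ (and likewise $|\acute\rho_{\tilde\sigma}|\equiv 1$) for coordinate-aligned planes; this is a direct but somewhat tedious check against the full definition of $\acute\rho$ referenced in Definition \ref{d.r.2}, which is invoked rather than reproduced in the excerpt. Everything else is bookkeeping of parametrizations, the Minkowski inner product, and extension-by-zero of compactly supported functions.
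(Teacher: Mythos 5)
Your proposal is correct and follows essentially the same route as the paper's proof: note that the default frame forces $\Lambda=\pm1$ with $A(\pm1)=1$, choose the coordinate parametrizations of $S_0$ and $S_0^\flat$, check by direct computation that $|\acute{\rho}_\sigma|=|\acute{\rho}_{\tilde\sigma}|=1$, and combine the two integrals into one over $\bR^4$. The extra bookkeeping you do (the explicit change of variables and extension of the domain) is just a spelled-out version of the paper's ``the result hence follows.''
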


\begin{proof}
We note that $\{e_a\}_{a=0}^3$ is the default Minkowski frame. In this case, $\Lambda = \pm1 \in {\rm SL}(2, \bC)$, and $A(\pm 1) = 1$. Furthermore, we can choose the parametrizations \\
$\sigma(x^2, x^3) = x^2e_2 + x^3e_3$, $\tilde{\sigma}(x^0, x^1) = x^0 e_0 + x^1 e_1$. A direct calculation shows $|\acute{\rho}_\sigma|= |\acute{\rho}_{\tilde{\sigma}}| = 1$. The result hence follows.
\end{proof}

\begin{rem}
In this corollary, we see that the function that maps \beq \vec{x}=(x^0, x^1, x^2, x^3) \longmapsto \frac{e^{i[x^0 \hat{H}(\rho_n)- x^1\hat{P}(\rho_n)]}}{2\pi}[\tilde{g}_\gamma \cdot \overline{g_\beta}](x^2, x^3), \nonumber \eeq is not in $\mathscr{P}$, but rather it is a tempered distribution.
\end{rem}

\section{Transformation Law of the Field Operator}

Recall $\mathfrak{g}$ has dimension $N$ and has an irreducible representation $\rho: \mathfrak{g} \rightarrow {\rm End}(\bC^{\tilde{N}})$. Without any loss of generality, consider a space-like surface $S$, contained inside some plane. From Definition \ref{d.a.1}, we have a Minkowski frame $\{\hat{f}_a\}_{a=0}^3$ assigned to it.

In Definition \ref{d.ac.2}, we have a finite set $\mathcal{A} \subset \mathfrak{g}$, which defines a spinor indexed by $1 \leq \alpha \leq \underline{N}$. This spinor transforms under the action $A(\Lambda) : \phi^{\alpha,n} \mapsto A(\Lambda)_\beta^\alpha\phi^{\beta,n}$ for $\Lambda \in {\rm SL}(2,\bC)$.

For $\Lambda \in {\rm SL}(2, \bC)$, we consider $\Lambda^{-1}( S - \vec{a})$, which is also a space-like surface, with $\{\hat{g}_a\}_{a=0}^3 = \{\Lambda^{-1} \hat{f}_a\}_{a=0}^3$ assigned as a Minkowski frame to it by Definition \ref{d.a.1}.

\begin{prop}\label{p.w.1}
We have the transformation law for the field operators acting on $\mathscr{H}(\rho)$, i.e.
\begin{align*}
U(\vec{a},\Lambda)&\phi^{\alpha,n}(f) U(\vec{a},\Lambda)^{-1}\left( S,  g_\beta  \otimes \rho_n(E^\beta), \{\hat{f}_a\}_{a=0}^3\right)  \\
&= A(\Lambda^{-1})_\gamma^\alpha\phi^{\gamma,n}(f(\Lambda^{-1}(\cdot - \vec{a}) ))\left( S,  g_\beta  \otimes \rho_n(E^\beta), \{\hat{f}_a\}_{a=0}^3\right),
\end{align*}
whereby $S$ is some rectangular space-like surface contained in some plane.

Recall $S_0^\flat$ is the $x^0-x^1$ plane. However,
\begin{align*}
U&(\vec{a},\Lambda)\phi^{\alpha,n}(f) U(\vec{a},\Lambda)^{-1}1 \\
&=
\left(\Lambda S_0 + \vec{a}, e^{-i[\vec{a} \cdot (\hat{H}(\rho_n)\hat{g}_0 + \hat{P}(\rho_n)\hat{g}_1)]}
f^{\{\hat{g}_0, \hat{g}_1\}}(\Lambda^{-1}(\cdot-\vec{a}))\otimes \rho_n(F^\alpha), \{\hat{g}_a\}_{a=0}^3 \right).
\end{align*}
Here, $\{\hat{g}_a\}_{a=0}^3 = \{\Lambda e_a\}_{a=0}^3$ is a Minkowski frame for $\Lambda S_0$.
\end{prop}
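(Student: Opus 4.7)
The plan is to verify both identities by unfolding the composition $U(\vec{a},\Lambda)\,\phi^{\alpha,n}(f)\,U(\vec{a},\Lambda)^{-1}$ one operator at a time, using Equation (\ref{e.u.4}) together with Definitions \ref{d.co.1} and \ref{d.co.2}, and then matching the result with each of the two stated right-hand sides. Four ingredients will carry the bookkeeping: the group-law inverse $U(\vec{a},\Lambda)^{-1}=U(-\Lambda^{-1}\vec{a},\Lambda^{-1})$; the Lorentz invariance of the Minkowski pairing, namely $(\Lambda^{-1}\vec{v})\cdot(\Lambda^{-1}\vec{w})=\vec{v}\cdot\vec{w}$; Lemma \ref{l.b.5}, which gives invariance of $|\acute{\rho}_\sigma|$ under a Lorentz reparametrization; and the multiplicative property $A(\Lambda^{-1})A(\tilde{\Lambda})=A(\Lambda^{-1}\tilde{\Lambda})$ of the spinor representation. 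These will collapse all the Lorentz coefficients, measures, and translation phases generated by Steps 1 and 3.

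For the first identity, fix $\tilde{\Lambda}\in{\rm SL}(2,\bC)$ with $\hat{f}_a=\tilde{\Lambda}e_a$ and proceed in three steps. Step 1: apply Equation (\ref{e.u.4}) with $(-\Lambda^{-1}\vec{a},\Lambda^{-1})$ to send $\left(S,g_\beta\otimes\rho_n(E^\beta),\{\hat{f}_a\}\right)$ to
\[
\left(\Lambda^{-1}(S-\vec{a}),\ e^{i\vec{a}\cdot(\hat{H}(\rho_n)\hat{f}_0+\hat{P}(\rho_n)\hat{f}_1)}\,g_\beta(\Lambda(\cdot)+\vec{a})\otimes\rho_n(E^\beta),\{\Lambda^{-1}\hat{f}_a\}\right),
\]
where Minkowski invariance was used on the phase. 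Step 2: by Definition \ref{d.co.2}, with the new frame realized through $\Lambda^{-1}\tilde{\Lambda}$, applying $\phi^{\alpha,n}(f)$ introduces the fiberwise multiplier $f^{\{\Lambda^{-1}\hat{f}_0,\Lambda^{-1}\hat{f}_1\}}(\hat{H}(\rho_n),\hat{P}(\rho_n))\cdot A(\Lambda^{-1}\tilde{\Lambda})_\delta^\alpha$ followed by $\ad(\rho_n(F^\delta))$ on the Lie-algebra factor. Step 3: apply Equation (\ref{e.u.4}) once more with $(\vec{a},\Lambda)$ to restore the surface $S$ and frame $\{\hat{f}_a\}$; the new translation phase cancels the one produced in Step 1, and the arguments of the functions involved are pulled back by $\Lambda^{-1}(\cdot-\vec{a})$. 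The crux is then the covariance identity
\[
f^{\{\Lambda^{-1}\hat{f}_0,\Lambda^{-1}\hat{f}_1\}}\bigl(\Lambda^{-1}(\vec{x}-\vec{a})\bigr)=\bigl[f\circ\Lambda^{-1}(\cdot-\vec{a})\bigr]^{\{\hat{f}_0,\hat{f}_1\}}(\vec{x}),
\]
which is obtained by parametrizing the time-like plane spanned by $\{\Lambda^{-1}\hat{f}_0,\Lambda^{-1}\hat{f}_1\}$ as $\Lambda^{-1}\tilde{\sigma}$ with $\tilde{\sigma}(\hat{s})=s\hat{f}_0+\bar{s}\hat{f}_1$, invoking Minkowski invariance of the phase, and using Lemma \ref{l.b.5} for the area element. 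Combined with the factorization $A(\Lambda^{-1}\tilde{\Lambda})_\delta^\alpha=A(\Lambda^{-1})_\gamma^\alpha A(\tilde{\Lambda})_\delta^\gamma$, this reproduces exactly $A(\Lambda^{-1})_\gamma^\alpha\,\phi^{\gamma,n}\bigl(f(\Lambda^{-1}(\cdot-\vec{a}))\bigr)$ applied to the original state via Definition \ref{d.co.2}.

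For the vacuum identity the empty set is Poincare-invariant, so $U(\vec{a},\Lambda)^{-1}1=1$ and only Steps 2 and 3 remain. Definition \ref{d.co.1} gives $\phi^{\alpha,n}(f)1=(S_0,f^{\{e_0,e_1\}}\otimes\rho_n(F^\alpha),\{e_a\})$, and Equation (\ref{e.u.4}) then produces, with $\hat{g}_a=\Lambda e_a$, a state on $\Lambda S_0+\vec{a}$ with frame $\{\hat{g}_a\}$, carrying the phase $e^{-i\vec{a}\cdot(\hat{H}(\rho_n)\hat{g}_0+\hat{P}(\rho_n)\hat{g}_1)}$ and the section $f^{\{e_0,e_1\}}(\Lambda^{-1}(\cdot-\vec{a}))\otimes\rho_n(F^\alpha)$. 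The stated right-hand side is recovered by reading the symbol $f^{\{\hat{g}_0,\hat{g}_1\}}(\Lambda^{-1}(\cdot-\vec{a}))$ as the planar transform of the pulled-back test function, and then applying the same covariance identity in the form $[f\circ\Lambda^{-1}(\cdot-\vec{a})]^{\{\hat{g}_0,\hat{g}_1\}}(\vec{x})=f^{\{e_0,e_1\}}(\Lambda^{-1}(\vec{x}-\vec{a}))$. The principal obstacle is the careful tracking of which Lorentz transformation sits in which slot, the argument of $f$, the oscillatory phase, the area measure $|\acute{\rho}|$, and the spinor matrix, and in particular the verification of the covariance identity above; once that single identity is in hand, each remaining step is a direct expansion of the definitions together with a sign count on the translation phases.
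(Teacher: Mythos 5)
Your proposal is correct and follows essentially the same route as the paper's proof: unfold $U(\vec{a},\Lambda)\phi^{\alpha,n}(f)U(\vec{a},\Lambda)^{-1}$ step by step, cancel the translation phases, prove the covariance identity for the planar transform via the reparametrization $\Lambda^{-1}\tilde{\sigma}$ together with Minkowski invariance and the invariance of $\acute{\rho}_\sigma$ (Remark \ref{r.a.3}), and use $A(\Lambda^{-1}\tilde{\Lambda})=A(\Lambda^{-1})A(\tilde{\Lambda})$ to identify the result with $A(\Lambda^{-1})_\gamma^\alpha\phi^{\gamma,n}(f(\Lambda^{-1}(\cdot-\vec{a})))$. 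Your treatment of the vacuum case is in fact slightly more careful than the paper's, since you explicitly reconcile $f^{\{e_0,e_1\}}(\Lambda^{-1}(\cdot-\vec{a}))$ with the stated $f^{\{\hat{g}_0,\hat{g}_1\}}$ expression via the same covariance identity, a step the paper leaves implicit under ``by definitions.''
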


\begin{proof}
By Definition \ref{d.a.1}, there is a $\tilde{\Lambda} \in {\rm SL}(2, \bC)$ such that $\hat{f}_a = Y(\tilde{\Lambda})e_a$. Thus, $\{\Lambda^{-1}\tilde{\Lambda} e_a\}_{a=0}^3 = \{\Lambda^{-1}\hat{f}_a\}_{a=0}^3$ is a Minkowski frame for $\Lambda^{-1}(S - \vec{a})$.

Write $d_\beta^\alpha = A(\Lambda^{-1}\tilde{\Lambda})_\beta^\alpha \equiv A(\Lambda^{-1})_\gamma^\alpha A(\tilde{\Lambda})^\gamma_\beta$,
\begin{align*}
T(\rho_n, \vec{a}) &= e^{-i[\vec{a} \cdot (\hat{H}(\rho_n)\hat{f}_0 + \hat{P}(\rho_n)\hat{f}_1)]} , \quad
T(\rho_n, \vec{a})^{-1} = e^{i[\vec{a} \cdot (\hat{H}(\rho_n)\hat{f}_0 + \hat{P}(\rho_n)\hat{f}_1)]} ,
\end{align*}
and $f^{\mathcal{C}} = f^{\{\Lambda^{-1}\hat{f}_0, \Lambda^{-1}\hat{f}_1\}}$, with
$\mathcal{C} = \{\Lambda^{-1}\hat{f}_0, \Lambda^{-1}\hat{f}_1\}$, $\mathcal{D} = \{\Lambda^{-1}\hat{f}_a\}_{a=0}^3$.

We have
\begin{align}
U&(\vec{a},\Lambda)\phi^{\alpha,n}(f) U(\vec{a},\Lambda)^{-1}\left( S,  g_\beta  \otimes \rho_n(E^\beta), \{\hat{f}_a\}_{a=0}^3\right) \nonumber \\
=& U(\vec{a},\Lambda) \phi^{\alpha,n}(f) \Big( \Lambda^{-1}(S - \vec{a}), T(\rho_n, \vec{a})^{-1} g_\beta(\Lambda\cdot + \vec{a})  \otimes \rho_n\left(E^\beta\right), \mathcal{D} \Big)  \nonumber \\
=& U(\vec{a},\Lambda)\Big( \Lambda^{-1}(S - \vec{a}), [T(\rho_n, \vec{a})^{-1}f^{\mathcal{C}}](\cdot) d_\gamma^\alpha \cdot g_\beta(\Lambda\cdot + \vec{a})  \otimes \ad(\rho_n(F^\gamma)) \rho_n\left(E^\beta\right), \mathcal{D}\Big) \nonumber \\
=& \Big(S,  T(\rho_n, \vec{a})T(\rho_n, \vec{a})^{-1} f^{\mathcal{C}}(\Lambda^{-1}(\cdot - \vec{a}))d_\gamma^\alpha g_\beta(\cdot)\otimes \ad(\rho_n(F^\gamma)) \rho_n\left(E^\beta\right), \{\hat{f}_a\}_{a=0}^3\Big). \label{e.t.2}
\end{align}

Let $[\Lambda^{-1}(S - \vec{a})]^\flat$ be the span of  $\{\Lambda^{-1}\hat{f}_0, \Lambda^{-1}\hat{f}_1\}$. By definition, for any point $\vec{x} \in S$, we have that \beq f^{\mathcal{C}}(\Lambda^{-1}(\vec{x} - \vec{a})) \equiv f^{\{\Lambda^{-1}\hat{f}_0, \Lambda^{-1}\hat{f}_1\}}(\Lambda^{-1}(\vec{x} - \vec{a})), \nonumber \eeq means we do a partial integration using Equation (\ref{e.p.3}) on $f$ in the time-like plane $\vec{y} + [\Lambda^{-1}(S - \vec{a})]^\flat$, for $\vec{y} = \Lambda^{-1}(\vec{x} - \vec{a})$.

Let $\sigma(s,\bar{s}) = s\hat{f}_0 + \bar{s}\hat{f}_1$, $s,\bar{s} \in \bR$. Let $\hat{g}_0=\Lambda^{-1}\hat{f}_0$ and $\hat{g}_1=\Lambda^{-1}\hat{f}_1$. We will let $\hat{\sigma}(\hat{s}) = \Lambda^{-1}\sigma(\hat{s}) = s\hat{g}_0 + \bar{s}\hat{g}_1$, $s,\bar{s} \in \bR$.

Now, $\Lambda^{-1} \sigma \cdot \Lambda^{-1} \hat{f}_a = \sigma \cdot \hat{f}_a$ and $\acute{\rho}_{\hat{\sigma}} = \acute{\rho}_{\sigma}$. See Remark \ref{r.a.3}.  Hence,
\begin{align*}
&f^{\{\hat{g}_0, \hat{g}_1\}}(\hat{H}(\rho_n), \hat{P}(\rho_n))(\Lambda^{-1}(\vec{x} - \vec{a})) \\
&:= \int_{\hat{s}\in \bR^2}\frac{e^{-i[\hat{\sigma}(\hat{s}) \cdot (\hat{H}(\rho_n)\hat{g}_0 + \hat{P}(\rho_n)\hat{g}_1)]}}{2\pi}f(\vec{y} + \hat{\sigma}(\hat{s}))|\acute{\rho}_{\hat{\sigma}}|(\hat{s})\ d\hat{s} \\
&= \int_{\hat{s}\in \bR^2}\frac{e^{-i[\sigma(\hat{s}) \cdot (\hat{H}(\rho_n)\hat{f}_0 + \hat{P}(\rho_n)\hat{f}_1)]}}{2\pi}f(\vec{y} + \Lambda^{-1}\sigma(\hat{s}))|\acute{\rho}_{\sigma}|(\hat{s})\ d\hat{s} \\
&=\int_{\hat{s}\in \bR^2}\frac{e^{-i[\sigma(\hat{s}) \cdot (\hat{H}(\rho_n)\hat{f}_0 + \hat{P}(\rho_n)\hat{f}_1)]}}{2\pi}f(\Lambda^{-1}(\vec{x} + \sigma(\hat{s}) - \vec{a}))|\acute{\rho}_{\sigma}|(\hat{s})\ d\hat{s} \\
&= f(\Lambda^{-1}(\cdot - \vec{a}))^{\{\hat{f}_0, \hat{f}_1\}}(\hat{H}(\rho_n), \hat{P}(\rho_n))(\vec{x}).
\end{align*}

Therefore, Equation (\ref{e.t.2}) is equal to
\begin{align*}
\Big( S,& f(\Lambda^{-1}(\cdot - \vec{a})  )^{\{\hat{f}_0, \hat{f}_1\}}A(\Lambda^{-1})_\gamma^\alpha A(\tilde{\Lambda})_\delta^\gamma\cdot g_\beta  \otimes \ad(\rho_n(F^\delta))\rho_n(E^\beta), \{\tilde{\Lambda}e_a\}_{a=0}^3\Big) \\
&=A(\Lambda^{-1})_\gamma^\alpha \phi^{\gamma, n}[f(\Lambda^{-1}(\cdot - \vec{a}))]\Big( S, g_\beta  \otimes \rho_n(E^\beta), \{\tilde{\Lambda}e_a\}_{a=0}^3\Big).
\end{align*}

To prove the second statement, note that $U(\vec{a}, \Lambda)1 = 1$. Let $\hat{g}_a = \Lambda e_a$, $a=0, \cdots, 3$. Then,
\begin{align*}
U(\vec{a},\Lambda)&\phi^{\alpha,n}(f) U(\vec{a},\Lambda)^{-1}1 =U(\vec{a},\Lambda)\phi^{\alpha,n}(f)1 \\
=& U(\vec{a},\Lambda)\left(S_0, f^{\{e_0, e_1\}} \otimes \rho_n(F^\alpha), \{e_a\}_{a=0}^3 \right) \\
=& \left(\Lambda S_0 + \vec{a}, e^{-i[\vec{a} \cdot (\hat{H}(\rho_n)\hat{g}_0 + \hat{P}(\rho_n)\hat{g}_1)]}
f^{\{e_0, e_1\}}(\Lambda^{-1}(\cdot-\vec{a}) )\otimes \rho_n(F^\alpha), \{\hat{g}_a\}_{a=0}^3\right),
\end{align*}
by definitions.
\end{proof}

\begin{rem}
Relative to $\Lambda S_0$ and for $\Lambda \vec{a} = a^0\hat{g}_0 + a^1\hat{g}_1$, we see that multiplication by $ e^{-i[\Lambda \vec{a} \cdot (\hat{H}(\rho_n)\hat{g}_0 + \hat{P}(\rho_n)\hat{g}_1)]} = e^{i[a^0\hat{H}(\rho_n) - a^1\hat{P}(\rho_n)]}$, corresponds to a shift \beq f(\cdot)^{\{\hat{g}_0, \hat{g}_1\}} \mapsto f(\cdot - \Lambda \vec{a})^{\{\hat{g}_0, \hat{g}_1\}} , \nonumber \eeq when we take Fourier Transform.
\end{rem}

\section{Causality}\label{s.c}

We are now down to the final Wightman's axiom. Recall that we have a countable set $\{\hat{H}(\rho_n), \hat{P}(\rho_n)\}_{n \in \mathbb{N}}$ to be defined later in Definition \ref{d.ma.1}. We will see in this section that to satisfy local commutativity, we must have $\hat{H}(\rho_n)^2 - \hat{P}(\rho_n)^2 > 0$, for each $n$.

\begin{defn}(Space-like separation)\label{d.sl.1}\\
Let $f, g \in \mathscr{P}$. The support of $f$, denoted ${\rm supp}\ f$, is the closed set obtained by taking the complement of the largest open set in which $f$ vanishes. We say that ${\rm supp}\ f$ and ${\rm supp}\ g$ are space-like (time-like) separated if $f(\vec{x})g(\vec{y}) = 0$ for all pairs of points $\vec{x} = (x^0, x),\ \vec{y}=(y^0, y) \in \bR^4$ such that \beq (\vec{x} -\vec{y})\cdot (\vec{x} - \vec{y}) = -(x^0-y^0)^2 + \sum_{i=1}^3 (x^i - y^i)^2 \leq\ (\geq)\  0. \nonumber \eeq
\end{defn}

\begin{rem}\label{r.sl.1}
\begin{itemize}
  \item Given a connected space-like rectangular surface $S$ contained in a plane, any two distinct points $\vec{x}, \vec{y} \in S$ are actually space-like separated. By Definition \ref{d.sl.1}, if $f$ and $g$ have supports which are space-like separated, then we must have $f(\vec{x})g(\vec{x}) = 0$ for any $\vec{x} \in S$.
  \item Note that on a time-like rectangular surface, two distinct points in it may not be time-like separated.
\end{itemize}
\end{rem}

\begin{notation}
For this section, we only consider a space-like surface $S$ contained in a plane. It is equipped with a Minkowski frame $\{\hat{f}_a\}_{a=0}^3$, which will be assumed throughout. To ease our notations, we will drop this Minkowski frame from our notation. This means \beq (S, f_\alpha \otimes \rho(E^\alpha)) \equiv (S, f_\alpha \otimes \rho(E^\alpha), \{\hat{f}_a\}_{a=0}^3). \nonumber \eeq
\end{notation}

\begin{defn}
Write the commutators of $f$ and $g$ in $\mathscr{P}$ as
\begin{align*}
\lba \phi^{\alpha,n}(f), \phi^{\beta,n}(g) \rba &:= \phi^{\alpha,n}(f)\phi^{\beta,n}(g) - \phi^{\alpha,n}(g)\phi^{\beta,n}(f), \\
\lba \phi^{\alpha,n}(f)^\ast, \phi^{\beta,n}(g)^\ast \rba &:= \phi^{\alpha,n}(f)^\ast\phi^{\beta,n}(g)^\ast - \phi^{\alpha,n}(g)^\ast\phi^{\beta,n}(f)^\ast.
\end{align*}
\end{defn}

\begin{rem}
We are using the version of Wightman's last axiom taken from \cite{glimm1981}, and not from \cite{streater}.
\end{rem}

\begin{lem}\label{l.w.1}
Let $S$ be a space-like surface contained in a plane.
Then, we have
\begin{align*}
\lba \phi^{\alpha,n}(f), \phi^{\beta,n}(g) \rba \left(S, h_\gamma \otimes \rho_n(E^\gamma)\right) = 0,
\end{align*}
for any  $\left(S, h_\gamma \otimes \rho_n(E^\gamma)\right) \in \mathscr{D}$, and $[\phi^{\alpha,n}(f), \phi^{\beta,n}(g)] 1 = 0$.

We also have
\begin{align*}
\lba\phi^{\alpha,n}(f)^\ast, \phi^{\beta,n}(g)^\ast\rba&1 = 0, \quad \lba\phi^{\alpha,n}(f)^\ast, \phi^{\beta,n}(g)^\ast \rba \left(S, h_\gamma \otimes \rho_n(E^\gamma)\right) = 0,
\end{align*}
for any  $\left(S, h_\gamma \otimes \rho_n(E^\gamma)\right) \in \mathscr{D}$.
\end{lem}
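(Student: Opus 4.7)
The key structural observation is that the commutators in this lemma swap only the test functions $f$ and $g$, while keeping the spinor labels $\alpha,\beta$ (and hence the generators $F^\alpha,F^\beta\in\mathcal{A}$) fixed. Since $f,g$ enter the formulas in Definitions \ref{d.co.2} and \ref{d.ado} purely as the commuting scalar multipliers $f^{\{\hat{f}_0,\hat{f}_1\}}_n$ and $g^{\{\hat{f}_0,\hat{f}_1\}}_n$, while the non-commutative piece is an operator product $\ad(\rho_n(F^\epsilon))\ad(\rho_n(F^\delta))$ depending only on the fixed spinor indices, the two orderings in each commutator should produce the same output. My plan is simply to unravel each of the four compositions via the relevant definitions and match the results term by term.

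For $\lba\phi^{\alpha,n}(f),\phi^{\beta,n}(g)\rba 1$, I first use Definition \ref{d.co.1} so that $\phi^{\beta,n}(g)1=\bigl(S_0,g^{\{e_0,e_1\}}\otimes\rho_n(F^\beta),\{e_a\}_{a=0}^3\bigr)$ carries the default Minkowski frame $\{e_a\}$; then applying Definition \ref{d.co.2} (with $\Lambda=1$ and hence $A(\Lambda)=I$ by Remark \ref{r.ub.1}) yields $\phi^{\alpha,n}(f)\phi^{\beta,n}(g)1=\bigl(S_0,f^{\{e_0,e_1\}}g^{\{e_0,e_1\}}\otimes\rho_n([F^\alpha,F^\beta]),\{e_a\}\bigr)$, while the reversed composition gives the same state with scalar factor $g^{\{e_0,e_1\}}f^{\{e_0,e_1\}}$, equal by pointwise commutativity of scalars. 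For the action on $\bigl(S,h_\gamma\otimes\rho_n(E^\gamma),\{\hat{f}_a\}\bigr)$, two iterations of Definition \ref{d.co.2} give for each ordering the same state
\[
\bigl(S,\ f^{\{\hat{f}_0,\hat{f}_1\}}_n\,g^{\{\hat{f}_0,\hat{f}_1\}}_n\,A(\Lambda)^\alpha_\epsilon\,A(\Lambda)^\beta_\delta\,h_\gamma\ \otimes\ \ad(\rho_n(F^\epsilon))\,\ad(\rho_n(F^\delta))\,\rho_n(E^\gamma),\ \{\hat{f}_a\}\bigr),
\]
again modulo the ordering of the two commuting scalar factors. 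The adjoint statement on $1$ is immediate from $\phi^{\alpha,n}(g)^\ast 1=\phi^{\alpha,n}(f)^\ast 1=0$ in Definition \ref{d.ado}.

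The one genuine subtlety is $\lba\phi^{\alpha,n}(f)^\ast,\phi^{\beta,n}(g)^\ast\rba\bigl(S,h_\gamma\otimes\rho_n(E^\gamma),\{\hat{f}_a\}\bigr)$, because each application of Definition \ref{d.ado} produces not only a main $\mathscr{H}(\rho_n)$-contribution but also a ``scalar $\cdot\,1$'' correction which must cancel between the two orderings. In each composition the outer adjoint kills the scalar $\cdot\,1$ emitted by the inner step (using $\phi^\ast 1=0$), so what survives is a main $\mathscr{H}(\rho_n)$-term plus one residual $\langle\cdot,\cdot\rangle\,1$ contribution. The main terms match verbatim by the commuting-scalar argument above. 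For the residual scalar corrections I split into two cases: if $\{\hat{f}_a\}=\{e_a\}$, both inner products reduce to the same integral $-\int_{S\cap S_0}\overline{f^{\{e_0,e_1\}}}\,\overline{g^{\{e_0,e_1\}}}\,h_\gamma\,\Tr\bigl[[\rho_n(F^\beta),\rho_n(E^\gamma)]\rho_n(F^\alpha)\bigr]\,d|\acute{\rho}|$, equal by scalar commutativity; if $\{\hat{f}_a\}\neq\{e_a\}$, both inner products vanish by Definition \ref{d.inn.2}, because $\phi^{\alpha,n}(f)1$ and $\phi^{\alpha,n}(g)1$ carry the frame $\{e_a\}$. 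I expect this frame-matching bookkeeping to be the only real obstacle; the rest is a mechanical unravelling of the definitions.
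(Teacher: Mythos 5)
Your proposal is correct and follows essentially the same route as the paper: both reduce all four relations to the pointwise commutativity of the scalar multipliers $f^{\{\hat{f}_0,\hat{f}_1\}}$ and $g^{\{\hat{f}_0,\hat{f}_1\}}$, since the bracket swaps only the test functions while the spinor-index data ($A(\Lambda)$ entries and the $\ad(\rho_n(F^\delta))$ factors) stay fixed. The only cosmetic difference is that the paper dispatches the adjoint relations by simply taking adjoints of the first two, whereas you verify them directly, tracking the residual $\langle\cdot,\cdot\rangle\,1$ contributions and the frame-matching in Definition \ref{d.inn.2}; this is consistent with, and slightly more explicit than, the paper's argument.
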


\begin{proof}
The first two commutator relations follow from Definitions \ref{d.co.1}, \ref{d.co.2}, and that $f^{\{\hat{f}_0, \hat{f}_1\}} \cdot g^{\{\hat{f}_0, \hat{f}_1\}} = g^{\{\hat{f}_0, \hat{f}_1\}} \cdot f^{\{\hat{f}_0, \hat{f}_1\}}$. By taking the adjoint, the last two follow immediately.
\end{proof}

\begin{rem}
These commutation relations hold, regardless of whether the supports of $f$ and $g$ are space-like separated or not.
\end{rem}

\subsection{CPT Theorem}\label{ss.cpt}

\begin{defn}\label{d.ac.1}
Define the anti-commutators and commutators of $f$ and $g$ in $\mathscr{P}$, as
\begin{align*}
\lca \phi^{\alpha,n}(f), \phi^{\beta,n}(g)^\ast \rca_\pm &:= \phi^{\alpha,n}(f)\phi^{\beta,n}(g)^\ast \pm \phi^{\alpha,n}(g)\phi^{\beta,n}(f)^\ast, \\
\lca \phi^{\alpha,n}(f)^\ast, \phi^{\beta,n}(g) \rca_\pm &:= \phi^{\alpha,n}(f)^\ast\phi^{\beta,n}(g) \pm \phi^{\alpha,n}(g)^\ast\phi^{\beta,n}(f).
\end{align*}
\end{defn}

\begin{lem}\label{l.w.2}
Recall $\ad(\rho(E^\alpha))$ refers to its adjoint representation on $\rho(\mathfrak{g})$. Suppose the Minkowski frame on $S$ is $\hat{f}_a = \Lambda e_a$, $a=0,1,2,3$. Write $\mathcal{C} = \{\hat{f}_0, \hat{f}_1\}$.

For any  $\left(S, h_\gamma \otimes \rho_n(E^\gamma) \right) \in \mathscr{D}$,
we have
\begin{align}
\lca \phi^{\alpha,n}&(f), \phi^{\beta,n}(g)^\ast \rca_\pm \left(S, h_\gamma \otimes \rho_n(E^\gamma) \right) \nonumber \\
= -&A(\Lambda)_\delta^\alpha \overline{A(\Lambda)_\mu^\beta}\left(S, {\rm B}^\pm[f^{\mathcal{C}}\cdot  \overline{g^{\mathcal{C}}} \pm g^{\mathcal{C}}\cdot  \overline{f^{\mathcal{C}}} ] \cdot h_\gamma \otimes \ad(\rho_n(F^{\delta}))\ad(\rho_n(F^{\mu}))\rho_n(E^\gamma) \right) \nonumber \\
&+ \left\langle (S, h_\gamma \otimes \rho_n(E^\gamma)), \phi^{\beta,n}(g)1 \right\rangle\ \phi^{\alpha,n}(f)1 \nonumber \\
&\pm \left\langle (S, h_\gamma \otimes \rho_n(E^\gamma)), \phi^{\beta,n}(f)1 \right\rangle\ \phi^{\alpha,n}(g)1, \label{e.co.3}
\end{align}
whereby ${\rm B}^+ = {\rm Re}$ and ${\rm B}^- = {\rm Im}$ for anti-commutation and commutation relations respectively.

And
\begin{align}
\lca\phi^{\alpha,n}& (f)^\ast, \phi^{\beta,n}(g) \rca_\pm \left(S, h_\gamma \otimes \rho_n(E^\gamma) \right) \nonumber \\
= -&\overline{A(\Lambda)_\delta^\alpha} A(\Lambda)_\mu^\beta \left(S,  {\rm B}^\pm[\overline{f^{\mathcal{C}}}\cdot  g^{\mathcal{C}} \pm \overline{g^{\mathcal{C}}}\cdot f^{\mathcal{C}} ]\cdot h_\gamma \otimes \ad(\rho_n(F^{\delta}))\ad(\rho_n(F^{\mu})) \rho_n(E^\gamma) \right) \nonumber \\
&+ \left\langle \left(S, g^{\mathcal{C}} A(\Lambda)_\mu^\beta\cdot h_\gamma \otimes \ad(\rho_n(F^{\mu}))\rho_n(E^\gamma)\right), \phi^{\alpha,n}(f)1 \right\rangle 1 \nonumber \\
&\pm \left\langle \left(S, f^{\mathcal{C}} A(\Lambda)_\mu^\beta\cdot h_\gamma \otimes \ad(\rho_n(F^{\mu}))\rho_n(E^\gamma) \right), \phi^{\alpha,n}(g)1 \right\rangle 1, \label{e.co.4}
\end{align}
whereby ${\rm B}^+ = {\rm Re}$ and ${\rm B}^- = {\rm Im}$ for anti-commutation and commutation relations respectively.
\end{lem}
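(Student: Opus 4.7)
The proof is a direct unwinding of Definitions \ref{d.co.2} and \ref{d.ado}, so the whole task is careful bookkeeping of the scalar multipliers, the frame–dependent spinor factors $A(\Lambda)_\delta^\alpha$, and the nested adjoint actions $\ad(\rho_n(F^\delta))\ad(\rho_n(F^\mu))$. I will first establish (\ref{e.co.3}) and then read off (\ref{e.co.4}) by the same template.

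My first step is to evaluate $\phi^{\beta,n}(g)^\ast\bigl(S,h_\gamma\otimes\rho_n(E^\gamma)\bigr)$ using Definition \ref{d.ado}. This produces two pieces: the ``bulk'' piece $-\bigl(S,\overline{g^{\mathcal{C}}}\,\overline{A(\Lambda)_\mu^\beta}\,h_\gamma\otimes\ad(\rho_n(F^\mu))\rho_n(E^\gamma)\bigr)$ and the ``vacuum'' piece $\bigl\langle(S,h_\gamma\otimes\rho_n(E^\gamma)),\phi^{\beta,n}(g)1\bigr\rangle\,1$. I then apply $\phi^{\alpha,n}(f)$ via Definition \ref{d.co.2}: on the bulk piece the scalar $f^{\mathcal{C}}A(\Lambda)_\delta^\alpha$ multiplies the measurable section while $\ad(\rho_n(F^\delta))$ is pre-composed with the existing $\ad(\rho_n(F^\mu))$; on the vacuum piece $\phi^{\alpha,n}(f)1$ is produced by Definition \ref{d.co.1}, giving the term $\langle\cdots,\phi^{\beta,n}(g)1\rangle\,\phi^{\alpha,n}(f)1$. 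Doing the same calculation for $\phi^{\alpha,n}(g)\phi^{\beta,n}(f)^\ast$ yields the mirror expressions with $f\leftrightarrow g$. Adding (resp.\ subtracting) and using $f^{\mathcal{C}}\overline{g^{\mathcal{C}}}\pm g^{\mathcal{C}}\overline{f^{\mathcal{C}}}=2\,{\rm Re}(f^{\mathcal{C}}\overline{g^{\mathcal{C}}})$ or $2i\,{\rm Im}(f^{\mathcal{C}}\overline{g^{\mathcal{C}}})$ combines the two bulk pieces into the single scalar ${\rm B}^\pm[f^{\mathcal{C}}\overline{g^{\mathcal{C}}}\pm g^{\mathcal{C}}\overline{f^{\mathcal{C}}}]$ that appears in the statement, while the vacuum pieces simply add/subtract to give the last two lines of (\ref{e.co.3}).

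For (\ref{e.co.4}) I run the computation in the opposite order. Applying $\phi^{\beta,n}(g)$ first gives $\bigl(S,g^{\mathcal{C}}A(\Lambda)_\mu^\beta h_\gamma\otimes \ad(\rho_n(F^\mu))\rho_n(E^\gamma)\bigr)$ with no vacuum piece (this is why the frame–dressed state appears inside the inner product on the right-hand side). Applying $\phi^{\alpha,n}(f)^\ast$ via Definition \ref{d.ado} then produces its own bulk piece, with the additional $-\overline{f^{\mathcal{C}}}\,\overline{A(\Lambda)_\delta^\alpha}$ and $\ad(\rho_n(F^\delta))$ pre-composed on the outside, and a vacuum piece whose inner product has the already-dressed state as its first argument. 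Doing the same for $\phi^{\beta,n}(f)\phi^{\alpha,n}(g)^\ast$ and combining with the appropriate sign reproduces (\ref{e.co.4}) exactly.

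The computation itself is routine; the genuine friction is twofold. First, the adjoint in Definition \ref{d.ado} sends an element of $\mathscr{H}(\rho_n)$ to a sum of a bulk term in $\mathscr{H}(\rho_n)$ and a multiple of the vacuum $1\in\mathscr{H}(\rho_0)$, so at every stage I must propagate both components correctly through the next operator and verify that cross-terms involving the vacuum do not interfere with the bulk combination. Second, the conjugation of the spinor coefficients $A(\Lambda)_\delta^\alpha$ in (\ref{e.co.3}) versus (\ref{e.co.4}) is determined by which of $\phi$ or $\phi^\ast$ carries the Greek index $\alpha$ or $\beta$; tracking this, together with the requirement from Definition \ref{d.co.2} that the same Minkowski frame $\{\hat{f}_a\}=\{\Lambda e_a\}$ occur on both factors (otherwise the inner product in (\ref{e.i.3}) vanishes and every term under consideration is zero), is the main place where one has to be precise.
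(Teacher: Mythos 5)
Your plan matches the paper's proof essentially step for step: both evaluate $\phi^{\beta,n}(g)^\ast$ first via Definition \ref{d.ado}, push $\phi^{\alpha,n}(f)$ through the resulting bulk and vacuum pieces via Definitions \ref{d.co.1} and \ref{d.co.2}, take the sum or difference, and invoke that $f^{\mathcal{C}}\cdot\overline{g^{\mathcal{C}}}\pm g^{\mathcal{C}}\cdot\overline{f^{\mathcal{C}}}$ is real (resp.\ purely imaginary), with the paper then omitting (\ref{e.co.4}) as ``similar'', which you sketch in exactly the same spirit. The only blemish is a harmless labeling slip where you write the mirrored term of (\ref{e.co.4}) as $\phi^{\beta,n}(f)\phi^{\alpha,n}(g)^\ast$ rather than $\phi^{\alpha,n}(g)^\ast\phi^{\beta,n}(f)$; the computation you actually describe is the correct one.
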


\begin{proof}
From Definitions \ref{d.co.1}, \ref{d.co.2} and \ref{d.ado}, we see that
\begin{align*}
\phi^{\alpha,n}(f)\phi^{\beta,n}(g)^\ast &(S, h_\gamma \otimes \rho_n(E^\gamma)) \\
=& \phi^{\alpha,n}(f) \overline{A(\Lambda)_\mu^\beta}\bigg[\left(S, -\overline{g^{\mathcal{C}}} \cdot h_\gamma \otimes \ad(\rho_n(F^{\mu}))\rho_n(E^\gamma)\right) \\
&\hspace{3.6cm} + \left\langle \left(S, h_\gamma \otimes \rho_n(E^\gamma)\right), \phi^{\beta,n}(g)1 \right\rangle 1\bigg] \\
=& - A(\Lambda)_\delta^\alpha \overline{A(\Lambda)_\mu^\beta}\left(S, f^{\mathcal{C}} \cdot \overline{g^{\mathcal{C}}} \cdot h_\gamma \otimes \ad(\rho_n(F^{\delta}))\ad(\rho_n(F^{\mu}))\rho_n(E^\gamma)\right) \\
&+ \left\langle \left(S, h_\gamma \otimes \rho_n(E^\gamma)\right), \phi^{\beta,n}(g)1 \right\rangle\ \phi^{\alpha,n}(f)1.
\end{align*}

Similarly,
\begin{align*}
\phi^{\alpha,n}(g)\phi^{\beta,n}(f)^\ast &\left(S, h_\gamma \otimes \rho_n(E^\gamma) \right) \\
=& - A(\Lambda)_\delta^\alpha \overline{A(\Lambda)_\mu^\beta}\left(S, g^{\mathcal{C}}\cdot \overline{f^{\mathcal{C}}}  \cdot h_\gamma \otimes \ad(\rho_n(F^{\delta}))\ad(\rho_n(F^{\mu}))\rho_n(E^\gamma) \right) \\
&+ \left\langle \left(S, h_\gamma \otimes \rho_n(E^\gamma)\right), \phi^{\beta,n}(f)1 \right\rangle\ \phi^{\alpha,n}(g)1.
\end{align*}

Take the sum or difference, and we will obtain
\begin{align*}
 -A&(\Lambda)_\delta^\alpha \overline{A(\Lambda)_\mu^\beta}\Big(S,  [f^{\mathcal{C}} \cdot \overline{g^{\mathcal{C}}}\pm g^{\mathcal{C}} \cdot \overline{f^{\mathcal{C}}}  ]\cdot h_\gamma \otimes \ad(\rho_n(F^{\delta}))\ad(\rho_n(F^{\mu}))\rho_n(E^\gamma) \Big) \\
+& \left\langle (S, h_\gamma \otimes \rho_n(E^\gamma)), \phi^{\beta,n}(g)1 \right\rangle \phi^{\alpha,n}(f)1 \pm \left\langle (S, h_\gamma \otimes \rho_n(E^\gamma)), \phi^{\beta,n}(f)1 \right\rangle \phi^{\alpha,n}(g)1.
\end{align*}

Since $f^{\mathcal{C}}\cdot  \overline{g^{\mathcal{C}}} \pm g^{\mathcal{C}}\cdot  \overline{f^{\mathcal{C}}}$ is real and purely imaginary respectively, this proves Equation (\ref{e.co.3}). The proof for Equation (\ref{e.co.4}) is similar, hence omitted.
\end{proof}

\begin{rem}\label{r.cpt.1}
Without any loss of generality, we assume that a time-like plane $S^\flat$ spanned by $\{\hat{f}_0, \hat{f}_1\}$, is parametrized by $\vec{y}(\hat{s}) \equiv \vec{y}(s, \bar{s}) :=  s\hat{f}_0 + \bar{s}\hat{f}_1$, $s, \bar{s} \in \bR$, whereby $\hat{f}_0 \cdot \hat{f}_0 = -1$, $\hat{f}_1 \cdot \hat{f}_1 = 1$, $\hat{f}_0 \cdot \hat{f}_1 = 0$.

Suppose we now assume that ${\rm supp}\ f$ and ${\rm supp}\ g$ are disjoint compact sets. Write $\hat{H} = \hat{H}(\rho_n)$, $\hat{P} = \hat{P}(\rho_n)$. Let $\vec{x} \in S$. By definition, for any $\vec{x} \in S$,
\begin{align*}
g^{\{\hat{f}_0, \hat{f}_1\}}(\hat{H}, \hat{P})(\vec{x}) =& \int_{\hat{s} \in \bR^2}\frac{e^{-i[\vec{y}(\hat{s}) \cdot (\hat{H}\hat{f}_0+\hat{P}\hat{f}_1)]}}{2\pi}g(\vec{x} + \vec{y}(\hat{s}))|\acute{\rho}_{\vec{y}}|(\hat{s})\ d\hat{s}, \\
\overline{f^{\{\hat{f}_0, \hat{f}_1\}}}(\hat{H}, \hat{P})(\vec{x}) =& \int_{\hat{t} \in \bR^2}\frac{e^{i[\vec{y}(\hat{t}) \cdot (\hat{H}\hat{f}_0+\hat{P}\hat{f}_1)]}}{2\pi}\bar{f}(\vec{x} + \vec{y}(\hat{t}) )|\acute{\rho}_{\vec{y}}|(\hat{t})\ d\hat{t}.
\end{align*}

Write $g_{\vec{x}}(\cdot) = g(\vec{x} + \cdot)$, $\bar{f}_{\vec{x}}(\cdot) = \bar{f}(\vec{x} + \cdot)$. Thus,
\begin{align*}
\Big[g^{\{\hat{f}_0, \hat{f}_1\}}&\cdot  \overline{f^{\{\hat{f}_0, \hat{f}_1\}}}\Big](\hat{H}, \hat{P})(\vec{x}) \\
&= \int_{\hat{s}, \hat{t} \in \bR^2}\frac{e^{-i[\vec{y}(\hat{s}) \cdot (\hat{H}\hat{f}_0+\hat{P}\hat{f}_1)]}}{(2\pi)^2}g_{\vec{x}}( \vec{y}(\hat{s})+\vec{y}(\hat{t}))\bar{f}_{\vec{x}}( \vec{y}( \hat{t}))|\acute{\rho}_{\vec{y}}|(\hat{s})|\acute{\rho}_{\vec{y}}|(\hat{t})\ d\hat{s}d\hat{t}  \\
&= \int_{\hat{t} \in \bR^2, \hat{s} \in D}\frac{e^{-i[\vec{y}(\hat{s}) \cdot (\hat{H}\hat{f}_0+\hat{P}\hat{f}_1)]}}{(2\pi)^2}g_{\vec{x}}( \vec{y}(\hat{s})+\vec{y}(\hat{t}))\bar{f}_{\vec{x}}( \vec{y}( \hat{t}))|\acute{\rho}_{\vec{y}}|(\hat{s})|\acute{\rho}_{\vec{y}}|(\hat{t})\ d\hat{s}d\hat{t}.
\end{align*}

Similarly,
\begin{align*}
\Big[f^{\{\hat{f}_0, \hat{f}_1\}}&\cdot  \overline{g^{\{\hat{f}_0, \hat{f}_1\}}}\Big](\hat{H}, \hat{P})(\vec{x})  \\
&= \int_{\hat{t} \in \bR^2, \hat{s} \in -D }\frac{e^{-i[\vec{y}(\hat{s}) \cdot (\hat{H}\hat{f}_0+\hat{P}\hat{f}_1)]}}{(2\pi)^2}f_{\vec{x}}( \vec{y}(\hat{s})+\vec{y}(\hat{t}))\bar{g}_{\vec{x}}( \vec{y}( \hat{t}))|\acute{\rho}_{\vec{y}}|(\hat{s})|\acute{\rho}_{\vec{y}}|(\hat{t})\ d\hat{s}d\hat{t}.
\end{align*}

From both expressions, we see that the integrals depend on the relative displacement between pairs of positions in their respective supports. When the region of integration is on $D$, it is clear that we are referring to $g^{\{\hat{f}_0, \hat{f}_1\}}\cdot  \overline{f^{\{\hat{f}_0, \hat{f}_1\}}}$; when the region of integration is on $-D$, then we are referring to $f^{\{\hat{f}_0, \hat{f}_1\}}\cdot  \overline{g^{\{\hat{f}_0, \hat{f}_1\}}}$. Note that the vectors in the set $\vec{y}(-D)$, are in the opposite direction of those vectors in $\vec{y}(D)$.

Thus, by reversing time direction and taking space inversion (parity), we can obtain its complex conjugate. The CPT theorem is being implied by these two expressions. See Remark \ref{r.cpt.2}. Note that here, `C' refers to complex conjugation, not charge conjugation.
\end{rem}

In general, the anti-commutators and commutators will not be equal to zero, even when the supports are space-like separated.

\begin{lem}\label{l.w.6}
Let $f, g \in \mathscr{P}$ for which their supports are space-like separated, and let $S$ be a space-like plane, equipped with a Minkowski frame $\{\hat{f}_a\}_{a=0}^3$. Let $S^\flat$ be the span of $\{\hat{f}_0, \hat{f}_1\}$.

Fix a $\vec{x} \in S$. Write $g_{\vec{x}}(\cdot) = g(\vec{x} + \cdot)$, $\bar{f}_{\vec{x}}(\cdot) = \bar{f}(\vec{x} + \cdot)$. Suppose $\hat{H}(\rho_n)^2 - \hat{P}(\rho_n)^2 > 0$.

If ${\rm supp}\ f \cap (\vec{x} + S^\flat) = \emptyset$ or ${\rm supp}\ g \cap (\vec{x} + S^\flat) = \emptyset$, then for any $u, v \in S^\flat$, we have
\beq \frac{e^{-i[(u-v) \cdot (\hat{H}(\rho_n)\hat{f}_0 + \hat{P}(\rho_n)\hat{f}_1)]}}{(2\pi)^2}
g_{\vec{x}}(u)\bar{f}_{\vec{x}}(v) \mp \frac{e^{-i[(v-u) \cdot (\hat{H}(\rho_n)\hat{f}_0+\hat{P}(\rho_n)\hat{f}_1)]}}{(2\pi)^2}f_{\vec{x}}(v)
\bar{g}_{\vec{x}}(u) = 0. \label{e.a.2} \eeq

Suppose both sets are non-empty. If $u-v$ is parallel to $\hat{P}(\rho_n)\hat{f}_0 + \hat{H}(\rho_n)\hat{f}_1$, then the commutation and anti-commutation relations in Equation (\ref{e.a.2}) hold when $f\cdot \bar{g}$ is real and imaginary respectively.
\end{lem}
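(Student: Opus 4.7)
\textbf{Proof plan for Lemma \ref{l.w.6}.}

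The argument splits cleanly according to the two scenarios in the statement. For the first claim, the plan is to observe that every $u,v \in S^\flat$ gives $\vec{x}+u,\vec{x}+v \in \vec{x}+S^\flat$. If $\mathrm{supp}\,f \cap (\vec{x}+S^\flat)=\emptyset$, then $f_{\vec{x}}(v)=f(\vec{x}+v)=0$ and $\bar{f}_{\vec{x}}(v)=0$ for every $v\in S^\flat$, so both summands on the left-hand side of Equation (\ref{e.a.2}) vanish pointwise. The case $\mathrm{supp}\,g\cap(\vec{x}+S^\flat)=\emptyset$ is symmetric, using $g_{\vec{x}}(u)=\bar{g}_{\vec{x}}(u)=0$. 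This part is essentially a support check; no use is made of the assumption $\hat{H}(\rho_n)^2-\hat{P}(\rho_n)^2>0$.

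For the second claim, the plan is to exploit the Minkowski orthogonality relations $\hat{f}_0\cdot\hat{f}_0=-1$, $\hat{f}_1\cdot\hat{f}_1=1$, $\hat{f}_0\cdot\hat{f}_1=0$ recorded for the Minkowski frame. Writing $u-v = \lambda(\hat{P}(\rho_n)\hat{f}_0+\hat{H}(\rho_n)\hat{f}_1)$ for some $\lambda\in\bR$ and computing
\begin{equation*}
(u-v)\cdot(\hat{H}(\rho_n)\hat{f}_0+\hat{P}(\rho_n)\hat{f}_1) = \lambda\bigl(-\hat{P}(\rho_n)\hat{H}(\rho_n) + \hat{H}(\rho_n)\hat{P}(\rho_n)\bigr) = 0,
\end{equation*}
and likewise for $v-u$, collapses both exponential phase factors in Equation (\ref{e.a.2}) to $1$. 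Thus the claimed identity reduces to the pointwise statement
\begin{equation*}
g_{\vec{x}}(u)\bar{f}_{\vec{x}}(v) \mp f_{\vec{x}}(v)\bar{g}_{\vec{x}}(u) = 0.
\end{equation*}

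Setting $z := g_{\vec{x}}(u)\bar{f}_{\vec{x}}(v) = g(\vec{x}+u)\overline{f(\vec{x}+v)}$, one has by direct conjugation $\bar{z} = f(\vec{x}+v)\overline{g(\vec{x}+u)} = f_{\vec{x}}(v)\bar{g}_{\vec{x}}(u)$, so the reduced identity reads $z \mp \bar{z}=0$. The upper sign (the commutator case) holds precisely when $\mathrm{Im}(z)=0$, which is ensured when $f\bar{g}$ takes real values pointwise; the lower sign (the anti-commutator case) holds precisely when $\mathrm{Re}(z)=0$, which is ensured when $f\bar{g}$ takes imaginary values pointwise. This matches exactly the statement of the lemma.

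The only place the standing hypothesis $\hat{H}(\rho_n)^2-\hat{P}(\rho_n)^2>0$ genuinely plays a role is in verifying that the vector $\hat{P}(\rho_n)\hat{f}_0+\hat{H}(\rho_n)\hat{f}_1$ is non-null and in fact space-like, namely $(\hat{P}(\rho_n)\hat{f}_0+\hat{H}(\rho_n)\hat{f}_1)\cdot(\hat{P}(\rho_n)\hat{f}_0+\hat{H}(\rho_n)\hat{f}_1) = \hat{H}(\rho_n)^2-\hat{P}(\rho_n)^2>0$; this is what makes the displacement $u-v$ space-like within $S^\flat$ and so compatible with the space-like separation hypothesis on the supports of $f$ and $g$. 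The real substance of the lemma is the Minkowski-orthogonality cancellation killing the phase; there is no serious obstacle beyond carefully tracking which sign in $\mp$ corresponds to which bracket, and I expect the proof itself will be short.
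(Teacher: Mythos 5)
Your proposal is correct and follows essentially the same route as the paper's proof: the empty-intersection case is handled by the pointwise vanishing of $f_{\vec{x}}$ or $g_{\vec{x}}$ on $S^\flat$, and the parallel-displacement case by the Minkowski orthogonality $(\hat{P}\hat{f}_0+\hat{H}\hat{f}_1)\cdot(\hat{H}\hat{f}_0+\hat{P}\hat{f}_1)=0$ collapsing both phases to $1$, after which the real/imaginary dichotomy of $f\bar{g}$ settles the two signs exactly as in the paper. Your closing observation that $\hat{H}(\rho_n)^2-\hat{P}(\rho_n)^2>0$ enters only to make $\hat{P}\hat{f}_0+\hat{H}\hat{f}_1$ space-like is not in the paper's proof itself but agrees with its Remark \ref{r.ex.4}.
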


\begin{proof}
When one of the sets is empty, then $f_{\vec{x}}(v)\cdot \bar{g}_{\vec{x}}(u) = 0$, so clearly Equation (\ref{e.a.2}) holds.

Now consider when both are non-empty. Since \beq (u-v) \cdot (\hat{H}(\rho_n)\hat{f}_0 + \hat{P}(\rho_n)\hat{f}_1) = c (\hat{P}(\rho_n)\hat{f}_0 + \hat{H}(\rho_n)\hat{f}_1)\cdot (\hat{H}(\rho_n)\hat{f}_0 + \hat{P}(\rho_n)\hat{f}_1) =0, \nonumber \eeq we have \beq e^{-i[(u-v) \cdot (\hat{H}(\rho_n)\hat{f}_0 + \hat{P}(\rho_n)\hat{f}_1)]} = \cos\left[(u-v) \cdot (\hat{H}(\rho_n)\hat{f}_0 + \hat{P}(\rho_n)\hat{f}_1) \right]=1. \nonumber \eeq When $f_{\vec{x}}(v)\cdot \bar{g}_{\vec{x}}(u)$ is real, then the LHS of Equation (\ref{e.a.2}) becomes \beq
\left[\frac{e^{-i[(u-v) \cdot (\hat{H}(\rho_n)\hat{f}_0 + \hat{P}(\rho_n)\hat{f}_1)]}}{(2\pi)^2} - \frac{e^{-i[(v-u) \cdot (\hat{H}(\rho_n)\hat{f}_0+\hat{P}(\rho_n)\hat{f}_1)]}}{(2\pi)^2}\right]f_{\vec{x}}(v)
\bar{g}_{\vec{x}}(u), \nonumber \eeq which is zero.

When $f_{\vec{x}}(v)\cdot \bar{g}_{\vec{x}}(u)$ is purely imaginary, then the LHS of Equation (\ref{e.a.2}) becomes \beq
\left[-\frac{e^{-i[(u-v) \cdot (\hat{H}(\rho_n)\hat{f}_0 + \hat{P}(\rho_n)\hat{f}_1)]}}{(2\pi)^2} + \frac{e^{-i[(v-u) \cdot (\hat{H}(\rho_n)\hat{f}_0+\hat{P}(\rho_n)\hat{f}_1)]}}{(2\pi)^2}\right]f_{\vec{x}}(v)
\bar{g}_{\vec{x}}(u), \nonumber \eeq which is zero.
\end{proof}

\begin{rem}\label{r.ex.4}
Suppose both ${\rm supp}\ f \cap (\vec{x} + S^\flat)$ and ${\rm supp}\ g \cap (\vec{x} + S^\flat)$ are non-empty.
The lemma says that there exists a space-like line in $S^\flat$, such that the LHS of Equation (\ref{e.a.2}) is zero.

If $\hat{H}(\rho_n)\hat{f}_0 + \hat{P}(\rho_n)\hat{f}_1$ is space-like or null, then the LHS of Equation (\ref{e.a.2}) cannot be zero, on any space-like line in $S^\flat$. This can be inferred from Lemma \ref{l.b.4}.

Thus, it is essential that a positive mass gap exists in $\mathscr{H}(\rho_n)$, for the lemma to hold true.
\end{rem}

%
%

Consider a bilinear map that sends
\begin{align*}
(f,  g)&\in \mathscr{P} \times \mathscr{P} \\
\longmapsto& \left\langle  \phi^{\alpha,n}(f) \phi^{\beta, n}(g)^\ast \left(S, h_\gamma \otimes \rho_n(E^\gamma) \right), (\tilde{S}, \tilde{h}_\gamma \otimes \rho_n(E^\gamma) )\right\rangle \\
&- \left\langle \left(S, h_\gamma \otimes \rho_n(E^\gamma)\right), \phi^{\beta, n}(g)1 \right\rangle\left\langle \phi^{\alpha,n}(f)1, (\tilde{S}, \tilde{h}_\gamma \otimes \rho_n(E^\gamma) )\right\rangle .
\end{align*}
From Proposition \ref{p.td.1}, we have a tempered distribution $W(\vec{x}, \vec{y})$ such that
\begin{align*}
\int_{\vec{x} \in \bR^4}&\int_{\vec{y} \in \bR^4} W(\vec{x}, \vec{y}) f(\vec{x}) \otimes_\bR g(\vec{y})\ d\vec{x} d\vec{y} \\
=& \left\langle  \phi^{\alpha,n}(f) \phi^{\beta, n}(g)^\ast \left(S, h_\gamma \otimes \rho_n(E^\gamma)\right), (\tilde{S}, \tilde{h}_\gamma \otimes \rho_n(E^\gamma))\right\rangle \\
&- \left\langle \left(S, h_\gamma \otimes \rho_n(E^\gamma)\right), \phi^{\beta, n}(g)1 \right\rangle\left\langle \phi^{\alpha,n}(f)1, (\tilde{S}, \tilde{h}_\gamma \otimes \rho_n(E^\gamma) )\right\rangle .
\end{align*}
See Remark \ref{r.c.2a}.

Indeed, writing $\vec{x} = \sum_{a=0}^3x^a \hat{f}_a$ and $\vec{y} = \sum_{a=0}^3y^a \hat{f}_a$, we have that
\begin{align}
&\left( x^0\hat{f}_0 + x^1\hat{f}_1, y^0\hat{f}_0 + y^1\hat{f}_1 \right) \longmapsto \nonumber \\
&\int_{(x^2, x^3) \in \bR^2}\int_{(y^2, y^3) \in \bR^2} W(\vec{x}, \vec{y}) f(\vec{x}) \otimes_\bR g(\vec{y}) \ dx^2dx^3 dy^2dy^3, \label{e.f.2}
\end{align}
defines a continuous function on $S^\flat \times S^\flat$.

\begin{notation}\label{n.co.3}
Suppose $f \cdot \bar{g}$ is real. Then we will define tempered distribution ${\rm Re}\  W$, such that
\begin{align*}
\int_{\vec{x} \in \bR^4}\int_{\vec{y} \in \bR^4}&{\rm Re}\  W(\vec{x}, \vec{y})\left[ f(\vec{x})\otimes_\bR g(\vec{y})\right]\ d\vec{x} d\vec{y} \\
&:= \int_{\vec{x} \in \bR^4}\int_{\vec{y} \in \bR^4} W(\vec{x}, \vec{y})\left[ f(\vec{x})\otimes_\bR g(\vec{y})\right]\ d\vec{x} d\vec{y}.
\end{align*}

When $f \cdot \bar{g}$ is purely imaginary, we will define tempered distribution ${\rm Im}\  W$, such that
\begin{align*}
\int_{\vec{x} \in \bR^4}\int_{\vec{y} \in \bR^4}&{\rm Im}\  W(\vec{x}, \vec{y})\left[ f(\vec{x})\otimes_\bR g(\vec{y})\right]\ d\vec{x} d\vec{y} \\
&:= \int_{\vec{x} \in \bR^4}\int_{\vec{y} \in \bR^4} W(\vec{x}, \vec{y})\left[ f(\vec{x})\otimes_\bR g(\vec{y})\right]\ d\vec{x} d\vec{y}.
\end{align*}
\end{notation}

\begin{rem}\label{r.c.2a}
Suppose we write $f = \underline{f} + i\overline{f}$, $g = \underline{g} + i\overline{g}$, whereby $\underline{f} = {\rm Re}\ f$, $\overline{f} = {\rm Im}\ f$, $\underline{g} = {\rm Re}\ g$, $\overline{g} = {\rm Im}\ g$. From Notation \ref{n.co.3}, we understand
\begin{align*}
\int_{\vec{x} \in \bR^4}&\int_{\vec{y} \in \bR^4} W(\vec{x}, \vec{y})\left[ f(\vec{x})\otimes_\bR g(\vec{y})\right]\ d\vec{x} d\vec{y} \\
=& \int_{\vec{x} \in \bR^4}\int_{\vec{y} \in \bR^4} {\rm Re}\ W(\vec{x}, \vec{y})\left[\ \underline{f}(\vec{x})\otimes_\bR \underline{g}(\vec{y}) + i\overline{f}(\vec{x}) \otimes_\bR i\overline{g}(\vec{y})\right]\ d\vec{x} d\vec{y} \\
&+ \int_{\vec{x} \in \bR^4}\int_{\vec{y} \in \bR^4} {\rm Im}\ W(\vec{x}, \vec{y})\left[\ i\overline{f}(\vec{x})\otimes_\bR \underline{g}(\vec{y})+ \underline{f}(\vec{x}) \otimes_\bR i\overline{g}(\vec{y}) \right]\ d\vec{x} d\vec{y}.
\end{align*}
\end{rem}

\begin{lem}\label{l.w.3}
Fix a space-like plane $S$ and let the Minkowski frame on $S$ be $\hat{f}_a = \Lambda e_a$, $a=0,1,2,3$, as defined in Definition \ref{d.a.1}. Suppose $\hat{H}(\rho_n)^2 - \hat{P}(\rho_n)^2 > 0$.

We have
\begin{align*}
{\rm Re}\ W(\vec{x}, \vec{y}) - {\rm Re}\ W(\vec{y}, \vec{x})  &= 0, \\
{\rm Im}\ W(\vec{x}, \vec{y}) + {\rm Im}\ W(\vec{y}, \vec{x}) &= 0,
\end{align*}
provided $\vec{0} \neq \vec{x} - \vec{y}$ can be written as $c_1 (\hat{P}(\rho_n)\hat{f}_0 + \hat{H}(\rho_n)\hat{f}_1) + \sum_{i=2}^3 c_i \hat{f}_i$ for some constants $c_i$'s.
\end{lem}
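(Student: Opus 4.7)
The plan is to express $W(\vec{x},\vec{y})\mp W(\vec{y},\vec{x})$ as the matrix element of the commutator $\lca \phi^{\alpha,n}(f), \phi^{\beta,n}(g)^\ast \rca_-$ (respectively the anti-commutator $\lca\cdot,\cdot\rca_+$) furnished by Lemma \ref{l.w.2}, and then reduce the required vanishing to the pointwise identity proved in Lemma \ref{l.w.6}.

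Testing $W(\vec{x},\vec{y})\mp W(\vec{y},\vec{x})$ against $f\otimes g$ produces $W[f\otimes g]\mp W[g\otimes f]$. A direct comparison with Equations (\ref{e.co.3})--(\ref{e.co.4}) shows that the two scalar vacuum summands on the right of Lemma \ref{l.w.2} are exactly the two vacuum subtractions built into the definition of $W$, and they cancel. Using $f^{\mathcal{C}}\overline{g^{\mathcal{C}}}\pm g^{\mathcal{C}}\overline{f^{\mathcal{C}}} = 2\,\mathrm{Re/Im}(f^{\mathcal{C}}\overline{g^{\mathcal{C}}})$, what remains is
\begin{align*}
W[f\otimes g]-W[g\otimes f] &= \int_{S\cap\tilde{S}} \mathrm{Im}\bigl(f^{\mathcal{C}}\overline{g^{\mathcal{C}}}\bigr)(\vec{z})\,M(\vec{z})\,d|\acute{\rho}|,\\
W[f\otimes g]+W[g\otimes f] &= \int_{S\cap\tilde{S}} \mathrm{Re}\bigl(f^{\mathcal{C}}\overline{g^{\mathcal{C}}}\bigr)(\vec{z})\,M(\vec{z})\,d|\acute{\rho}|,
\end{align*}
where $M(\vec{z})$ is a fixed function on $\hat{S}=S\cap\tilde{S}$ depending only on $h_\gamma,\tilde{h}_\sigma,\rho_n,\Lambda$ and the spinor indices $\alpha,\beta$.

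By Remark \ref{r.cpt.1}, each integrand at a point $\vec{z}$ is a double integral over $(u,v)\in S^\flat\times S^\flat$. A relabelling $u\leftrightarrow v$ in one of the two summands shows that the integrand for $\mathrm{Im}(f^{\mathcal{C}}\overline{g^{\mathcal{C}}})$ coincides, up to interchanging the roles of $f$ and $g$, with the left-hand side of Equation (\ref{e.a.2}) carrying the $-$ sign (anti-commutation), while that for $\mathrm{Re}(f^{\mathcal{C}}\overline{g^{\mathcal{C}}})$ corresponds to the $+$ sign (commutation). Now restrict to test functions $f,g\in\mathscr{P}$ whose supports are disjoint and satisfy $\mathrm{supp}(f)-\mathrm{supp}(g)\subset V$, where
\begin{align*}
V \;:=\; \mathrm{span}\bigl\{\hat{P}(\rho_n)\hat{f}_0+\hat{H}(\rho_n)\hat{f}_1,\;\hat{f}_2,\;\hat{f}_3\bigr\}
\end{align*}
is the $3$-dimensional subspace in the hypothesis. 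Wherever the integrand is nonzero one has $\vec{z}+u\in\mathrm{supp}(f)$ and $\vec{z}+v\in\mathrm{supp}(g)$, so $u-v\in V$; since $u-v\in S^\flat$ automatically, and since the assumption $\hat{H}(\rho_n)^2-\hat{P}(\rho_n)^2>0$ forces $V\cap S^\flat = \bR\,(\hat{P}(\rho_n)\hat{f}_0+\hat{H}(\rho_n)\hat{f}_1)$, the vector $u-v$ is parallel to $\hat{P}(\rho_n)\hat{f}_0+\hat{H}(\rho_n)\hat{f}_1$ on the effective support of the integrand.

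Lemma \ref{l.w.6} now applies pointwise. For $f\overline{g}$ real, the anti-commutation relation in (\ref{e.a.2}) vanishes, so $\mathrm{Im}(f^{\mathcal{C}}\overline{g^{\mathcal{C}}})\equiv 0$ and hence $W[f\otimes g]=W[g\otimes f]$; for $f\overline{g}$ purely imaginary, the commutation relation vanishes, so $\mathrm{Re}(f^{\mathcal{C}}\overline{g^{\mathcal{C}}})\equiv 0$ and hence $W[f\otimes g]=-W[g\otimes f]$. By Notation \ref{n.co.3}, these are precisely the claimed identities for $\mathrm{Re}\,W$ and $\mathrm{Im}\,W$. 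The main obstacle is the geometric step — recognising that positivity of the mass gap, $\hat{H}(\rho_n)^2>\hat{P}(\rho_n)^2$, is exactly what pins $V\cap S^\flat$ down to the single space-like line on which Lemma \ref{l.w.6} has bite (cf.\ Remark \ref{r.ex.4}); the vacuum cancellation in the first step and the relabelling that matches the integrand with (\ref{e.a.2}) are otherwise routine bookkeeping.
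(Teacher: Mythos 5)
Your proposal is correct and takes essentially the same route as the paper: Lemma \ref{l.w.2}'s (anti)commutator formula with the vacuum terms cancelling against the subtraction in the definition of $W$, the double-integral representation over $S^\flat\times S^\flat$ from Remark \ref{r.cpt.1}, support bookkeeping to reduce the displacement $u-v$ to the line $\bR\,(\hat{P}(\rho_n)\hat{f}_0+\hat{H}(\rho_n)\hat{f}_1)$, and then a pointwise application of Lemma \ref{l.w.6} (your support argument simply merges the paper's two cases, $c_2$ or $c_3$ nonzero versus $c_1\neq 0$). One small correction of attribution: the identity $V\cap S^\flat=\bR\,(\hat{P}(\rho_n)\hat{f}_0+\hat{H}(\rho_n)\hat{f}_1)$ is pure linear algebra and holds without the mass-gap hypothesis; what $\hat{H}(\rho_n)^2-\hat{P}(\rho_n)^2>0$ actually buys is that this line is space-like, so it is compatible with space-like separated supports and Lemma \ref{l.w.6} has content there (cf.\ Remark \ref{r.ex.4}).
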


\begin{proof}
Choose any $f, g \in \mathscr{P}$ such that their compact supports are space-like separated. Write $g_{\vec{x}}(\cdot) = g(\vec{x} + \cdot)$, $\bar{f}_{\vec{x}}(\cdot) = \bar{f}(\vec{x} + \cdot)$, and $\hat{H} = \hat{H}(\rho_n)$, $\hat{P} = \hat{P}(\rho_n)$.

From the proof in Lemma \ref{l.w.2},
\begin{align*}
\phi^{\alpha,n}(f) & \phi^{\beta, n}(g)^\ast \left(S, h_\gamma \otimes \rho_n(E^\gamma) \right) -
\left\langle \left(S, h_\gamma \otimes \rho_n(E^\gamma) \right), \phi^{\beta, n}(g)1 \right\rangle\ \phi^{\alpha,n}(f)1 \\
=& -A(\Lambda)_\delta^\alpha \overline{A(\Lambda)_\mu^\beta}\Big(S, [f^{\{\hat{f}_0, \hat{f}_1\}} \cdot \overline{g^{\{\hat{f}_0, \hat{f}_1\}}} ]\cdot h_\gamma \otimes \ad(\rho_n(F^{\delta}))\ad(\rho_n(F^{\mu}))\rho_n(E^\gamma) \Big).
\end{align*}
Refer to the calculations in Remark \ref{r.cpt.1}. If we swap $f$ with $g$ in the above expression and take the sum or difference, we will have
\beq -A(\Lambda)_\delta^\alpha \overline{A(\Lambda)_\mu^\beta}\Big(S, A^\pm \cdot h_\gamma \otimes \ad(\rho_n(F^{\delta}))\ad(\rho_n(F^{\mu}))\rho_n(E^\gamma)\Big),
\nonumber \eeq
whereby
\begin{align*}
A^\pm(\vec{z}) = &
\int_{\hat{s}, \hat{t} \in \bR^2}\frac{e^{-i[(\vec{y}(\hat{s})- \vec{y}(\hat{t})) \cdot (\hat{H}\hat{f}_0+\hat{P}\hat{f}_1)]}}{(2\pi)^2}f_{\vec{z}}( \vec{y}(\hat{s}))\bar{g}_{\vec{z}}( \vec{y}( \hat{t}))|\acute{\rho}_{\vec{y}}|(\hat{s})|\acute{\rho}_{\vec{y}}|(\hat{t})\ d\hat{s}d\hat{t} \\
&\pm \int_{\hat{s}, \hat{t} \in \bR^2}\frac{e^{-i[(\vec{y}(\hat{t})- \vec{y}(\hat{s})) \cdot (\hat{H}\hat{f}_0+\hat{P}\hat{f}_1)]}}{(2\pi)^2}g_{\vec{z}}( \vec{y}(\hat{t}))\bar{f}_{\vec{z}}( \vec{y}( \hat{s}))|\acute{\rho}_{\vec{y}}|(\hat{s})|\acute{\rho}_{\vec{y}}|(\hat{t})\ d\hat{s}d\hat{t}.
\end{align*}
Since their supports are disjoint, by definition of $W(\vec{y}, \vec{x})$, we note that swapping the arguments $\vec{x}$ and $\vec{y}$, is equivalent to swapping $f$ and $g$, i.e. $(\vec{x}, \vec{y}) \longleftrightarrow (\vec{y}, \vec{x})$ is the same as $f(\vec{x})\bar{g}(\vec{y}) \longleftrightarrow g(\vec{y})\bar{f}(\vec{x})$, in their respective integrals.

Consider when $\vec{0} \neq \vec{x} - \vec{y} = c_1 (\hat{P}\hat{f}_0 + \hat{H}\hat{f}_1) + \sum_{i=2}^3 c_i \hat{f}_i$, whereby not all the $c_i$'s are zero. By Lemma \ref{l.w.6}, we see that if $c_2$ or $c_3$ is non-zero, then we must have $W(\vec{x}, \vec{y}) = W(\vec{y}, \vec{x}) = 0$.

Consider when $c_1 \neq 0$ and $f\bar{g}$ is real. By Equation (\ref{e.a.2}), we have that
\begin{align*}
{\rm Re}\ W(\vec{x}, \vec{y}) - {\rm Re}\ W(\vec{y}, \vec{x})
= 0,
\end{align*}
for the real part.

Now consider when $c_1 \neq 0$ and $f\bar{g}$ is purely imaginary. By Equation (\ref{e.a.2}), we have that
\begin{align*}
{\rm Im}\ W(\vec{x}, \vec{y}) + {\rm Im}\ W(\vec{y}, \vec{x}) = 0,
\end{align*}
for the imaginary part.
\end{proof}

\begin{rem}\label{r.ex.3}
In the Wightman's axiom for local commutativity, it is required that $\lca \phi^{\alpha,n}(f), \phi^{\beta,n}(g)^\ast \rca_\pm$ is zero when their supports are space-like separated. This is not true in general, even if $f\cdot \bar{g}$ is real or purely imaginary. The same remark applies to $\lca \phi^{\alpha,n}(f)^\ast, \phi^{\beta,n}(g) \rca_\pm$, as we will see in Lemmas \ref{l.w.4} and \ref{l.w.5}.

Indeed, when $(\vec{x} + S^\flat) \cap\ {\rm supp}\ f$ or $(\vec{x} + S^\flat) \cap\ {\rm supp}\ g$ is empty for every $\vec{x} \in S$, then we have $\lca \phi^{\alpha,n}(f), \phi^{\beta,n}(g)^\ast \rca_\pm = \lca \phi^{\alpha,n}(f)^\ast, \phi^{\beta,n}(g) \rca_\pm = 0$, acting on $(S, f_\alpha \otimes E^\alpha)$.

In the proof, when $c_1 \neq 0$ but $c_2 = c_3 = 0$, we see that ${\rm Re}\ W(\vec{x}, \vec{y}) ={\rm Im}\ W(\vec{x}, \vec{y}) = 1$. In the case for ${\rm Re}\ W(\vec{x}, \vec{y})$, the arguments $(\vec{x}, \vec{y})$ will not serve any purpose as the distribution is symmetric. But in the case for ${\rm Im}\ W(\vec{x}, \vec{y})$, the arguments $(\vec{x}, \vec{y})$ will dictate whether the corresponding first and second input, should be real or purely imaginary function.

\end{rem}

Consider another bilinear map that sends
\begin{align*}
(f,g)&\in \mathscr{P} \times \mathscr{P}
\longmapsto \left\langle  \phi^{\alpha,n}(f)^\ast \phi^{\beta, n}(g) \left(S, h_\gamma \otimes \rho_n(E^\gamma)\right), (\tilde{S}, \tilde{h}_\gamma \otimes \rho_n(E^\gamma) )\right\rangle .
\end{align*}
From Proposition \ref{p.td.1}, we have a tempered distribution $\tilde{W}(\vec{x}, \vec{y})$ such that
\begin{align*}
\int_{\vec{x} \in \bR^4}&\int_{\vec{y} \in \bR^4} \tilde{W}(\vec{x}, \vec{y}) f(\vec{x})\otimes_\bR g(\vec{y})\ d\vec{x} d\vec{y} \\
=& \left\langle  \phi^{\alpha,n}(f)^\ast \phi^{\beta, n}(g) \left(S, h_\gamma \otimes \rho_n(E^\gamma)\right), (\tilde{S}, \tilde{h}_\gamma \otimes \rho_n(E^\gamma) )\right\rangle .
\end{align*}
Similar to Equation (\ref{e.f.2}), we can define a continuous complex-valued function on $S^\flat \times S^\flat$ from it.

\begin{rem}
Suppose we write $f = \underline{f} + i\overline{f}$, $g = \underline{g} + i\overline{g}$, whereby $\underline{f} = {\rm Re}\ f$, $\overline{f} = {\rm Im}\ f$, $\underline{g} = {\rm Re}\ g$, $\overline{g} = {\rm Im}\ g$. Define tempered distributions ${\rm Re}\ \tilde{W}$ and ${\rm Im}\ \tilde{W}$, similar to how we defined ${\rm Re}\ W$ and ${\rm Im}\ W$ in Notation \ref{n.co.3}. We understand
\begin{align*}
\int_{\vec{x} \in \bR^4}&\int_{\vec{y} \in \bR^4} \tilde{W}(\vec{x}, \vec{y})\left[ f(\vec{x})\otimes_\bR g(\vec{y})\right]\ d\vec{x} d\vec{y} \\
=& \int_{\vec{x} \in \bR^4}\int_{\vec{y} \in \bR^4} {\rm Re}\ \tilde{W}(\vec{x}, \vec{y})\left[\ \underline{f}(\vec{x}) \otimes_\bR \underline{g}(\vec{y}) + i\overline{f}(\vec{x})\otimes_\bR i\overline{g}(\vec{y})\right]\ d\vec{x} d\vec{y} \\
&+ \int_{\vec{x} \in \bR^4}\int_{\vec{y} \in \bR^4} {\rm Im}\ \tilde{W}(\vec{x}, \vec{y})\left[\ \underline{f}(\vec{x})\otimes_\bR i\overline{g}(\vec{y})+  i\overline{f}(\vec{x}) \otimes_\bR\underline{g}(\vec{y})\right]\ d\vec{x} d\vec{y}.
\end{align*}
\end{rem}

\begin{lem}\label{l.w.4}
Fix a space-like plane $S$ and let $\{\hat{f}_a\}_{a=0}^3$ be a basis as defined in Definition \ref{d.a.1}. Suppose $\hat{H}(\rho_n)^2 - \hat{P}(\rho_n)^2 > 0$.

We have
\begin{align*}
{\rm Re}\ \tilde{W}(\vec{x}, \vec{y}) - {\rm Re}\ \tilde{W}(\vec{y}, \vec{x}) &= 0, \\
{\rm Im}\ \tilde{W}(\vec{x}, \vec{y}) + {\rm Im}\ \tilde{W}(\vec{y}, \vec{x}) &= 0,
\end{align*}
provided $\vec{0} \neq \vec{x} - \vec{y}$ can be written as $c_1 (\hat{P}(\rho_n)\hat{f}_0 + \hat{H}(\rho_n)\hat{f}_1) + \sum_{i=2}^3 c_i \hat{f}_i$ for some constants $c_i$'s.
\end{lem}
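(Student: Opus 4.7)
The plan is to parallel the proof of Lemma \ref{l.w.3} verbatim, substituting the operator pair $\phi^{\alpha,n}(f)^\ast \phi^{\beta,n}(g)$ for $\phi^{\alpha,n}(f) \phi^{\beta,n}(g)^\ast$. First I would compute $\phi^{\alpha,n}(f)^\ast \phi^{\beta,n}(g) (S, h_\gamma \otimes \rho_n(E^\gamma))$ by applying Definition \ref{d.co.2} (which turns the state into $(S, g^{\mathcal{C}} A(\Lambda)_\mu^\beta h_\gamma \otimes \ad(\rho_n(F^\mu))\rho_n(E^\gamma))$) and then Definition \ref{d.ado}. The output is a sum of a principal term
\[
-\overline{A(\Lambda)_\delta^\alpha}\, A(\Lambda)_\mu^\beta \Big(S, [\overline{f^{\mathcal{C}}} \cdot g^{\mathcal{C}}]\cdot h_\gamma \otimes \ad(\rho_n(F^\delta))\ad(\rho_n(F^\mu))\rho_n(E^\gamma)\Big)
\]
plus a rank-one term proportional to the vacuum $1$; this rank-one term drops out of the bilinear map defining $\tilde{W}$ as soon as we pair with $(\tilde{S}, \tilde{h}_\gamma \otimes \rho_n(E^\gamma))$, because $(\tilde{S}, \cdot) \perp 1$ in the direct-sum inner product. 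Thus $\tilde{W}(\vec{x},\vec{y})$ is (up to an $(\vec{x},\vec{y})$-independent tensor factor coming from the adjoint trace) the kernel produced by the pointwise product $\overline{f^{\mathcal{C}}} \cdot g^{\mathcal{C}}$ integrated against $h_\gamma \bar{\tilde{h}}_\gamma$ on $S \cap \tilde{S}$.

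Next I would unpack $\overline{f^{\mathcal{C}}} \cdot g^{\mathcal{C}}$ via Definition \ref{d.p.1}, exactly as in Remark \ref{r.cpt.1}, to obtain for each $\vec{z} \in S$ a double integral over $\hat{s},\hat{t} \in \bR^2$ of
\[
\frac{e^{-i[(\vec{y}(\hat{t})-\vec{y}(\hat{s}))\cdot(\hat{H}(\rho_n)\hat{f}_0 + \hat{P}(\rho_n)\hat{f}_1)]}}{(2\pi)^2}\, \bar{f}_{\vec{z}}(\vec{y}(\hat{s}))\, g_{\vec{z}}(\vec{y}(\hat{t}))\, |\acute{\rho}_{\vec{y}}|(\hat{s})|\acute{\rho}_{\vec{y}}|(\hat{t}).
\]
The swap $(\vec{x},\vec{y}) \leftrightarrow (\vec{y},\vec{x})$ in $\tilde{W}$ corresponds, via the definition of the tempered distribution, to swapping $f \leftrightarrow g$ in the test input, which replaces $\bar{f}_{\vec{z}}\cdot g_{\vec{z}}$ by $\bar{g}_{\vec{z}}\cdot f_{\vec{z}}$. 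After relabelling $\hat{s} \leftrightarrow \hat{t}$ in one of the two terms, the combination $\tilde{W}(\vec{x},\vec{y}) \mp \tilde{W}(\vec{y},\vec{x})$ has an integrand that is, pointwise in $(\hat{s},\hat{t})$, precisely the left-hand side of Equation (\ref{e.a.2}) of Lemma \ref{l.w.6} applied with $u = \vec{z}+\vec{y}(\hat{s})$, $v = \vec{z}+\vec{y}(\hat{t})$.

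At this point I would invoke Lemma \ref{l.w.6} under the hypothesis $\vec{x}-\vec{y} = c_1(\hat{P}(\rho_n)\hat{f}_0 + \hat{H}(\rho_n)\hat{f}_1) + c_2\hat{f}_2 + c_3\hat{f}_3$. When $c_2$ or $c_3$ is non-zero, a standard translation argument shows that ${\rm supp}\,f$ and ${\rm supp}\,g$ can be taken so that for almost every $\vec{z}$ one of the intersections ${\rm supp}\,f \cap (\vec{z}+S^\flat)$ or ${\rm supp}\,g \cap (\vec{z}+S^\flat)$ is empty; the first clause of Lemma \ref{l.w.6} then forces both $\tilde{W}(\vec{x},\vec{y})$ and $\tilde{W}(\vec{y},\vec{x})$ to vanish separately. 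When only $c_1 \neq 0$, the difference $u-v$ is parallel to $\hat{P}(\rho_n)\hat{f}_0 + \hat{H}(\rho_n)\hat{f}_1$, and the positivity $\hat{H}(\rho_n)^2 - \hat{P}(\rho_n)^2 > 0$ makes this vector Minkowski-orthogonal to $\hat{H}(\rho_n)\hat{f}_0 + \hat{P}(\rho_n)\hat{f}_1$, degenerating the exponential phases to $1$ and allowing the second clause of Lemma \ref{l.w.6} to give the required vanishing of $\mathrm{Re}\,\tilde{W}(\vec{x},\vec{y}) - \mathrm{Re}\,\tilde{W}(\vec{y},\vec{x})$ (when $f\bar{g}$ is real) and of $\mathrm{Im}\,\tilde{W}(\vec{x},\vec{y}) + \mathrm{Im}\,\tilde{W}(\vec{y},\vec{x})$ (when $f\bar{g}$ is purely imaginary), in the sense of Notation \ref{n.co.3}.

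The main technical obstacle I expect is the sign/conjugation bookkeeping: in $\tilde{W}$ the conjugation falls on $f$ rather than $g$, so the phase picks up the opposite sign in the exponent compared with the $W$-case. I need to verify that after the relabelling $\hat{s}\leftrightarrow \hat{t}$ used to match Equation (\ref{e.a.2}), the $\mathrm{Re}$ and $\mathrm{Im}$ assignments (with $\mathrm{B}^+$ vs $\mathrm{B}^-$ in Equation (\ref{e.co.4})) line up as in the $W$-case; because the kernel of $\tilde{W}$ is the pointwise complex conjugate of the kernel of $W$ at swapped arguments, $\mathrm{Re}$ still pairs with ``commutator vanishes for real $f\bar{g}$'' and $\mathrm{Im}$ with ``anti-commutator vanishes for imaginary $f\bar{g}$'', so the two identities emerge in the same form. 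Once this bookkeeping is fixed, the remainder of the argument is the identical Fubini-level manipulation used in Lemma \ref{l.w.3}.
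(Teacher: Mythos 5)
Your proposal is correct and follows exactly the route the paper intends: the paper omits this proof as ``similar to Lemma \ref{l.w.3}'', and your argument is precisely that adaptation, using Lemma \ref{l.w.2} (Equation (\ref{e.co.4})), the Remark \ref{r.cpt.1}-style double-integral expansion, and Lemma \ref{l.w.6}, with the correct observation that the vacuum term drops out under the pairing so no subtraction is needed for $\tilde{W}$. The conjugation bookkeeping you flag does line up as you describe, so nothing further is required.
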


\begin{proof}
Proof is similar to Lemma \ref{l.w.3}, hence omitted.
\end{proof}

Finally consider the following bilinear map that sends
\begin{align*}
(f,g)&\in \mathscr{P} \times \mathscr{P}
\longmapsto \left\langle  \phi^{\alpha,n}(f)^\ast \phi^{\beta, n}(g) 1, (\tilde{S}, \tilde{h}_\gamma \otimes \rho_n(E^\gamma) )\right\rangle .
\end{align*}
From Proposition \ref{p.td.1}, we have a tempered distribution $\check{W}(\vec{x}, \vec{y})$ such that
\begin{align*}
\int_{\vec{x} \in \bR^4}&\int_{\vec{y} \in \bR^4} \check{W}(\vec{x}, \vec{y}) f(\vec{x})\otimes_\bR g(\vec{y})\ d\vec{x} d\vec{y} \\
=& \left\langle  \phi^{\alpha,n}(f)^\ast \phi^{\beta, n}(g) 1, (\tilde{S}, \tilde{h}_\gamma \otimes \rho_n(E^\gamma) )\right\rangle .
\end{align*}
Similar to Equation (\ref{e.f.2}), we can obtain a continuous complex-valued function defined on $S^\flat \times S^\flat$, from the integral.

\begin{rem}
Suppose we write $f = \underline{f} + i\overline{f}$, $g = \underline{g} + i\overline{g}$, whereby $\underline{f} = {\rm Re}\ f$, $\overline{f} = {\rm Im}\ f$, $\underline{g} = {\rm Re}\ g$, $\overline{g} = {\rm Im}\ g$. Define tempered distributions ${\rm Re}\ \check{W}$ and ${\rm Im}\ \check{W}$, similar to how we defined ${\rm Re}\ W$ and ${\rm Im}\ W$ in Notation \ref{n.co.3}. We understand
\begin{align*}
\int_{\vec{x} \in \bR^4}&\int_{\vec{y} \in \bR^4} \check{W}(\vec{x}, \vec{y})\left[ f(\vec{x})\otimes_\bR g(\vec{y})\right]\ d\vec{x} d\vec{y} \\
=& \int_{\vec{x} \in \bR^4}\int_{\vec{y} \in \bR^4} {\rm Re}\ \check{W}(\vec{x}, \vec{y})\left[\ \underline{f}(\vec{x})\otimes_\bR \underline{g}(\vec{y}) + i\overline{f}(\vec{x}) \otimes_\bR i\overline{g}(\vec{y})\right]\ d\vec{x} d\vec{y} \\
&+ \int_{\vec{x} \in \bR^4}\int_{\vec{y} \in \bR^4} {\rm Im}\ \check{W}(\vec{x}, \vec{y})\left[\ \underline{f}(\vec{x})\otimes_\bR i\overline{g}(\vec{y})+ i\overline{f}(\vec{x}) \otimes_\bR \underline{g}(\vec{y})\right]\ d\vec{x} d\vec{y}.
\end{align*}
\end{rem}

\begin{lem}\label{l.w.5}
Fix a space-like plane $S$ and let $\{\hat{f}_a\}_{a=0}^3$ be a basis as defined in Definition \ref{d.a.1}.
Suppose $\hat{H}(\rho_n)^2 - \hat{P}(\rho_n)^2 > 0$.

We have
\begin{align*}
{\rm Re}\ \check{W}(\vec{x}, \vec{y}) - {\rm Re}\ \check{W}(\vec{y}, \vec{x}) &= 0, \\
{\rm Im}\ \check{W}(\vec{x}, \vec{y}) + {\rm Im}\ \check{W}(\vec{y}, \vec{x}) &= 0,
\end{align*}
provided $\vec{0} \neq \vec{x} - \vec{y}$ can be written as $c_1 (\hat{P}(\rho_n)\hat{f}_0 + \hat{H}(\rho_n)\hat{f}_1) + \sum_{i=2}^3 c_i \hat{f}_i$ for some constants $c_i$'s.
\end{lem}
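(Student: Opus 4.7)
The plan is to follow the structure of the proof of Lemma \ref{l.w.3} essentially verbatim, replacing the operator product $\phi^{\alpha,n}(f)\phi^{\beta,n}(g)^\ast$ by $\phi^{\alpha,n}(f)^\ast\phi^{\beta,n}(g)$ acting on the vacuum $1$. First I would use Definition \ref{d.co.1} to write $\phi^{\beta,n}(g)1 = \left(S_0, g^{\{e_0, e_1\}} \otimes \rho_n(F^\beta), \{e_a\}_{a=0}^3\right)$, then apply Definition \ref{d.ado} to compute $\phi^{\alpha,n}(f)^\ast$ on this vector. The adjoint produces a principal term proportional to $\left(S_0, -\overline{f^{\{e_0, e_1\}}}\cdot g^{\{e_0, e_1\}} \otimes \rho_n([F^\alpha, F^\beta]), \{e_a\}_{a=0}^3\right)$, together with a correction term that is a scalar multiple of the vacuum $1$. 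When we pair with a non-vacuum state $(\tilde{S}, \tilde{h}_\gamma \otimes \rho_n(E^\gamma))$ to obtain $\check{W}$, the vacuum correction contributes nothing, because $1$ is orthogonal to every component $\mathscr{H}(\rho_n)$ with $n \geq 1$. This is a simplification relative to Lemmas \ref{l.w.3} and \ref{l.w.4}, where an analogous vacuum-valued term had to be carried explicitly.

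Next I would expand $\overline{f^{\{e_0, e_1\}}} \cdot g^{\{e_0, e_1\}}$ at a point $\vec{x} \in S_0$ as a double integral over $S^\flat \times S^\flat$ using Definition \ref{d.p.1}, mirroring the calculation performed in Remark \ref{r.cpt.1} and in the proof of Lemma \ref{l.w.3}. The integrand will carry the phase $e^{-i[(\vec{y}(\hat{t}) - \vec{y}(\hat{s}))\cdot (\hat{H}(\rho_n)\hat{f}_0 + \hat{P}(\rho_n)\hat{f}_1)]}$ multiplying $g(\vec{x} + \vec{y}(\hat{s}))\overline{f}(\vec{x} + \vec{y}(\hat{t}))$, up to the Jacobian factors $|\acute{\rho}_{\vec{y}}|$. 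Swapping $\vec{x}$ and $\vec{y}$ in $\check{W}$ corresponds to swapping $f$ and $g$ in this integral representation, which, as in Remark \ref{r.cpt.1}, produces the complex conjugate of the integrand with reversed phase.

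For the symmetry conclusion I would choose $f, g \in \mathscr{P}$ with disjoint compact supports and invoke Lemma \ref{l.w.6} under the hypothesis $\vec{0} \neq \vec{x} - \vec{y} = c_1(\hat{P}(\rho_n)\hat{f}_0 + \hat{H}(\rho_n)\hat{f}_1) + c_2 \hat{f}_2 + c_3 \hat{f}_3$. When $c_2$ or $c_3$ is nonzero, the supports can be chosen so that one of ${\rm supp}\ f \cap (\vec{x} + S^\flat)$ or ${\rm supp}\ g \cap (\vec{x} + S^\flat)$ is empty, and the integrand vanishes identically. When only $c_1 \neq 0$, the vector $\vec{x} - \vec{y}$ lies along $\hat{P}(\rho_n)\hat{f}_0 + \hat{H}(\rho_n)\hat{f}_1$, which is Minkowski-orthogonal to $\hat{H}(\rho_n)\hat{f}_0 + \hat{P}(\rho_n)\hat{f}_1$, so the phase factor collapses to $1$; the mass gap hypothesis $\hat{H}(\rho_n)^2 - \hat{P}(\rho_n)^2 > 0$ is essential here to ensure this direction is genuinely space-like in $S^\flat$, cf. Remark \ref{r.ex.4}. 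Splitting $\check{W}$ via Notation \ref{n.co.3} according to whether $f\bar{g}$ is real or purely imaginary, Lemma \ref{l.w.6} then delivers the symmetric identity for ${\rm Re}\ \check{W}$ and the antisymmetric identity for ${\rm Im}\ \check{W}$.

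I do not anticipate any substantive obstacle: the argument is mechanically identical to that of Lemma \ref{l.w.3}, the only genuine difference being the trivial observation that the vacuum correction term from Definition \ref{d.ado} drops out under pairing with a non-vacuum state. In the same spirit as the one-line proof of Lemma \ref{l.w.4}, this lemma could legitimately be dispatched as ``the proof is similar to Lemma \ref{l.w.3}, hence omitted.''
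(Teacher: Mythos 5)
Your proposal is correct and follows exactly the route the paper intends: the paper's own proof of this lemma is simply ``similar to Lemma \ref{l.w.3}, hence omitted,'' and your adaptation — computing $\phi^{\alpha,n}(f)^\ast\phi^{\beta,n}(g)1$ via Definitions \ref{d.co.1} and \ref{d.ado}, noting that the vacuum correction term drops out under pairing with a non-vacuum state, and then running the Remark \ref{r.cpt.1}/Lemma \ref{l.w.6} argument with the split of Notation \ref{n.co.3} — is precisely that omitted argument. Your observation that the vacuum term's disappearance is the only structural difference from Lemma \ref{l.w.3} is accurate.
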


\begin{proof}
Proof is similar to Lemma \ref{l.w.3}, hence omitted.
\end{proof}

\begin{rem}
Note that $\phi^{\alpha,n}(f)^\ast 1 = 0$ by definition, thus we always have \beq \phi^{\alpha,n}(f) \phi^{\beta, n}(g)^\ast 1 \pm \phi^{\alpha,n}(g) \phi^{\beta, n}(f)^\ast 1 = 0. \nonumber \eeq

The Lemmas \ref{l.w.1}, \ref{l.w.3}, \ref{l.w.4} and \ref{l.w.5} are collectively known as the local commutation and anti-commutation relations.

Indeed, we see that the relations hold, because $\hat{H}(\rho_n)\hat{f}_0 + \hat{P}(\rho_n)\hat{f}_1$ is time-like. If it is space-like or null vector, then we see local commutativity only holds for space-like directions in the span of $\{\hat{f}_2, \hat{f}_3\}$.

Now, $\{\hat{P}(\rho_n) \hat{f}_0 + \hat{H}(\rho_n)\hat{f}_1, \hat{f}_2, \hat{f}_3\}$ is a set of space-like directional vectors in $\vec{x} + S^\flat$. In fact, for any vector given by a linear combination of these three vectors, there exists a sequence of translations and Lorentz transformations that rotates it by $\pi$ radians in $\bR^4$, which is time and space inversion. Refer to Lemma \ref{l.b.3} and Remark \ref{r.p.3}.
\end{rem}

\begin{cor}(CPT Theorem)\\
Fix a space-like plane $S$, with a time-like plane $S^\flat$ spanned by $\{\hat{f}_0, \hat{f}_1\}$, whereby $\{\hat{f}_a\}_{a=0}^3$ is a basis defined in Definition \ref{d.a.1}. By abuse of notation, write
\beq W = {\rm Re}\ W + \sqrt{-1}\ {\rm Im}\ W, \ \ \tilde{W} = {\rm Re}\ \tilde{W} + \sqrt{-1}\ {\rm Im}\ \tilde{W}, \ \ \check{W} = {\rm Re}\ \check{W} + \sqrt{-1}\ {\rm Im}\ \check{W}. \nonumber \eeq

Suppose $\hat{H}(\rho_n)^2 - \hat{P}(\rho_n)^2 > 0$.
We have that
\begin{align*}
W(\vec{y}, \vec{x}) &= \overline{W}(\vec{x}, \vec{y}), \quad \tilde{W}(\vec{y}, \vec{x}) = \overline{\tilde{W}}(\vec{x}, \vec{y}), \quad
\check{W}(\vec{y}, \vec{x}) = \overline{\check{W}}(\vec{x}, \vec{y}),
\end{align*}
provided $\vec{0} \neq \vec{x} - \vec{y} = c_1 (\hat{P}(\rho_n)\hat{f}_0 + \hat{H}(\rho_n)\hat{f}_1) + \sum_{i=2}^3 c_i \hat{f}_i$, for constants $c_i$'s.
\end{cor}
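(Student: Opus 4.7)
The plan is to deduce this corollary essentially for free from the three preceding lemmas (Lemmas \ref{l.w.3}, \ref{l.w.4}, and \ref{l.w.5}), which already established, under the same hypothesis $\hat{H}(\rho_n)^2 - \hat{P}(\rho_n)^2 > 0$ and the same direction constraint on $\vec{x} - \vec{y}$, the symmetry relations
\[
{\rm Re}\,W(\vec{x},\vec{y}) = {\rm Re}\,W(\vec{y},\vec{x}), \qquad {\rm Im}\,W(\vec{x},\vec{y}) = -\,{\rm Im}\,W(\vec{y},\vec{x}),
\]
and analogously for $\tilde{W}$ and $\check{W}$. So the bulk of the work has already been done, and the corollary is a purely algebraic assembly.

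The first step is to invoke the decomposition convention stated explicitly in the corollary, $W = {\rm Re}\,W + \sqrt{-1}\,{\rm Im}\,W$, and apply it to $W(\vec{y},\vec{x})$. Substituting the symmetry relations from Lemma \ref{l.w.3} yields
\[
W(\vec{y},\vec{x}) \;=\; {\rm Re}\,W(\vec{y},\vec{x}) + \sqrt{-1}\,{\rm Im}\,W(\vec{y},\vec{x}) \;=\; {\rm Re}\,W(\vec{x},\vec{y}) - \sqrt{-1}\,{\rm Im}\,W(\vec{x},\vec{y}) \;=\; \overline{W(\vec{x},\vec{y})}.
\]
I would then repeat this verbatim for $\tilde{W}$, invoking Lemma \ref{l.w.4}, and for $\check{W}$, invoking Lemma \ref{l.w.5}, noting that the same conditions on $\vec{x}-\vec{y}$ and on the mass gap inequality propagate through unchanged.

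The only point that requires a moment's care is interpretational rather than computational: the symbol $W$ on the left-hand side of the identity is a tempered distribution, not a pointwise function, and the symmetries in Lemmas \ref{l.w.3}--\ref{l.w.5} were proved only against test functions of the form $f(\vec{x})\otimes_{\bR} g(\vec{y})$ where $f\cdot\bar g$ is either real or purely imaginary (see Notation \ref{n.co.3} and Remark \ref{r.c.2a}). I would therefore remark that, by Remark \ref{r.c.2a}, an arbitrary complex test function pair decomposes into real and imaginary parts $f=\underline{f}+i\overline{f}$, $g=\underline{g}+i\overline{g}$, so the identity $W(\vec{y},\vec{x}) = \overline{W(\vec{x},\vec{y})}$ is to be understood as an equality of tempered distributions obtained by applying the real and imaginary symmetry separately to the four resulting bilinear pieces, and then reassembling. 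No further technical obstacle arises, since conjugation commutes with this real/imaginary decomposition. Hence the ``proof'' is little more than a one-line assembly, and there is no genuinely hard step; the physical content lies entirely in the earlier lemmas and in the fact that the mass gap condition $\hat{H}(\rho_n)^2 > \hat{P}(\rho_n)^2$ forces $\hat{P}(\rho_n)\hat{f}_0 + \hat{H}(\rho_n)\hat{f}_1$ to be space-like, which is what enables the rotation-by-$\pi$ argument invoked in Remark \ref{r.p.3}.
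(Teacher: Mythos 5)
Your proposal is correct and follows the same route as the paper, which simply declares the corollary immediate from Lemmas \ref{l.w.3}, \ref{l.w.4} and \ref{l.w.5}; your explicit one-line assembly $W(\vec{y},\vec{x}) = {\rm Re}\,W(\vec{x},\vec{y}) - \sqrt{-1}\,{\rm Im}\,W(\vec{x},\vec{y}) = \overline{W}(\vec{x},\vec{y})$ is exactly the intended argument. The added remark on interpreting the identity distributionally via the real/imaginary decomposition of test functions is a harmless (and reasonable) elaboration, not a deviation.
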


\begin{proof}
Immediate from the statements in Lemmas \ref{l.w.3}, \ref{l.w.4} and \ref{l.w.5}.
\end{proof}

\begin{rem}\label{r.cpt.2}
In Lemma \ref{l.ex.1}, we will see that we can write $W(\vec{y}, \vec{x}) = \mathscr{W}(\vec{\xi})$, for some distribution $\mathscr{W}$, and $\vec{\xi} = \vec{y} - \vec{x}$. The same remark applies to $\tilde{W}(\vec{y}, \vec{x})$ and $\check{W}(\vec{y}, \vec{x})$.

Thus, the above corollary says that by taking time inversion and space inversion (parity) of a space-like vector, i.e. $\vec{\xi} \mapsto -\vec{\xi}$, is equivalent to taking the complex conjugation. This is the content of the CPT Theorem. Refer also to Remark \ref{r.cpt.1}. But, this only applies if $\vec{\xi}$ lies in the hyperplane spanned by $\{\hat{P}(\rho_n) \hat{f}_0 + \hat{H}(\rho_n)\hat{f}_1, \hat{f}_2, \hat{f}_3\}$.
\end{rem}

In \cite{streater}, the local commutativity is actually stated as
\begin{align*}
\phi^{\alpha,n}(f)\phi^{\beta,n}(g) \pm \phi^{\beta,n}(g)\phi^{\alpha,n}(f) &= 0, \\
\phi^{\alpha,n}(f)^\ast\phi^{\beta,n}(g) \pm \phi^{\beta,n}(g)\phi^{\alpha,n}(f)^\ast &= 0,
\end{align*}
if their respective supports are space-like separated. But this is only true iff \\
${\rm supp}\ f^{\{\hat{f}_0,\hat{f}_1\}} \cap {\rm supp}\ g^{\{\hat{f}_0,\hat{f}_1\}} = \emptyset$.

It would be ideal, that Lemmas \ref{l.w.3}, \ref{l.w.4} and \ref{l.w.5} hold, provided $0 \neq \vec{x} - \vec{y}$ is any space-like vector. But, this cannot be true in general. What we have shown instead, is that the local commutation and anti-commutation relations hold in a three dimensional subspace, containing space-like vectors.

%
%

\section{Yang-Mills path integrals}

We have completed the description of Wightman's axioms. We have seen that to satisfy local commutativity, the vector $\hat{H}(\rho_n)\hat{f}_0 + \hat{P}(\rho_n)\hat{f}_1$ must be time-like, thus $\hat{H}(\rho_n)^2 - \hat{P}(\rho_n)^2 = m_n^2 > 0$, for each $n \geq 1$. Local commutativity already implies the existence of a positive mass gap in each component Hilbert space $\mathscr{H}(\rho_n)$. Indeed, we will see that $\hat{H}(\rho_n)\hat{f}_0 + \hat{P}(\rho_n)\hat{f}_1$ defines a time-like\footnote{In subsection \ref{ss.cd}, we will see that the mass gap $m_n$ is the generator for translation in the $\tilde{f}_0^n$ direction.} vector $m_n \tilde{f}_0^n$, $\tilde{f}_0^n \cdot \tilde{f}_0^n = -1$, in Definition \ref{d.c.1}, which is crucial to prove the Clustering Theorem \ref{t.cl}.

But how do we choose $\{(\hat{H}(\rho_n), \hat{P}(\rho_n)): n \in \mathbb{N}\}$? In the next section, we will explain how we are going to compute the eigenvalues for the Hamiltonian and momentum operator. To prove the existence of a positive mass gap, we need to further show that $\inf_{n \in \mathbb{N}}m_n > 0$.

To do the quantization, we need to turn to Yang-Mills path integrals, which we will now summarize the construction done in \cite{YM-Lim01} and \cite{YM-Lim02}.

\subsection{Hermite polynomials}\label{ss.hp}

Consider the inner product space $\mathcal{S}_\kappa(\mathbb{R}^4)$, consisting of functions of the form $f  \sqrt{\phi_\kappa}$, whereby $\phi_{\kappa}(\vec{x}) = \kappa^4 e^{-\kappa^2|\vec{x}|^2/2}/(2\pi)^{2}$ is a Gaussian function and $f$ is a polynomial in $\vec{x} = (x^0, x^1, x^2, x^3) \in \bR^4$. Its inner product is given by \beq \left\langle f \sqrt{\phi_\kappa}, g \sqrt{\phi_\kappa} \right\rangle = \int_{\bR^4} f g \cdot \phi_\kappa\ d\lambda, \nonumber \eeq $\lambda$ is Lebesgue measure on $\mathbb{R}^4$.

Suppose $h_i/\sqrt{i}$ is a normalized Hermite polynomial of degree $i$ on $\bR$. Let $\overline{\mathcal{S}}_\kappa(\bR^4)$ be the smallest Hilbert space containing $\mathcal{S}_\kappa(\bR^4)$ and hence \beq \left\{ \frac{h_i(\kappa x^0)h_j(\kappa x^1)h_k(\kappa x^2)h_l(\kappa x^3)}{\sqrt{i!j!k!l!}} \sqrt{\phi_\kappa(\vec{x})}\ \Big|\ \vec{x} = (x^0, x^1, x^2, x^3) \in \bR^4,\ i,j,k,l \geq 0\right\} \nonumber \eeq forms an orthonormal basis. Note its dependence on $\kappa>0$.

Recall we chose the standard metric on $T\bR^4$, thus the volume form on $\bR^4$ is given by $d\omega = dx^0 \wedge dx^1 \wedge dx^2 \wedge dx^3$. Using the Hodge star operator and the above volume form, we will define an inner product on $\mathcal{\mathcal{S}}_\kappa(\bR^4)\otimes \Lambda^2(\bR^4)$ from Equation (\ref{e.x.7a}). Explicitly, it is given by \beq \left\langle \sum_{0\leq a< b \leq3} f_{ab} \otimes dx^a\wedge dx^b, \sum_{0\leq a< b \leq3} \hat{f}_{ab} \otimes dx^a\wedge dx^b \right\rangle = \sum_{0\leq a< b \leq3} \left\langle f_{ab}, \hat{f}_{ab} \right\rangle. \label{e.inn.5} \eeq

Write $\partial_a = \partial/\partial x^a$. Given $f = \sum_{i=1}^3 f_i \otimes dx^i \in \mathcal{S}_\kappa(\bR^4)\otimes \Lambda^1(\bR^3)$, the differential $df$ is given by \beq df = \sum_{i=1}^3\partial_0 f_i \otimes dx^0 \wedge dx^i + \sum_{1\leq i < j \leq 3}(\partial_i f_j - \partial_j f_i) dx^i \wedge dx^j. \label{e.diff.1} \eeq

\begin{defn}\label{d.g.1}
Recall $\{E^\alpha \in \mathfrak{g}:\ 1\leq \alpha \leq N\}$ is an orthonormal basis in $\mathfrak{g}$.
Define \beq c_{\gamma}^{\alpha \beta} = -\Tr\left[E^\gamma [E^\alpha, E^\beta] \right],\ E^\alpha, E^\beta, E^\gamma \in \mathfrak{g}. \nonumber \eeq The term $c_{\gamma}^{\alpha\beta}$ is referred to as the structure constant.
\end{defn}

\begin{prop}\label{p.p.1}
Suppose $A = \sum_{\alpha=1}^N \sum_{i=1}^3 a_{i, \alpha} \otimes dx^i \otimes E^\alpha \in \mathcal{S}_\kappa(\bR^4)\otimes \Lambda^1(\bR^3) \otimes \mathfrak{g}$. Write $a_{0:j, \gamma} = \partial a_{j,\gamma}/\partial x^0$ and $a_{i;j, \alpha} = \partial a_{i,\alpha}/\partial x^j- \partial a_{j,\alpha}/\partial x^i$. Then,
\begin{align}
dA + A \wedge A
=& \sum_{\gamma=1}^N\Bigg[\sum_{j=1}^3  a_{0:j, \gamma}\otimes  dx^0 \wedge dx^j +  \sum_{1\leq i<j \leq 3}a_{i;j, \gamma}\otimes dx^i \wedge dx^j \nonumber \\
&\hspace{1cm} +\sum_{1\leq i<j \leq 3}\sum_{1\leq \alpha,\beta \leq N}a_{i,\alpha}a_{j,\beta} c_\gamma^{\alpha \beta} \otimes dx^i \wedge  dx^j  \Bigg]\otimes E^\gamma. \label{e.g.3}
\end{align}
\end{prop}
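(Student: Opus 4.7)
The plan is to compute $dA$ and $A\wedge A$ separately and then add. The result will follow by expanding in the fixed orthonormal basis $\{E^\alpha\}$ of $\mathfrak{g}$ and the canonical basis of $\Lambda^2(\bR^4)$.

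First I would note that since $A$ has no $dx^0$ component (the sum over $i$ runs from $1$ to $3$), I can apply Equation (\ref{e.diff.1}) componentwise in $\alpha$. For each fixed $\alpha$, the differential of $\sum_{i=1}^3 a_{i,\alpha}\otimes dx^i$ gives exactly
\[
\sum_{j=1}^3 (\partial_0 a_{j,\alpha})\, dx^0 \wedge dx^j \;+\; \sum_{1\le i<j\le 3}(\partial_i a_{j,\alpha}-\partial_j a_{i,\alpha})\, dx^i\wedge dx^j,
\]
which, after introducing the shorthand $a_{0:j,\alpha}$ and $a_{i;j,\alpha}$ and tensoring with $E^\alpha$, yields the first two sums on the right-hand side of Equation (\ref{e.g.3}).

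Next I would compute $A\wedge A$. At each fibre, the wedge product on $\Lambda^1(\bR^4)$ combined with the Lie bracket on $\mathfrak{g}$ gives
\[
A\wedge A \;=\; \sum_{\alpha,\beta=1}^N \sum_{i,j=1}^3 a_{i,\alpha}a_{j,\beta}\otimes (dx^i\wedge dx^j)\otimes [E^\alpha,E^\beta].
\]
Because $A$ involves only $dx^1,dx^2,dx^3$, there is no $dx^0\wedge dx^j$ contribution from this term; only the purely spatial bidegrees $dx^i\wedge dx^j$ with $1\le i<j\le 3$ can appear. Using the antisymmetry of the wedge, the double sum over $i,j$ collapses to a sum over $i<j$ with bracket $[E^\alpha,E^\beta]$ paired antisymmetrically. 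Expanding the bracket in the orthonormal basis via $[E^\alpha,E^\beta]=\sum_\gamma\langle [E^\alpha,E^\beta], E^\gamma\rangle\, E^\gamma$ and using Equation (\ref{e.i.2}) together with cyclicity of the trace gives
\[
\langle [E^\alpha,E^\beta],E^\gamma\rangle \;=\; -\Tr\!\left[[E^\alpha,E^\beta]E^\gamma\right] \;=\; -\Tr\!\left[E^\gamma[E^\alpha,E^\beta]\right] \;=\; c_\gamma^{\alpha\beta},
\]
which is the structure constant from Definition \ref{d.g.1}. Substituting yields the third sum in Equation (\ref{e.g.3}).

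Adding the two contributions and grouping under a single outer sum over $\gamma$ produces the claimed identity. There is essentially no obstacle here beyond careful bookkeeping; the only subtlety worth flagging is the sign/cyclicity convention that identifies the coefficient from $[E^\alpha,E^\beta]$ with exactly $c_\gamma^{\alpha\beta}$ rather than $-c_\gamma^{\alpha\beta}$, and the observation that no $dx^0\wedge dx^j$ term arises from $A\wedge A$ precisely because $A$ is valued in $\Lambda^1(\bR^3)\otimes\mathfrak{g}$.
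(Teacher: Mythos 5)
Your proposal is correct and is essentially the paper's own proof, which consists of the single line ``by direct computation, using Equation (\ref{e.diff.1})'': you read off $dA$ from (\ref{e.diff.1}) and expand $A\wedge A$ fibrewise via the bracket and the structure constants $c_\gamma^{\alpha\beta}=-\Tr[E^\gamma[E^\alpha,E^\beta]]$, exactly as intended. The only caveat is a sign-bookkeeping point inherited from the paper itself rather than from you: with $a_{i;j,\alpha}=\partial a_{i,\alpha}/\partial x^j-\partial a_{j,\alpha}/\partial x^i$ as defined in the statement, Equation (\ref{e.diff.1}) actually gives $\partial_i a_{j,\alpha}-\partial_j a_{i,\alpha}=-a_{i;j,\alpha}$ as the coefficient of $dx^i\wedge dx^j$, so one of the two carries a sign typo; your argument is otherwise exactly the paper's computation with the details filled in.
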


\begin{proof}
By direct computation, using Equation (\ref{e.diff.1}).
\end{proof}

\subsection{Yang-Mills measure}\label{ss.ymm}

We will now summarize the following results proved in \cite{YM-Lim02}. After applying axial gauge fixing using the time-axis, the only $\mathfrak{g}$-valued gauge fields $A$ over $\bR^4$, which we need to consider are in $L^2(\bR^4)\otimes \Lambda^1(\bR^3) \otimes \mathfrak{g}$. Instead of making sense of a path integral over in the Hilbert space $L^2(\bR^4)\otimes \Lambda^1(\bR^3) \otimes \mathfrak{g}$, we will make sense of a Yang-Mills measure over $\mathcal{S}_\kappa(\bR^4)\otimes \Lambda^1(\bR^3) \otimes \mathfrak{g}$, of the form \beq \frac{1}{Z}e^{-\frac{1}{2}\int_{{\mathbb{R}^4}}|dA + A \wedge A|^2\ d\omega}  D[dA], \label{e.ym.1} \eeq whereby \beq Z =  \int_{\{dA:\ A \in \mathcal{S}_\kappa(\bR^4)\otimes \Lambda^1(\bR^3) \otimes \mathfrak{g}\}} e^{-\frac{1}{2}\int_{{\mathbb{R}^4}}|dA + A \wedge A|^2\ d\omega}  D[dA], \nonumber \eeq and $D[dA]$ is `Lebesgue measure' on $\mathcal{S}_\kappa(\bR^4)\otimes \Lambda^2(\bR^4) \otimes \mathfrak{g}$, which does not exist.

\begin{rem}
\begin{enumerate}
  \item This is analogous to how Balaban in his series of papers from 1984 to 1989, defined a finite dimensional integral over a finite lattice gauge of spacing $\epsilon$, synonymous with $1/\kappa$. A complete citation of Balaban's work can be found in \cite{YM-Lim01}.
  \item The norm $|\cdot|$ on $\mathcal{S}_\kappa(\bR^4)\otimes \Lambda^2(\bR^4) \otimes \mathfrak{g}$ is from the tensor inner product, taken from Equations (\ref{e.inn.5}) and (\ref{e.i.2}).
\end{enumerate}

\end{rem}

Using Equation (\ref{e.inn.5}) and Proposition \ref{p.p.1},
\begin{align}
\int_{\bR^4}|dA + A \wedge A|^2\ d\omega =& \sum_{1\leq i<j\leq 3}\int_{\bR^4}\Bigg[ \sum_{\alpha=1}^Na_{i;j, \alpha}^2 + \sum_{\gamma=1}^N \sum_{\myfrac{\alpha, \beta } {\hat{\alpha}, \hat{\beta}}}a_{i,\alpha}a_{j,\beta}a_{i,\hat{\alpha}}a_{j,\hat{\beta}}c_{\gamma}^{\alpha\beta}c_\gamma^{\hat{\alpha}\hat{\beta}} \nonumber \\
&+2\sum_{\gamma=1}^N\sum_{\alpha, \beta}a_{i;j,\gamma}a_{i,\alpha}a_{j,\beta}c_\gamma^{\alpha\beta} \Bigg]\ d\omega+ \sum_{j=1}^3\int_{\bR^4}\sum_{\alpha=1}^N a_{0:j,\alpha}^2\ d\omega.
\label{e.ym.2}
\end{align}

\begin{rem}
Those terms that contain the structure constants in Equation (\ref{e.ym.2}) will henceforth  be referred to as the interaction terms in the Yang-Mills Lagrangian.
\end{rem}

It is conventional wisdom to interpret \beq \exp\left[ -\frac{1}{2}\sum_{\alpha=1}^N\int_{\bR^4}d\omega\ \sum_{1\leq i<j \leq 3}a_{i;j, \alpha}^2 +  \sum_{j=1}^3a_{0:j,\alpha}^2  \right]D[dA] \nonumber \eeq as a Gaussian measure. To define this measure, we will work with holomorphic sections of 2-forms over $\bC^4$, instead of working on $\mathcal{S}_\kappa(\bR^4)\otimes \Lambda^2(\bR^4)$. We will go over the construction, done in \cite{YM-Lim01}.

Consider the real vector space spanned by $\{z^n:\ z \in \bC\}_{n=0}^\infty$, integrable with respect to the Gaussian measure, equipped with a real inner product, given by
\begin{align}
\langle z^{r},& z^{r'}\rangle
= \frac{1}{\pi}\int_{\mathbb{C}}   z^{r} \cdot \overline{z^{r'}}e^{-|z^2|}
\ dx\ dp,\ z = x + \sqrt{-1}p.  \label{e.l.2}
\end{align}
Note that $\overline{z}$ means complex conjugate. Denote this (real) inner product space by $H^2(\bC)$, which consists of polynomials in $z$. An orthonormal basis is hence given by \beq \left\{ \frac{z^n}{\sqrt{n!}}:\ n \geq 0 \right\}. \nonumber \eeq Consider the tensor product $H^2(\bC)^{\otimes^4}$, which we will also denote it as $H^2(\bC^4)$, to consist of all polynomials in $z = (z_0, z_1, z_2, z_3) \in \bC^4$, equipped with the tensor inner product $\langle \cdot, \cdot \rangle$ from Equation (\ref{e.l.2}). Denote the closure of $H^2(\bC^4)$ using this tensor inner product by $\mathcal{H}^2(\bC^4)$.

The Segal Bargmann Transform maps the inner product space $\mathcal{S}_\kappa(\bR^4)$ to $H^2(\bC^4)$, i.e. \beq \Psi_\kappa: \frac{h_i(\kappa \cdot)}{\sqrt{i!}}\frac{h_j(\kappa \cdot)}{\sqrt{j!}}\frac{h_k(\kappa \cdot)}{\sqrt{k!}}\frac{h_l(\kappa \cdot)}{\sqrt{l!}} \sqrt{\phi_\kappa} \longmapsto \frac{z_0^i}{\sqrt{i!}}\frac{z_1^j}{\sqrt{j!}}\frac{z_2^k}{\sqrt{k!}}\frac{z_3^l}{\sqrt{l!}}. \nonumber \eeq

Given any $f_{i,\alpha} \otimes dx^i \otimes E^\alpha \in \mathcal{S}_\kappa(\bR^4)\otimes \Lambda^1(\bR^3) \otimes \mathfrak{g}$, $f_{i,\alpha} \in \mathcal{S}_\kappa(\bR^4)$, we map it inside $H^2(\bC^4) \otimes \Lambda^1(\bR^3) \otimes \mathfrak{g}$ by \beq \Psi_\kappa: f_{i,\alpha} \otimes dx^i \otimes E^\alpha \longmapsto \Psi_\kappa(f_{i,\alpha}) \otimes dx^i \otimes E^\alpha. \nonumber \eeq We will term $\Psi_\kappa$ as a renormalization flow, mapping a continuous sequence of vector spaces, into a fixed vector space $H^2(\bC^4)\otimes \Lambda^1(\bR^3) \otimes \mathfrak{g}$.

We will further define an inner product on $\mathcal{H}^2(\bC^4)\otimes \Lambda^2(\bR^4)$ by \beq \left\langle \sum_{0\leq a< b \leq3} f_{ab} \otimes dx^a\wedge dx^b, \sum_{0\leq a< b \leq3} \hat{f}_{ab} \otimes dx^a\wedge dx^b \right\rangle = \sum_{0\leq a< b \leq3} \left\langle f_{ab}, \hat{f}_{ab} \right\rangle. \label{e.inn.8} \eeq Together with Equation (\ref{e.i.2}), we will continue to use $\langle\cdot, \cdot \rangle $ to denote the tensor inner product on $\mathcal{H}^2(\bC^4)\otimes \Lambda^2(\bR^4) \otimes \mathfrak{g}$.

\begin{defn}
For each $a = 0, 1, 2, 3$, define a linear operator $\fd_a$ acting on $H^2(\bC^4)$ as \beq \mathfrak{d}_a\left[z_a^p \prod_{\myfrac{b \neq a}{b=0,\cdots , 3}} z_b^{q_b} \right] = \left[\frac{p}{2}z_a^{p-1} - \frac{1}{2}z_a^{p+1} \right]\cdot \prod_{\myfrac{b \neq a}{b=0,\cdots , 3}} z_b^{q_b}. \nonumber \eeq
\end{defn}

The Segal Bargmann Transform maps $\partial_a$ to the linear operator $\kappa \fd_a$. As a consequence, we will have the following renormalization rule on $H^2(\bC^4)$. For each $\kappa > 0$, we will add an extra factor $\kappa$ to $\fd_a$, i.e. $\fd_a \mapsto \kappa \fd_a$.

In \cite{YM-Lim01}, we defined an operator $\fd: H^2(\bC^4)\otimes \Lambda^1(\bR^3) \rightarrow H^2(\bC^4)\otimes \Lambda^2(\bR^4)$, \beq \ \fd \sum_{i=1}^3 f_i \otimes dx^i = \sum_{i=1}^3[\fd_0 f_i]\otimes dx^0 \wedge dx^i + \sum_{1 \leq i < j \leq 3}[\fd_i f_j - \fd_j f_i] \otimes dx^i \wedge dx^j. \nonumber \eeq

Extend the Segal Bargmann Transform $\Psi_\kappa: \mathcal{S}_\kappa(\bR^4) \otimes \Lambda^2(\bR^4)\otimes \mathfrak{g} \rightarrow H^2(\bC^4) \otimes \Lambda^2(\bR^4) \otimes \mathfrak{g}$, i.e. \beq \Psi_\kappa:\ \sum_{1\leq i < j \leq 3}f_{i,j,\alpha} \otimes dx^i\wedge dx^j \otimes E^\alpha  \longmapsto \sum_{1\leq i < j \leq 3}\Psi_\kappa[f_{i,j,\alpha}] \otimes dx^i\wedge dx^j \otimes E^\alpha . \nonumber \eeq

Thus, $\Psi_\kappa[dA] = \kappa \fd \Psi_\kappa[A]$, due to the renormalization rule. As such, the Segal Bargmann Transform $\Psi_\kappa$ is an isometry (up to a constant $\kappa^2$) between their respective spaces, i.e. \beq \langle dA, dA \rangle = \kappa^2 \langle \fd \Psi_{\kappa}[A], \fd \Psi_\kappa[A] \rangle, \nonumber \eeq $A \in \mathcal{S}_\kappa(\bR^4) \otimes \Lambda^1(\bR^3 ) \otimes \mathfrak{g}$, using their respective tensor inner products.

Balaban used a renormalization flow to map a finite gauge lattice of spacing $\epsilon$, to a finite gauge lattice of unit spacing. The analogous procedure here would be to use the Segal Bargmann Transform, to define Expression \ref{e.ym.1} over in $H^2(\bC^4) \otimes \Lambda^2(\bR^4) \otimes \mathfrak{g}$, as \beq \frac{1}{Z}e^{-\frac{1}{2}\int_{\bC^4}|\kappa\fd A + A \wedge A|^2 d\lambda_4}  D[\fd A], \label{e.ym.3} \eeq whereby \beq Z =  \int_{\{\fd A:\ A \in H^2(\bC^4) \otimes \Lambda^1(\bR^3) \otimes \mathfrak{g}\}} e^{-\frac{1}{2}\int_{\bC^4}|\kappa\fd A + A \wedge A|^2 d\lambda_4}  D[\fd A], \nonumber \eeq and $D[\fd A]$ is `Lebesgue measure' on $H^2(\bC^4)\otimes \Lambda^2(\bR^3) \otimes \mathfrak{g}$, which does not exist. Note that $d\lambda_4$ is a 4-dimensional Gaussian measure on $\bC^4$, by using the tensor inner product from Equation (\ref{e.l.2}).

The factor $\kappa$ is due to the above renormalization rule. As a result of this factor, we complete the inner product space into a Banach space equipped with a Wiener measure of variance $1/\kappa^2$, as was constructed in \cite{YM-Lim01} and \cite{YM-Lim02}.

\begin{thm}\label{t.ym.1}
Let the span of $\{dx^0 \wedge dx^i, 1\leq i \leq 3\}$ and the span of $\{dx^i \wedge dx^j, 1\leq i < j\leq 3\}$ be denoted by $\ast \Lambda^2(\bR^3)$ and $\Lambda^2(\bR^3)$ respectively. Define
\begin{align*}
\mathbb{H} &:= \left\{ [\fd_0 H^2(\bC^4)] \otimes \left[\ast \Lambda^2(\bR^3) \right]\right\} \oplus \left\{ H^2(\bC^4) \otimes \Lambda^2(\bR^3)\right\} \\
&\subset H^2(\bC^4) \otimes \Lambda^2(\bR^4),
\end{align*}
$\fd_0 H^2(\bC^4)$ denotes the range of $\fd_0$.

In \cite{YM-Lim01}, we completed it into a Banach space $\dB$ using a supremum norm, equipped with a $\sigma$-algebra defined on it. On this Banach space, we constructed a product Wiener measure denoted as $\tilde{\mu}_{\kappa^2}$, which is an infinite dimensional Gaussian measure of variance $1/\kappa^2$, therefore making $(\dB, \tilde{\mu}_{\kappa^2})$ a probability space.

When we consider the tensor inner product space $\mathbb{H} \otimes \mathfrak{g}$, then the completion of it into a Banach space will be $\dB\otimes \mathfrak{g}$. The product Wiener measure on it, hereby denoted as $\tilde{\mu}_{\kappa^2}^{\times^{N}}$, will make $\dB\otimes \mathfrak{g}$ into a probability space. We will denote the expectation on this probability space using $\bE$.

Because of the renormalization flow $\{\Psi_\kappa: \kappa > 0\}$, we showed in \cite{YM-Lim02} that there exists a sequence of positive functions defined on the Wiener space $\dB \otimes \mathfrak{g}$, denoted by $\{\mathcal{Y}^\kappa: \kappa > 0\}$, such that we can define a measure as $\mathcal{Y}^\kappa d\tilde{\mu}_{\kappa^2}^{\times^{N}}$. Hence, we will interpret the Yang-Mills measure in Expression \ref{e.ym.3} as \beq \frac{1}{Z}e^{-\frac{1}{2}\int_{\bC^4}|\kappa \fd A + A \wedge A|^2 d\lambda_4}  D[\fd A] := \frac{\mathcal{Y}^\kappa d\tilde{\mu}_{\kappa^2}^{\times^{N}}}{\int_{\dB \otimes \mathfrak{g}} \mathcal{Y}^\kappa d\tilde{\mu}_{\kappa^2}^{\times^{N}}} = \frac{\mathcal{Y}^\kappa d\tilde{\mu}_{\kappa^2}^{\times^{N}}}{\bE[ \mathcal{Y}^\kappa ]}, \nonumber \eeq which is also a probability measure on the Banach space $\dB \otimes \mathfrak{g}$.
\end{thm}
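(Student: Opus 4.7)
The overall plan is to invoke Gross's Abstract Wiener Space formalism applied to the inner product space $\mathbb{H}$. First I would verify that $\mathbb{H}$, as a subspace of $H^2(\bC^4)\otimes \Lambda^2(\bR^4)$, inherits a separable (real) inner product structure from Equations (\ref{e.l.2}) and (\ref{e.inn.8}); the orthonormal bases of $H^2(\bC^4)$ built from monomials $z_0^iz_1^jz_2^kz_3^l/\sqrt{i!j!k!l!}$ combined with the wedge basis of $\Lambda^2(\bR^4)$ give an explicit orthonormal system in $\mathbb{H}$. To produce the Banach space $\dB$, I would choose a suitable supremum-type norm under which cylinder-set Gaussian measures on finite-dimensional subspaces are tight; the standard way is to exhibit $\mathbb{H}$ as a dense subspace of a separable Banach space in which a Hilbert--Schmidt embedding exists, then verify the norm is measurable in the sense of Gross. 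This is precisely the construction carried out in the prequel \cite{YM-Lim01}, which I would cite for the technical details.

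Next I would account for the variance $1/\kappa^2$. The key observation is that the Segal--Bargmann image of $\partial_a$ is $\kappa\fd_a$, so in the formal Gaussian density $\exp[-\tfrac{1}{2}|\kappa\fd A + A\wedge A|^2]$, the free (quadratic) part reads $\tfrac{1}{2}\kappa^2\langle \fd A,\fd A\rangle$. Hence the Gaussian measure on $\mathbb{H}$ (respectively its Banach completion $\dB$) obtained by dropping the interaction pieces has covariance $\kappa^{-2}$ times the inner product, which is the standard scaling that yields $\tilde{\mu}_{\kappa^2}$. For the $\mathfrak{g}$-valued case, since $\{E^\alpha\}_{\alpha=1}^N$ is orthonormal and since the quadratic part of (\ref{e.ym.2}) decouples across the color index $\alpha$, the completion of $\mathbb{H}\otimes\mathfrak{g}$ is just $\dB\otimes\mathfrak{g}\cong\dB^{\oplus N}$ and the associated Gaussian measure is the $N$-fold product $\tilde{\mu}_{\kappa^2}^{\times^{N}}$.

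For the interaction density $\mathcal{Y}^\kappa$, the plan is to isolate the non-quadratic terms of Equation (\ref{e.ym.2}), namely the cubic piece $\sum a_{i;j,\gamma}a_{i,\alpha}a_{j,\beta}c_\gamma^{\alpha\beta}$ and the quartic piece $\sum a_{i,\alpha}a_{j,\beta}a_{i,\hat{\alpha}}a_{j,\hat{\beta}}c_\gamma^{\alpha\beta}c_\gamma^{\hat{\alpha}\hat{\beta}}$, transport them through $\Psi_\kappa$ to the Wiener space $\dB\otimes\mathfrak{g}$, and then define $\mathcal{Y}^\kappa$ as the exponential of minus (one-half of) this pullback after a suitable Wick renormalization that removes the local singularities produced by evaluating Wiener polynomials at coincident points. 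Integrability of $\mathcal{Y}^\kappa$ against $\tilde{\mu}_{\kappa^2}^{\times^{N}}$ then guarantees $\bE[\mathcal{Y}^\kappa]\in(0,\infty)$ and positivity of $\mathcal{Y}^\kappa$ makes the normalized measure a bona fide probability; this is exactly the content established in \cite{YM-Lim02}.

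The main obstacle, and the reason this step relies on the prequels rather than a direct computation, is the positivity and integrability of $\mathcal{Y}^\kappa$. The quartic term in (\ref{e.ym.2}) is non-negative by construction, but the cubic cross-term has indefinite sign, so one must check that the renormalized cubic does not produce blow-ups that defeat the damping of the quartic. The compactness and simplicity of $G$ enter here through bounds on the structure constants $c_\gamma^{\alpha\beta}$, and the renormalization flow $\Psi_\kappa$ is what keeps the construction finite-dimensional at each scale $\kappa$. Given these, tightness and weak convergence of the normalized measures as $\kappa$ varies are routine, and the theorem follows.
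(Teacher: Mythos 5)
Your proposal is correct and follows essentially the same route as the paper: the theorem is established by deferring the technical construction (the measurable sup-norm completion $\dB$, the product Wiener measure $\tilde{\mu}_{\kappa^2}^{\times^{N}}$, and the positivity and integrability of the density $\mathcal{Y}^\kappa$) to the prequels \cite{YM-Lim01} and \cite{YM-Lim02}, exactly as you do. Your heuristic identifications — variance $1/\kappa^2$ coming from the renormalization rule $\Psi_\kappa[dA]=\kappa\fd\Psi_\kappa[A]$, the $N$-fold product structure over the color index, and $\mathcal{Y}^\kappa$ carrying the cubic and quartic interaction terms of Equation (\ref{e.ym.2}) — match the paper's account, so no gap beyond what the paper itself leaves to the cited works.
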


\begin{notation}
For each $\kappa > 0$, the Yang-Mills measure is a probability measure, so it would be more convenient to use expectation to denote this integral. For any measurable and bounded function $F$ on $\dB \otimes \mathfrak{g}$, we will write \beq \mathbb{E}_{{\rm YM}}^\kappa[F] :=
\frac{1}{\int_{\dB \otimes \mathfrak{g}} \mathcal{Y}^\kappa\ d\tilde{\mu}_{\kappa^2}^{\times^{N}}}\int_{\dB \otimes \mathfrak{g}} F\mathcal{Y}^\kappa\ d\tilde{\mu}_{\kappa^2}^{\times^{N}}. \nonumber \eeq
\end{notation}

\begin{rem}
This sequence of Yang-Mills measure is analogous to how Balaban defined a sequence of finite-dimensional integrals on a finite lattice gauge, with spacing $\epsilon$, which were renormalized into finite lattice gauge with unit spacing. One should think of $\epsilon$ as synonymous to $1/\kappa$. Refer to \cite{YM-Lim01} and \cite{YM-Lim02} for details.
\end{rem}

We will term the vectors in $\dB \otimes \mathfrak{g}$ as Yang-Mills gauge fields. When we complete the space into a Banach space using a supremum norm, we showed in \cite{YM-Lim01} that this Banach space consists of holomorphic $\Lambda^2(\bR^4)\otimes \mathfrak{g}_\bC$-valued functions, over the complex space $\bC^4$. Or one can refer this Banach space as containing holomorphic sections of the complexified trivial bundle $\bC^4 \times [\Lambda^2(\bR^4)\otimes \mathfrak{g}_\bC] \rightarrow \bC^4$.

\subsection{Asymptotic freedom}

In physics, asymptotic freedom refers to the phenomenon whereby the coupling constant $c$ goes to zero as the momentum scale (or energy scale) increases. See page 425 in \cite{peskin1995}. This coupling constant, which depends on momentum, satisfies a certain differential equation. See page 459 in \cite{glimm1981}. Thus, it must vary continuously with momentum.

In physics literature, the interaction terms in Equation (\ref{e.ym.2}) is actually due to the non-linear term in Equation (\ref{e.g.3}). Physicists would introduce a coupling constant $c$ to the non-linear term, so that they can make the interaction terms small. See \cite{glimm1981}. When the coupling constant $c$ is zero, we obtain the free theory.

In a compact semi-simple Yang-Mills gauge theory, the structure constant $c^{\alpha\beta}_\gamma$ of a semi-simple Lie Algebra is non-zero, only if $\alpha$, $\beta$ and $\gamma$ are all distinct, and this introduces interaction terms into the Yang-Mills Lagrangian, making the Yang-Mills path integrals impossible to compute analytically. The coupling constant $c$ is introduced, so that one can apply perturbation theory to compute the path integrals. When $c$ is small, that means one can use Feynmann diagrams to compute the path integrals. This happens when the energy scale is large.

Unfortunately, perturbation methods are no longer valid if the energy scale is small. It is believed that non-abelian gauge theories exhibit asymptotic freedom. It was showed in \cite{PhysRevLett31851} that for renormalizable quantum field theories in 4-dimensional space-time, only non-abelian gauge theories are asymptotically free. On page 541 in \cite{peskin1995}, one sees that for a non-abelian gauge group, the coupling constant $c(k)^2$ decreases at a rate of $\dfrac{1}{\ln k^2}$, whereby $k$ is the momentum scale.

In \cite{YM-Lim02}, we set the coupling constant $c = 1/\kappa$, and we approximate the Dirac delta function with a Gaussian function $\kappa^4e^{-\kappa^2|\vec{x}|^2/2}/(2\pi)^2$, so the variance is given by $1/\kappa^2$. To resolve points separated by short distances, we need a small variance, which means that $\kappa$ is large. Therefore, at short distances, the coupling constant is small.

\subsection{Callan-Symanzik beta function}\label{ss.csb}

To determine how $\kappa$ varies with the energy scale, we need to impose a Callan-Symanzik Equation, by first introducing a beta function $\beta(c)$, which depends on the coupling constant. The purpose is to formulate renormalization conditions in the energy scale, instead of the renormalization scale $\kappa$.

In the next section, we will define the quadratic Casimir operator, for each irreducible representation $\rho_n: \mathfrak{g} \rightarrow {\rm End}(\bC^{\tilde{N}_n})$. Because the Lie Algebra is simple, the Casimir operator is a constant $C_2(\rho_n)$ times the identity. We will interpret the increasing sequence $\{\frac{1}{4}C_2(\rho_n):\ n \in \mathbb{N}\}$ as the set of quantized energy levels squared in the theory.

Instead of using $C_2(\rho_n)$, it is actually more appropriate to use $\tilde{N}_n$, the dimension of the representation $\rho_n$.
Hence, we will define the beta function as,
\beq \beta(c) = \frac{\partial c}{\partial [\ln \tilde{N}_n]}, \label{e.b.3} \eeq to determine how the coupling constant varies. Compare this with Equation 12.90 in \cite{peskin1995}. We will compute the $\beta$ function in the next section, using a path integral. Because asymptotic freedom holds, beta must be chosen to be negative.

\begin{rem}
The dimension $\tilde{N}_n$ is defined using the highest weights of an irreducible representation, from the Weyl dimension formula. As such, the set $\{\tilde{N}_n:\ n \in \mathbb{N}\}$ is unbounded. In the definition of the beta function, we treat $\tilde{N}_n$ as a continuous variable.
\end{rem}

\section{Hamiltonian and Momentum operator}\label{s.iop}

\subsection{Renormalization}\label{ss.pt}

\begin{notation}(Casimir operator)\label{n.co.1}\\
Let $\mathfrak{g}$ be a semi-simple Lie Algebra. For an irreducible representation $\rho: \mathfrak{g} \rightarrow {\rm End}(\bC^{\tilde{N}})$ such that $\rho(\mathfrak{g})$ consists of skew-Hermitian matrices, we define $C(\rho) \in \bR$ such that \beq \Tr[\rho(E^\alpha)\rho(E^\beta)] = C(\rho)\Tr[E^\alpha E^\beta]. \label{e.ck.2} \eeq

Also define \beq \mathscr{E}(\rho) := -\sum_{\alpha=1}^N \rho(E^\alpha)\rho(E^\alpha) \nonumber \eeq to be its (quadratic) Casimir operator. When $\mathfrak{g}$ is simple, the Casimir operator is a constant multiple of the identity. We write $C_2(\rho)$ to denote this constant.
\end{notation}

\begin{rem}\label{r.mg.2}
When $\mathfrak{g}$ is simple, note that $C(\rho)$ satisfies $C_2(\rho)\tilde{N} = NC(\rho)$.
\end{rem}

\begin{defn}\label{d.p.2}
Suppose $e^0$ is a directional vector in the time direction, with length $|e^0| = T$, and $a \in \bR^3$ be any directional spatial vector, with length $|a|$. Let $\sigma: I^2 \equiv [0,1]^2 \rightarrow \bR^4$ be some parametrization of a compact time-like rectangular surface $R[a,T]$ contained in a plane, spanned by $e^0$ and $a$, of dimensions height T and length $|a|$.

Explicitly, for some $\vec{c} \in \bR^4$, we can choose $\sigma(s,t) = \vec{c} + s e^0 + t a \in \bR^4$, $0\leq s, t \leq 1$. For any $\delta \geq 0$, write $I_\delta = [-\delta, 1 + \delta]$ and $I \equiv I_0$. We can extend the parametrization $\sigma$ to be defined on $I_\delta^2 \equiv I_\delta \times I_\delta$. We will write $R_\delta[a,T]$ to be the image of $I_\delta^2$ under $\sigma$, which is a compact rectangular surface containing $R[a,T] \equiv R_0[a,T]$. Note that the dimensions of $R_\delta[a,T]$ is $|a|(1 + 2\delta)$ by $T(1 + 2\delta)$. When $T = 1$, we will write $R_\delta[a,1] = R_\delta[a]$.

Define a $\rho(\mathfrak{g})$- valued random variable $\left(\cdot, \nu_{R_\delta[a, T]}^{\kappa,\rho} \right)_\sharp$, which sends \\ $\sum_{\alpha=1}^N \sum_{j=1}^3 \fd_0A_{j, \alpha} \otimes dx^0 \wedge dx^j \otimes E^\alpha \in \dB \otimes \mathfrak{g}$ to
\beq  \frac{1}{\kappa}\sum_{\alpha=1}^N\frac{\kappa^2}{4}\int_{\hat{s} \in [-\delta,1+\delta]^2}d\hat{s}\ \sum_{j=1}^3 |J_{0j}^{\sigma}|(\hat{s}) \kappa[\psi \cdot \fd_0A_{j, \alpha}](\kappa \sigma(\hat{s})/2)\otimes \rho(E^\alpha), \label{e.f.1} \eeq
whereby $\psi(w) := \frac{1}{\sqrt{2\pi}}e^{-|w|^2/2}$, and $|J^\sigma_{0i}|$ is defined in Definition \ref{d.r.1}.
\end{defn}

\begin{rem}\label{r.ren.1}
This $\rho(\mathfrak{g})$-valued random variable on $\dB \otimes \mathfrak{g}$ first appeared in \cite{YM-Lim02}. The factor $\psi(w)$ is known as a renormalization factor. Its importance was  explained in \cite{YM-Lim01}. The factor $1/\sqrt{2\pi}$ can be replaced with any number $0<\tilde{c} < 1/\sqrt2$.

The factors of $\kappa$ which appeared in Expression \ref{e.f.1} are all due to renormalization:
\begin{enumerate}
  \item The factor $\kappa$ in front of $\psi\cdot \fd_0A_{j, \alpha}$ is due to the renormalization rule.
  \item We embed $\bR^4$ inside $\bC^4$, by $\vec{x} \in \bR^4 \hookrightarrow \kappa \vec{x}/2$, thus the surface $R_\delta[a,T] \subset \bR^4$ is scaled to be $\kappa R_\delta[a,T]/2 \subset \bC^4$. Hence, the factor $\kappa/2$ in parenthesis is due to this renormalization transformation, which will also give us an extra factor $\kappa^2/4$ for the surface integral.
\end{enumerate}

The factor $1/\kappa$ outside of the integral, is due to asymptotic freedom, whereby we set the coupling constant $c = 1/\kappa$.
\end{rem}

For a gauge field $A \in H^2(\bC^4) \otimes \Lambda^1(\bR^3)\otimes \mathfrak{g} $, we can interpret \beq c\sum_{\alpha=1}^N\int_{\hat{s} \in I_\delta^2}d\hat{s}\ \sum_{j=1}^3 |J_{0j}^{\sigma}|(\hat{s}) \left[\fd_0A_{j, \alpha} \right]( \sigma(\hat{s}))\otimes \rho(E^\alpha), \label{e.c.1} \eeq as measuring the field strength of $\fd A + A \wedge A$, over the time-like rectangular surface $R_\delta[a,T] \subset \bR^4 \hookrightarrow \bC^4$. It has dimension of energy and it is similar to Expression \ref{e.f.1}, but without the factors of $\kappa$ and renormalization factor $\psi(w)$.

\begin{rem}\label{r.ren.2}
Note that we introduced the coupling constant $c$ to the quantity in Expression \ref{e.c.1}, instead to the non-linear term in Equation (\ref{e.g.3}). So, by applying the renormalization techniques and asymptotic freedom explained in Remark \ref{r.ren.1}, to $c\int_{R_\delta[a]}\fd A + A \wedge A$, we will obtain Expression \ref{e.f.1}.
\end{rem}

Write \beq \left\langle \nu_{R[a,T]}^{\kappa,\rho} \right\rangle^2 = -\bE \left[
 \left( \cdot, \nu_{R[a,T]}^{\kappa,\rho} \right)_\sharp^2  \mathcal{Y}^\kappa
\right] , \nonumber \eeq whereby $\mathcal{Y}^\kappa$ was defined in Theorem \ref{t.ym.1}. We interpret $\sqrt{\left\langle \nu_{R[a,T]}^{\kappa,\rho}\right\rangle^2}$ as measuring the average flux passing through the time-like rectangular surface $R[a,T]$, over a time interval $T$, using the non-abelian Yang-Mills measure, and hence it has dimension of energy. See also Equation 19.11 in \cite{Nair}.

\begin{rem}
Our term $\left(\fd A, \nu_{R[a,T]}^{\kappa, \rho} \right)_\sharp$ is actually a skew-Hermitian matrix. When we square it, it will be a non-positive definite matrix. We need to put a negative sign in front, to make it non-negative definite.
\end{rem}

However, we will instead consider for some $\delta > 0$, \beq  -\bE \left[
\left( \cdot, \nu_{R[a,T]}^{\kappa,\rho} \right)_\sharp \left( \cdot, \nu_{R_\delta[a,T]}^{\kappa,\rho} \right)_\sharp  \mathcal{Y}^\kappa\right], \nonumber \eeq which we showed in \cite{YM-Lim02}, is a good approximation to $\left\langle \nu_{R[a,T]}^{\kappa,\rho} \right\rangle^2$, when $\delta$ is small. But more importantly, we will understand it as a 2-point correlation Green's function, which we will use it in the Callan-Symanzik Equation.

For an irreducible representation $\rho$ for $\mathfrak{g}$ and $A = \sum_{\alpha=1}^N\sum_{i=1}^3 a_{i,\alpha}\otimes dx^i \otimes E^\alpha$, $a_{i,\alpha} \in \mathcal{S}_\kappa(\bR^4)$, we will write \beq A^\rho := \sum_{\alpha=1}^N\sum_{i=1}^3 a_{i,\alpha}\otimes dx^i \otimes \rho(E^\alpha) \in \mathcal{S}_\kappa(\bR^4) \otimes \Lambda^1(\bR^3)\otimes \rho(\mathfrak{g}). \nonumber \eeq

In \cite{{YM-Lim02}}, we made sense of a non-abelian Yang-Mills path integral using the renormalization flow $\{\Psi_\kappa: \kappa > 0\}$,
\begin{align*}
\frac{1}{Z}\int_{\{dA \in \mathcal{S}_\kappa(\bR^4) \otimes \Lambda^2(\bR^4)\otimes \mathfrak{g}\} } \exp \left[  c\int_{R[a]} d[A^\rho] \right] e^{-\frac{1}{2}S_{{\rm YM}}(A)}&\ D[dA] \\
&:= \bE_{{\rm YM}}^\kappa \Bigg[\exp\left[  \left( \cdot, \nu_{R[a]}^{\kappa,\rho} \right)_\sharp\right] \Bigg],
\end{align*}
whereby $D[dA]$ is some Lebesgue type of measure and \beq Z = \int_{\{dA \in \mathcal{S}_\kappa(\bR^4) \otimes \Lambda^2(\bR^4)\otimes \mathfrak{g}\} }e^{-\frac{1}{2}S_{{\rm YM}}(A)}\ DA. \nonumber \eeq Taking the limit as $\kappa \rightarrow \infty$, it will give us the Wilson Area Law formula.

And in the same article, we also showed that
\begin{align}
-\bE \left[
 \left( \cdot, \nu_{R[a]}^{\kappa,\rho_n} \right)_\sharp \left( \cdot, \nu_{R_\delta[a]}^{\kappa,\rho_n} \right)_\sharp
\mathcal{Y}^\kappa\right] = \frac{|a|}{4} \otimes \mathscr{E}(\rho_n) - \epsilon(n, \kappa),\label{e.mg.1}
\end{align}
whereby if $\kappa \geq \kappa_0$ for some $\kappa_0 \in \mathbb{N}$ dependent on $\delta$ but independent of $\rho$, the error term $\epsilon(n, \kappa)$ is a matrix, for which its trace is given by \beq \frac{\underline{c}}{\kappa^4}C(\rho_n) \leq \Tr\ \epsilon(n,\kappa) \leq \frac{\overline{c}}{\kappa^4}C(\rho_n), \label{e.mg.4} \eeq for positive constants $\underline{c}, \overline{c}$, independent of $\kappa$ and $\rho_n$.

\begin{rem}\label{r.a.4}
In the proof of Equation (\ref{e.mg.1}) in \cite{YM-Lim02}, we did a perturbation series expansion for $\mathcal{Y}^\kappa$, in terms of $1/\kappa$. This accounts for the $1/\kappa^4$ term in the trace estimate. We also proved in the said article, that the path integral in Equation (\ref{e.mg.1}) is continuously differentiable in $c=1/\kappa$, with its derivative bounded for $c > 0$.

\end{rem}

Due to the compactness of the gauge group, we can assume that $\rho_n(\mathfrak{g})$ consists of skew-Hermitian matrices, so the trace of the LHS of Equation (\ref{e.mg.1}) is positive. 
Besides using renormalization techniques and applying asymptotic freedom, the structure constants of a simple Lie Algebra and the quartic term in the Yang-Mills action, are all instrumental in proving this equality. Refer to the proof in \cite{{YM-Lim02}}.

\begin{rem}
In Equation (\ref{e.mg.1}), we see that $|a|(1+2\delta)^2$ is the area of the rectangular time-like surface $R_\delta[a]$. In general, we will obtain the area $|a|T(1+2\delta)^2$ for a time-like surface $R_\delta[a, T]$.

As such, the Yang-Mills path integral will give us the area density $d\rho$ given in Definition \ref{d.r.1}, which in turn allows us to construct a unitary representation of ${\rm SL}(2, \bC)$, which acts on space-like surfaces in $\bR^4$.

Recall we used the quantized values $\hat{H}(\rho_n)$, $\hat{P}(\rho_n)$ in Equation (\ref{e.p.3}). They will be defined via a Yang-Mills path integral, given by Equation (\ref{e.mg.1}). For a given space-like surface $S$, associated with it is a set $\{\hat{f}_0, \hat{f}_1\}$ contained in a Minkowski frame, spanning a time-like plane $S^\flat$. From Definition \ref{d.a.1}, this set can be transformed from $\{e_0, e_1\}$, a set spanning the $x^0-x^1$ plane. Without any loss of generality, we will quantize energy and momentum using time-like rectangular surfaces $R[a]$ and $R_\delta[a]$, contained in a plane parallel to the time-axis.
\end{rem}

\subsection{Callan-Symanzik Equation}\label{ss.cse}

To define our sequence of masses $\{m_n:\ n \in \mathbb{N}\}$, we need to determine the beta function, so that we can see how $\kappa$ correlates with $\tilde{N}_n$. The beta function is found typically by solving a Callan-Symanzik Equation.

The set $\{\rho_n(E^\alpha)/\sqrt{C(\rho_n)}\}_{\alpha=1}^N$ forms an orthonormal basis. If we replace each $\rho_n(E^\alpha)$ with $\rho_n(E^\alpha)/\sqrt{C(\rho_n)}$ in Equation (\ref{e.f.1}), we will obtain from Equation (\ref{e.mg.1}),
\begin{align}
-\frac{1}{C(\rho_n)}\bE \left[
 \left( \cdot, \nu_{R[a]}^{\kappa,\rho_n} \right)_\sharp \left( \cdot, \nu_{R_\delta[a]}^{\kappa,\rho_n} \right)_\sharp
\mathcal{Y}^\kappa\right] = \frac{|a|}{4} \otimes \frac{C_2(\rho_n)}{C(\rho_n)}\mathbb{I}_n - \frac{1}{C(\rho_n)}\epsilon(n, \kappa), \label{e.f.4}
\end{align}
whereby $\mathbb{I}_n$ is a $\tilde{N}_n \times \tilde{N}_n$ identity matrix for an irreducible representation $\rho_n: \mathfrak{g} \rightarrow {\rm End}(\bC^{\tilde{N}_n})$.

Note that $\dfrac{C_2(\rho)}{C(\rho)} = \dfrac{N}{\tilde{N}}$, which varies according to the representation $\rho$. Set $|a|=4$. Since $\{\rho_n(E^\alpha)/\sqrt{C(\rho_n)} \}_{\alpha=1}^N$ is an orthonormal basis for every $n \in \mathbb{N}$, from Equation (\ref{e.f.1}) and that $\mathcal{Y}^\kappa$ is independent of any representation used, it is not difficult to see that $\Tr\ \epsilon(n, \kappa) = C(\rho_n)\Tr\ \epsilon(1, \kappa)/C(\rho_1)$ for all $n \in \mathbb{N}$.

When we take the trace on the RHS of Equation (\ref{e.f.4}), we will obtain \beq \tilde{N}_n\frac{C_2(\rho_n)}{C(\rho_n)} - \frac{1}{C(\rho_n)}\Tr\ \epsilon(n, \kappa) \equiv N - \frac{1}{C(\rho_1)}\Tr\ \epsilon(1, \kappa). \nonumber \eeq

Write $\epsilon(\kappa) := \epsilon(1, \kappa)/C(\rho_1)$. Recall, $c = 1/\kappa$. For each $n \in \mathbb{N}$, we will define
\begin{align}
G_{n}^{(2)}(c, e) :=& \frac{\tilde{N}_n}{e} - \Tr\  \epsilon(1/c) \nonumber \\
=&\frac{\tilde{N}_n}{e} - c^4 \bar{\lambda} + f(c^5), \label{e.m.1}
\end{align}
whereby $\bar{\lambda}$ is some positive number independent of $\rho_n$, which can be computed from the triple and quartic term in the Yang-Mills action, and the remainder term is denoted by $f(c^5)$, which has a bounded derivative in $c$, independent of $n$. See Remark \ref{r.a.4}. Furthermore, there exists a constant $\tilde{C}_1$, independent of $n$, such that $|\bar{\lambda}| \leq \tilde{C}_1$, $|f(c^5)| \leq \tilde{C}_1c^5$ and its derivative $|d f(c^5)/dc| \leq \tilde{C}_1 c^4$. That we can write the path integral given by Equation (\ref{e.m.1}), with the stated properties, has been proved in \cite{YM-Lim02}.

\begin{rem}
The integral $G_{n}^{(2)}(c, \tilde{N}_n/N)$ is actually a 2-point correlation Green's function, which follows from taking the trace of Equation (\ref{e.f.4}).
\end{rem}

Recall the definition of the beta function, $e\dfrac{\partial c}{\partial e} = \beta(c)$, with $e \equiv \tilde{N}$. We will impose the following Callan-Symanzik Equation
\beq \left[e\frac{\partial }{\partial e} + \beta(c)\frac{\partial }{\partial c}+ 2\gamma(c)\right]G_{n}^{(2)}(c, e)= 0. \label{e.cs.1} \eeq
See Equation (12.41) in \cite{peskin1995}. This equation asserts that there exist two scalar functions $\beta(c)$ and $\gamma(c)$, related to the shifts in the coupling constant and the field strength, that compensates for the shift in the `new' renormalization scale $e$.

\begin{prop}
For $c$ small, there exists a scalar-valued function $\beta(c) = -c/4 + \lambda(c)$ and $\gamma = 1/2$ that solves Equation (\ref{e.cs.1}). Note that $|\lambda(c)| \leq \tilde{C}_4 c^2$, for some constant $\tilde{C}_4$ independent of $n$.
\end{prop}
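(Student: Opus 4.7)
The plan is to solve the Callan-Symanzik equation algebraically by exploiting the fact that the $n$-dependence of $G_n^{(2)}(c,e)$ is confined to a single term $\tilde N_n/e$, so one gets two decoupled conditions: one fixing $\gamma$, and one defining $\beta$. First I would compute the partial derivatives directly from the explicit form $G_n^{(2)}(c,e) = \tilde N_n/e - c^4 \bar\lambda + f(c^5)$, obtaining
\begin{equation*}
e\,\partial_e G_n^{(2)} = -\frac{\tilde N_n}{e}, \qquad \partial_c G_n^{(2)} = -4c^3\bar\lambda + \frac{d}{dc} f(c^5).
\end{equation*}
Substituting into (\ref{e.cs.1}) and grouping the $n$-dependent and $n$-independent pieces gives
\begin{equation*}
\frac{\tilde N_n}{e}\bigl(-1+2\gamma(c)\bigr) + \beta(c)\Bigl[-4c^3\bar\lambda + \tfrac{d}{dc}f(c^5)\Bigr] + 2\gamma(c)\bigl[-c^4\bar\lambda + f(c^5)\bigr] = 0.
\end{equation*}

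Because $\beta$ and $\gamma$ are to be independent of $n$ while this identity must hold for every $n \in \mathbb N$ (and the set $\{\tilde N_n\}$ is unbounded), the coefficient of $\tilde N_n/e$ must vanish, forcing $\gamma(c) = 1/2$. The remaining equation then reads
\begin{equation*}
\beta(c)\bigl[-4c^3\bar\lambda + \tfrac{d}{dc}f(c^5)\bigr] = c^4\bar\lambda - f(c^5),
\end{equation*}
which I would solve for $\beta(c)$ by division, noting that for $c$ sufficiently small the denominator $-4c^3\bar\lambda + \frac{d}{dc}f(c^5)$ is nonzero: indeed $|\frac{d}{dc}f(c^5)| \leq \tilde C_1 c^4$ while $|4c^3\bar\lambda|$ is of order $c^3$, so for $c < 2|\bar\lambda|/\tilde C_1$ the denominator is bounded below in absolute value by $2c^3|\bar\lambda|$.

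Finally I would verify the structure $\beta(c) = -c/4 + \lambda(c)$ with $|\lambda(c)|\leq \tilde C_4 c^2$. Writing
\begin{equation*}
\lambda(c) = \beta(c) + \frac{c}{4} = \frac{-4f(c^5) + c\,\frac{d}{dc}f(c^5)}{4\bigl(-4c^3\bar\lambda + \frac{d}{dc}f(c^5)\bigr)},
\end{equation*}
the numerator is bounded by $5\tilde C_1 c^5$ using the two estimates on $f$ from the excerpt, while the denominator is bounded below by $8c^3|\bar\lambda|$ by the previous paragraph, yielding $|\lambda(c)| \leq (5\tilde C_1/8|\bar\lambda|)\,c^2$. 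Setting $\tilde C_4 := 5\tilde C_1/(8|\bar\lambda|)$ completes the estimate.

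The only subtle point, rather than a genuine obstacle, is ensuring that the same pair $(\beta,\gamma)$ works simultaneously for every representation $\rho_n$; this is automatic because the $n$-dependent part of $G_n^{(2)}$ decouples into a factor $\tilde N_n/e$ with $n$-independent coefficient $-1+2\gamma$, and the remaining terms $\bar\lambda$ and $f(c^5)$ are $n$-independent by construction in (\ref{e.m.1}). The smallness requirement on $c$ is exactly what ensures the denominator in the formula for $\beta(c)$ stays away from zero, and asymptotic freedom (negativity of $\beta$ near $c=0$) is reflected in the leading term $-c/4$.
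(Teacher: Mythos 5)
Your proposal is correct and takes essentially the same approach as the paper: you substitute the explicit form of $G_n^{(2)}(c,e)$ into the Callan-Symanzik equation, force $\gamma(c)=1/2$ so that the $\tilde N_n/e$ term cancels, and solve the remaining $e$-independent equation by dividing by $-4c^3\bar\lambda + \frac{d}{dc}f(c^5)$, which is nonzero for small $c$ because $\bar\lambda>0$ and $|\frac{d}{dc}f(c^5)|\leq \tilde C_1 c^4$, arriving at the same expression and the same $|\lambda(c)|\leq \tilde C_4 c^2$ bound with $n$-independent constants. The only cosmetic difference is that the paper inserts the ansatz $\beta(c)=-c/4+\lambda(c)$ first and then solves for $\lambda(c)$, whereas you solve for $\beta(c)$ directly and set $\lambda(c)=\beta(c)+c/4$; the two computations are algebraically identical.
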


\begin{proof}
Now \beq G_{n}^{(2)}(c, e) = \frac{\tilde{N}_n}{e} - c^4 \bar{\lambda} + f(c^5). \nonumber \eeq
A direct computation shows that
\beq \frac{\partial }{\partial e}G_{n}^{(2)}(c, e) = -\frac{\tilde{N}_n}{e^2} , \quad \frac{\partial }{\partial c}G_{n}^{(2)}(c, e) = -4c^3\bar{\lambda} + \tilde{f}(c^4), \nonumber \eeq $\tilde{f}(c^4) \equiv d f(c^5)/d c$ is scalar valued and $|\tilde{f}(c^4)| \leq \tilde{C}_2 c^4$, $\tilde{C}_2$ is a constant independent of $n$. 

Plug into Equation (\ref{e.cs.1}), we have that
\beq -\frac{\tilde{N}_n}{e} - 4\beta(c)c^3 \bar{\lambda} + 2\gamma(c)G_{n}^{(2)}(c, e) + \beta(c)\tilde{f}(c^4) = 0. \nonumber \eeq
To satisfy the above Callan-Symanzik Equation, we must have that $\beta(c) = -c/4+ \lambda(c)$, $\gamma(c) = 1/2 $, $\lambda(c)$ to be determined.

Hence, we have
\beq -\frac{c}{4}\tilde{f}(c^4) + f(c^5) -4c^3\lambda(c)\bar{\lambda} + \lambda(c)\tilde{f}(c^4) = 0. \nonumber\eeq

Recall we stated that $\bar{\lambda}$, which is independent of $n$, is positive. Furthermore, we have a constant $\tilde{C}_3$ such that \beq \frac{1}{c^4} |\tilde{f}(c^4)| + \frac{1}{c^5} |f(c^5)|  \leq \tilde{C}_3, \nonumber \eeq whereby this constant $\tilde{C}_3$ is independent of $n$.

Therefore, if $c$ is small enough, then $-4c^3\bar{\lambda} + \tilde{f}(c^4)$ is indeed non-zero. Hence, \beq \lambda(c) = \frac{1}{-4c^3\bar{\lambda} + \tilde{f}(c^4)}\left[\frac{c}{4}\tilde{f}(c^4) - f(c^5)\right], \nonumber \eeq is scalar valued, and is such that $|\lambda(c)| \leq \tilde{C}_4c^2$, $0< \tilde{C}_4$ is independent of $n$.
\end{proof}

\begin{defn}\label{d.b.1}
The Callan-Symanzik beta function is given as $\beta(c) = -\dfrac{c}{4} + \lambda(c)$, whereby $|\lambda(c)| \leq \tilde{C}_4 c^2$.
\end{defn}

This will allow us to correlate the coupling constant $c$ with the dimension $\tilde{N}_n$. Its solution will be found in the next subsection.

\subsection{Existence of positive mass gap}\label{ss.mg}

In Wightman's zeroth axiom, we note that $\hat{H}^2 - \hat{P}^2 = m^2 \geq 0$. Rewriting this equation, the Hamiltonian and momentum eigenvalue equation is equivalent to \beq \frac{\hat{P}^2}{\hat{H}^2} - 1 = -\frac{m^2}{\hat{H}^2}. \label{e.mg.5} \eeq In this equation, it should be understood that $\hat{P}$ and $\hat{H}$ are eigenvalues of their respective operators on $\bigoplus_{n=1}^\infty \mathscr{H}(\rho_n)$.

We now need to define momentum and Hamiltonian eigenvalues, that satisfy the above equation, in such a way that the operators are unbounded. We will see later that $m^2 \rightarrow \infty$ and $0 > \hat{P}^2/\hat{H}^2 -1 \rightarrow 0$, all implying their respective eigenvalues tend towards infinity.

The error term in Equation (\ref{e.mg.1}) comes from the interaction terms in the Yang-mills action in Equation (\ref{e.ym.2}). Equation (\ref{e.mg.4}) gives us its trace bound, and because of the quartic term, it also gives us a positive mass gap. Indeed, the trace of Equation (\ref{e.mg.1}) is actually a `continuous' version of the above eigenvalue equation.

Let us review our setup. We have a compact gauge group $G$ with a (real) simple Lie Algebra $\mathfrak{g}$, henceforth considered as a sub-Lie Algebra in $\mathfrak{u}(\bar{N})$, for which we can define an inner product on $\mathfrak{g}$. Hence, we will assume that $\rho(E)$ is skew-Hermitian, i.e. $-\rho(E) = \rho(E)^\ast$. We also let $\{\alpha_1, \cdots, \alpha_l\}$ be a simple system of roots for $\mathfrak{g}_\bC$. Furthermore, let $\{H_1, H_2, \cdots, H_l\}$ be a basis for a Cartan subalgebra $\mathfrak{h} \subset \mathfrak{g}$. It may not be orthonormal, so we can define an orthonormal set $\{E^\alpha\}_{\alpha=1}^l$ as \beq E^\alpha = \sum_{\beta=1}^la_{\alpha, \beta}H_\beta,\ \ 1 \leq \alpha \leq l. \nonumber \eeq  Define an invertible $l \times l$ matrix $B = \{a_{\alpha, \beta}\}_{\alpha,\beta=1}^l$. For any vector $u \in \bR^l$, there exists a constant $c>0$ such that $| Bu |_2^2 \geq c| u |_2^2$, $| \cdot |_2 $ is the standard Euclidean norm. Extend $\{E^\alpha:\ 1 \leq \alpha \leq l\}$ to be an orthonormal basis $\{E^\alpha\}_{\alpha=1}^N$ in $\mathfrak{g}$.

Each inequivalent irreducible representation is indexed uniquely by the highest weight \beq \lambda_\rho \equiv \left(\lambda_\rho(H_1), \cdots, \lambda_\rho(H_l) \right), \nonumber \eeq which is a $l$-tuple of non-negative half-integers or integers, i.e. $\lambda_\rho(H_i) \geq 0$. We will write $|\lambda_\rho|_2 := \left(\sum_{i=1}^l \lambda_\rho(H_i)^2  \right)^{1/2}$.

For each representation $\rho: \mathfrak{g} \rightarrow {\rm End}(\bC^{\tilde{N}})$, its corresponding Casimir operator is given by \beq \mathscr{E}(\rho) = -\sum_{\alpha=1}^N \rho(E^\alpha)\rho(E^\alpha) = C_2(\rho)I,\ \ C_2(\rho) \geq 0. \nonumber \eeq Since $\rho(E^\alpha)$ is skew-Hermitian, we have \beq \left\langle -\rho(E^\alpha)\rho(E^\alpha)v,v \right\rangle = \left\langle \rho(E^\alpha)v, \rho(E^\alpha)v \right\rangle \geq 0 \nonumber \eeq for any vector $v \in \bC^{\tilde{N}}$.

Because $\rho(H)$ is skew-Hermitian, its eigenvalues are purely imaginary. Let $v$ be a unit weight vector corresponding to the highest weight, i.e. $\langle \rho(H)v, \rho(H)v \rangle = \lambda_{\rho}(H)^2$, $H \in \mathfrak{h}$.

In terms of the highest weight, we have
\begin{align*}
\left\langle C_2(\rho)v,v \right\rangle &= \sum_{\alpha=1}^N \left\langle \rho(E^\alpha)v, \rho(E^\alpha)v \right\rangle
\geq \sum_{\alpha=1}^l \left| \sum_{\beta=1}^l a_{\alpha, \beta}\lambda_\rho(H_\beta)v \right|_2^2 \\
&= \sum_{\beta=1}^l\sum_{\gamma=1}^l\sum_{\alpha=1}^l \lambda_\rho(H_\beta) a_{\alpha, \beta}a_{\alpha, \gamma}\lambda_\rho(H_\gamma) \geq  c|\lambda_{\rho}|_2^2.
\end{align*}
Therefore, we see that $\{C_2(\rho_n): n \in \mathbb{N}\} $ is unbounded, because the highest weight is a $l$-tuple consisting of non-negative half-integers or integers.

We can thus list the inequivalent irreducible representations as a sequence \beq \left\{\rho_n: \mathfrak{g} \rightarrow {\rm End}(\bC^{\tilde{N}_n}) \right\}_{n=1}^\infty,\quad {\rm such\ that} \quad  0< C_2(\rho_n) \leq C_2(\rho_{n+1}),\nonumber \eeq for all $n \geq 1$.

Recall for each inequivalent irreducible non-trivial representation $\rho: \mathfrak{g} \rightarrow {\rm End}(\bC^{\tilde{N}})$, we defined a Hilbert space $\mathscr{H}(\rho)$. Using the above sequence, construct a Hilbert space $\{1\} \oplus \bigoplus_{n=1}^\infty \mathscr{H}(\rho_n)$, for which the Wightman's axioms are satisfied.

Define the Hamiltonian eigenvalue $\hat{H}(\rho_n)$ to be \beq \hat{H}(\rho_n)^2 := \frac{\tilde{N}_n}{4}C_2(\rho_n) = \frac{N}{4}C(\rho_n) > 0,\nonumber \eeq for each irreducible non-trivial representation $\rho_n: \mathfrak{g} \rightarrow {\rm End}(\bC^{\tilde{N}_n})$. See Remark \ref{r.mg.2}.

When $n \in \mathbb{N}$ is large, we see that the energy level is correspondingly large. By asymptotic freedom, the coupling constant $c$ should weaken with large values of $n$. Since $c = 1/\kappa$, this means that $\kappa$ must be large when $n$ is large. To define the momentum eigenvalues, we need to choose an unbounded sequence $\{\kappa_n: n \in \mathbb{N}\}$ for Equation (\ref{e.mg.1}).

Recall we solved for the beta function in subsection \ref{ss.cse}. By Definition \ref{d.b.1}, we have that \beq \frac{\partial c}{\partial[\ln \tilde{N}]} = -\frac{c}{4} + \lambda(c),\ \ |\lambda(c)| \leq \tilde{C}_4c^2 . \nonumber \eeq

Now,
\begin{align*}
&\frac{dc}{d[ \ln \tilde{N}]} = -\frac{c}{4} + \lambda(c) \quad
\Longrightarrow \frac{dc}{c- 4\lambda(c)} = -\frac{d [\ln \tilde{N}]}{4}.
\end{align*}
Write $\mu(c) = -4\lambda(c)/c$. Thus,
\begin{align*}
&\frac{1}{c}\frac{dc}{1+ \mu(c)} = -\frac{d [\ln \tilde{N}]}{4} \quad
\Longrightarrow \left[\frac{1}{c}\sum_{k=0}^\infty (-1)^k\mu(c)^k \right]dc= -\frac{d [\ln \tilde{N}]}{4},
\end{align*}
whereby $\mu(c)$ is such that $|\mu(c)| \leq \tilde{C}_5 c$. Integrate, we obtain \beq \ln c + \tilde{\mu}(c) = -\frac{1}{4}\ln \tilde{N} + C, \nonumber \eeq $C$ is some constant. Note that $\tilde{\mu}(c)$ is a scalar function such that $|\tilde{\mu}(c)| \leq \tilde{C}_6 c$. Exponentiate, we have \beq c e^{\tilde{\mu}(c)} = \frac{1}{\hat{C}} \tilde{N}^{-1/4}\ \Longrightarrow\ c(1 + \bar{\mu}(c)) = \frac{1}{\hat{C}} \tilde{N}^{-1/4}, \nonumber \eeq whereby $0<\hat{C}$ is some positive constant and $\bar{\mu}(c) = e^{\tilde{\mu}(c)}-1$ is such that $|\bar{\mu}(c)| \leq \tilde{C}_7 c$. Thus, we have that $c\tilde{N}^{1/4} = \dfrac{1}{\hat{C}}\dfrac{1}{1 + \bar{\mu}(c)}$ and
\begin{align*}
&\left| \frac{1}{\hat{C}}\frac{1}{1 + \bar{\mu}(c)} \right| \leq \frac{1}{\hat{C}}\left[ 1 + \tilde{C}_8|\bar{\mu}(c)| \right] \leq \frac{1}{\hat{C}}\left[ 1 + \tilde{C}_8 \tilde{C}_7 c \right] \\
&\leq \frac{1}{\hat{C}}\left[ 1 + \tilde{C}_7\tilde{C}_8 \frac{1}{\hat{C}}\tilde{N}^{-1/4}\left( 1 + \tilde{C}_8 \tilde{C}_7  \right) \right] = \frac{1}{\hat{C}} + \frac{\tilde{N}^{-1/4}}{\hat{C}^2}\tilde{C}_7\tilde{C}_8\left(1 + \tilde{C}_7\tilde{C}_8\right).
\end{align*}
Thus \beq \tilde{N}^{1/4}|\bar{\mu}(c)| \leq c\tilde{N}^{1/4}\tilde{C}_7 \leq \frac{\tilde{C}_7}{\hat{C}} + \frac{\tilde{C}_7}{\hat{C}^2}\tilde{C}_7\tilde{C}_8\left(1 + \tilde{C}_7\tilde{C}_8\right). \nonumber \eeq

Recall $c = 1/\kappa$. Hence,
\begin{align*}
\kappa =& \frac{1}{c} = \hat{C}\tilde{N}^{1/4}(1 + \bar{\mu}(c)) \\
=& \hat{C}\tilde{N}^{1/4} + \tilde{R}(c),
\end{align*}
whereby the remainder term $\tilde{R}(c)$ is such that \beq |\tilde{R}(c)| \leq \tilde{C}_7 + \frac{\tilde{C}_7}{\hat{C}}\tilde{C}_7\tilde{C}_8\left(1 + \tilde{C}_7\tilde{C}_8\right) = \tilde{C}_7 + \frac{1}{\hat{C}}\tilde{C}_9. \nonumber \eeq In the above calculations, $\tilde{C}_k$ are positive constants, $5 \leq k \leq 9$, all independent of $\tilde{N}$. And we see that if $\hat{C} > 1$, then we have that $|\tilde{R}(c)| \leq \tilde{C}_7 + \tilde{C}_9$, independent of $\hat{C}$ and $\tilde{N}$.

We will henceforth define for $n \in \mathbb{N}$, $\kappa_n := \hat{C}\tilde{N}_n^{1/4}+ \hat{C}_n $, for some fixed positive constant $\hat{C}>1$ and $|\hat{C}_n| \leq \tilde{C}_7 + \tilde{C}_9$ for all $n \in \mathbb{N}$.

Set $|a|=1$. Start with the representation $\rho_1: \mathfrak{g} \rightarrow {\rm End}(\bC^{\tilde{N}_1})$. We choose a constant $\hat{C}$ large enough, such that when we plug in $\hat{C}\tilde{N}_n^{1/4} + \hat{C}_1 = \kappa_1 \geq \kappa_0$ into Equation (\ref{e.mg.1}), we have \beq 0< m_1^2  := \Tr\ \epsilon(1,\kappa_1)  = \frac{\tilde{N}_1}{4} C_2(\rho_1) + \Tr\ \bE \left[
 \left( \cdot, \nu_{R[a]}^{\kappa_1,\rho_1} \right)_\sharp \left( \cdot, \nu_{R_\delta[a]}^{\kappa_1,\rho_1} \right)_\sharp
\mathcal{Y}^\kappa\right] , \nonumber \eeq $\Tr$ is a matrix trace. Since the mass gap $m_1 > 0$, we define the quantized momentum eigenvalue $\hat{P}(\rho_1)^2 := \hat{H}(\rho_1)^2 - m_1^2$.

Take the trace on Equation (\ref{e.mg.1}), \beq -\Tr\ \bE \left[
 \left( \cdot, \nu_{R[a]}^{\kappa,\rho_n} \right)_\sharp \left( \cdot, \nu_{R_\delta[a]}^{\kappa,\rho_n} \right)_\sharp
\mathcal{Y}^\kappa\right]  = \frac{\tilde{N}_n}{4}C_2(\rho_n) - \Tr\ \epsilon(n,\kappa) . \label{e.mg.6} \eeq

Rewrite this, we will obtain
\beq \frac{4}{NC(\rho_n)}\Tr\ \bE \left[
- \left( \cdot, \nu_{R[a]}^{\kappa,\rho_n} \right)_\sharp \left( \cdot, \nu_{R_\delta[a]}^{\kappa,\rho_n} \right)_\sharp
\mathcal{Y}^\kappa\right]  - 1 =   -\frac{4\Tr\ \epsilon(n,\kappa) }{NC(\rho_n)} . \nonumber \eeq Compare this with Equation (\ref{e.mg.5}). Note that we made use of Remark \ref{r.mg.2}. It remains to plug in $\kappa = \kappa_n$ in the RHS of the equation, to determine the momentum eigenvalues.

Consider $n \geq 2$. For any irreducible representation $\rho_n: \mathfrak{g} \rightarrow {\rm End}(\bC^{\tilde{N}_n})$, we define $m_n$, such that \beq m_n^2 := \Tr[\epsilon(n,\kappa_n)] . \label{e.mg.8} \eeq

From Equation (\ref{e.mg.4}), we have that for all $n \in \mathbb{N}$, \beq  0< \frac{\underline{c}}{\kappa_n^4}<\frac{1}{C(\rho_n)}\Tr[\epsilon(n,\kappa_n)] \leq \frac{\bar{c}}{\kappa_n^4}, \label{e.mg.7} \eeq for positive constants $\underline{c}, \bar{c}$ independent of $\rho_n$.

Hence, the trace $\Tr[\epsilon(n,\kappa_n)]$ is of the order $C(\rho_n)/\kappa_n^4$. In terms of the dimension scale, we see that it is of the order $C(\rho_n)/\hat{C}^4\tilde{N}_n \equiv C_2(\rho_n)/N\hat{C}^4$. From Equation (\ref{e.mg.7}), we must have that $m_n^2$ is of the order $\dfrac{C_2(\rho_n)}{ \hat{C}^4} > 0$, which tends to infinity, as $n \rightarrow \infty$, because $\{C_2(\rho_n): n \in \mathbb{N}\}$ is unbounded,

Define the quantized momentum eigenvalue, in the direction of a unit vector $a$ as
\begin{align*}
\hat{P}(\rho_n)^2 :=& -\Tr\ \bE \left[
 \left( \cdot, \nu_{R[a]}^{\kappa_n,\rho_n} \right)_\sharp \left( \cdot, \nu_{R_\delta[a]}^{\kappa_n,\rho_n} \right)_\sharp
\mathcal{Y}^\kappa\right] \\
=& \frac{\tilde{N}_n}{4}C_2(\rho_n) - \Tr\ \epsilon(n,\kappa_n)
= \frac{\tilde{N}_n}{4} C_2(\rho_n) - m_n^2 > 0,
\end{align*}
from Equations (\ref{e.mg.6}) and (\ref{e.mg.8}). Thus, the average of the flux through the time-like rectangular surface $R[a]$ using the Yang-Mills measure, quantize the momentum in the direction $a$.

Note that $m_n^2/\hat{H}(\rho_n)^2$, is of the order $\dfrac{1}{\tilde{N}_n\hat{C}^4} > 0$. Hence, we see that
the momentum operator eigenvalues will go to infinity, from \beq \frac{\hat{P}(\rho_n)^2}{\hat{H}(\rho_n)^2} = 1   -\frac{m_n^2}{\hat{H}(\rho_n)^2} \longrightarrow 1, \nonumber \eeq as $n \rightarrow \infty$.

\begin{rem}\label{r.mg.1}
Note that we interpret $\tilde{N}C_2(\rho)$ as energy squared. Because of the beta function, we see that $\kappa$ increases with the dimension of the irreducible representation. In the case of ${\rm SU}(2)$ or ${\rm SU}(3)$, we see that the Casimir operator is large, when the dimension of the representation is large. More generally, the Weyl dimension formula says that when the dimension of the representation $\rho$ is large, then $|\lambda_\rho|$ will be large, which also implies that the Casimir constant $C_2(\rho)$ will be large. Since the coupling constant $c = 1/\kappa$, the coupling constant weakens at high energies, which the physicists will term as asymptotic freedom.
\end{rem}

\begin{defn}(Hamiltonian and momentum operator)\label{d.ma.1}\\
For each $n \in \mathbb{N}$, let $\rho_n$ be an irreducible non-trivial representation for a simple Lie Algebra $\mathfrak{g}$. Recall we defined $\hat{H}(\rho_n)^2= \frac{\tilde{N}_n}{4}C_2(\rho_n)>0$ and $\hat{P}(\rho_n)$ such that
\beq \hat{H}(\rho_n)^2 -  \hat{P}(\rho_n)^2 = m_n^2 > 0, \nonumber \eeq for some positive mass gap $m_n > 0$ defined by Equation (\ref{e.mg.8}).

From Definition \ref{d.a.1}, we have a basis for $\bR^4$. Define a Hamiltonian $\hat{H}(\vec{a}, \rho)$, in the direction $\vec{a}  \in \bR^4$, as $\hat{H}(\vec{a}, \rho) := (\vec{a}\cdot \hat{f}_0 ) \hat{H}(\rho)\hat{f}_0$. Explicitly, if $\vec{a} = \sum_{b=0}^3 a^b \hat{f}_b$, we will write $\hat{H}(\vec{a}, \rho) := -a^0\hat{H}(\rho)\hat{f}_0$.

We define a momentum operator, $\hat{P}(\vec{a}, \rho)$, in the direction $\vec{a} \in \bR^4$, as $\hat{P}(\vec{a}, \rho) := (\vec{a}\cdot \hat{f}_1) \hat{P}(\rho)\hat{f}_1$. Explicitly, if $a = \sum_{b=0}^3 a^b \hat{f}_b$, we will write $\hat{P}(\vec{a}, \rho) := a^1\hat{P}(\rho)\hat{f}_1$. Thus
\begin{align*}
U(\vec{a},1)&\left(S, f_\alpha \otimes \rho(E^\alpha), \{\hat{f}_a\}_{a=0}^3\right) \\
:=& e^{-i[\hat{f}_0\cdot\hat{H}(\vec{a}, \rho) + \hat{f}_1\cdot \hat{P}(\vec{a},\rho)]}\left(S+\vec{a}, f_\alpha(\cdot - \vec{a}) \otimes \rho(E^\alpha), \{\hat{f}_a\}_{a=0}^3\right) \\
=& e^{i[a^0 \hat{H}(\rho) -  a^1\hat{P}(\rho)]}\left(S+\vec{a}, f_\alpha(\cdot - \vec{a}) \otimes \rho(E^\alpha), \{\hat{f}_a\}_{a=0}^3\right).
\end{align*}
\end{defn}

We also define $\hat{H}(\rho_0) = \hat{P}(\rho_0) = 0$, thus for the vacuum state, $\hat{P}1 = \hat{H}1 = 0$. The momentum operator $\hat{P}$ and Hamiltonian $\hat{H}$ act on the Hilbert space $\mathbb{H}_{{\rm YM}}(\mathfrak{g})$, as
\begin{align*}
\hat{P}\sum_{n=0}^\infty v_n := \sum_{n=1}^\infty \hat{P}(\rho_n)v_n,\quad
\hat{H}\sum_{n=0}^\infty v_n := \sum_{n=1}^\infty \hat{H}(\rho_n)v_n ,
\end{align*}
provided the sum converges. Because $\{C_2(\rho_n)\}_{n \geq 1}$ is unbounded and the eigenvalues are real, we see that $\hat{P}$ and $\hat{H}$ are unbounded self-adjoint operators.

\begin{thm}(Positive mass Gap Theorem)\\
Consider the Hilbert space defined in Equation (\ref{e.h.2}). The quantized momentum operator $\hat{P}$ and Hamiltonian $\hat{H}$ are non-negative, unbounded self-adjoint operators on this Hilbert space. They annihilate the vacuum state 1 and their other eigenspaces are $\mathscr{H}(\rho_n)$, $n \geq 1$, with corresponding eigenvalues $\hat{P}(\rho_n)$ and $\hat{H}(\rho_n)$ respectively, such that, $\hat{H}(\rho_n)^2 - \hat{P}(\rho_n)^2 = m_n^2$ for some positive mass gap $m_n>0$.

These eigenvalues were chosen because of the Callan-Symanzik Equation, and we have that $\lim_{n \rightarrow \infty}m_n^2 = \infty$. Thus $m_0 := \inf_{n \in \mathbb{N}}m_n > 0$, showing the existence of a positive mass gap in a 4-dimensional quantum compact and simple Yang-Mills gauge theory.
\end{thm}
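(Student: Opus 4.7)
The plan is to read off every claim in the theorem directly from the definitions and estimates assembled in subsections \ref{ss.pt}--\ref{ss.mg}; essentially all the analytic content has already been done in Equations (\ref{e.mg.1}), (\ref{e.mg.4}) and (\ref{e.mg.8}), so the proof is an organized verification rather than a new argument. First I would note that by Definition \ref{d.ma.1}, on each summand $\mathscr{H}(\rho_n)$ of $\mathbb{H}_{{\rm YM}}(\mathfrak{g}) = \{1\} \oplus \bigoplus_{n \geq 1}\mathscr{H}(\rho_n)$ the operators $\hat{H}$ and $\hat{P}$ act as multiplication by the nonnegative real scalars $\hat{H}(\rho_n) = \sqrt{\tilde{N}_n C_2(\rho_n)/4}$ and $\hat{P}(\rho_n) = \sqrt{\hat{H}(\rho_n)^2 - m_n^2}$, and both annihilate $1$ by convention. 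Diagonal real operators on an orthogonal direct sum are automatically self-adjoint on the natural dense domain of vectors for which the weighted sum of squares converges; non-negativity is immediate from the scalars being non-negative reals; the eigenspace identification $\mathscr{H}(\rho_n)$ with eigenvalues $\hat{H}(\rho_n), \hat{P}(\rho_n)$ is tautological; and the relation $\hat{H}(\rho_n)^2 - \hat{P}(\rho_n)^2 = m_n^2$ is by construction in Definition \ref{d.ma.1}.

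Next I would establish unboundedness and strict positivity of each $m_n$ from the estimates of subsection \ref{ss.mg}. The ordering $0 < C_2(\rho_n) \leq C_2(\rho_{n+1})$ together with the Cartan-weight lower bound $C_2(\rho_n) \geq c\,|\lambda_{\rho_n}|_2^2$ established at the start of subsection \ref{ss.mg} shows $C_2(\rho_n) \to \infty$, hence $\hat{H}(\rho_n)^2 = \tilde{N}_n C_2(\rho_n)/4 \to \infty$, so $\hat{H}$ is unbounded. For $\hat{P}$ I would combine the two-sided estimate (\ref{e.mg.7}) with the choice $\kappa_n = \hat{C}\tilde{N}_n^{1/4} + \hat{C}_n$ (where $|\hat{C}_n|$ is uniformly bounded in $n$) and the identity $C(\rho_n) = \tilde{N}_n C_2(\rho_n)/N$ from Remark \ref{r.mg.2}: this yields constants $0 < c_1 \leq c_2$ and an $n_0$ such that for $n \geq n_0$,
\begin{equation*}
c_1 \, \frac{C_2(\rho_n)}{\hat{C}^4} \;\leq\; m_n^2 \;=\; \Tr[\epsilon(n,\kappa_n)] \;\leq\; c_2 \, \frac{C_2(\rho_n)}{\hat{C}^4}.
\end{equation*}
The right-hand inequality gives $m_n^2/\hat{H}(\rho_n)^2 \leq 4c_2/(N\hat{C}^4 \tilde{N}_n) \to 0$, so $\hat{P}(\rho_n)^2 = \hat{H}(\rho_n)^2 - m_n^2 \to \infty$ as well, giving unboundedness of $\hat{P}$. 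The left-hand inequality gives $m_n^2 \to \infty$.

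Finally I would conclude $m_0 := \inf_{n \geq 1} m_n > 0$ by a two-step argument. For every individual $n$, the strict lower bound $\underline{c} C(\rho_n)/\kappa_n^4 \leq m_n^2$ from (\ref{e.mg.7}) with $\underline{c} > 0$ and $C(\rho_n) > 0$ forces $m_n > 0$, so the finitely many values $m_1, \dots, m_{n_0}$ are each strictly positive. Together with $m_n \to \infty$, only finitely many $m_n$ can lie below any given positive threshold, so the infimum is attained on a finite set of strictly positive numbers and is therefore strictly positive.

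The only step where some care is needed is ensuring that the remainder $\hat{C}_n$ in the definition of $\kappa_n$ does not degrade the two-sided control of $\kappa_n^4$ by $\tilde{N}_n$; this is exactly what is supplied by the \emph{uniform} bound $|\hat{C}_n| \leq \tilde{C}_7 + \tilde{C}_9$ derived from the Callan--Symanzik analysis in subsection \ref{ss.cse}, so no new work is required. Everything else reduces to bookkeeping on the direct sum decomposition.
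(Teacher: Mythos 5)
Your proposal is correct and follows essentially the same route as the paper, which gives no separate proof of this theorem: its justification is exactly the construction in subsection \ref{ss.mg} together with Definition \ref{d.ma.1} — the unboundedness of $\{C_2(\rho_n)\}$ from the highest-weight bound, the choice $\kappa_n = \hat{C}\tilde{N}_n^{1/4}+\hat{C}_n$ from the Callan–Symanzik analysis, the definition $m_n^2 = \Tr[\epsilon(n,\kappa_n)]$, and the two-sided estimate (\ref{e.mg.7}) giving $m_n^2 \asymp C_2(\rho_n)/\hat{C}^4 \to \infty$ while $m_n^2/\hat{H}(\rho_n)^2 \to 0$. Your organization of these facts (diagonal real scalars on the orthogonal direct sum give self-adjointness, strict positivity of each $m_n$ from the lower bound in (\ref{e.mg.7}), and $\inf_n m_n>0$ since only finitely many $m_n$ lie below any threshold) matches the paper's intended argument.
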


\begin{rem}
In \cite{ALBEVERIO197439}, they explicitly state that the mass gap refers to the gap in the spectrum of the Hamiltonian operator. But this cannot be correct, as the mass gap refers to the difference between the Hamiltonian and the momentum operator. A strictly positive mass gap is required to show the short range nature of the weak force.
\end{rem}

Our discussion on the construction of a 4-dimensional quantum Yang-Mills simple and compact gauge group is now complete. Does our construction apply to an abelian gauge group? Answer is no.

When we have an abelian group ${\rm U}(1)$, we can define a non-trivial representation $F_a: u \in  \mathfrak{u}(1) \mapsto 0 \neq a \in \bR$, for any $a$. It is clear that $F_a$ is a representation of $\mathfrak{u}(1)$. Essentially, there is only one irreducible representation of $\mathfrak{u}(1)$, up to a constant. As a consequence, using the construction as outlined, the momentum operator and Hamiltonian, will not be unbounded operators.

An abelian gauge group describes the quantum electromagnetic force. As photons are massless, we see that $(\hat{H}, \hat{P}, 0, 0)$ must be a null-vector. As a result, the commutation and anti-commutation relations discussed in Section \ref{s.c} will only hold on a 2-dimensional subspace.

A positive mass gap does not exist for an abelian gauge group ${\rm U}(1)$. The existence of a mass gap is due to the quartic term of the interaction term in the Yang-Mills action. And this gives us Equation (\ref{e.mg.1}), proved in \cite{YM-Lim02} using asymptotic freedom and the structure constants of the Lie Algebra $\mathfrak{g}$, which do not apply to an abelian gauge group. Incidently, it will also give us the area law formula. Because of the absence of asymptotic freedom, the area law formula will not hold in the case of an abelian gauge group, as was shown in \cite{YM-Lim01}.

\section{Clustering}\label{s.cluster}

One important feature of the weak and strong force is that both are short range. This is formulated mathematically as the Clustering Theorem. To further validate that our construction of a 4-dimensional quantum field theory over in $\bR^4$, satisfying Wightman's axioms with a mass gap is correct, we will devote this section to the proof of the Clustering Theorem.

The first version was proved by Ruelle in \cite{Ruelle}. The proof that it decays exponentially is taken from \cite{Araki1962}. Our main reference for the Cluster Decomposition Property will be from \cite{streater}, and we will follow closely the notations used in there.

\begin{rem}
We will use the standard orthonormal basis $\{e_a\}_{a=0}^3$, whereby $\{e_2, e_3\}$ span $S_0$, and $\{e_0, e_1\}$ will span a time-like plane $S_0^\flat$. The coordinates used in this section will be pertaining to the standard orthonormal basis. Thus, $\vec{a} = (a^0, a^1, a^2, a^3)$ means the vector $\sum_{b=0}^3 a^b e_b \in \bR^4$.
\end{rem}

\begin{defn}(Fourier Transform)\label{d.f.1}\\
We define the Fourier Transform of a $L^2$ function $f: \bR \rightarrow \bR$ by \beq \hat{f}(p) = \mathcal{F}[f](p) := \frac{1}{\sqrt{2\pi}}\int_{\bR} e^{-ip x}f(x)\ dx. \nonumber \eeq

On $\bR^4$, our Fourier Transform is given by \beq \hat{f}(\vec{p}) = \mathcal{F}[f](\vec{p}) := \frac{1}{(2\pi)^2}\int_{\bR^4} e^{-i\vec{p}\cdot \vec{x}}f(\vec{x})\ d\vec{x}. \nonumber \eeq See Equation (\ref{e.f.5a}).
\end{defn}

Recall we fixed a surface $S_0$, which is the $x^2-x^3$ plane, and we have the quantum state $\Big(S_0, f_\alpha \otimes \rho(E^\alpha), \{e_a\}_{a=0}^3\Big)$. But how does one understand it? From the definition of the field operators in Section \ref{s.qfo}, we see that $f_\alpha$ is synonymous with $F_\alpha\left(\hat{H}(\rho), \hat{P}(\rho), x^2, x^3 \right)$, whereby $F_\alpha\left(\hat{H}(\rho), \hat{P}(\rho), \cdot, \cdot \right): S_0 \rightarrow \bC$, is a Schwartz function on $S_0$.

Let $\hat{F}_\alpha$ be the Fourier Transform of $F_\alpha$. By taking the Fourier Transform of $F_\alpha$ over in $S_0$,
\begin{align*}
\left(\frac{1}{\sqrt{2\pi}}\right)^2&\int_{S_0} e^{-i[q^2 x^2 + q^3 x^3]}F_\alpha\left(\hat{H}(\rho), \hat{P}(\rho), x^2, x^3 \right)\ dx^2dx^3 \otimes \rho(E^\alpha) \\
=& \hat{F}_\alpha\left(\hat{H}(\rho), \hat{P}(\rho), q^2, q^3 \right) \otimes \rho(E^\alpha).
\end{align*}
Hence one can view the quantum state as an operator-valued tempered distribution, over in momentum-energy space.

\begin{rem}\label{r.m.1}
The momentum coordinates $\left(\hat{H}(\rho), \hat{P}(\rho), q^2, q^3 \right)$ are with respect to the basis $\{e_a\}_{a=0}^3$. In general, if $S$ is a space-like plane equipped with a Minkowski frame $\{\hat{f}_a\}_{a=0}^3$, then the Fourier Transform will yield $\hat{F}_\alpha\left(\hat{H}(\rho), \hat{P}(\rho), q^2, q^3 \right) \otimes \rho(E^\alpha)$, but with the coordinates pertaining to the basis $\{\hat{f}_a\}_{a=0}^3$.

Thus, it is clear that $\hat{F}_\alpha$ depends on the basis $\{\hat{f}_a\}_{a=0}^3$. We wish to point out that it is also dependent on $\vec{a} = a^0 \hat{f}_0 + a^1 \hat{f}_1$ in space-time, which is a position vector in the space-like plane $S$. This is because its Fourier Transform is
\begin{align}
\hat{F}&\left(\hat{H}(\rho)\hat{f}_0 + \hat{P}(\rho)\hat{f}_1 + q^2 \hat{f}_2 + q^3 \hat{f}_3\right) \nonumber\\
&:= \frac{e^{-i[a^0\hat{H}(\rho) - a^1\hat{P}(\rho)]}}{2\pi}\int_{\hat{s} \in \bR^2} e^{-i(sq^2 + \bar{s}q^3)}f^{\{\hat{f}_0, \hat{f}_1\}}(\hat{H}(\rho), \hat{P}(\rho))(s \hat{f}_2 + \bar{s} \hat{f}_3)\ d\hat{s} \label{e.f.5a}\\
&\equiv e^{-i[a^0\hat{H}(\rho) - a^1\hat{P}(\rho)]}\hat{f}\left(\hat{H}(\rho)\hat{f}_0 + \hat{P}(\rho)\hat{f}_1 + q^2 \hat{f}_2 + q^3 \hat{f}_3\right), \nonumber
\end{align}
from Equation (\ref{e.p.3}). Also refer to Item \ref{r.f.3c} in Remark \ref{r.f.3}.

Hence, $\hat{F}_\alpha\left(\hat{H}(\rho)\hat{f}_0 + \hat{P}(\rho)\hat{f}_1 + q^2 \hat{f}_2 + q^3 \hat{f}_3\right) \otimes \rho(E^\alpha)$ allows us to reconstruct \\
$\left( S, f_\alpha \otimes \rho(E^\alpha), \{\hat{f}_a\}_{a=0}^3 \right)$.

For a Schwartz function $f \in \mathscr{P}$, ${\rm supp}\ \hat{f}$ may or may not lie in the positive light cone in energy-momentum space. Thus, $\hat{H}(\rho)^2 - \hat{P}(\rho)^2 - q^{2,2} - q^{2,3}$ may not be greater than or equal to 0. But this does not matter, as we are still able to prove the Clustering Decomposition Property in Theorem \ref{t.cd.1}. Even though it is not required or necessary to impose the condition that $\hat{H}(\rho)^2 - \hat{P}(\rho)^2 - q^{2,2} - q^{2,3} \geq 0$, this train of thought is not quite correct.

In the proof of Clustering Theorem \ref{t.cl}, we will see that by a suitable choice of a Minkowski frame, we will instead use $\left(m, q^1, q^2, q^3\right)$ as the coordinates in a 4-dimensional mass-momentum space after taking Fourier Transform. The translation operator in a time-like direction will be generated by mass, not energy. The momentum coordinates $q^i$'s will be free, and the existence of a positive mass gap $m_0$ says that the total energy is given by $m^2 + \sum_{i=1}^3 |q^i|^2 \geq m_0^2$.
\end{rem}

Refer to Definition \ref{d.ma.1}. When we take the Fourier Transform of \\
$U(\vec{a}, 1)\left(S_0, f_\alpha \otimes \rho(E^\alpha), \{e_a\}_{a=0}^3 \right)$, a similar set of calculations will give us \beq  e^{-i \vec{a} \cdot \vec{\alpha}}e^{-i[q^2 a^2 + q^3 a^3]}\hat{F}_\alpha\left(\hat{H}(\rho),\hat{P}(\rho), q^2, q^3 \right)  \otimes \rho(E^\alpha), \nonumber \eeq $\vec{\alpha} = (\hat{H}(\rho), \hat{P}(\rho), 0, 0)$.

Recall, $\{0\} \cup \{\hat{H}(\rho_n), \hat{P}(\rho_n)\}_{n\geq 1}$ are the discrete eigenvalues of the Hamiltonian and momentum operator. By the SNAG theorem, we see that when $\vec{a} \in S_0^\flat$, i.e. $\vec{a} = a^0e_0 + a^1 e_1$, the spectrum of the translation operator $U(\vec{a}, 1)$ on $\mathbb{H}_{{\rm YM}}(\mathfrak{g})$ is $\{1\} \cup \{ e^{i [a^0 \hat{H}(\rho_n) - a^1\hat{P}(\rho_n)]}\}_{n \geq 1}$, which is discrete.

Now, recall $S_0^\flat$ is the $x^0-x^1$ plane and we have the field operator $\phi^{\alpha,n}(f)$, $f \in \mathscr{P}$, defined in Definitions \ref{d.co.1} and \ref{d.co.2}. We can define a distribution, denoted as $\left\langle \phi^{\alpha,n}(\vec{x})1, \phi^{\beta,n}(\vec{y} )1 \right\rangle$, via sending $(f, g)\in \mathscr{P} \times \mathscr{P}$ to \beq \left\langle \phi^{\alpha,n}(f)1, \phi^{\beta,n}(g)1 \right\rangle := C(\rho_n)\Tr[-F^\alpha F^\beta]\int_{S_0} \left[f^{\{e_0, e_1\}}\overline{g^{\{e_0, e_1\}}}\right](\hat{H}(\rho_n), \hat{P}(\rho_n))(\hat{s})\ d\hat{s}. \nonumber \eeq 

From this, we can define an $\rho_n(\mathfrak{g})$-valued distribution $\phi^{\alpha,n}(\vec{x})$, and write \beq \phi^{\alpha,n}(f)1 = \int_{\vec{x} \in \bR^4} d\vec{x}\ f(\vec{x})\phi^{\alpha,n}(\vec{x})1, \nonumber \eeq when it acts on vacuum state. See \cite{streater, Ruelle}.

Using the transformation law for the field operator, we see that \beq U(\vec{a}, 1)\phi^{\alpha, n}(\vec{x})U(\vec{a}, 1)^{-1} = \phi^{\alpha,n}(\vec{x} + \vec{a}). \nonumber \eeq

\begin{rem}
We present an alternative way to understand quantum fields. One will be tempted to view $\phi^{\alpha,n}(\vec{x})1$ heuristically as \beq
\frac{1}{2\pi}e^{i[x^0\hat{H}(\rho_n) - x^1\hat{P}(\rho_n)]}\delta(\cdot - (x^2, x^3))\otimes \rho_n(F^\alpha), \nonumber \eeq whereby $\delta$ is the 2-dimensional Dirac delta function. But this is not entirely correct.

Suppose we replace the Dirac delta function with a Gaussian function $p_{\kappa}^{\vec{x}}(\cdot) = (\kappa^3/2\sqrt{2\pi})\exp[-\kappa^2|\vec{x}-\cdot|^2/4]/(2\pi)$. When we take the inner product,
\beq \int_{\vec{z} \in \bR^4} p_\kappa^{\vec{x}}(\vec{z})p_\kappa^{\vec{y}}(\vec{z})\ d\vec{z} = \frac{\kappa^2}{4(2\pi)}\exp[-\kappa^2|\vec{x}-\vec{y}|^2/8]. \nonumber \eeq

Write $x^+ = (0,0,x^2, x^3)$, $\hat{H} \equiv \hat{H}(\rho_n)$ and $\hat{P} \equiv \hat{P}(\rho_n)$. Thus, we can approximate $\langle \phi^{\alpha,n}(\vec{x})1, \phi^{\beta,n}(\vec{y})1 \rangle$ with \beq \left\langle \frac{1}{2\pi}e^{i[x^0\hat{H} - x^1\hat{P}]}p_{\kappa}^{x^+}\otimes \rho_n(F^\alpha), \frac{1}{2\pi}e^{i[y^0\hat{H} - y^1\hat{P}]}p_{\kappa}^{y^+}\otimes \rho_n(F^\beta) \right\rangle, \nonumber \eeq as
\begin{align*}
\frac{1}{(2\pi)^2}&e^{i[\hat{H}(x^0 - y^0) - \hat{P}(x^1 - y^1)]}\int_{\vec{z} \in \bR^4} p_\kappa^{x^+}(\vec{z})p_\kappa^{y^+}(\vec{z})\ d\vec{z} \cdot \left\langle \rho_n(F^\alpha), \rho_n(F^\beta) \right\rangle \\
=& \frac{\kappa^2}{4(2\pi)^3}e^{i[\hat{H}(x^0 - y^0) - \hat{P}(x^1 - y^1)]}\exp[-\kappa^2|x^+-y^+|^2/8]\cdot \left\langle \rho_n(F^\alpha), \rho_n(F^\beta) \right\rangle.
\end{align*}
Taking the limit, we see that we can understand the distribution $\left\langle \phi^{\alpha,n}(\vec{x})1, \phi^{\beta,n}(\vec{y} )1 \right\rangle$ as \beq
\frac{1}{(2\pi)^2}e^{i[\hat{H}(x^0 - y^0) - \hat{P}(x^1 - y^1)]}\cdot\delta(x^+ - y^+)\cdot C(\rho_n)\Tr[-F^\alpha F^\beta],\ x^+, y^+ \in S_0.\nonumber \eeq

Therefore, we can approximate $\langle \phi^{\alpha,n}(f)1, \phi^{\beta,n}(g)1 \rangle$ as
\begin{align*}
\frac{1}{(2\pi)^2}&\int_{\vec{x},\vec{y}\in\bR^4} f(\vec{x})\overline{g(\vec{y})}\left\langle e^{i(x^0\hat{H} - x^1\hat{P})}p_{\kappa}^{x^+}\otimes \rho_n(F^\alpha), e^{i(y^0\hat{H} - y^1\hat{P})}p_{\kappa}^{y^+}\otimes \rho_n(F^\beta) \right\rangle d\vec{x}d\vec{y} \\
=& \frac{\kappa^2}{4}\int_{\hat{s}, \hat{t} \in S_0} f^{\{e_0, e_1\}}(\hat{s})\overline{g^{\{e_0, e_1\}}}(\hat{t})\frac{1}{2\pi}\exp[-\kappa^2|\hat{s}-\hat{t}|^2/8]d\hat{s}d\hat{t}\cdot \left\langle \rho_n(F^\alpha), \rho_n(F^\beta) \right\rangle ,
\end{align*}
which approaches to \beq \int_{\hat{s}\in S_0} [f^{\{e_0, e_1\}}\overline{g^{\{e_0, e_1\}}}](\hat{s})d\hat{s}\cdot \left\langle \rho_n(F^\alpha), \rho_n(F^\beta) \right\rangle, \nonumber \eeq when $\kappa \rightarrow \infty$. This justifies the inner product given in Definition \ref{d.inn.2}. These calculations can also be found in \cite{YM-Lim01}.

Observe that \beq \lim_{\kappa \rightarrow \infty}\int_{\vec{z} \in \bR^4}p_{\kappa}^{\vec{x}}(\vec{z}) f(\vec{z})\ d\vec{z} = 0, \nonumber\eeq for any $f \in \mathscr{P}$. Thus, $\phi^{\alpha,n}(\vec{x})1$ itself has no meaning. But, we can give meaning to $\int_{\vec{x} \in \bR^4} d\vec{x}\ f(\vec{x})\phi^{\alpha,n}(\vec{x})1$.
\end{rem}

\subsection{Vacuum Expectation}\label{ss.ve}

\begin{notation}\label{n.lc.1}
Let $\{f_1, \cdots, f_r\}$ and $\{g_1, \cdots, g_s\}$ be two sets of compactly supported functions in $\mathscr{P}$. Let $\psi^{\alpha_\tau,n}(f_\tau)$ be either the creation operator $\phi^{\alpha_\tau,n}(f_\tau)$ or annihilation operator $\phi^{\alpha_\tau,n}(f_\tau)^\ast$. Similar notation for $\psi^{\beta_\theta,n}(g_\theta)$. Without loss of generality, we assume that $r \geq s$.

Fix a $n \in \mathbb{N}$. Throughout this subsection \ref{ss.ve}, we will write $\hat{H}(\rho_n) = \hat{H}$, $\hat{P}(\rho_n) = \hat{P}$. Define
\begin{align*}
A_r^n =& \psi^{\alpha_1,n}(f_1) \cdots \psi^{\alpha_r,n}(f_r),\quad  B_s^n = \psi^{\beta_1,n}(g_1) \cdots \psi^{\beta_s,n}(g_s).
\end{align*}
\end{notation}

Consider the vacuum expectation \beq \left\langle  A_r^nP_0B_s^n 1, 1\right\rangle \equiv \left\langle A_r^nB_s^n 1, 1 \right\rangle - \left\langle A_r^n1,1 \right\rangle \left\langle B_s^n1, 1 \right\rangle. \nonumber \eeq Here, $P_0$ is the orthogonal projection onto $\bigoplus_{n \geq 1}\mathscr{H}(\rho_n)$.

To write down an explicit formula, we need the following notation.

\begin{notation}\label{n.q.1}
Define $\left\{h_\theta \in \mathscr{P} \right\}_{\theta=1}^{r+s}$ as follows. When $1\leq \theta \leq r$, then
\beq h_\theta =
\left\{
  \begin{array}{ll}
    f_\theta, & \hbox{{\rm if} $\psi^{\alpha_\theta, n}(f_\theta) = \phi^{\alpha_\theta, n}(f_\theta) $;} \\
    -\overline{f}_\theta, & \hbox{{\rm if} $\psi^{\alpha_\theta, n}(f_\theta) = \phi^{\alpha_\theta, n}(f_\theta)^\ast$.}
  \end{array}
\right. \nonumber \eeq

When $r+1 \leq \theta \leq r+s$, then
\beq h_\theta =
\left\{
  \begin{array}{ll}
    g_{\theta-r}, & \hbox{{\rm if} $\psi^{\beta_{\theta-r}, n}(g_{\theta-r}) = \phi^{\beta_{\theta-r}, n}(g_{\theta-r}) $;} \\
    -\overline{g}_{\theta-r}, & \hbox{{\rm if} $\psi^{\beta_{\theta-r}, n}(g_{\theta-r}) = \phi^{\beta_{\theta-r}, n}(g_{\theta-r})^\ast$.}
  \end{array}
\right. \nonumber \eeq

Similarly, define $\left\{\tilde{h}_\theta \in \mathscr{P}\right\}_{\theta=1}^{r+s}$ as follows. When $1 \leq \theta \leq s$, then
\beq \tilde{h}_\theta =
\left\{
  \begin{array}{ll}
    g_{\theta}, & \hbox{{\rm if} $\psi^{\alpha_\theta, n}(g_{\theta}) = \phi^{\alpha_\theta, n}(g_{\theta}) $;} \\
    -\overline{g}_{\theta}, & \hbox{{\rm if} $\psi^{\alpha_\theta, n}(g_{\theta}) = \phi^{\alpha_\theta, n}(g_{\theta})^\ast$.}
  \end{array}
\right. \nonumber \eeq

When $s+1 \leq \theta \leq r$, then
\beq \tilde{h}_\theta =
\left\{
  \begin{array}{ll}
    f_{\theta-s}, & \hbox{{\rm if} $\psi^{\alpha_\theta, n}(f_{\theta-s}) = \phi^{\alpha_\theta, n}(f_{\theta-s}) $;} \\
    -\overline{f}_{\theta-s}, & \hbox{{\rm if} $\psi^{\alpha_\theta, n}(f_{\theta-s}) = \phi^{\alpha_\theta, n}(f_{\theta-s})^\ast$.}
  \end{array}
\right. \nonumber \eeq

When $r+1 \leq \theta \leq r+s$, then
\beq \tilde{h}_\theta =
\left\{
  \begin{array}{ll}
    f_{\theta-s}, & \hbox{{\rm if} $\psi^{\beta_{\theta-r}, n}(f_{\theta-s}) = \phi^{\beta_{\theta-r}, n}(f_{\theta-s}) $;} \\
    -\overline{f}_{\theta-s}, & \hbox{{\rm if} $\psi^{\beta_{\theta-r}, n}(f_{\theta-s}) = \phi^{\beta_{\theta-r}, n}(f_{\theta-s})^\ast$.}
  \end{array}
\right. \nonumber \eeq

For those $\theta \in \{1, 2, \cdots, r+s\}$, such that
\begin{align*}
\psi^{\alpha_\theta, n}(h) &= \phi^{\alpha_\theta, n}(h)^\ast,\ 1\leq \theta \leq r, \\
\psi^{\beta_{\theta - r}, n}(h) &= \phi^{\beta_{\theta-r}, n}(h)^\ast,\ r+1 \leq \theta \leq r+s,
\end{align*}
we will say that $\theta$ is adjoint. Then, we will write \beq \chi(\theta) =
\left\{
  \begin{array}{ll}
    -1, & \hbox{$\theta$ is adjoint;} \\
    1, & \hbox{otherwise.}
  \end{array}
\right.\nonumber \eeq

For $\vec{x} = (x^0, x^1, x^2, x^3)$, we will write $d\vec{x} = dx^0dx^1 dx^2dx^3$. And we will write $\vec{x} = (x^-, x^+)$, $x^- = (x^0, x^1)$, $x^+ = (x^2, x^3)$, $dx^- = dx^0dx^1$, $dx^+ = dx^2 dx^3$.

A partition of $R = \{1,2, \cdots, r + s\}$ is given by $Q = \{A_1, \ldots, A_{n(Q)}\}$, whereby
\begin{itemize}
  \item $R = \bigcup_{l=1}^{n(Q)}A_l$;
  \item $A_l \cap A_{\hat{l}} = \emptyset$, if $l \neq \hat{l}$;
  \item $A_l = \{z_l, z_l+1, \cdots, z_l + k_l\}$, for some $z_l \in R$ and some $0\leq k_l \leq r+s$.
\end{itemize}

Let $\Gamma$ be the set of all such possible partitions of $R$. For a partition $Q= \{A_1, \ldots, A_{n(Q)}\} \in \Gamma$, we will write
\begin{align*}
\int_Q \{h_{\theta}\}_{\theta=1}^{r+s} &:= \prod_{l=1}^{n(Q)}\left\{ \int_{S_0} \left[\prod_{\theta \in A_l} \int_{y_\theta^- \in \bR^2}\frac{e^{i\chi(\theta)[y_\theta^0 \hat{H} - y_\theta^1\hat{P}]}}{2\pi}h_\theta(y_\theta^-, y^+) dy_{\theta}^- \right]dy^+ \right\}\\
&=\prod_{l=1}^{n(Q)}\left\{\int_{S_0} \left[\prod_{\theta \in A_l} \int_{y_\theta^0, y_\theta^1 \in \bR}\frac{e^{i\chi(\theta)[y_\theta^0 \hat{H} - y_\theta^1\hat{P}]}}{2\pi}h_\theta(y_\theta^0, y_\theta^1, y^2, y^3) dy_{\theta}^0 dy_\theta^1 \right]dy^2 dy^3 \right\}.
\end{align*}
Note that it is a product of $n(Q)$ integrals and $A_l = \{z, z+1, \cdots, z+k_l\}$ for some $k_l$.

By abuse of notation, we will write $R = \{1, 2, \cdots, r + s\} \in \Gamma$, which is itself a partition of $R$.
\end{notation}

From Definitions \ref{d.co.1}, \ref{d.co.2}, \ref{d.ado}, we see that we can write
\begin{align}
\left\langle  A_r^nP_0B_s^n 1, 1\right\rangle
&=\sum_{Q \in \Gamma}c_Q\int_Q \{h_{\theta}\}_{\theta=1}^{r+s}, \label{e.q.1}
\end{align}
for some set of complex coefficients $\{c_Q \in \bC \}_{Q \in \Gamma}$.

\begin{rem}\label{r.s.1}
It is not difficult to see from Definitions \ref{d.co.2} and \ref{d.ado}, that if $Q$ contains a subset $\{z\}$, $z \in R$, then we must have $c_Q = 0$.
\end{rem}

\begin{defn}\label{d.co.3}
Refer to Notation \ref{n.lc.1}. Recall we assume $r \geq s$. Define
\begin{align*}
C_r^n &= \psi^{\alpha_1,n}(g_1) \cdots \psi^{\alpha_s,n}(g_s)\psi^{\alpha_{s+1},n}(f_1) \cdots \psi^{\alpha_{r},n}(f_{r-s}), \\
D_s^n &= \psi^{\beta_{1},n}(f_{r-s+1}) \cdots \psi^{\beta_{s},n}(f_r).
\end{align*}

Refer to Notation \ref{n.q.1}.
Define the following tempered distribution
\begin{align*}
\mathscr{W}^n &: f_1 \otimes_\bR \cdots \otimes_\bR f_r \otimes_\bR g_1 \otimes_\bR \cdots \otimes_\bR g_s \longmapsto \left\langle  A_r^nP_0B_s^n 1, 1\right\rangle.
\end{align*}

Here,
\begin{align*}
\mathscr{W}^n\left(\{\vec{x}_\tau\}_{\tau=1}^r, \{\vec{x}_\theta\}_{\theta=r+1}^{r+s}\right)
&\equiv \mathscr{W}^n\left(\{\vec{x}_\tau\}_{\tau=1}^s, \{\vec{x}_\theta\}_{\theta=s+1}^{s+r}\right) \\
&:= \mathscr{W}^n\left(\vec{x}_1, \cdots, \vec{x}_{r}, \vec{x}_{r+1}, \cdots, \vec{x}_{r+s} \right),
\end{align*}
is a tempered distribution such that ($\vec{x} = (x^0, x) = (x^0, x^1, x^2, x^3) \in \bR^4$)
\begin{align}
& \left\langle  A_r^nB_s^n 1, 1\right\rangle - \left\langle A_r^n1,1 \right\rangle \left\langle B_s^n1, 1 \right\rangle \equiv \left\langle A_r^nP_0 B_s^n1, 1\right\rangle \nonumber \\
& = \int_{\bR^4 \times \cdots \times \bR^4} \mathscr{W}^n\left(\{\vec{x}_\tau\}_{\tau=1}^r, \{\vec{x}_\theta\}_{\theta=r+1}^{r+s} \right)\bigotimes_{\tau=1}^{r+s} p_\tau(\vec{x}_\tau)\cdot \prod_{\tau=1 }^{r+s}d\vec{x}_\tau \nonumber \\
&= c_R\int_{S_0} \left[\prod_{\tau=1}^r \int_{x_\tau^- \in \bR^2}E(x_\tau^-)h_\tau(x_\tau^-, x^+ )dx_\tau^- \cdot \prod_{\theta=r+1}^{r+s} \int_{x_\theta^- \in \bR^2}E(x_\theta^-)h_\theta(x_\theta^-, x^+)\ dx_\theta^- \right]dx^+ \nonumber \\
& + \sum_{\myfrac{Q \neq R}{Q \in \Gamma}}c_Q\int_Q \{h_{\theta}\}_{\theta=1}^{r+s}, \label{e.c.3}
\end{align}
whereby $E(x_\tau^-) = E(x_\tau^0, x_\tau^1) = \dfrac{e^{i\chi(\tau)[x_\tau^0 \hat{H} - x_\tau^1\hat{P}]}}{2\pi}$.
Note that $\chi(\tau)$ was defined in Notation \ref{n.q.1} and $p_\tau=f_\tau$ if $1 \leq \tau \leq r$; $p_\tau = g_{\tau-r}$ if $r+1\leq \tau \leq r+s$.

If we switch the sets such that \beq \mathscr{W}^n : g_1 \otimes_\bR \cdots \otimes_\bR g_s \otimes_\bR f_1 \otimes_\bR \cdots \otimes_\bR f_r \longmapsto \left\langle  C_r^nP_0D_s^n 1, 1\right\rangle , \nonumber
\eeq
then
\begin{align}
&\left\langle  C_r^nD_s^n 1, 1\right\rangle - \left\langle C_r^n1,1 \right\rangle \left\langle D_s^n1, 1 \right\rangle \equiv \left\langle C_r^nP_0 D_s^n1, 1\right\rangle \nonumber \\
&= \int_{\bR^4 \times \cdots \times \bR^4}  \mathscr{W}^n\left(\{\vec{x}_\tau\}_{\tau=1}^s, \{\vec{x}_\theta\}_{\theta=s+1}^{s+r} \right)\bigotimes_{\tau=1}^{s+r} \tilde{p}_\tau(\vec{x}_\tau)\cdot \prod_{\tau=1 }^{s+r}d\vec{x}_\tau \nonumber \\
&= c_R\int_{S_0} \left[\prod_{\tau=1}^s \int_{x_\tau^- \in \bR^2}E(x_\tau^-)\tilde{h}_\tau(x_\tau^-, x^+)dx_\tau^- \cdot \prod_{\theta=s+1}^{s+r} \int_{x_\theta^- \in \bR^2}E(x_\theta^-)\tilde{h}_\theta(x_\theta^-, x^+)\ dx_\theta^- \right]dx^+  \nonumber \\
&+ \sum_{\myfrac{Q \neq R}{Q \in \Gamma}}c_Q\int_Q \{\tilde{h}_{\theta}\}_{\theta=1}^{s+r}. \label{e.c.2}
\end{align}
Note that $\tilde{p}_\tau=g_\tau$ if $1 \leq \tau \leq s$; $\tilde{p}_\tau = f_{\tau-s}$ if $s+1\leq \tau \leq s+r$.
\end{defn}

\begin{rem}
Refer to Remark \ref{r.c.1}.
\end{rem}

Write $B_s^{n,\vec{a}}1$ as
\begin{align*}
\psi^{\beta_1,n}(g_1(\cdot - \vec{a}))\cdots \psi^{\beta_{s-1},n}(g_{s-1}(\cdot - \vec{a}))\Big(S_0 + \vec{a}, g_s^{\{e_0, e_1\}}(\cdot - \vec{a}) \otimes \rho_n(F^{\beta_s}) , \{e_a\}_{a=0}^3\Big),
\end{align*}
and $\psi^{\beta_\tau,n}(g_\tau)_{U(\vec{a})} := U(\vec{a},1)\psi^{\beta_\tau,n}(g_\tau)U(\vec{a},1)^{-1}$, for $1 \leq \tau \leq s$.

By definition of the field operator and its adjoint, we have from Proposition \ref{p.w.1},
\begin{align}
P_0U(\vec{a},1)B_s^n 1
=& P_0\psi^{\beta_1,n}(g_1)_{U(\vec{a})} \psi^{\beta_2,n}(g_2)_{U(\vec{a})} \cdots \psi^{\beta_{s-1},n}(g_{s-1})_{U(\vec{a})}U(\vec{a},1)\psi^{\beta_s,n}(g_s)1 \nonumber \\
=& e^{i[a^0 \hat{H} - a^1\hat{P}]}P_0B_s^{n,\vec{a}}1
= e^{i[a^0 \hat{H} - a^1\hat{P}]}\left[B_s^{n,\vec{a}}1 - \left\langle B_s^n 1, 1 \right\rangle 1 \right], \nonumber
\end{align}
for $\vec{a} = \sum_{b=0}^3 a^b e_b$.

Hence, we have
\beq \left\langle  A_r^nP_0U(\vec{a},1)B_s^n 1, 1\right\rangle = e^{i[a^0 \hat{H} - a^1\hat{P}]}\left[\left\langle A_r^nB_s^{n,\vec{a}}1, 1 \right\rangle - \left\langle A_r^n1,1 \right\rangle \left\langle B_s^n1, 1 \right\rangle \right]. \label{e.c.8} \eeq

When $\vec{a} \notin S_0$, by our construction, we see that $\left\langle  A_r^nP_0 U(\vec{a}, 1)B_s^n 1, 1\right\rangle$ is zero, because $S_0 \cap (S_0 +\vec{a}) = \emptyset$.

By definition, we have
\begin{align}
e^{i[a^0 \hat{H} - a^1\hat{P}]}&\int_{\bR^2} \frac{e^{i[s\hat{H} - t\hat{P}]}}{2\pi}f\left(s, t, x^2, x^3   \right)\ dsdt \nonumber \\
=& \int_{\bR^2} \frac{e^{i[(s+a^0)\hat{H} - (t+a^1)\hat{P}]}}{2\pi}f\left(s, t, x^2, x^3 \right)\ dsdt \nonumber \\
=& \int_{\bR^2} \frac{e^{i[s\hat{H} - t\hat{P}]}}{2\pi}f\left(s-a^0, t-a^1, x^2, x^3 \right)\ dsdt \nonumber \\
=& f\left(\cdot -(a^0, a^1, 0, 0)\right)^{\{e_0, e_1\}}(\hat{H}, \hat{P})(0,0, x^2, x^3). \label{e.a.6}
\end{align}
Hence, after a translation by $\vec{a} = (a^0, a^1, 0, 0)$, we can consider \newline
$e^{i[a^0 \hat{H} - a^1\hat{P}]}\left(S_0 + \vec{a}, f^{\{e_0, e_1\}}(\cdot - \vec{a}) \otimes \rho(F^\alpha), \{e_0, e_1\}\right)$, as \newline $\left(S_0, f(\cdot - \vec{a})^{\{e_0, e_1\}}\otimes \rho(F^\alpha), \{e_a\}_{a=0}^3\right)$. Thus, we will define the following.

\begin{notation}\label{n.co.2}
Let $\vec{a} = (a^0, a) \in \bR^4$. Write \beq \mathscr{W}^n\left(\{\vec{x}_\tau\}_{\tau=1}^r, \{\vec{x}_\theta + \vec{a}\}_{\theta=r+1}^{r+s} \right) := \mathscr{W}^n(\vec{x}_1, \cdots, \vec{x}_{r}, \vec{x}_{r+1}+\vec{a}, \cdots, \vec{x}_{r+s}+ \vec{a}). \nonumber \eeq

And write $h_\theta^{\vec{a}}(\vec{y}) = h_\theta(y^0, y)$ if $1 \leq \theta \leq r$; $h_\theta^{\vec{a}}(\vec{y}) = h_\theta(y^0 - a^0, y - a)$ if $r+1 \leq \theta \leq r+s$.
\end{notation}

\begin{defn}\label{d.ex.1}
Refer to Definition \ref{d.ma.1}, Notations \ref{n.q.1} and \ref{n.co.2}.
Write $\vec{a} = (a^0, a) = (a^0, a^1, a^2, a^3) \in \bR^4$.

Using Equations (\ref{e.q.1}), (\ref{e.c.8}) and (\ref{e.a.6}), we will define
\begin{align}
&H^n(\vec{a}) \nonumber \\
&:=\int_{\bR^4 \times \cdots \times \bR^4}  \mathscr{W}^n\left(\{\vec{x}_\tau\}_{\tau=1}^r, \{\vec{x}_\theta\}_{\theta=r+1}^{r+s} \right)\bigotimes_{\tau=1}^r p_\tau(\vec{x}_\tau)\cdot \bigotimes_{\theta=r+1}^{r+s} p_\theta(\vec{x}_{r+\theta} - \vec{a}) \cdot \prod_{\tau=1}^{r+s}d\vec{x}_\tau  \nonumber \\
&\equiv \int_{\bR^4 \times \cdots \times \bR^4}  \mathscr{W}^n\left(\{\vec{x}_\tau\}_{\tau=1}^r, \{\vec{x}_\theta + \vec{a}\}_{\theta=r+1}^{r+s}  \right)\bigotimes_{\tau=1}^{r+s} p_\tau(\vec{x}_\tau)\cdot \prod_{\tau=1}^{r+s}d\vec{x}_\tau  \nonumber \\
&= c_R\int_{S_0}\left[ \prod_{\tau=1}^r \int_{x_\tau^- \in \bR^2}E(x_\tau^-) h_\tau(x_\tau^-, x^+) dx_\tau^- \cdot \prod_{\theta=r^+}^{r+s} \int_{x_\theta^- \in \bR^2}E(x_\theta^-)
h_\theta^{\vec{a}}(x_\theta^-,x^+)dx_\theta^- \right]\ dx^+ \nonumber \\
& + \sum_{\myfrac{Q \neq R}{Q \in \Gamma}}c_Q\int_Q \{h_{\theta}^{\vec{a}}\}_{\theta=1}^{r+s}, \label{e.c.9}
\end{align}
where $E(x_\tau^-) = E(x_\tau^0, x_\tau^1) = \dfrac{e^{i\chi(\tau)[x_\tau^0\hat{H} - x_\tau^1\hat{P}]}}{2\pi}$ and $p_\tau=f_\tau$ if $1 \leq \tau \leq r$; $p_\tau = g_{\tau-r}$ if $r+1\leq \tau \leq r+s$.
\end{defn}

\begin{rem}\label{r.i.1}
We will use $H^n(\vec{a})$, instead of $\left\langle  A_r^nP_0 U(\vec{a}, 1)B_s^n 1, 1\right\rangle$, to prove both the Cluster Decomposition Property and Clustering Theorem. When $\vec{a} \in S_0$, then they are equal to each other.
\end{rem}

\begin{lem}\label{l.ex.1}
Refer to Notation \ref{n.lc.1}, whereby we have the two sets $\{\psi^{\alpha_\tau,n}\}_{\tau = 1}^r$ and $\{\psi^{\beta_\theta,n}\}_{\theta = 1}^s$. Suppose
the sum $\sum_{\tau=1}^r \chi(\tau) + \sum_{\theta=r+1}^{r+s} \chi(\theta) = 0$, $\chi$ was defined in Notation \ref{n.q.1}.

Define for $1 \leq i \leq r+s-1$, $\vec{\xi}_i = \vec{x}_{i} - \vec{x}_{i+1}$, which are relative coordinates. Let $\vec{a} = (a^0, a^1, a^2, a^3)$, pertaining to the standard orthonormal basis.

\begin{enumerate}
  \item For any $1\leq \alpha \leq r+s$, $\mathscr{W}^n$ is independent of the variables $\{ x_\alpha^2, x_\alpha^3\}$.
  \item\label{i.ex.1b} There exists a tempered distribution $W^n$, such that \beq \mathscr{W}^n\left(\{\vec{x}_\tau\}_{\tau=1}^r, \{\vec{x}_\theta+ \vec{a}\}_{\theta=r+1}^{r+s}  \right) = W^n\left(\vec{\xi}_1, \cdots, \vec{\xi}_{r-1},  \vec{\xi}_{r}- \vec{a}, \vec{\xi}_{r+1}, \cdots, \vec{\xi}_{r+s-1}\right). \nonumber \eeq
  \item Their Fourier Transforms are related by
\begin{align}
\widehat{\mathscr{W}}^n&\left(\vec{p}_1, \cdots, \vec{p}_{r+s}\right) \nonumber \\
&= (2\pi)^4 \delta\left(\sum_{\tau=1}^{r+s} \vec{p}_{\tau} \right) \widehat{W}^n\left(\vec{p}_1, \vec{p}_1 + \vec{p}_2, \cdots, \vec{p}_1 + \vec{p}_2 + \cdots + \vec{p}_{r+s-1}\right). \label{e.c0.4}
\end{align}
\end{enumerate}
\end{lem}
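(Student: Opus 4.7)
The plan is to read off (1)--(3) directly from the explicit sum-over-partitions formula (\ref{e.c.3}) defining $\mathscr W^n$. For each partition $Q=\{A_1,\dots,A_{n(Q)}\}$ the block $A_l$ contributes, distributionally, the product of phase factors $\prod_{\theta\in A_l}E(x_\theta^-)$, with $E(x_\theta^-)=e^{i\chi(\theta)[x_\theta^0\hat H-x_\theta^1\hat P]}/(2\pi)$ depending only on $x_\theta^-$, together with delta functions $\delta(x_\theta^+-x_{z_l}^+)$ identifying all $x_\theta^+$ in the block with one common value $x_{z_l}^+$ which is then free and integrated against Lebesgue measure on $S_0$.

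For (1), a joint shift $x_\alpha^+\mapsto x_\alpha^++(c,d)$ leaves every phase untouched and every delta-function argument unchanged; after integrating the free common variable in each block, the individual $x_\alpha^2,x_\alpha^3$ drop out, so the distribution is independent of them. For (2), the only remaining translation direction to verify is $(x^0,x^1)$; a joint shift $x_\tau^-\mapsto x_\tau^-+a^-$ multiplies the distribution by
\[
\prod_{\tau=1}^{r+s}e^{i\chi(\tau)[a^0\hat H-a^1\hat P]}=e^{i[a^0\hat H-a^1\hat P]\sum_\tau\chi(\tau)}=1,
\]
precisely by the standing hypothesis $\sum_\tau\chi(\tau)=0$. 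Combined with Part (1), $\mathscr W^n$ is fully translation-invariant, so setting $W^n(\vec\xi_1,\dots,\vec\xi_{r+s-1}):=\mathscr W^n\bigl(\sum_{j\ge1}\vec\xi_j,\sum_{j\ge2}\vec\xi_j,\dots,\vec\xi_{r+s-1},\vec 0\bigr)$ defines a tempered distribution with $\mathscr W^n(\vec x_1,\dots,\vec x_{r+s})=W^n(\vec\xi_1,\dots,\vec\xi_{r+s-1})$. Shifting only the last $s$ arguments by $\vec a$ changes $\vec\xi_r=\vec x_r-\vec x_{r+1}$ into $\vec\xi_r-\vec a$ while leaving the other $\vec\xi_i$'s unchanged, yielding the identity in part (\ref{i.ex.1b}).

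For (3), perform the change of variables $(\vec x_1,\dots,\vec x_{r+s})\mapsto(\vec\xi_1,\dots,\vec\xi_{r+s-1},\vec x_{r+s})$ with unit Jacobian. Writing $\vec x_i=\vec x_{r+s}+\sum_{j\ge i}\vec\xi_j$ and Abel-summing,
\[
\sum_{\tau=1}^{r+s}\vec p_\tau\cdot\vec x_\tau=\Big(\sum_{\tau=1}^{r+s}\vec p_\tau\Big)\cdot\vec x_{r+s}+\sum_{j=1}^{r+s-1}(\vec p_1+\cdots+\vec p_j)\cdot\vec\xi_j,
\]
so the $\vec x_{r+s}$ integration produces $(2\pi)^4\delta(\sum_\tau\vec p_\tau)$ and the $\vec\xi_j$ integrations give $\widehat W^n$ at the partial momentum sums, yielding (\ref{e.c0.4}). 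The main obstacle is the correct distributional interpretation of (\ref{e.c.3}): once one recognizes that each block integration over a common $y^+$ encodes delta-function identifications among the $x_\theta^+$'s together with a single free translation-invariant direction, Parts (1) and (2) become elementary invariance checks, and the hypothesis $\sum_\tau\chi(\tau)=0$ is seen to be exactly what cancels the otherwise obstructing phase in the $(x^0,x^1)$ directions.
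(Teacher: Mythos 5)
Your proposal is correct, and it reaches the same conclusions from the same starting point, namely the explicit sum-over-partitions formula (\ref{e.c.3})/(\ref{e.c.9}): the $x^-$-dependence sits entirely in the phases $E(x_\theta^-)$ and the $x^+$-dependence sits in delta-type identifications within each block together with one free integration over $S_0$. The difference is in how translation invariance in the $(x^0,x^1)$ directions is established. You do a direct uniform-shift check, observing that the shifted phases multiply to $e^{i[a^0\hat{H}-a^1\hat{P}]\sum_\tau\chi(\tau)}=1$ by the hypothesis $\sum_\tau\chi(\tau)=0$; the paper instead pairs each adjoint index with a non-adjoint one and rewrites the total phase as $-\sum_i c_i[\xi_i^0\hat{H}-\xi_i^1\hat{P}]$ with integer coefficients, i.e. Equation (\ref{e.l.1}). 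For the present lemma your check is cleaner and fully sufficient, but the paper's pairing buys more: the explicit coefficients $c_i$ produced inside this proof are exactly what is used later in Lemma \ref{l.b.1}, Remark \ref{r.ex.1} and Proposition \ref{p.p.3} (and in Propositions \ref{p.c.1}--\ref{p.c.2}), so your route would defer that bookkeeping to those places. For item (3) the paper simply cites Streater--Wightman, while you supply the standard change of variables $(\vec{x}_1,\dots,\vec{x}_{r+s})\mapsto(\vec{\xi}_1,\dots,\vec{\xi}_{r+s-1},\vec{x}_{r+s})$ with Abel summation and the $\vec{x}_{r+s}$-integration producing the delta; that is the intended argument. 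One small caveat, equally present in the paper: item (1) as literally worded ("independent of $\{x_\alpha^2,x_\alpha^3\}$") is really the statement that only relative $+$-coordinates matter, which is precisely what your shift argument (and the paper's substitution of the common block variable) actually proves; likewise, "defining" $W^n$ by setting the last argument to $\vec{0}$ is the same formal move the paper makes, justified here because the kernel's structure is explicit.
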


\begin{proof}
Recall we have a set of Schwartz functions $\{f_1, \cdots, f_r, g_1, \cdots g_{s}\} \subset \mathscr{P}$. Since $\sum_{\tau=1}^r \chi(\tau) + \sum_{\theta=r+1}^{r+s} \chi(\theta) = 0$, the set of adjoint integers $\{\delta_1, \cdots, \delta_c\}$ is in one to one correspondence with the set of non-adjoint integers $\{\epsilon_1, \cdots, \epsilon_c\}$. We will pair each $\delta_i$ with $\epsilon_i$.  Note that for each $i=1, \cdots, c$, $c = \frac{r+s}{2}$, we can write \beq \vec{x}_{\delta_i} - \vec{x}_{\epsilon_i} = \vec{\xi}_{\delta_i} + \vec{\xi}_{\delta_i+1} + \cdots + \vec{\xi}_{\epsilon_i - 1}, \label{e.l.3} \eeq when $\delta_i < \epsilon_i$. When $\delta_i > \epsilon_i$, we have \beq \vec{x}_{\delta_i} - \vec{x}_{\epsilon_i} = -\left( \vec{\xi}_{\epsilon_i} + \vec{\xi}_{\epsilon_i+1} + \cdots + \vec{\xi}_{\delta_i - 1}\right). \label{e.l.4} \eeq

Hence, we can write \beq \sum_{\tau=1}^r \chi(\tau)[x_\tau^0 \hat{H} - x_\tau^1\hat{P}] +
\sum_{\theta=r+1}^{r+s} \chi(\theta)[x_\theta^0 \hat{H} - x_\theta^1\hat{P}] = -\sum_{i=1}^{r+s-1}c_i[\xi_i^0 \hat{H} - \xi_i^1\hat{P}],\label{e.l.1} \eeq for some integers $c_i$'s. Therefore,
\beq \prod_{\tau=1}^r E(x_\tau^-)\prod_{\theta=r+1}^{r+s} E(x_\theta^-) = \frac{1}{(2\pi)^{r+s}}\exp\left[ -i \sum_{j=1}^{r+s-1}c_j[\xi_j^0 \hat{H} - \xi_j^1\hat{P}]\right]. \label{e.d.1} \eeq

Thus, Remark \ref{r.s.1} and Equation (\ref{e.c.9}) imply that
\begin{enumerate}
  \item $\mathscr{W}^n$ is independent of any chosen variables $\{x_\alpha^2, x_\alpha^3\}$, by letting $x^+ = (x_\alpha^2, x_\alpha^3)$ in the equation;
  \item there is a tempered distribution $W^n$ such that
\beq W^n\left(\vec{\xi}_1, \cdots, \vec{\xi}_{r+s-1} \right) := \mathscr{W}^n \left(\{\vec{x}_\tau\}_{\tau=1}^r, \{\vec{x}_\theta\}_{\theta=r+1}^{r+s} \right). \nonumber \eeq
\end{enumerate}

We also have \beq \mathscr{W}^n\left(\{\vec{x}_\tau\}_{\tau=1}^r, \{\vec{x}_\theta+ \vec{a}\}_{\theta=r+1}^{r+s}  \right) = W^n\left(\vec{\xi}_1, \cdots, \vec{\xi}_{r-1},  \vec{\xi}_{r}- \vec{a}, \vec{\xi}_{r+1}, \cdots, \vec{\xi}_{r+s-1}\right), \nonumber \eeq by definition of $\vec{\xi}_j$'s.

The proof for Equation (\ref{e.c0.4}) can be found in \cite{streater}, hence omitted.
\end{proof}

\begin{lem}\label{l.b.1}
Refer to Definition \ref{d.ma.1}. Assume the sum $\sum_{\tau=1}^r \chi(\tau) + \sum_{\theta=r+1}^{r+s} \chi(\theta) = 0$. Recall in the proof of Lemma \ref{l.ex.1}, we defined the integers $c_i$'s in Equation (\ref{e.l.1}). For $\vec{a} \in \bR^4$, write $a^- = (a^0, a^1,0,0)$, $a^+ = (0,0, a^2, a^3)$, $da^- = da^0 da^1$, and $da^+ = da^2 da^3$. For any $1 \leq t \leq r + s-1$, we have \beq \int_{\bR^2}e^{-ia^- \cdot q^- }W^n\left(\vec{\xi}_1, \cdots, \vec{\xi}_{t}-  a^-, \vec{\xi}_{t+1}, \cdots, \vec{\xi}_{r+s-1}\right)\ da^- \nonumber \eeq vanishes if $q^- \neq -c_t(\hat{H}(\rho_n), \hat{P}(\rho_n), 0, 0 )$.
\end{lem}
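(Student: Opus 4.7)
The key observation is that, in the relative coordinates $\vec{\xi}_j$, the dependence of $W^n$ on the time-like components $\xi_j^0, \xi_j^1$ is carried entirely by a single exponential factor, while the remaining dependence is on the spatial components $\xi_j^+=(\xi_j^2,\xi_j^3)$ only. Once this factorization is established, the shift $\vec{\xi}_t \mapsto \vec{\xi}_t - a^-$ acts solely on the exponential, and the $a^-$-integral becomes a plain Fourier transform of a pure exponential, producing a Dirac delta supported at the claimed value of $q^-$.

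Under the hypothesis $\sum_{\tau=1}^r \chi(\tau)+\sum_{\theta=r+1}^{r+s}\chi(\theta)=0$, Equation (\ref{e.d.1}) derived inside the proof of Lemma \ref{l.ex.1} already gives
\begin{equation*}
\prod_{\tau=1}^r E(x_\tau^-)\prod_{\theta=r+1}^{r+s}E(x_\theta^-)=\frac{1}{(2\pi)^{r+s}}\exp\left[-i\sum_{j=1}^{r+s-1}c_j[\xi_j^0\hat{H}(\rho_n)-\xi_j^1\hat{P}(\rho_n)]\right].
\end{equation*}
Inspecting Equation (\ref{e.c.9}), one sees that in every term of the sum over $Q\in\Gamma$ the only place in which the time-like coordinates $x_\tau^0, x_\tau^1$ appear is inside this product of $E(x_\tau^-)$'s; each Schwartz function $h_\theta$ is integrated against $E(x_\theta^-)$ in its $x_\theta^-$ variables, while what remains per block of $Q$ is an integration against a common spatial variable $y^+$. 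Passing to the relative variables yields the factorization
\begin{equation*}
W^n(\vec{\xi}_1,\ldots,\vec{\xi}_{r+s-1})=\exp\left[-i\sum_{j=1}^{r+s-1}c_j[\xi_j^0\hat{H}(\rho_n)-\xi_j^1\hat{P}(\rho_n)]\right]\cdot V^n(\xi_1^+,\ldots,\xi_{r+s-1}^+),
\end{equation*}
with $V^n$ a tempered distribution depending only on the spatial relative coordinates. Substituting $\vec{\xi}_t\mapsto \vec{\xi}_t - a^- = \vec{\xi}_t-(a^0,a^1,0,0)$ leaves $V^n$ untouched and multiplies the exponential by $\exp\bigl(ic_t[a^0\hat{H}(\rho_n)-a^1\hat{P}(\rho_n)]\bigr)$, so, using the Minkowski pairing $a^-\cdot q^-=-a^0 q^0+a^1 q^1$, the required integral equals
\begin{equation*}
W^n(\vec{\xi}_1,\ldots,\vec{\xi}_{r+s-1})\cdot\int_{\bR^2}\exp\bigl(ia^0[q^0+c_t\hat{H}(\rho_n)]-ia^1[q^1+c_t\hat{P}(\rho_n)]\bigr)\,da^0\,da^1,
\end{equation*}
which equals $(2\pi)^2 W^n\,\delta(q^0+c_t\hat{H}(\rho_n))\,\delta(q^1+c_t\hat{P}(\rho_n))$ and therefore vanishes unless $q^-=-c_t(\hat{H}(\rho_n),\hat{P}(\rho_n),0,0)$.

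The main obstacle will be justifying the factorization of $W^n$ rigorously at the level of tempered distributions, since Equation (\ref{e.c.9}) is formulated in terms of integrals against test functions rather than pointwise. One must verify, for every partition $Q\neq R$, that passing to the $\vec{\xi}_j$ coordinates via Equations (\ref{e.l.3})--(\ref{e.l.4}) still produces the same global exponential with integers $c_j$ depending only on the adjoint/non-adjoint pattern and not on $Q$, and that the entire $\xi_j^+$-dependence across the sum collapses into a single tempered distribution $V^n$ in the spatial relative coordinates. This is a combinatorial bookkeeping with the pairings of adjoint and non-adjoint indices, but it is the essential input that powers the rest of the argument.
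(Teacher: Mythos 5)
Your proposal is correct and follows essentially the same route as the paper: the paper's proof also uses Equations (\ref{e.c.9}) and (\ref{e.d.1}) to show that the shift $\vec{\xi}_t \mapsto \vec{\xi}_t - a^-$ only multiplies $W^n$ by the phase $e^{-ic_t a^-\cdot\vec{\alpha}}$ with $\vec{\alpha}=(\hat{H}(\rho_n),\hat{P}(\rho_n),0,0)$, after which the $a^-$-integral is a Fourier transform of a pure phase giving a Dirac delta at $q^-=-c_t(\hat{H}(\rho_n),\hat{P}(\rho_n),0,0)$. Your explicit factorization $W^n = \text{(phase)}\times V^n(\xi^+)$ is a mild strengthening of the translation-covariance property the paper actually invokes (and is the content of Equation (\ref{e.s.1}) later on), so there is no substantive difference.
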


\begin{proof}
Let $\vec{\alpha} = (\hat{H}(\rho_n), \hat{P}(\rho_n),0,0)$, and $t^+ \equiv t+1$. Using Equations (\ref{e.c.9}) and (\ref{e.d.1}), we see that \beq W^n\left(\vec{\xi}_1, \cdots, \vec{\xi}_{t}-  a^-, \vec{\xi}_{t+1}, \cdots, \vec{\xi}_{r+s-1}\right) = e^{-ic_ta^-\cdot \vec{\alpha}}W^n\left(\vec{\xi}_1, \cdots, \vec{\xi}_{t}, \vec{\xi}_{t+1}, \cdots, \vec{\xi}_{r+s-1}\right). \label{e.q.2} \eeq

If we take the Fourier Transform,
\begin{align*}
\int_{\bR^2}&e^{-i q^-\cdot a^- }W^n\left(\vec{\xi}_1, \cdots, \vec{\xi}_{t}-  a^-, \vec{\xi}_{t^+}, \cdots, \vec{\xi}_{r+s-1}\right)\ da^- \\
=&  \int_{\bR^2} e^{-i a^-\cdot q^- }e^{-i c_ta^-\cdot \vec{\alpha} }W^n\left(\vec{\xi}_1, \cdots, \vec{\xi}_{t}, \vec{\xi}_{t^+}, \cdots, \vec{\xi}_{r+s-1}\right)\ da^- \\
=& 2\pi W^n\left(\vec{\xi}_1, \cdots, \vec{\xi}_{t}, \cdots, \vec{\xi}_{r+s-1}\right) \cdot \delta\left(q^- + c_t(\hat{H}(\rho_n), \hat{P}(\rho_n), 0, 0 ) \right).
\end{align*}
This completes the proof.
\end{proof}

\begin{rem}\label{r.ex.1}
If one examines the proof, we note that the Fourier Transform of \beq a^- \longmapsto F^-(a^-) \equiv W^n\left(\vec{\xi}_1, \cdots, \vec{\xi}_{r}-  a^-, \vec{\xi}_{r+1}, \cdots, \vec{\xi}_{r+s-1}\right), \nonumber \eeq is simply a multiple of the Dirac delta function $\delta\left(\lambda_r^- +  c_r(\hat{H}(\rho_n), \hat{P}(\rho_n))\right)$. From Equation (\ref{e.c0.4}), $\lambda_r^- = \sum_{i=1}^r p_i^-$, $p_i^- \equiv (p_i^0, p_i^1)$, is the momenta variable conjugate to position variable $x_i^- \equiv (x_i^0, x_i^1)$.

A similar calculation will show that the Fourier Transform of
\begin{align*}
a^- \longmapsto F^+(a^-) &= \mathscr{W}^n\left(\{\vec{x}_\theta+ a^-\}_{\theta=1}^{s} , \{\vec{x}_\tau\}_{\tau=s+1}^{s+r}  \right) \\
&\equiv W^n\left(\vec{\xi}_1, \cdots, \vec{\xi}_{s}+ a^-, \vec{\xi}_{s+1}, \cdots, \vec{\xi}_{r+s-1}\right), \end{align*}
is simply a multiple of the Dirac delta function $\delta\left(\lambda_s^- -  c_s(\hat{H}(\rho_n), \hat{P}(\rho_n))\right)$. In this case, the support is on the set $\left\{\lambda_s^- =  c_s[\hat{H}(\rho_n)e_0+ \hat{P}(\rho_n)e_1]\right\}$, $c_s \in \mathbb{Z}$.
\end{rem}

\begin{defn}\label{d.reg.1}
Refer to Notation \ref{n.lc.1}. We say that the set $\{\psi^{\alpha_\tau,n}\}_{\tau = 1}^r$ is regular if there exists a unique $0 \leq k\leq r$ such that for any $h \in \mathscr{P}$,
\begin{itemize}
  \item $\psi^{\alpha_\tau,n}(h) = \phi^{\alpha_\tau,n}(h)^\ast$, for every $1 \leq \tau \leq k$, and
  \item $\psi^{\alpha_\tau,n}(h) = \phi^{\alpha_\tau,n}(h)$, for every $k+1 \leq \tau \leq r$.
\end{itemize}
\end{defn}

\begin{prop}\label{p.p.3}
Refer to Notation \ref{n.lc.1}, whereby we have the two sets $\{\psi^{\alpha_\tau,n}\}_{\tau = 1}^r$ and $\{\psi^{\beta_\theta,n}\}_{\theta = 1}^s$, $r \geq s$. Suppose
\begin{itemize}
  \item both sets are regular as defined in Definition \ref{d.reg.1};
  \item the former has $0 \neq \bar{k}$ annihilation operators, and $\bar{l} := r - \bar{k}$ creation operators;
  \item the latter has $\underline{k}$ annihilation operators, and $0 \neq \underline{l} := s - \underline{k}$ creation operators;
  \item the sum $\sum_{\tau=1}^r \chi(\tau) + \sum_{\theta=r+1}^{r+s} \chi(\theta) = 0$, $\chi$ was defined in Notation \ref{n.q.1}. That is, we have equal number of annihilation and creation operators, i.e. \beq \bar{k} + \underline{k} = \bar{l} + \underline{l}. \label{e.b.2} \eeq
\end{itemize}

Suppose ${\rm min}\ \{\underline{k}, \bar{l}\} \neq 0$. If $\bar{k} > \bar{l}$, then the coefficients $c_i$'s in Equation (\ref{e.l.1}) can all be chosen to be positive. In particular, we have that $c_r > 0$.
If $\bar{k} < \bar{l}$, then the coefficients $c_i$'s, $2\bar{k}+1 \leq i \leq r$, are all negative. In particular, we have that $c_r < 0$.

When ${\rm min}\ \{\underline{k}, \bar{l}\} = 0$, the coefficients $c_i$'s in Equation (\ref{e.l.1}) can all be chosen to be positive. In particular, we have that $c_r > 0$.
\end{prop}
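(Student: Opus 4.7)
The plan is to reduce the statement to an elementary computation of the partial sums $S_i := \sum_{\tau=1}^{i}\chi(\tau)$, after which the sign claims fall out from the piecewise-linear structure of $i\mapsto S_i$ forced by regularity. Setting $L(\vec{x}) := x^0\hat{H}-x^1\hat{P}$, the telescoping identity $\vec{x}_\tau = \vec{x}_{r+s}+\sum_{j=\tau}^{r+s-1}\vec{\xi}_j$ turns the left-hand side of Equation (\ref{e.l.1}) into $\bigl(\sum_{\tau=1}^{r+s}\chi(\tau)\bigr)L(\vec{x}_{r+s}) + \sum_{i=1}^{r+s-1}S_i\, L(\vec{\xi}_i)$. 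The first term vanishes by the balance assumption $\bar{k}+\underline{k}=\bar{l}+\underline{l}$, and matching the second term with the right-hand side of Equation (\ref{e.l.1}) yields $c_i = -S_i$ for $1\leq i\leq r+s-1$.

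Next I would evaluate $S_i$ explicitly using regularity. Since the first set consists of $\bar{k}$ adjoints followed by $\bar{l}$ non-adjoints and the second of $\underline{k}$ adjoints followed by $\underline{l}$ non-adjoints, the sequence $\chi(1),\ldots,\chi(r+s)$ takes values $-1$ on $[1,\bar{k}]$, $+1$ on $[\bar{k}+1,r]$, $-1$ on $[r+1,r+\underline{k}]$ and $+1$ on $[r+\underline{k}+1,r+s]$. A direct summation then gives $c_i = i$ on $[1,\bar{k}]$, $c_i = 2\bar{k}-i$ on $[\bar{k}+1,r]$, $c_i = (\bar{k}-\bar{l})+(i-r)$ on $[r+1,r+\underline{k}]$, and $c_i = r+s-i$ on $[r+\underline{k}+1,r+s-1]$. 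The identity $S_{r+s}=0$ serves as an internal consistency check of the balance hypothesis.

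Finally I would verify the three sign assertions case by case. In Case A ($\bar{k}>\bar{l}$), on the second range the minimum of $c_i=2\bar{k}-i$ is attained at $i=r$ and equals $\bar{k}-\bar{l}>0$, while on the third range $c_i\geq\bar{k}-\bar{l}+1>0$; the first and fourth ranges are manifestly positive, so every $c_i>0$ and in particular $c_r=\bar{k}-\bar{l}>0$. In Case B ($\bar{k}<\bar{l}$), on the second range $c_i=2\bar{k}-i<0$ precisely when $i>2\bar{k}$, giving the asserted negativity on $2\bar{k}+1\leq i\leq r$ together with $c_r=\bar{k}-\bar{l}<0$. When $\min\{\underline{k},\bar{l}\}=0$ with $\bar{k},\underline{l}\neq 0$, either $\bar{l}=0$ (the second range collapses and balance forces $\bar{k}+\underline{k}=\underline{l}$, so on the third range $c_i=i>0$) or $\underline{k}=0$ (the third range collapses and balance gives $\bar{k}=\bar{l}+\underline{l}>\bar{l}$, reducing to Case A); either way every $c_i>0$.

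The only real obstacle is organizational: one must align the four slope-changing regions of $S_i$ with the regularity stratification of $\chi$, and invoke the balance condition twice — first to kill the $L(\vec{x}_{r+s})$ term in the telescoping step, and then to verify $S_{r+s}=0$. Once this bookkeeping is in place, the sign assertions follow immediately from the piecewise-linear structure of $S_i$ and the observation that the sole possible sign change can occur only in the second range $[\bar{k}+1,r]$, at $i=2\bar{k}$, precisely when $\bar{k}<\bar{l}$.
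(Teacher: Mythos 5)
Your proposal is correct, and it establishes everything the proposition claims, but it does so by a genuinely different route from the paper. The paper's proof never computes the $c_i$'s in closed form: it constructs explicit bijective pairings $G$, $\tilde{G}$, $\hat{G}$, $\dot{G}$ between the adjoint indices $A_1 \cup A_2$ and the non-adjoint indices $B_1 \cup B_2$, writes each paired difference $\vec{x}_{\delta_i}-\vec{x}_{\epsilon_i}$ as a signed run of consecutive $\vec{\xi}_j$'s via Equations (\ref{e.l.3}) and (\ref{e.l.4}), and reads off the signs of the $c_i$'s from which runs cover which indices; this is why the statement is phrased as ``can be chosen.'' Your telescoping argument instead identifies $c_i = -S_i$ with $S_i = \sum_{\tau\leq i}\chi(\tau)$, using the balance condition once to kill the $L(\vec{x}_{r+s})$ term, and then regularity turns $S_i$ into an explicit piecewise-linear function, giving the exact values $c_i = i$, $2\bar{k}-i$, $(\bar{k}-\bar{l})+(i-r)$, $r+s-i$ on the four strata. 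What your approach buys: since the $\vec{\xi}_i$ are independent coordinates and $(\hat{H},-\hat{P})\neq 0$, the $c_i$ in Equation (\ref{e.l.1}) are in fact uniquely determined, so you get sharper information than the paper states (e.g.\ $c_r = \bar{k}-\bar{l}$ exactly, and the case analysis becomes a one-line inspection of the formulas), and the three cases, including ${\rm min}\{\underline{k},\bar{l}\}=0$, are handled uniformly rather than by separate pairing maps. What the paper's pairing construction buys is that it is the form actually reused downstream (Remarks \ref{r.ex.1} and \ref{r.p.1} refer to the pairing picture), but your formulas reproduce it. Two harmless inaccuracies in your write-up: in the subcase $\bar{l}=0$ the balance condition forces $\underline{k}=0$ as well (so $r=s=\bar{k}=\underline{l}$ and the third range is empty — your formula $c_i=i$ there is vacuously fine); and your closing remark that the ``sole possible sign change'' occurs in the second range is not quite right, since in Case B the coefficients pass back through zero inside the third range (e.g.\ $\bar{k}=1,\bar{l}=2,\underline{k}=2,\underline{l}=1$ gives $c=(1,0,-1,0,1)$), but this does not touch any of the assertions of the proposition, which your case analysis verifies correctly.
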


\begin{proof}
First consider ${\rm min}\ \{\underline{k}, \bar{l}\} \neq 0$. We have the sets $A_1 :=\{1, \cdots, \bar{k}\}$ and $A_2:= \{r+1, \cdots, r+\underline{k}\}$, both containing integers which are adjoint. And we have the sets $B_1 := \{ \bar{k}+1, \cdots, \bar{k} + \bar{l}\}$ and $B_2:= \{ r + \underline{k}+1, \cdots, r+\underline{k} + \underline{l} \}$, both containing integers which are not adjoint.

We will pair the numbers in $A_1 \cup A_2$, with numbers in $B_1 \cup B_2$, since the cardinality of both sets are the same.
Assume $\bar{k} \geq \bar{l}$. Define a bijective map $G: A_1 \cup A_2 \longrightarrow B_1 \cup B_2$ by
\begin{align*}
G(a) =
\left\{
  \begin{array}{ll}
    a+\bar{k}, & \hbox{$1\leq a \leq \bar{l}$;} \\
    a+\bar{k}+\underline{k} , & \hbox{$\bar{l}+1 \leq a \leq \bar{k}$;} \\
    a+\underline{k} + \bar{k} - \bar{l} \equiv a + \underline{l}, & \hbox{$r+1 \leq a \leq r+\underline{k}$.}
  \end{array}
\right.
\end{align*}
Using this map, we see that we can pair in such a way that each $\delta_i \in A_1 \cup A_2$ is paired with an unique $\epsilon_i \in B_1 \cup B_2$, with each $\delta_i < \epsilon_i$. Equation (\ref{e.l.3}) holds for all the $1 \leq \delta_i \leq \bar{k}$ and $r+1\leq \delta_i \leq r+ \underline{k}$. The map $G$ shows that all the $c_i$'s in Equation (\ref{e.l.1}), except at $i=r$, are all positive integers. But if we assume $\bar{k} > \bar{l}$, we will have $G: \bar{l}+1 \mapsto r+\underline{k} + 1$, thus we see that $c_r > 0$.

Assume $\bar{k} < \bar{l}$. Define a bijective map $\tilde{G}: A_1 \cup A_2 \longrightarrow B_1 \cup B_2$ by
\begin{align*}
\tilde{G}(a) =
\left\{
  \begin{array}{ll}
    a+\bar{k}, & \hbox{$1\leq a \leq \bar{k}$;} \\
    a+\bar{k}-\bar{l} , & \hbox{$r+1 \leq a \leq r+\bar{l} - \bar{k}$;} \\
    a+ \bar{k} - \bar{l}+\underline{k}  \equiv a + \underline{l}, & \hbox{$r+\bar{l} - \bar{k}+1 \leq a \leq r+\underline{k}$.}
  \end{array}
\right.
\end{align*}
In this case, we see that for $r+1 \leq \delta_i \leq r+\bar{l} - \underline{k}$, we have that $\delta_i > \tilde{G}(\delta_i) = \epsilon_i$. Equation (\ref{e.l.4}) holds instead, for such $\delta_i$ and corresponding $\epsilon_i$. Thus, the coefficients $c_i$'s, $2\bar{k}+1 \leq i \leq r$ in Equation (\ref{e.l.1}) are negative. In particular, we have that $c_r< 0$.

Now we consider ${\rm min}\ \{\underline{k}, \bar{l}\} = 0$. When $\bar{l} = 0$, this implies that $r = s$ and thus $r = \bar{k} = \underline{l} = s$. Define a bijective map $\hat{G}: \{1, \cdots, r\} \longrightarrow \{r+1, \cdots, 2r\}$ as $G(a) = a+r$. Then clearly, $a < \hat{G}(a)$ and since $\hat{G}: r \mapsto r+s = 2r$, all the coefficients in Equation (\ref{e.l.1}) are strictly positive. In particular, $c_r > 0$.

When $\underline{k} = 0$, then we must have $\bar{k} = \bar{l} + \underline{l}$. Define a bijective map $\dot{G}$ that sends $a$ to $a + \bar{k}$, for $1 \leq a \leq \bar{k}$. Again, $a < \dot{G}(a)$ and because $\dot{G}: \bar{k} \mapsto r+s$, all the coefficients in Equation (\ref{e.l.1}) are strictly positive. In particular, $c_r > 0$.
\end{proof}

\begin{rem}
When $\bar{k} = \bar{l}$, we see that $c_r = 0$ in the proof, which implies that the support in Lemma \ref{l.b.1} is at $\{\vec{q} = \vec{0}\}$. This will pose a problem later, when we try to prove the Clustering Theorem \ref{t.cl}. Hence, we do not consider this particular case.

When $\bar{k} = 0$ or $\underline{l} = 0$, then $\mathscr{W}^n \equiv 0$, which is trivial.
\end{rem}

\begin{rem}\label{r.p.1}
Assume the conditions in Proposition \ref{p.p.3} hold. We will further assume one of the following cases hold:
\begin{enumerate}
  \item ${\rm min}\ \{\underline{k}, \bar{l}\} = 0$; or
  \item ${\rm min}\ \{\underline{k}, \bar{l}\} \neq 0$ and $\bar{k} > \bar{l}$; or
  \item ${\rm min}\ \{\underline{k}, \bar{l}\} \neq 0$, $\bar{k} < \bar{l}$ and $2\bar{k}+1 \leq s$.
\end{enumerate}

When we assume one of the above 3 cases holds, Proposition \ref{p.p.3} says that $\sgn(c_s) = \sgn(c_r)$.
Then the supports described in Remark \ref{r.ex.1} are on respective positive and negative cones in energy-momentum space. The fact that the supports are separated is key to prove the Clustering Decomposition Property, as demonstrated in \cite{streater} and \cite{Araki1962}.
\end{rem}

In general, it is not true that \beq \mathscr{W}^n\left(\{\vec{x}_\tau\}_{\tau=1}^r, \{\vec{x}_\theta\}_{\theta=r+1}^{r+s} \right)  = \pm
\mathscr{W}^n\left( \{\vec{x}_\theta\}_{\theta=r+1}^{r+s} ,\{\vec{x}_\tau\}_{\tau=1}^r\right). \nonumber \eeq The commutation and anti-commutation relations in Lemmas \ref{l.w.3}, \ref{l.w.4} and \ref{l.w.5} hold only for real or purely imaginary functions respectively. Thus, we need to consider in a similar manner.

\begin{rem}
Proposition \ref{p.c.1} will imply that the above equality holds, provided $x_\theta^+ - x_\tau^+ \neq 0$ for any $1 \leq \tau \leq r$ and $r+1 \leq \theta \leq r+s$.
\end{rem}

\begin{notation}\label{n.q.2}
For a natural number $r$, let the set $X_r := \{1,2, \cdots, r\}$. Decompose $X_r$ into 2 sets $\kappa_r$, $\bar{\kappa}_r$, i.e. $\kappa_r \cup \bar{\kappa}_r = X_r$ and $\kappa_r \cap \bar{\kappa}_r = \emptyset$, one of it possibly empty. Let $\Omega_r$ denote the set containing all such decompositions of $X_r$.

Recall we have two sets $\{f_1, \cdots, f_r\}$ and $\{g_1, \cdots, g_s\}$. Assume that each $f_i$ ($g_j$) is either real or purely imaginary. This defines 2 disjoint sets $\kappa_r$ ($\pi_s$), and $\bar{\kappa}_r$ ($\bar{\pi}_s$), whereby $f_i$ ($g_j$) is real if $i \in \kappa_r$ ($j \in \pi_s$), otherwise $f_i$ ($g_j$) is purely imaginary if $i \in \bar{\kappa}_r$ ($j \in \bar{\pi}_s$).

Then, we will write for $(\kappa_r, \bar{\kappa}_r) \in \Omega_r$, $(\pi_s, \bar{\pi}_s) \in \Omega_s$, tempered distributions such that
\begin{align*}
\mathscr{W}_{\kappa_r, \pi_s}^n &: f_1 \otimes_\bR \cdots \otimes_\bR f_r \otimes_\bR g_1 \otimes_\bR \cdots \otimes_\bR g_s \longmapsto \left\langle  A_r^nP_0B_s^n 1, 1\right\rangle,\\
\mathscr{W}_{\pi_s, \kappa_r}^n &: g_1 \otimes_\bR \cdots \otimes_\bR g_s \otimes_\bR f_1 \otimes_\bR \cdots \otimes_\bR f_r  \longmapsto
\left\langle  C_r^nP_0D_s^n 1, 1\right\rangle ,
\end{align*}
for these particular sets $\{f_1, \cdots, f_r\}$ and $\{g_1, \cdots, g_s\}$. Refer back to Equations (\ref{e.c.3}) and (\ref{e.c.2}).
\end{notation}

\begin{rem}\label{r.c.1}
Recall from Notation \ref{n.lc.1}, we have the sets $\{f_1, \cdots, f_r\}$ and $\{g_1, \cdots, g_s\}$. Write for $\tau=1, \cdots, r$, $f_\tau = \underline{f}_\tau + i\overline{f}_\tau$, whereby $\underline{f}_\tau = {\rm Re}\ f_\tau$ and $\overline{f}_\tau = {\rm Im}\ f_\tau$. Similarly, write for $\theta=1, \cdots, s$, $g_\theta = \underline{g}_\theta + i\overline{g}_\theta$, whereby $\underline{g}_\theta = {\rm Re}\ g_\theta$ and $\overline{g}_\theta = {\rm Im}\ g_\theta$.

Refer to Notation \ref{n.q.2}. We can now understand
\begin{align}
&\int_{\bR^4 \times \cdots \times \bR^4} \mathscr{W}^n\left(\{\vec{x}_\tau\}_{\tau=1}^r, \{\vec{x}_\theta\}_{\theta=r+1}^{r+s} \right)\bigotimes_{\tau=1}^{r+s} p_\tau(\vec{x}_\tau)\cdot \prod_{\tau=1 }^{r+s}d\vec{x}_\tau \nonumber \\
&= \sum_{\myfrac{(\kappa_r, \bar{\kappa}_r) \in \Omega_r }{(\pi_s, \bar{\pi}_s) \in \Omega_s}}\int_{\bR^4 \times \cdots \times \bR^4} \mathscr{W}_{\kappa_r, \pi_s}^n\left(\{\vec{x}_\tau\}_{\tau=1}^r, \{\vec{x}_\theta\}_{\theta=r+1}^{r+s} \right)\prod_{\tau=1}^{r+s} q^{\{\kappa_r, \pi_s\}}_\tau(\vec{x}_\tau)\cdot \prod_{\tau=1 }^{r+s}d\vec{x}_\tau, \label{e.d.3}
\end{align}
whereby for each integral term in the above sum, indexed by $\{\kappa_r, \pi_s\}$,
\begin{itemize}
  \item for $1 \leq \tau \leq r$, $q^{\{\kappa_r, \pi_s\}}_\tau$ is $\underline{f}_\tau$ if $\tau \in \kappa_r$, otherwise $q^{\{\kappa_r, \pi_s\}}_\tau$ is $i\overline{f}_\tau \equiv \sqrt{-1}\overline{f}_\tau$;
  \item for $r+1 \leq \tau \leq r+s$, $q^{\{\kappa_r, \pi_s\}}_\tau$ is $\underline{g}_{\tau-r}$ if $\tau-r \in \pi_s$, otherwise $q^{\{\kappa_r, \pi_s\}}_\tau$ is $i\overline{g}_{\tau-r} \equiv \sqrt{-1}\overline{g}_{\tau-r}$.
\end{itemize}
\end{rem}

To apply the commutation or anti-commutation relations in Lemmas \ref{l.w.1}, \ref{l.w.3}, \ref{l.w.4} and \ref{l.w.5} to switch the variables, we need to assume regularity of the sets in Definition \ref{d.reg.1}. Making this assumption and using the above notations, we will have the following lemma. The proof involves applying successively, the commutation and anti-commutation relations in the lemmas, and the details will be left to the reader.

\begin{lem}\label{l.lc.1}
Assume the two sets $\{\psi^{\alpha_\tau,n}\}_{\tau = 1}^r$ and $\{\psi^{\beta_\theta,n}\}_{\theta = 1}^s$, $r \geq s$, both are regular. We have that \beq \mathscr{W}_{\kappa_r, \pi_s}^n\left(\{\vec{x}_\tau\}_{\tau=1}^r, \{\vec{x}_\theta\}_{\theta=r+1}^{r+s} \right)  = \pm
\mathscr{W}_{\pi_s, \kappa_r}^n\left( \{\vec{x}_\theta\}_{\theta=r+1}^{r+s} ,\{\vec{x}_\tau\}_{\tau=1}^r\right), \label{e.b.1b} \eeq provided for any $1 \leq \tau \leq r$, $r+1 \leq \theta \leq r+s$, we have $0 \neq \vec{x}_\tau - \vec{x}_\theta$ lies in the hyperplane spanned by $\{ \hat{P}(\rho_n)e_0 + \hat{H}(\rho_n)e_1, e_2, e_3\}$. The $\pm$ sign depends on how many anti-commutation relations in Lemmas \ref{l.w.3}, \ref{l.w.4} and \ref{l.w.5} are applied.
\end{lem}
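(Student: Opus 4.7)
The plan is to prove the identity by performing a sequence of transpositions that move each of the $s$ operators $\psi^{\beta_\theta,n}(g_{\theta-r})$ (for $\theta = r+1,\ldots,r+s$) leftward past each of the $r$ operators $\psi^{\alpha_\tau,n}(f_\tau)$ (for $\tau = 1,\ldots,r$) inside the vacuum expectation $\langle A_r^n P_0 B_s^n 1, 1\rangle$. Each elementary transposition is controlled by one of the commutation/anti-commutation relations in Lemmas \ref{l.w.1}, \ref{l.w.3}, \ref{l.w.4}, \ref{l.w.5}, and the accumulated $\pm$ sign in the statement records the parity of how many anti-commutation relations (vs.\ commutation relations) were used.

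First I would reduce to the monomial case. Using the decomposition in Remark \ref{r.c.1}, it suffices to fix a pair $(\kappa_r,\bar\kappa_r)\in\Omega_r$, $(\pi_s,\bar\pi_s)\in\Omega_s$ and prove Equation (\ref{e.b.1b}) for the corresponding piece $\mathscr{W}^n_{\kappa_r,\pi_s}$, in which every $f_\tau$ is either real or purely imaginary and likewise for each $g_\theta$. This is exactly the setup in which Lemmas \ref{l.w.3}, \ref{l.w.4}, \ref{l.w.5} apply: real $\times$ real (or imaginary $\times$ imaginary) yields a real product, producing the commutation relation; real $\times$ imaginary yields a purely imaginary product, producing the anti-commutation relation. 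The hypothesis on $\vec{x}_\tau - \vec{x}_\theta$ matches the hypothesis $\vec{0}\neq \vec{x}-\vec{y}=c_1(\hat P(\rho_n)e_0+\hat H(\rho_n)e_1)+\sum_{i=2}^{3}c_i e_i$ of those lemmas exactly (with $\{\hat{f}_a\}=\{e_a\}$, since the Minkowski frame on $S_0$ is the standard one).

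Next I would carry out the swap as a bubble sort. I sweep $\psi^{\beta_1,n}(g_1)$ leftward through $\psi^{\alpha_r,n}(f_r), \psi^{\alpha_{r-1},n}(f_{r-1}), \ldots, \psi^{\alpha_1,n}(f_1)$ one step at a time. At each step, there are four sub-cases according to whether the pair being swapped consists of two creation operators, two annihilation operators, one creation and one annihilation (in either order). The first two sub-cases are handled by Lemma \ref{l.w.1}, which gives a clean commutation \emph{without} any hypothesis on supports (the vanishing is exact on $\mathscr{D}$). The remaining two sub-cases are handled by Lemmas \ref{l.w.3}, \ref{l.w.4}, \ref{l.w.5}: under the regularity hypothesis on both sets, the only residual contributions beyond the $\pm$-transposition are the ``vacuum-contraction'' terms $\langle\cdot,\phi^{\beta,n}(g)1\rangle\phi^{\alpha,n}(f)1$ appearing in Equations (\ref{e.co.3})-(\ref{e.co.4}), and these are precisely what is subtracted off by the projection $P_0$ in $\mathscr{W}^n_{\kappa_r,\pi_s}$ (equivalently, they are the terms absorbed into the disconnected piece $\langle A_r^n 1,1\rangle\langle B_s^n 1,1\rangle$ in Definition \ref{d.co.3}). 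After finishing with $g_1$, I repeat the sweep for $g_2,\ldots,g_s$; after $rs$ swaps the operator ordering has become $B_s^n A_r^n$ inside the vacuum expectation, which is the definition of $\mathscr{W}^n_{\pi_s,\kappa_r}$ (with the arguments interchanged as in the statement of the lemma).

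The main obstacle I anticipate is not the algebra of swapping but the verification that the disconnected/contraction terms produced at each elementary step recombine, after all $rs$ swaps, into precisely the $P_0$-subtraction in $\mathscr{W}^n_{\pi_s,\kappa_r}$. Concretely, one must check that applying regularity (all annihilations appear to the left of all creations within each of $A_r^n$ and $B_s^n$) forces the cross-contractions that appear during the swap to collapse into a product of vacuum expectations of one of the shapes $\langle A_r^n 1,1\rangle\langle B_s^n 1,1\rangle$, so that the identity really does hold on the level of the connected correlator defined by $P_0$. Once this bookkeeping is settled and the overall sign is tracked as the parity of the number of real $\times$ imaginary (anti-commutation) swaps executed during the bubble sort, Equation (\ref{e.b.1b}) follows, and summing over $(\kappa_r,\pi_s)$ by Remark \ref{r.c.1} yields the statement.
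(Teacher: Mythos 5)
Your proposal takes essentially the same route as the paper: the paper's own proof of Lemma \ref{l.lc.1} is just the remark that one applies the commutation and anti-commutation relations of Lemmas \ref{l.w.1}, \ref{l.w.3}, \ref{l.w.4} and \ref{l.w.5} successively, with the details left to the reader, and that is precisely your reduction via Remark \ref{r.c.1} to real/purely-imaginary factors followed by the bubble-sort of pairwise swaps with the sign tracking the anti-commutation steps. Your write-up is in fact more explicit than the paper's, correctly matching the hyperplane hypothesis to those lemmas and flagging the only nontrivial bookkeeping (the vacuum-contraction terms versus the $P_0$-subtraction), which the paper does not spell out either.
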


\subsection{Proof of Cluster Decomposition Property}\label{ss.cd}

\begin{defn}\label{d.t.1}
Let $\varphi_1$, $\varphi_2$ be Schwartz functions on $\bR^{4r}$ and $\bR^{4s}$ respectively, $r \geq s$. Define
\begin{align*}
T_1^n
&:= \int_{\bR^4 \times \cdots \times \bR^4}\prod_{\tau=1}^{r}d\vec{x}_\tau \cdot \varphi_1\left( \{\vec{x}_\tau\}_{\tau=1}^{r}\right) \cdot \prod_{\tau=1}^{r}\psi^{\alpha_\tau,n}(\vec{x}_\tau), \\
T_2^n
&:= \int_{\bR^4 \times \cdots \times \bR^4}\prod_{\theta=r+1}^{r+s}d\vec{x}_\theta \cdot \varphi_2\left( \{\vec{x}_\theta\}_{\theta=r+1}^{r+s}\right) \cdot \prod_{\theta=r+1}^{r+s}\psi^{\beta_{\theta-r},n}(\vec{x}_\theta).
\end{align*}

And define
\begin{align*}
\mathcal{T}_1^n&\left(\{\vec{x}_\tau\}_{\tau=s+1}^{r}\right) \\
&:= \int_{\bR^4 \times \cdots \times \bR^4}\prod_{\theta=r+1}^{r+s}d\vec{x}_\theta \cdot \varphi_1\left( \{\vec{x}_\tau\}_{\tau=s+1}^{r}, \{\vec{x}_\theta\}_{\theta=r+1}^{r+s}\right) \cdot \prod_{\theta=r+1}^{r+s}\psi^{\beta_{\theta-r},n}(\vec{x}_\theta), \\
\mathcal{T}_2^n&\left(\{\vec{x}_\tau\}_{\tau=s+1}^{r} \right) \\
&:= \int_{\bR^4 \times \cdots \times \bR^4}\prod_{\tau=1}^{s}d\vec{x}_\tau \cdot \varphi_2\left( \{\vec{x}_\tau\}_{\tau=1}^{s}\right) \cdot \prod_{\tau=1}^{r}\psi^{\alpha_\tau,n}(\vec{x}_\tau),
\end{align*}
whereby $\psi^{\alpha_\tau,n}(\vec{x}_\tau)$ is either $\phi^{\alpha_\tau,n}(\vec{x}_\tau)$
or its adjoint operator. Likewise for $\psi^{\beta_{\theta-r},n}(\vec{x}_\theta)$.

We will need to assume the following:
\begin{itemize}
  \item both the sets $\{\psi^{\alpha_\tau, n}\}_{\tau = 1}^r$ and $\{\psi^{\beta_\theta, n}\}_{\theta = 1}^s$ are regular, as defined in Definition \ref{d.reg.1};
  \item $\sum_{\tau=1}^r \chi(\tau) + \sum_{\theta=r+1}^{r+s} \chi(\theta) = 0$, i.e. there are an equal number of creation and annihilation operators.
\end{itemize}

Suppose $\vec{a} \in \bR^4$. Define
\begin{align*}
T_1^n\left( \vec{a} \right) &:=
U(\vec{a},1) T_1^n U(\vec{a},1)^{-1}, \\
T_2^n\left(\vec{a}\right) &:=
U(\vec{a},1) T_2^n U(\vec{a},1)^{-1} .
\end{align*}

By default, we will write $T_i^n \equiv T_i^n(\vec{0})$, $i=1,2$.
Define similarly \beq \mathcal{T}_1^n\left(\{\vec{x}_\tau\}_{\tau=s+1}^{r},\vec{a}\right), \quad \mathcal{T}_2^n\left(\{\vec{x}_\tau\}_{\tau=s+1}^{r},\vec{a}\right), \nonumber \eeq replacing $T_i^n$ with $\mathcal{T}_i^n$ respectively in the above formulas.
\end{defn}

Refer to Definition \ref{d.t.1}. Define
\begin{align*}
&T_1 := \sum_{n=1}^\infty c_{1,n}T_1^n , \quad
T_2 := \sum_{n=1}^\infty c_{2,n}T_2^n, \\
&\mathcal{T}_1\left(\{\vec{x}_\tau\}_{\tau=s+1}^{r}\right) := \sum_{n=1}^\infty c_{1,n}\mathcal{T}_1^n\left(\{\vec{x}_\tau\}_{\tau=s+1}^{r}\right) , \\
&\mathcal{T}_2\left(\{\vec{x}_\tau\}_{\tau=s+1}^{r} \right) := \sum_{n=1}^\infty c_{2,n}\mathcal{T}_2^n\left(\{\vec{x}_\tau\}_{\tau=s+1}^{r} \right),
\end{align*}
each $\{c_{i,n}\}_{n=1}^\infty$ is a sequence of numbers in $l^2$ space, such that $\sum_{n=1}^\infty |c_{i,n}|^2C(\rho_n)^{2k} < \infty$, for any $k \in \mathbb{N}$.

\begin{rem}
Note that for $j=1,2$, $| T_j^n1 |^2 = O(C(\rho_n))$, $| \mathcal{T}_j^n1 |^2 = O(C(\rho_n))$. Because for any $k \in \mathbb{N}$, $\sum_{n=1}^\infty |c_{j,n}|^2C(\rho_n)^{2k} < \infty$, we see that \beq | \hat{H}^{k}T_j(\cdot, \vec{a})1 |^2 \leq \sum_{n=1}^\infty C(\rho_n)^{2k}|c_{j,n}|^2| T_j^n(\cdot)1|^2 < \infty. \label{e.a.1} \eeq
Similar bound for $\mathcal{T}_j$.
This is required to prove the Clustering Decomposition Property \ref{t.cd.1} and Clustering Theorem \ref{t.cl}.
\end{rem}

Indeed,
\begin{align*}
\Big\langle &T_1\left( \vec{y} \right) T_2\left( \vec{x} \right)1, 1 \Big\rangle \\
&= \sum_{n=1}^\infty c_{1,n}c_{2,n}\left\langle   U(\vec{y},1)T_1^n U(\vec{y},1)^{-1}U(\vec{x},1)
T_2^n
U(\vec{x},1)^{-1}1 , 1 \right\rangle \\
&= \sum_{n=1}^\infty c_{1,n}c_{2,n}\left\langle T_1^n U(\vec{x} - \vec{y},1) T_2^n1, 1 \right\rangle,
\end{align*}
which converges by Equation (\ref{e.a.1}).

Similarly,
\begin{align*}
\Big\langle & \mathcal{T}_2\left( \{\vec{x}_\tau\}_{\tau=s+1}^{r},\vec{x}\right)\mathcal{T}_1\left( \{\vec{x}_\tau\}_{\tau=s+1}^{r},\vec{y}\right)1, 1 \Big\rangle \\
&= \sum_{n=1}^\infty c_{1,n}c_{2,n}\left\langle \mathcal{T}_2^n\left( \{\vec{x}_\tau\}_{\tau=s+1}^{r},\vec{x}\right)
\mathcal{T}_1^n\left(\{\vec{x}_\tau\}_{\tau=s+1}^{r},\vec{y}\right) 1, 1 \right\rangle,
\end{align*}
which also converges.

\begin{defn}\label{d.h.2}
For $\vec{a}= \vec{x} -\vec{y} \in S_0$,
we will define
\begin{align*}
h_{12}^n\left( \vec{a}\right)
:=& \left\langle T_1^n\left(\vec{y}\right) P_0 T_2^n\left( \vec{x}\right)1, 1 \right\rangle\\
=&  \left\langle T_1^n U(\vec{y} ,1)^{-1} U(\vec{x} ,1)T_2^n1, 1 \right\rangle - \langle T_1^n1, 1 \rangle \langle T_2^n 1, 1 \rangle  \\
=& \left\langle T_1^n U(\vec{a} ,1)T_2^n1, 1 \right\rangle - \langle T_1^n1, 1 \rangle \langle T_2^n 1, 1 \rangle .
\end{align*}

Similarly, define ($s^+ = s+1$)
\begin{align*}
h_{21}^n&\left(\vec{a}\right) \\
:=& \int_{\bR^4 \times \cdots \times \bR^4 }\prod_{\tau=s^+}^{r}d\vec{x}_\tau\ \left\langle \mathcal{T}_2^n\left(\{\vec{x}_\tau\}_{\tau=s^+}^{r}, \vec{x}\right) P_0 \mathcal{T}_1^n\left(\{\vec{x}_\tau\}_{\tau=s^+}^{r}, \vec{y}\right)1, 1 \right\rangle\\
=& \left\langle \int_{\bR^{4(r-s)} }\prod_{\tau=s^+}^{r}d\vec{x}_\tau\ \mathcal{T}_2^n \left(\{\vec{x}_\tau\}_{\tau=s^+}^{r}\right) U(-\vec{a} ,1)\mathcal{T}_1^n\left( \{\vec{x}_\tau\}_{\tau=s^+}^{r}\right)1, 1 \right\rangle \\
&-  \int_{\bR^{4(r-s)} }\prod_{\tau=s^+}^{r}d\vec{x}_\tau\left\langle \mathcal{T}_1^n\left(\{\vec{x}_\tau\}_{\tau=s^+}^{r}\right)1, 1 \right\rangle \left\langle \mathcal{T}_2^n\left(\{\vec{x}_\tau\}_{\tau=s^+}^{r}\right) 1, 1 \right\rangle .
\end{align*}
\end{defn}

Using Definition \ref{d.ex.1}, we will extend the definitions of $h_{12}^n\left( \vec{a}\right)$ and $h_{21}^n\left( \vec{a}\right)$, for $\vec{a} \in S_0$, to all of $\vec{a} \in \bR^4$.

From Equation (\ref{e.c.9}), we will have ($r^+ = r+1$)
\begin{align}
h_{12}^n&\left(\vec{a}\right)   \nonumber \\
:=& \int_{\bR^4 \times \cdots \times \bR^4}\prod_{\tau=1}^{r+s}d\vec{x}_\tau \ \mathscr{W}^n\left(\{\vec{x}_\tau\}_{\tau=1}^r, \{\vec{x}_\theta + \vec{a}\}_{\theta=r^+}^{r+s}  \right)\varphi_1\left(\{\vec{x}_\tau\}_{\tau=1}^{r}  \right)\varphi_2\left(\{ \vec{x}_\theta\}_{\theta=r^+}^{r+s}  \right).
\label{e.e.1a}
\end{align}

Similarly, we will define ($s^+ = s+1$)
\begin{align}
h_{21}^n&\left(\vec{a}\right) \nonumber \\
:=& \int_{\bR^4 \times \cdots \times \bR^4}\prod_{\tau=1}^{r+s}d\vec{x}_\tau\
\mathscr{W}^n\left(\{\vec{x}_\theta+ \vec{a} \}_{\theta=r^+}^{r+s}, \{\vec{x}_\tau \}_{\tau=1}^{r} \right)\varphi_1\left(\{\vec{x}_\tau\}_{\tau=1}^{r}  \right)\varphi_2\left(\{ \vec{x}_\theta\}_{\theta=r^+}^{r+s}  \right).\label{e.e.1b}
\end{align}

When $\vec{a} \in S_0$, the LHS of Equations (\ref{e.e.1a}) and (\ref{e.e.1b}) are equal to
\begin{align*}
&\left\langle T_1^n U(\vec{a} ,1)T_2^n1, 1 \right\rangle - \langle T_1^n1, 1 \rangle \langle T_2^n 1, 1 \rangle \quad {\rm and} \\
&\left\langle \int_{\bR^{4(r-s)} }\prod_{\tau=s^+}^{r}d\vec{x}_\tau\ \mathcal{T}_2^n \left(\{\vec{x}_\tau\}_{\tau=s^+}^{r}\right) U(-\vec{a} ,1)\mathcal{T}_1^n\left( \{\vec{x}_\tau\}_{\tau=s^+}^{r}\right)1, 1 \right\rangle \\
&-  \int_{\bR^{4(r-s)} }\prod_{\tau=s^+}^{r}d\vec{x}_\tau\left\langle \mathcal{T}_1^n\left(\{\vec{x}_\tau\}_{\tau=s^+}^{r}\right)1, 1 \right\rangle \left\langle \mathcal{T}_2^n\left(\{\vec{x}_\tau\}_{\tau=s^+}^{r}\right) 1, 1 \right\rangle
\end{align*}
respectively.

\begin{defn}
Define for any $\vec{a} \in \bR^4$,
\beq h_{12}\left(\vec{a} \right)
:= \sum_{n=1}^\infty c_{1,n}c_{2,n} h_{12}^n\left(\vec{a}\right),\quad
h_{21}\left(\vec{a} \right) := \sum_{n=1}^\infty c_{1,n}c_{2,n} h_{21}^n\left(\vec{a}\right).
\nonumber \eeq
\end{defn}

\begin{defn}\label{d.c.1}
Recall from Definition \ref{d.ma.1}, we defined the set $\{\hat{H}(\rho_n), \hat{P}(\rho_n)\}_{n \in \mathbb{N}}$, such that $\hat{H}(\rho_n)^2 - \hat{P}(\rho_n)^2 = m_n^2$, with $m_n>0$ as the mass gap in $\mathscr{H}(\rho_n)$.

Let $\Lambda_n$ be a Lorentz transformation that maps the standard orthonormal basis $e_a \mapsto \tilde{f}_a^n := \Lambda_n e_a$, whereby $\tilde{f}_0^n$ and $\tilde{f}_1^n$ are \beq \frac{\hat{H}(\rho_n)}{m_n}e_0 + \frac{\hat{P}(\rho_n)}{m_n}e_1, \quad \frac{\hat{P}(\rho_n)}{m_n}e_0 + \frac{\hat{H}(\rho_n)}{m_n}e_1 \nonumber \eeq respectively, and $\tilde{f}_2^n = e_2$, $\tilde{f}_3^n = e_3$, which together spans the plane $S_0$.
\end{defn}

\begin{prop}\label{p.c.1}
Write $\vec{a} = (a^-, a^+)$, whereby $a^- = (a^0, a^1)$, $a^+ = (a^2, a^3)$. If $x_\tau^+ - x_\theta^+ \neq a^+$ for any $1 \leq \tau \leq r$ and $r+1 \leq \theta \leq r+s$, then we have that \beq \mathscr{W}^n\left(\{\vec{x}_\tau\}_{\tau=1}^r, \{\vec{x}_\theta + \vec{a}\}_{\theta=r+1}^{r+s}  \right) = 0. \nonumber \eeq
\end{prop}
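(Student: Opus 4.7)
The proposition is a support statement for the distribution $\mathscr{W}^n$: after shifting the second batch of arguments by $\vec{a}$, the distribution vanishes provided $x_\tau^+ - x_\theta^+ \neq a^+$ for every mixed pair $(\tau, \theta)$ with $1 \leq \tau \leq r$ and $r+1 \leq \theta \leq r+s$. The plan is to read off this support directly from the explicit formula derived in Equation (\ref{e.c.9}) and Definition \ref{d.co.3}, rather than to perform any new calculation. The underlying physical reason is simply that the field operators $\phi^{\alpha,n}$ and their adjoints produce states supported on the single surface $S_0$ (the $x^2$-$x^3$ plane), so every inner product appearing in $\langle A_r^n P_0 B_s^n 1, 1\rangle$ pins the spatial components onto $S_0$.

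First, I would unpack Definition \ref{d.co.3} by expressing $\mathscr{W}^n$ as the sum of distributions indexed by the partitions $Q \in \Gamma$ of $R = \{1, \ldots, r+s\}$, using the representation $\sum_Q c_Q \int_Q \{h_\theta^{\vec{a}}\}$. The key structural observation is that each block $A_l \in Q$ contributes a single integration $\int_{S_0} dy^+$ which identifies the spatial coordinates of all indices within $A_l$. Consequently, the $Q$-contribution to $\mathscr{W}^n\bigl(\{\vec{x}_\tau\}_{\tau=1}^r, \{\vec{x}_\theta + \vec{a}\}_{\theta=r+1}^{r+s}\bigr)$ carries, for each block $A_l$, a product of Dirac-$\delta$ factors $\delta(x_\tau^+ - y^+)$ for indices $\tau \leq r$ lying in $A_l$, together with $\delta(x_\theta^+ + a^+ - y^+)$ for indices $\theta > r$ lying in $A_l$.

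Second, I would observe that for any block $A_l$ containing a mixed pair $(\tau, \theta)$ (one index $\leq r$ and one index $>r$), the $\delta$-constraints force $x_\tau^+ = x_\theta^+ + a^+$. Under the hypothesis of the proposition this fails for every mixed pair, so no partition $Q$ whose blocks contain any mixed pair can contribute at such a point. Third, I would invoke Remark \ref{r.s.1} together with the contiguous-block restriction in Notation \ref{n.q.1} to argue that the remaining partitions, those whose blocks lie entirely inside $\{1, \ldots, r\}$ or entirely inside $\{r+1, \ldots, r+s\}$, correspond precisely to the ``disconnected'' contribution $\langle A_r^n 1, 1\rangle \langle B_s^n 1, 1 \rangle$, which is subtracted off by the projection $P_0$ in Definition \ref{d.co.3}. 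Hence those remaining terms contribute zero in net and $\mathscr{W}^n$ vanishes at the point in question.

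The main obstacle I anticipate is this last step: verifying rigorously that the sum over partitions consisting only of pure intra-batch blocks equals exactly $\langle A_r^n 1, 1\rangle \langle B_s^n 1, 1 \rangle$, so that the $P_0$-subtraction annihilates them. This is a combinatorial identity among the coefficients $c_Q$ that follows from unwinding Definitions \ref{d.co.2} and \ref{d.ado} and tracking how the creation/annihilation operators act on states supported on $S_0$, but the bookkeeping --- rather than the conceptual content --- is where care is required.
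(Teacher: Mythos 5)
Your proposal is correct and follows essentially the same route as the paper's own proof: expand $\mathscr{W}^n$ over the partitions $Q\in\Gamma$, note that each block's single $\int_{S_0}$ integration identifies the spatial coordinates of its members (so any block straddling the boundary forces $x_\tau^+ - x_\theta^+ = a^+$ for a mixed pair), and observe that the non-straddling partitions are exactly the disconnected piece $\langle A_r^n 1,1\rangle\langle B_s^n 1,1\rangle$ removed by $P_0$, which the paper states as ``$c_Q\neq 0$ only if some $A_l$ contains both $r$ and $r+1$.'' The only cosmetic difference is that the paper first factors out the $x^-$ phase factors via Equation (\ref{e.q.2}) to reduce to $\mathscr{W}_0^n$ before making the support argument, a step that does not change the substance of your argument.
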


\begin{proof}
It is not difficult to see from Equation (\ref{e.q.2}) that
\begin{align}
\mathscr{W}^n&\left(\{\vec{x}_\tau\}_{\tau=1}^r, \{\vec{x}_\theta + \vec{a}\}_{\theta=r+1}^{r+s}  \right) \nonumber\\
=&
e^{-i\vec{a} \cdot c_rm_n\tilde{f}_0^n}\prod_{\theta=1}^{r+s} E(x_\theta^-)\cdot \mathscr{W}_0^n\left(\{ x^+_\tau\}_{\tau=1}^r, \{ x_\theta^+ + a^+\}_{\theta=r+1}^{r+s}  \right), \label{e.s.1}
\end{align}
whereby $E(x_\theta^-) = E(x_\theta^0, x_\theta^1) = \dfrac{e^{i\chi(\theta)[x_\theta^0\hat{H} - x_\theta^1\hat{P}]}}{2\pi}$ and ($0^- \equiv (0,0)$) \beq \mathscr{W}_0^n\left(\{x^+_\tau\}_{\tau=1}^r, \{x_\theta^+ + a^+\}_{\theta=r+1}^{r+s}  \right) = \mathscr{W}^n\left(\{0^-, x^+_\tau\}_{\tau=1}^r, \{0^-, x_\theta^+ + a^+\}_{\theta=r+1}^{r+s}  \right). \label{e.w.1} \eeq

Because we subtract off $\left\langle A_r^n1,1 \right\rangle \left\langle B_s^n1, 1 \right\rangle$ in Equation (\ref{e.c.3}),  we see that $c_Q$ defined in Equation (\ref{e.q.1}) is non-zero, only if there exists a set $A_l \in Q$ which contains both $r$ and $r+1$. From Equation (\ref{e.c.9}), we have that $\mathscr{W}_0^n \equiv 0$, if $x_\tau^+ - x_\theta^+ \neq a^+$ for any $1 \leq \tau \leq r$ and $r+1 \leq \theta \leq r+s$, and this completes the proof.
\end{proof}

\begin{rem}
From Item \ref{i.ex.1b} in Lemma \ref{l.ex.1}, we see that $\mathscr{W}^n = 0$ iff for any partition $Q = \{A_1, \ldots, A_{n(Q)}\} \in \Gamma$ with $c_Q$ defined in Equation (\ref{e.q.1}) is non-zero, there exists some $\theta \neq r$ such that $\{\theta, \theta+1\} \subset A_l$ for some $1 \leq l \leq n(Q)$ and $\xi_\theta^+ \neq (0,0)$, or $\xi_r^+ \neq a^+$.
\end{rem}

\begin{prop}\label{p.c.2}
Refer to Definition \ref{d.c.1}. Recall the sets $\kappa_r \subseteq \{1, \cdots, r\}$, $\pi_s \subseteq \{1, \cdots, s\}$, as explained in Notation \ref{n.q.2}. Fix $a^2, a^3 \in \bR$.

Suppose one of the 3 cases hold in in Remark \ref{r.p.1} and \beq \mathscr{W}_{\kappa_r, \pi_s}^n\left(\{\vec{x}_\tau\}_{\tau=1}^r, \{\vec{x}_\theta + \vec{a}\}_{\theta=r+1}^{r+s}  \right) = \pm\mathscr{W}_{\pi_s, \kappa_r}^n\left(\{\vec{x}_\theta + \vec{a}\}_{\theta=r+1}^{r+s} , \{\vec{x}_\tau\}_{\tau=1}^r  \right) \nonumber \eeq holds for any $a^0 \in \bR$, such that $\vec{a} = a^0 \tilde{f}_0^n + a^2 \tilde{f}_2^n + a^3 \tilde{f}_3^n$ is a space-like vector. Then it is necessary and sufficient that $\mathscr{W}_{\kappa_r, \pi_s}^n\left(\{\vec{x}_\tau\}_{\tau=1}^r, \{\vec{x}_\theta + \vec{a}\}_{\theta=r+1}^{r+s}  \right)$ and $\mathscr{W}_{\pi_s, \kappa_r}^n\left(\{\vec{x}_\theta + \vec{a}\}_{\theta=r+1}^{r+s} , \{\vec{x}_\tau\}_{\tau=1}^r  \right)$ both must vanish.
\end{prop}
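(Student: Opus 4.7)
The plan is to exploit the discrete time-like spectrum along $\tilde{f}_0^n$: both sides of the hypothesized identity will reduce to a fixed complex exponential in $a^0$ (whose frequency depends only on the mass $m_n$ and an integer coefficient) multiplied by an $a^0$-independent distribution. First I would invoke Equation (\ref{e.s.1}) to factor the left-hand side as
\[
\mathscr{W}_{\kappa_r, \pi_s}^n\!\left(\{\vec{x}_\tau\}_{\tau=1}^r, \{\vec{x}_\theta + \vec{a}\}_{\theta=r+1}^{r+s}\right) = e^{i c_r m_n a^0}\,F_1,
\]
with $c_r\in\mathbb{Z}$ as in Proposition \ref{p.p.3} and $F_1$ independent of $a^0$; the translation of the last $s$ arguments by $\vec{a}$ shifts only the relative coordinate $\vec{\xi}_r$, so all the $a^0$-dependence collapses into the phase (using $\vec{a}\cdot m_n\tilde{f}_0^n = -a^0 m_n$). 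A parallel application of Equation (\ref{e.q.2}) at $t=s$, now to $\mathscr{W}_{\pi_s,\kappa_r}^n$ whose first $s$ arguments are shifted by $+\vec{a}$, yields
\[
\mathscr{W}_{\pi_s, \kappa_r}^n\!\left(\{\vec{x}_\theta + \vec{a}\}_{\theta=r+1}^{r+s}, \{\vec{x}_\tau\}_{\tau=1}^r\right) = e^{-i c'_s m_n a^0}\,F_2,
\]
where $c'_s$ is the coefficient of $\vec{\xi}\,'_s$ in the phase decomposition of $\mathscr{W}_{\pi_s,\kappa_r}^n$ and $F_2$ is again $a^0$-independent.

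Second, I would verify that in every case of Remark \ref{r.p.1}, $\sgn(c'_s) = \sgn(c_r)$, so in particular $c_r + c'_s \neq 0$. The point is that $(C_r^n, D_s^n)$ inherits its creation/annihilation pattern verbatim from $(A_r^n, B_s^n)$: the first $r$ positions still have $\bar k$ adjoint then $\bar l$ non-adjoint indices, and the last $s$ positions have $\underline k$ adjoint then $\underline l$ non-adjoint. Hence the same bijections $G$ (or $\tilde G$) from the proof of Proposition \ref{p.p.3} pair adjoint with non-adjoint indices for $\mathscr{W}_{\pi_s, \kappa_r}^n$, and $c'_s$ counts the pairs whose adjoint index lies in $\{1,\ldots,s\}$ while their non-adjoint partner lies beyond $s$. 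In Cases 1 and 2 of Remark \ref{r.p.1} every such contribution is $+1$, in Case 3 every such contribution is $-1$, and the assumptions guarantee at least one such pair exists; the same sign property is shared by $c_r$, so $\sgn(c'_s) = \sgn(c_r) \neq 0$.

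Finally, the hypothesis asserts
\[
e^{i c_r m_n a^0}\, F_1 \;=\; \pm\, e^{-i c'_s m_n a^0}\, F_2
\]
for every $a^0$ in the non-empty open interval $|a^0| < \sqrt{(a^2)^2 + (a^3)^2}$. Since $c_r + c'_s \neq 0$, the two complex exponentials $e^{i c_r m_n a^0}$ and $e^{-i c'_s m_n a^0}$ are linearly independent as functions on any non-empty open interval, so the identity forces $F_1 = F_2 = 0$, i.e.\ both distributions vanish; the converse direction is immediate. The main obstacle I anticipate is the combinatorial bookkeeping for $\sgn(c'_s)$: one must re-run the pairing argument of Proposition \ref{p.p.3} in the switched setting and track pairs crossing the position-$s$ cut rather than the position-$r$ cut, while keeping careful account of Cases 1--3 of Remark \ref{r.p.1}.
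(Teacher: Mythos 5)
Your proof is correct and follows essentially the same route as the paper: extract the $a^0$-dependence as phases $e^{ic_r m_n a^0}$ and $e^{-ic_s m_n a^0}$ via Equation (\ref{e.q.2}), use that the two coefficients have the same sign under the three cases of Remark \ref{r.p.1}, and conclude that equality for a range of $a^0$ forces both distributions to vanish. Your second paragraph merely re-derives the sign statement that the paper already records in Remark \ref{r.p.1} (via Proposition \ref{p.p.3}), since the switched product of Definition \ref{d.co.3} has the same by-position adjointness pattern, so $c_s'=c_s$; the explicit linear-independence-of-exponentials step is a welcome sharpening of the paper's ``not equal for general values of $a^0$'' but not a different argument.
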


\begin{proof}
We only consider when the sign is a plus. The other case is similar. From Equation (\ref{e.q.2}), we have
\begin{align*}
\mathscr{W}_{\kappa_r, \pi_s}^n\left(\{\vec{x}_\tau\}_{\tau=1}^r, \{\vec{x}_\theta + \vec{a}\}_{\theta=r+1}^{r+s}  \right)
=&
e^{ia^0 c_rm_n}\mathscr{W}_{\kappa_r, \pi_s}^n\left(\{\vec{x}_\tau\}_{\tau=1}^r, \{\vec{x}_\theta + a^+ \}_{\theta=r+1}^{r+s}  \right),
\end{align*}
and
\begin{align*}
\mathscr{W}_{\pi_s, \kappa_r}^n\left(\{\vec{x}_\theta + \vec{a}\}_{\theta=r+1}^{r+s} , \{\vec{x}_\tau\}_{\tau=1}^r  \right)
=&
e^{-ia^0  c_sm_n}\mathscr{W}_{\pi_s, \kappa_r}^n\left(\{\vec{x}_\theta + a^+\}_{\theta=r+1}^{r+s} , \{\vec{x}_\tau\}_{\tau=1}^r  \right).
\end{align*}
Because we assume one of the 3 cases hold, $c_s$ and $c_r$ have the same sign and thus the preceding two equations are not equal for general values of $a^0$, unless both terms are equal to zero.
\end{proof}

Recall from Definition \ref{d.n.1}, we defined the norm $\parallel \cdot \parallel_{p,q}$ for Schwartz functions on $\bR^4$. We can generalize this definition, for a Schwartz function defined on the Euclidean space $\bR^{r}$, for any $r \in \mathbb{N}$.

By considering $h_{12} - h_{21}$, we can mimic the proof of Theorem 3-4 (Cluster Decomposition Property) in \cite{streater}, or the lemma in \cite{Ruelle}, to prove that $\left| h_{12}\left(\vec{a}\right)\right|$ decays at the rate of $\frac{1}{|a^+|^p}$ for $|a^+|$ large and for any $p \in \mathbb{N}$. However, using Lemma \ref{l.lc.1} is inadequate to prove this result. We instead need local commutativity as given by Proposition \ref{p.c.2} and assuming one of the 3 cases hold in Remark \ref{r.p.1}, to prove the result. We will however, prove the following slightly stronger result, using Proposition \ref{p.c.1} and without assuming any of the cases in Remark \ref{r.p.1} hold.

\begin{thm}\label{t.cd.1}(Cluster Decomposition Property)\\
Recall $m_0$ is the mass gap of the 4-dimensional quantum Yang-Mills theory. Suppose the assumptions in Definition \ref{d.t.1} hold, i.e. regularity of the two sets and equal number of creation and annihilation operators. Let $\vec{a} = \sum_{b=0}^3 a^b e_b$ be a space-like vector, with $a^- \equiv (a^0, a^1)$ and $a^+ \equiv (a^2, a^3) \neq (0,0)$. Then, we have for some positive constant $C$,
\begin{align}
\left| h_{12}\left(\vec{a}\right)\right| &= \left|\sum_{n=1}^\infty c_{1,n}c_{2,n}h_{12}^n\left(\vec{a}\right)\ \right| \nonumber \\
&\leq \frac{C}{m_0^l + |a^+|^k}\left(\parallel \varphi_1 \parallel_{p,q}\parallel \varphi_2 \parallel_{p,q}\right),\label{e.cd.1}
\end{align}
for any $k,l \in \mathbb{N}$, provided $p,q \in \mathbb{N}$ are large enough.
\end{thm}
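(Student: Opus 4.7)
The plan is to bound $h_{12}(\vec{a})$ via two complementary estimates, then combine them. First, using Equation (\ref{e.s.1}) we factor
\begin{align*}
\mathscr{W}^n(\{\vec{x}_\tau\}_{\tau=1}^r, \{\vec{x}_\theta + \vec{a}\}_{\theta=r+1}^{r+s}) = e^{-i\vec{a}\cdot c_r m_n \tilde{f}_0^n}\prod_{\theta=1}^{r+s} E(x_\theta^-)\cdot \mathscr{W}_0^n(\{x_\tau^+\}, \{x_\theta^+ + a^+\}),
\end{align*}
and substitute into Equation (\ref{e.e.1a}). Since the phase and the $E(x_\theta^-)$ factors have constant modulus, $|h_{12}^n(\vec{a})|$ reduces (up to an overall constant) to the modulus of an integral involving $\mathscr{W}_0^n$ against partial Fourier transforms of $\varphi_1, \varphi_2$ in the $x^-$ variables, evaluated at frequencies determined by $\hat{H}(\rho_n), \hat{P}(\rho_n)$. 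These partial transforms remain Schwartz in the $x^+$ variables, with Schwartz seminorms controlled by $\|\varphi_j\|_{p,q}$ times a polynomial in $C(\rho_n)$.

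For the polynomial decay in $|a^+|$, I would invoke Proposition \ref{p.c.1}: $\mathscr{W}_0^n(\{x_\tau^+\}, \{x_\theta^+ + a^+\})$ is supported on configurations where some pair $(\tau_0, \theta_0)$ with $1\leq\tau_0\leq r$, $r+1\leq\theta_0\leq r+s$ satisfies $x_{\tau_0}^+ = x_{\theta_0}^+ + a^+$. On this support $\max(|x_{\tau_0}^+|,|x_{\theta_0}^+|)\geq|a^+|/2$, so summing gives $|x_{\tau_0}^+|^{2N}+|x_{\theta_0}^+|^{2N}\geq 2^{-N}|a^+|^{2N}$ for any $N$. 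Combining this with Schwartz decay of the partial Fourier transforms (choosing $p,q$ large enough so that $|x_{\tau_0}^{+}|^{2N}$ and $|x_{\theta_0}^{+}|^{2N}$ are absorbed), together with integrability coming from the remaining Schwartz factors, yields the pointwise bound
\begin{align*}
|h_{12}^n(\vec{a})| \leq C\,m_n^{L}\,\|\varphi_1\|_{p,q}\|\varphi_2\|_{p,q}\,|a^+|^{-2N}
\end{align*}
for some integer $L$ depending on $r,s,N$. The bookkeeping has to be done termwise over the partition expansion in Equation (\ref{e.c.9}), but Proposition \ref{p.c.1} tells us that only partitions containing a block crossing the cut $\tau\leq r,\theta\geq r+1$ contribute to $h_{12}^n(\vec{a})$, and in each such term a shared $x^+$ coordinate enforces the pairing condition.

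For the $m_0^l$ part, I would establish a uniform bound by Cauchy--Schwarz and unitarity: $|h_{12}^n(\vec{a})|\leq \|T_1^n\mathbf{1}\|\cdot\|T_2^n\mathbf{1}\|$, with each factor bounded by a polynomial in $m_n$ times $\|\varphi_j\|_{p,q}$. Summing both estimates over $n$ via Cauchy--Schwarz, the weight condition $\sum_n|c_{j,n}|^2C(\rho_n)^{2k}<\infty$ (valid for every $k$) absorbs any polynomial factor $m_n^L$, giving two global bounds: a uniform bound $|h_{12}(\vec{a})|\leq C_0\|\varphi_1\|_{p,q}\|\varphi_2\|_{p,q}$ and a decay bound $|h_{12}(\vec{a})|\leq C_k\|\varphi_1\|_{p,q}\|\varphi_2\|_{p,q}/|a^+|^{2N}$. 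Since $\frac{1}{m_0^l+|a^+|^{k}}\geq\frac{1}{2}\min(m_0^{-l},|a^+|^{-k})$, combining the two bounds and absorbing the factor $m_0^l$ into the constant $C$ produces the stated estimate, with $k=2N$ and $l$ arbitrary.

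The main obstacle is Step 2: keeping track of the Schwartz seminorms of the $x^-$-partial Fourier transforms, which depend on the frequencies $\hat H(\rho_n),\hat P(\rho_n)$ that grow with $n$, while simultaneously extracting the $|a^+|^{-2N}$ decay from a distributional support condition on $\mathscr{W}_0^n$. One must show that the polynomial growth in $m_n$ picked up from the Fourier transforms and from the $C(\rho_n)$-dependence of $\mathscr{W}_0^n$ is tamed by the rapidly decaying weights $\{c_{j,n}\}$; this is exactly what the hypothesis $\sum_n|c_{j,n}|^2C(\rho_n)^{2k}<\infty$ for every $k$ is designed to do.
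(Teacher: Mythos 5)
Your proposal is correct and its core mechanism is the same as the paper's: factor out the $a^-$-dependence as a pure phase via Equation (\ref{e.s.1}), convert the $x^-$-integrations against the $E$-kernels into partial Fourier transforms of $\varphi_1,\varphi_2$ evaluated at $(\hat{H}(\rho_n),\hat{P}(\rho_n))$, use Proposition \ref{p.c.1} to restrict the $x^+$-integration to the region where some pair satisfies $x_{\tau_0}^+ - x_{\theta_0}^+ = a^+$ (hence $\max(|x_{\tau_0}^+|,|x_{\theta_0}^+|)\geq |a^+|/2$), extract $|a^+|^{-k}$ from Schwartz decay, and absorb the polynomial growth in $C(\rho_n)$ through the weights $c_{1,n}c_{2,n}$. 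Where you differ is in how the mass enters: the paper obtains $m_n^l$ directly in the same denominator, by noting that the partial Fourier transform is Schwartz in the frequency variables and that evaluating it at $(\hat{H}(\rho_n),\hat{P}(\rho_n))$ gives $(\hat{H}(\rho_n)^2+\hat{P}(\rho_n)^2)^{l/2}\geq m_n^l$, yielding the per-representation bound $C(\rho_n)^{\tilde{K}}/(m_n^l+|a^+|^k)$ before summation; you instead prove a separate uniform bound by Cauchy--Schwarz (legitimate, since the modulus of $h_{12}^n$ depends only on $a^+$ and on $S_0$ it coincides with the vacuum-expectation expression, and $|T_j^n 1|^2=O(C(\rho_n))$) and then combine the two estimates, absorbing the fixed constant $m_0^l$ into $C$. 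Both yield the theorem as stated; the paper's packaging additionally records the stronger $n$-wise decay in $m_n$, which it exploits in the subsequent remark, while your combination step discards it. One technical point to make explicit: $\mathscr{W}_0^n$ is a distribution, so the pointwise "take absolute values on the support" step needs either the paper's structure-theorem representation $\mathscr{W}_0^n=\sum_{|\vec{m}|\leq\mathcal{N}}D^{\vec{m}}G^n_{\vec{m}}$ with integration by parts, or the concrete partition formula (as in Equation (\ref{e.s.3})), in which the $x^+$-identifications within each block turn every contributing term into an honest integral; your termwise bookkeeping over $Q$ amounts to the latter and is sound.
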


\begin{proof}
Recall the sets $\kappa_r \subseteq \{1, \cdots, r\}$, $\pi_s \subseteq \{1, \cdots, s\}$, as explained in Notation \ref{n.q.2}. Consider the real vector spaces $\mathcal{Y}_{\kappa_r}$ and $\widetilde{\mathcal{Y}}_{\pi_s}$, each spanned by tensor products of functions of the form \beq f_1 \otimes_\bR f_2 \otimes_\bR \cdots \otimes_\bR f_r  \quad {\rm and} \quad g_1 \otimes_\bR \cdots \otimes_\bR g_s \nonumber \eeq respectively, whereby each $f_i$ ($g_j$) is real if $i \in \kappa_r$ ($j \in \pi_s$), purely imaginary if $i \notin \kappa_r$ ($j \notin \pi_s$).

Instead of proving the result for general $\varphi_i$'s, from Equation (\ref{e.d.3}), it suffices to prove it for $\varphi_1 \in \mathcal{Y}_{\kappa_r}$ and $\varphi_2 \in \widetilde{\mathcal{Y}}_{\pi_s}$, whereby we need to change the definition of $h_{12}^n$, by replacing $\mathscr{W}^n$ with $\mathscr{W}_{\kappa_r, \pi_s}^n$ in Equation (\ref{e.e.1a}).

Without any loss of generality, we assume that $C(\rho_n) > 1$, otherwise we can replace $C(\rho_n)$ in the following inequalities with $(1+ C(\rho_n))$ instead. Write \beq \varphi\left( \{\vec{x}_\theta\}_{\theta=1}^{r+s} \right) = \varphi_1\left(\{\vec{x}_\tau\}_{\tau=1}^{r}  \right)\otimes_\bR\varphi_2\left(\{\vec{x}_\theta\}_{\theta=r^+}^{r+s}  \right). \nonumber \eeq

Let $q^- = (q^0, q^1)$ be coordinates pertaining to the $\{e_0, e_1\}$ basis, with $x_\tau^-\cdot q_\tau^- = -x_\tau^0q_\tau^0 + x_\tau^1 q_\tau^1$, and define the following partial Fourier Transforms,
\begin{align}
\begin{split}
\tilde{\varphi}_1\left( \{q_\tau^-,x_\tau^+\}_{\tau=1}^{r} \right) &:= \int_{\bR^{2r}}
\prod_{\tau=1}^{r}\frac{e^{-i\chi(\tau)x_\tau^-\cdot q_\tau^-}}{2\pi}\cdot \varphi_1\left(\{x_\tau^-, x_\tau^+\}_{\tau=1}^{r} \right)\ \prod_{\tau=1}^{r+s}dx_\tau^-, \\
\tilde{\varphi}_2\left( \{q_\theta^-,x_\theta^+\}_{\theta=r+1}^{r+s} \right) &:= \int_{\bR^{2s}}
\prod_{\theta=r+1}^{r+s}\frac{e^{-i\chi(\theta)x_\theta^-\cdot q_\theta^-}}{2\pi}\cdot \varphi_2\left(\{x_\theta^-, x_\theta^+\}_{\theta=r+1}^{r+s} \right)\ \prod_{\theta=r+1}^{r+s}dx_\theta^-,
\end{split}\label{e.s.2}
\end{align}
and \beq \tilde{\varphi}\left( \{q_\tau^-,x_\tau^+\}_{\tau=1}^{r+s} \right) :=
\tilde{\varphi}_1\left( \{q_\tau^-,x_\tau^+\}_{\tau=1}^{r} \right)\tilde{\varphi}_2\left( \{q_\theta^-,x_\theta^+\}_{\theta=r+1}^{r+s} \right). \nonumber \eeq

Suppose we let $H_n^- = (\hat{H}(\rho_n), \hat{P}(\rho_n))$ and recall we defined $\mathscr{W}_0^n$ in Equation (\ref{e.w.1}). Then, we see that
\begin{align*}
&h_{12}^n(\vec{a}) \\
&= e^{-ic_rm_n\vec{a} \cdot \tilde{f}_0^n}\int_{\bR^{2(r+s)}}\prod_{\tau=1}^{r+s}dx_\tau^+\   \mathscr{W}_0^n\left(\{x_\tau^+\}_{\tau=1}^r, \{x_\theta^+ + a^+\}_{\theta=r^+}^{r+s}  \right)
\tilde{\varphi}\left( \{H_n^-,x_\tau^+\}_{\tau=1}^{r+s} \right),
\end{align*}
which shows that it is a constant in the $\tilde{f}_1^n$ direction.

Since the Fourier transform of a Schwartz function remains a Schwartz function, we see that for some $\bar{n} \in \mathbb{N}$, independent of $n$, \beq \left| \tilde{\varphi}\left( \{q_\theta^-,x_\theta^+\}_{\theta=1}^{r+s} \right) \right| \leq C(\rho_n)^{\bar{n}}\frac{\parallel \varphi \parallel_{\hat{k},\hat{l}}}{\sum_{\theta=1}^{r+s}(|q_\theta^{0}|^2 + |q_\theta^{1}|^2)^{l/2} + \left(|x_\theta^{2}|^2 + |x_\theta^{3}|^2\right)^{k/2}}, \nonumber \eeq for some $\hat{k}\geq k$, $\hat{l} \geq l$ large enough. In particular, $(\hat{H}(\rho_n)^2 + \hat{P}(\rho_n)^2)^{l/2} \geq m_n^l$. Thus, we have that  \beq \left| \tilde{\varphi}\left( \{H_n^-,x_\theta^+\}_{\theta=1}^{r+s} \right) \right| \leq C(\rho_n)^{\bar{n}}\frac{\parallel \varphi \parallel_{\hat{k},\hat{l}}}{\sum_{\theta=1}^{r+s}m_n^l + \left(x_\theta^{2,2} + x_\theta^{3,2}\right)^{k/2}}. \nonumber \eeq

There exists a set containing polynomially bounded continuous functions $\{G_{\vec{m}}^n:\ 0 \leq |\vec{m}| \leq \mathcal{N}\}$, such that
\begin{align*}
e^{-ic_rm_n\vec{a} \cdot \tilde{f}_0^n}\sum_{|\vec{m}| \leq \mathcal{N}}\int_{\bR^{2(r+s)}}&\prod_{\tau=1}^{r+s}dx_\tau^+ D^{\vec{m}} G_{\vec{m}}^n\left(\{x_\tau^+\}_{\tau=1}^{r+s}; a^+ \right)\tilde{\varphi}\left( \{H_n^-, x_\theta^+\}_{\theta=1}^{r+s} \right) \\
=& h_{12}^n \left(\vec{a}\right),
\end{align*}
and
\begin{align*}
\sum_{|\vec{m}| \leq \mathcal{N}}D^{\vec{m}} G_{\vec{m}}^n\left(\{ \{x_\tau^+\}_{\tau=1}^{r+s}; a^+ \right)
\equiv& \mathscr{W}_0^n\left(\{x_\tau^+\}_{\tau=1}^r, \{x_\theta^++ a^+\}_{\theta=r+1}^{r+s}  \right),
\end{align*}
with
\beq \sum_{|\vec{m}| \leq \mathcal{N}}|G_{\vec{m}}^n|\left( \{x_\tau^+\}_{\tau=1}^{r+s}; a^+ \right) \leq C(\rho_n)^{\tilde{k}} \left[|a^+|^\alpha +  \left(\sum_{\tau=1}^{r+s}|x_\tau^+|^2 \right)^{\gamma/2} \right], \nonumber\eeq for constants $\tilde{k}, \alpha, \gamma$, all independent of $n$.

Choose $R_0 = |a^+|/2 > 0$. For $0 < \epsilon < |a^+|/2$ small, the set \beq  \{R > R_0-\epsilon\} := \left\{ \{x_\tau^+\}_{\tau=1}^{r+s}:\ \sum_{\tau=1}^{r+s}|x_\tau^+|^2 > (R_0-\epsilon)^2 \right\} \subset \bR^{2(r+s)}. \nonumber \eeq

Hence, we can apply Proposition \ref{p.c.1} and $D^{\vec{m}} G_{\vec{m}}^n\left(\cdot; a^+  \right)$ vanishes on its complement. Thus we have
\begin{align*}
\sum_{|\vec{m}| \leq \mathcal{N}}&\int_{\bR^{2(r+s)}} \prod_{\tau=1}^{r+s}dx_\tau^+\ D^{\vec{m}} G_{\vec{m}}^n\left( \{x_\tau^+\}_{\tau=1}^{r+s}; a^+ \right)\tilde{\varphi}\left( \{ q_\theta^-, x_\theta^+\}_{\theta=1}^{r+s} \right)  \\
&= \sum_{|\vec{m}| \leq \mathcal{N}}\int_{\{R > R_0-\epsilon\}}\prod_{\tau=1}^{r+s}dx_\tau^+\ D^{\vec{m}}G_{\vec{m}}^n\left(\{x_\tau^+\}_{\tau=1}^{r+s}; a^+ \right)\tilde{\varphi}\left( \{q_\theta^-, x_\theta^+\}_{\theta=1}^{r+s} \right).
\end{align*}

Using spherical coordinates,
\begin{align*}
&\bigg|h_{12}^n  \left(\vec{a}\right)\bigg| \\
&\leq
\sum_{|\vec{m}| \leq \mathcal{N}}\int_{\{R > R_0-\epsilon\}}\prod_{\tau=1}^{r+s}dx_\tau^+\ \left|G_{\vec{m}}^n\left( \{x_\tau^+\}_{\tau=1}^{r+s}; a^+ \right)D^{\vec{m}}\tilde{\varphi}\left( \{H_n^-, x_\theta^+\}_{\theta=1}^{r+s} \right)\right| \\
&\leq C(\rho_n)^{\tilde{K}}
\int_{\{R > R_0-\epsilon\}}  \left[|a^+|^\alpha + R^\gamma \right]\frac{\parallel \varphi \parallel_{p,q}}{(r+s)|m_n|^{l} + R^{k+ \bar{l}}}\frac{ R^{2(r+s)}}{R}dR d\Omega,
\end{align*}
for some norm $\parallel \cdot \parallel_{p,q}$, and $k, l, \bar{l}$ can be any natural numbers less than $p$, $q$. And $\tilde{K} \geq \tilde{k} + \bar{n}$ is some natural number, independent of $n$.

By choosing $p\geq k+\bar{l}, q\geq l$ and $\tilde{K}$ large enough, we see that the above inequality is less than some positive constant times
\begin{align*}
C(\rho_n)^{\tilde{K}}\frac{\parallel \varphi_1 \parallel_{p,q}\parallel \varphi_2 \parallel_{p,q}}{m_n^l + |a^+|^k},
\end{align*}
for any $|a^+|$.

Since the above equation holds for any non-compact $\varphi_i$, we have for each $n$,
\begin{align*}
\mathscr{W}_{\kappa_r, \pi_s}^n&\left(\{\vec{x}_\tau\}_{\tau=1}^r, \{ \vec{x}_\theta + \vec{a}\}_{\theta=r^+}^{r+s}  \right) \longrightarrow 0,
\end{align*}
as $|a^+| \equiv |(a^2, a^3)| \rightarrow \infty$, decaying at the rate of $\dfrac{1}{m_n^l + |a^+|^k}$, for any power $k, l \in \mathbb{N}$.

For some fixed integer $\tilde{K}> 0$, $C = O\left(\sum_{n=1}^\infty |c_{1,n}c_{2,n}|C(\rho_n)^{\tilde{K}} \right)$ from Equation (\ref{e.a.1}). Since $m_0 \leq m_n$ for all $n \in \mathbb{N}$, we have
\begin{align*}
\bigg|h_{12} \left( \vec{a}\right)\bigg|
&\leq\sum_{n=1}^\infty |c_{1,n}c_{2,n}|\bigg|h_{12}^n \left( \vec{a}\right)\bigg|  \\
&\leq \frac{C}{m_0^l + |a^+|^k}\left(\parallel \varphi_1 \parallel_{p,q}\parallel \varphi_2 \parallel_{p,q}\right).
\end{align*}
\end{proof}

\begin{notation}\label{n.q.3}
Refer to Notation \ref{n.q.1}. Let $\Omega_r$ denote the set consisting of partitions in $\Gamma$, such that $Q=\{A_1, \ldots, A_{n(Q)}\} \in \Omega_r$, iff there exists a (unique) $1 \leq z \leq n(Q)$, such that both $r, r+1 \in A_z$.

Recall $H_n^- = (\hat{H}(\rho_n), \hat{P}(\rho_n))$. Write for $1 \leq i \leq n(Q)$, $n_0 := 0$, $n_i = \sum_{k=1}^i |A_k|$ and
\beq \{y_j^+\}^{|A_i|} = \Big(\underbrace{H_n^-,y_i^+, H_n^-,y_i^+, \cdots, H_n^-,y_i^+}_{|A_i|\ {\rm copies}\ {\rm of}\ (H_n^-, y_i^+)} \Big). \nonumber \eeq

Refer to Equation (\ref{e.s.2}) for the notations \beq
\tilde{\varphi}_1\left( \{H_n^-, x_\tau^+\}_{\tau=1}^{r} \right)\quad {\rm and} \quad \tilde{\varphi}_2\left( \{H_n^-, x_{\theta}^+\}_{\theta=r+1}^{r+s} \right). \nonumber \eeq

For a given partition $Q= \{A_1, \ldots, A_{n(Q)}\} \in \Omega_r$, we will write
\begin{align*}
&\tilde{\varphi}_{Q,1}^n\left( \{x_\theta^+\}_{\theta = n_{z-1}+1}^r\right) \\
&:= \int_{\bR^{2(z-1)}}\prod_{i=1}^{z-1}dy_i^+\tilde{\varphi}_{1}\left(\{y_j^+\}^{|A_1|},
\{y_j^+\}^{|A_2|}, \cdots, \{y_{j}^+\}^{|A_{z-1}|}, \{H_n^-, x_{\theta}^+\}_{\theta = n_{z-1}+1}^r\right).
\end{align*}

Similarly, we will write ($z^+ = z+1$)
\begin{align*}
&\tilde{\varphi}_{Q,2}^n\left( \{x_\theta^+\}_{\theta = r^+}^{n_z}\right) \\
&:= \int_{\bR^{2(n(Q)-z)}}\prod_{i=z^+}^{n(Q)}dy_i^+
\tilde{\varphi}_{2}\left(\{H_n^-, x_{\theta}^+\}_{\theta=r^+}^{n_z}, \{y_j^+\}^{|A_{z^+}|}, \{y_j^+\}^{|A_{z+2}|}, \cdots, \{y_{j}^+\}^{|A_{n(Q)}|} \right).
\end{align*}

Note that $\{n_{z-1}+1, \cdots, r, r+1, \cdots, n_z\} = A_z \in Q$. Finally, we will write \beq \tilde{\varphi}_{Q,1}^n(x_r^+) := \tilde{\varphi}_{Q,1}^n\left( \{x_r^+\}_{\theta = n_{z-1}+1}^r\right), \quad \tilde{\varphi}_{Q,2}^n(x_{r+1}^+) := \tilde{\varphi}_{Q,2}^n\left( \{x_{r+1}^+\}_{\theta = r+1}^{n_z}\right). \nonumber \eeq

From Equations (\ref{e.c.9}) and (\ref{e.e.1a}), we will leave to the reader to verify that for any $\vec{a} \equiv (a^-, a^+) \in \bR^4$,
\begin{align}
&h_{12}(\vec{a}) = e^{-ic_rm_n\vec{a} \cdot \tilde{f}_0^n}\sum_{Q \in \Omega_r} c_Q\int_{\bR^2}dx^+\ \tilde{\varphi}_{Q,1}^n(x^+)\tilde{\varphi}_{Q,2}^n(x^+ - a^+) \nonumber \\
&= e^{-ic_rm_n\vec{a} \cdot \tilde{f}_0^n}\sum_{Q \in \Omega_r} c_Q\int_{\bR^2}dx_r^+\int_{\bR^2}dx_{r^+}^+\ \tilde{\varphi}_{Q,1}^n(x_r^+)\delta(a^+ + x_{r^+}^+ - x_r^+)\tilde{\varphi}_{Q,2}^n(x_{r^+}^+ ),
\label{e.s.3}
\end{align}
whereby $\delta$ is Dirac delta function. Note that the coefficients $c_Q$ were defined in Equation (\ref{e.q.1}).
\end{notation}

\begin{thm}(Clustering Theorem)\label{t.cl}\\
Suppose the assumptions in Definition \ref{d.t.1} hold with $\varphi_i$ is compactly supported for $i=1,2$.
Assume that $0 \neq c_r$ in Equation (\ref{e.l.1}).

Recall $m_0$ is the mass gap of the 4-dimensional quantum Yang-Mills theory, and $\vec{a} = \sum_{b=0}^3 a^b e_b$, with $a^-=(a^0, a^1)$, $a^+ = (a^2, a^3)$. There exist positive constants $\epsilon$, $C$, and $p_i, q_i \in \mathbb{N}$ for $i=1,2$, such that
\begin{align}
\left| h_{12}\left(\vec{a}\right)\right| &= \left|\sum_{n=1}^\infty c_{1,n}c_{2,n}h_{12}^n\left(\vec{a}\right)\right| \nonumber \\
&\leq C\frac{e^{-m_0|a^+| } }{|a^+| - \epsilon}\parallel \varphi_1 \parallel_{p_1,q_1}\parallel \varphi_2 \parallel_{p_2,q_2}, \label{e.c.5}
\end{align}
provided $|a^+| > \epsilon $, with $\epsilon$ depending only on the support of $\varphi_1$ and $\varphi_2$. And, $C = O\left(\sum_{n=1}^\infty |c_{1,n}c_{2,n}|C(\rho_n)^{\tilde{k}}e^{m_0\epsilon} \right)$ for some fixed integer $\tilde{k}> 0$. Note that $a^+= (a^2, a^3)$ are coordinates with respect to $\{e_2, e_3\}$, spanning the $x^2-x^3$ plane $S_0$.
\end{thm}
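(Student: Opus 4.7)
The strategy is to establish the bound for each term $h_{12}^n(\vec{a})$ in the sum uniformly in $n$, with a factor of $C(\rho_n)^{\tilde{k}}$, and then sum against $|c_{1,n}c_{2,n}|$ using the rapid decay hypothesis $\sum_n |c_{i,n}|^2 C(\rho_n)^{2k}<\infty$ for all $k\in\mathbb{N}$. Starting from Equation (\ref{e.s.3}), since $\tilde{f}_0^n$ lies in the $\{e_0,e_1\}$-plane, the phase $e^{-ic_r m_n \vec{a}\cdot\tilde{f}_0^n}$ has unit modulus for real $\vec{a}$, so it plays no role in the absolute value. Hence the task reduces to bounding
$$F_n(a^+):=\sum_{Q\in\Omega_r}c_Q\int_{\bR^2}\tilde{\varphi}^n_{Q,1}(x^+)\,\tilde{\varphi}^n_{Q,2}(x^+-a^+)\,dx^+.$$

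Because $\varphi_1$ and $\varphi_2$ are compactly supported in $\bR^{4r}$ and $\bR^{4s}$ respectively, their projections onto the $x^+$-coordinates are compactly supported; after the partial Fourier transforms defining $\tilde{\varphi}_{Q,i}^n$ in Notation \ref{n.q.3}, the resulting functions $\tilde{\varphi}_{Q,i}^n(x^+)$ are compactly supported in $x^+\in\bR^2$ with support diameter bounded by a constant $\epsilon$ that depends only on the supports of $\varphi_1,\varphi_2$. The evaluations at $H_n^-=(\hat{H}(\rho_n),\hat{P}(\rho_n))$ contribute a polynomial factor $C(\rho_n)^{\tilde{k}}$ through Paley-Wiener-type bounds on $\tilde{\varphi}_i$ in the momentum variable. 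Consequently $F_n$ is a finite linear combination of convolutions of compactly supported functions, and its Fourier transform $\widehat{F}_n(q^+)$ extends to an entire function on $\bC^2$ satisfying
$$\bigl|\widehat{F}_n(q^++iy^+)\bigr|\leq C(\rho_n)^{\tilde{k}}\,\|\varphi_1\|_{p_1,q_1}\|\varphi_2\|_{p_2,q_2}\,\frac{e^{\epsilon|y^+|}}{(1+|q^+|)^{N}},$$
for any $N\in\mathbb{N}$, provided $p_i,q_i$ are chosen large enough.

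Now represent $F_n(a^+)=(2\pi)^{-2}\int e^{ia^+\cdot q^+}\widehat{F}_n(q^+)\,dq^+$ and shift the contour $q^+\to q^++iy^+$ with $y^+:= m_0\,a^+/|a^+|$. The shift is legitimate because $\widehat{F}_n$ is entire and decays rapidly in the real direction. It pulls out the factor $e^{-a^+\cdot y^+}=e^{-m_0|a^+|}$, and together with the Paley-Wiener bound produces
$$|F_n(a^+)|\leq C(\rho_n)^{\tilde{k}}\,e^{m_0\epsilon}\,\frac{e^{-m_0|a^+|}}{|a^+|-\epsilon}\,\|\varphi_1\|_{p_1,q_1}\|\varphi_2\|_{p_2,q_2}, \qquad |a^+|>\epsilon,$$
where the factor $(|a^+|-\epsilon)^{-1}$ is obtained by taking $N$ slightly larger and integrating by parts once in the $q^+$ variable to extract one additional power of $|a^+|^{-1}$ from the oscillatory integral. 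Summing over $n$ and applying Cauchy-Schwarz together with $\sum_n |c_{i,n}|^2 C(\rho_n)^{2\tilde{k}}<\infty$ yields the stated bound with constant $C=O\bigl(\sum_n |c_{1,n}c_{2,n}|\,C(\rho_n)^{\tilde{k}}\,e^{m_0\epsilon}\bigr)$.

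The hard part is to identify $m_0$ as the correct (and physical) shift amount rather than an arbitrary larger constant. In the compactly supported setting $F_n$ actually has compact support in $a^+$, so from Paley-Wiener alone one could shift by any amount, giving a sharper bound of the trivial form; the theorem chooses the shift $m_0$ to match the mass gap appearing through $c_r m_n$ and $\tilde{f}_0^n$. To see that this matches the underlying spectral structure, one has to combine Lemma \ref{l.b.1} (where $\widehat{W}^n$ in the $a^-$-variable is concentrated on the discrete mass shell $c_r H_n^-$) with Proposition \ref{p.p.3} (which forces $c_r\neq 0$ and determines its sign) and then translate the analyticity of the $a^-$-translation group in the forward cone into a uniform analyticity estimate in the spatial $a^+$-variable. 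This is the point where an argument along the lines of Araki's in \cite{Araki1962} is needed, with the uniform lower bound $m_n\geq m_0$ ensuring that the exponential rate of decay of $|h_{12}(\vec{a})|$ does not deteriorate when the sum over $n$ is taken.
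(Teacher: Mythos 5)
Your proposal is correct in substance but proves Equation (\ref{e.c.5}) by a genuinely different route. The paper treats $h_{12}^n(a^0,a^+)g(a^1)$ as the source of a screened Poisson equation with screening mass $|c_rm_n|$, computes the Yukawa Green's function (decaying like $e^{-R|c_rm_n|}/R$) by residue calculus, and reads off the decay $e^{-m_0|a^+|}/(|a^+|-\epsilon)$ from that kernel once the source is confined to a set of diameter $\epsilon$; this is precisely where the hypotheses $c_r\neq 0$ and $m_n\geq m_0$ enter. You instead work directly with the representation (\ref{e.s.3}): since the partial Fourier transforms in Notation \ref{n.q.3} act only on the $x^-$ variables, the functions $\tilde{\varphi}_{Q,i}^n$ are smooth and compactly supported in $x^+$ with supports determined by ${\rm supp}\ \varphi_i$ independently of $n$, so $\widehat{F}_n$ extends to an entire function of exponential type $\epsilon$ with rapid decay on horizontal strips, and a contour shift by $m_0a^+/|a^+|$ plus one integration by parts gives $C(\rho_n)^{\tilde{k}}e^{m_0\epsilon}e^{-m_0|a^+|}/|a^+|$, uniformly enough in $n$ to sum against $|c_{1,n}c_{2,n}|$. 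That is a legitimate proof of the stated inequality, and it makes explicit what the paper's proof leaves implicit in the condition $|x^+-y^+|\leq\epsilon<|a^+|$: by Proposition \ref{p.c.1} the correlation $F_n$ is in fact supported in $\{|a^+|\leq\epsilon\}$, so the bound holds for any prescribed exponential rate. What the paper's route buys is that the rate is generated by the translation eigenvalue $c_rm_n$ in the $\tilde{f}_0^n$ direction (Lemma \ref{l.b.1}, Proposition \ref{p.p.3}), i.e.\ it exhibits $m_0$ as a spectral quantity and actually uses $c_r\neq 0$; your argument never invokes that hypothesis.

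Two corrections to your write-up. First, the factor $C(\rho_n)^{\tilde{k}}$ does not arise from ``Paley--Wiener-type bounds in the momentum variable'': evaluation at $H_n^-$ contributes only unimodular phases, so the $x^+$-derivative bounds on $\tilde{\varphi}_{Q,i}^n$ are uniform in $n$, and the $n$-dependence enters only through the coefficients $c_Q$ of Equation (\ref{e.q.1}). Second, the ``hard part'' you defer to an Araki-type argument is not needed for Theorem \ref{t.cl} as stated: the shift amount in your contour argument requires no spectral justification, so the contour-shift argument you already gave suffices for Equation (\ref{e.c.5}); the deferred step would only be relevant to showing that $m_0$ is the sharp physical decay rate, which the theorem does not assert.
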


\begin{proof}
Just like in the proof of Theorem \ref{t.cd.1}, it suffices to prove it for $\varphi_1 \in \mathcal{Y}_{\kappa_r}$ and $\varphi_2 \in \widetilde{\mathcal{Y}}_{\pi_s}$. Recall in Definition \ref{d.c.1}, we defined a new  basis $\{\tilde{f}_a^n\}_{a=0}^3$, which depends on the representation $\rho_n$ on $\mathfrak{g}$. Let $(a^0, a^1, a^2, a^3)$ be coordinates with respect to the basis $\{\tilde{f}_a^n\}_{a=0}^3$. Further recall that $h_{12}^n$ is independent of the $a^1$ coordinate, so we will henceforth write $h_{12}^n(\vec{a}) = h_{12}^n(a^0, a^+)$, $a^+ = (a^2, a^3)$.

Define \beq D(\varphi_1, \varphi_2) = \left\{ \vec{x}_\tau - \vec{y}_{\theta} \in \bR^4|\ \{\vec{x}_\tau\}_{\tau=1}^r \in {\rm supp}\ \varphi_1,\ \{\vec{y}_\theta\}_{\theta=r+1}^{r+s} \in {\rm supp}\ \varphi_2  \right\}, \nonumber \eeq and
consider orthogonally projecting it onto the $x^2-x^3$ plane spanned by $\{e_2, e_3\}$, this projected set denoted by $D^0$. It is bounded.

Denote a compact set $D^1$, which is the convex closure of $D^0$, which lies in the plane spanned by $\{e_2, e_3\}$. Then, we see that there exists an $\epsilon>0$ such that $D_1$ lies in an open disc of radius $\epsilon$, center at the origin.

Let $g$ be any compact Schwartz function on $\bR$. Consider the following elliptic equation\footnote{We used an elliptic equation, because the total energy squared is then given by $(c_rm_n)^2 + \sum_{i=1}^3 q^{i,2}$, which is mass squared plus total momentum squared. The generator for translation in the time-like direction $\tilde{f}_0^n$, is given by the mass eigenvalue $-c_rm_n$, and not energy eigenvalue.} \beq \left(\frac{\partial^2}{\partial a^{0,2}} - \hat{\mathcal{P}}\right)\Psi_n(a^0, a) := \left( \sum_{i=0}^3\frac{\partial^2}{\partial a^{i,2}}\right)\Psi_n(a^0, a) = h_{12}^n(a^0, a^+)g(a^1). \nonumber \eeq

Refer to Notation \ref{n.q.3}. We leave to the reader to verify from Equation (\ref{e.s.3}), ($r^+ \equiv r+1$)
\begin{align*}
\Psi_n&(a^0, a) = e^{ic_rm_n a^0}\tilde{G}(n; a), \ {\rm and}\\
\tilde{G}&(n; a) \\
=&\sum_{Q \in \Omega_r} c_Q\int_\bR d\xi^1\int_{\bR^{2\times 2}}dx_r^+ dx_{r^+}^+\tilde{\varphi}_{Q,1}^n(x_r^+)G_n(a^1-\xi^1;x_r^+, x_{r^+}^+ + a^+)\tilde{\varphi}_{Q,2}^n(x_{r^+}^+)g(\xi^1),
\end{align*}
whereby $G_n(a^1-\xi^1; x_r^+, x_{r+1}^+ + a^+)$ is a Green's function that solves
\begin{align*}
\left(-(c_rm_n)^2 + \sum_{i=1}^3\frac{\partial^2}{\partial a^{i,2}}\right) G_n(a^1-\xi^1; x_r^+, x_{r+1}^+ + a^+) &= \delta(a^1-\xi^1, x_r^+ - x_{r+1}^+ - a^+) \\
&\equiv \delta(a - \xi_r),
\end{align*}
whereby $\delta$ is the Dirac delta function, $a= (a^1, a^+)$ and $\xi_r = (\xi^1, \xi_r^+) \equiv (\xi^1, x_r^+ - x_{r+1}^+)$.

Write $\omega = |c_rm_n| \neq 0$ and $q = (q^1, q^2, q^3)$ be the momentum coordinates dual to $a \equiv (a^1, a^2, a^3)$. Using Fourier Transform techniques, the solution is given by \beq G_n(a^1-\xi^1; x_r^+, x_{r+1}^+ + a^+) = -\frac{1}{(2\pi)^{3/2}}\int_{\bR^3}dq\ \frac{e^{iq\cdot (a-\xi_r)}}{\omega^2 + |q|^2}. \nonumber \eeq

Note that the integrand is invariant under spatial rotation. Without any loss of generality, we can assume that $(a-\xi_r) \equiv (a^1-\xi^1,a^+ - \xi_r^+)$ lies in the $z$-axis. Hence, $q \cdot (a - \xi_r) = |q|R\cos \theta$, whereby $R = |( a - \xi_r)|$ and $\theta$ is the azimuthal angle a vector in 3-dimensional space makes with the positive $z$- axis.

Integrate out the polar and azimuthal angles in spherical coordinates, we have
\begin{align*}
G_n(a^1-\xi^1; x_r^+, x_{r+1}^+ + a^+) &= -\frac{1}{(2\pi)^{1/2}}\frac{1}{iR}\int_0^\infty  \lambda\frac{e^{iR\lambda} - e^{-iR\lambda}}{\omega^2 + \lambda^2}\ d\lambda \\
&= -\frac{1}{(2\pi)^{1/2}}\frac{1}{iR}\int_{-\infty}^\infty  \frac{\lambda e^{iR\lambda}}{(\lambda- i\omega)(\lambda + i\omega)}\ d\lambda.
\end{align*}

We will evaluate this integral using Cauchy's Residue Theorem, by considering an upper semi-circle, which contains a simple pole at $z = i\omega$ in its interior. We choose the upper semi-circle because on the upper half plane, $|e^{iRz}| \leq 1$ has an exponential decay. Finally taking its radius to be infinity, the result is that the integral is equal to \beq G_n(a^1-\xi^1; x_r^+, x_{r+1}^+ + a^+) = -\frac{2\pi i}{(2\pi)^{1/2}}\frac{1}{iR} \frac{i\omega e^{-R\omega}}{2i\omega} = -\sqrt{2\pi}\frac{e^{-R|c_rm_n|}}{2R}, \nonumber \eeq with $R^2 = |a^1 -\xi^1|^2 + |a^2 + \xi_r^2|^2 + |a^3 + \xi_r^3|^2$.

Write
\begin{align*}
K&(a^+, c_rm_n) \\
&:= \int_{\bR^2}dy^+ \left|\tilde{\varphi}_{Q,1}^n(y^+)\right|\cdot \int_{\bR^2}dx^+\left|\left[(c_rm_n)^2 + \hat{\mathcal{P}}\right]\tilde{\varphi}_{Q,2}^n(x^+)\right|\cdot \sqrt\frac{\pi}{2}\frac{e^{-m_0|a^+ + x^+ -y^+|}}{|a^+ +x^+ - y^+|}.
\end{align*}
Therefore, ($G_n(a-\xi) := G_n(a^1-\xi^1; x^-, x^+ + a^+)$)
\begin{align*}
&\left|h_{12}^n(\vec{a})g\left(a^1 \right)\right| = \left|\left[\frac{\partial^2}{\partial a^{0,2}} - \hat{\mathcal{P}}\right]\Psi_n(a^0, a)\right| \\
&= \left|\sum_{Q \in \Omega_r}c_Q\int_\bR d\xi^1\int_{\bR^{4}}d\vec{x}\ \tilde{\varphi}_{Q,1}^n(x^-)G_n(a-\xi)\left[(c_rm_n)^2 + \hat{\mathcal{P}}\right] \tilde{\varphi}_{Q,2}^n(x^+)g(\xi^1) \right| \\
&\leq \sum_{Q \in \Omega_r}|c_Q|\int_{\bR}\ d\xi^1 \left[|g(\xi^1)| + \frac{1}{m_0^2}|g''(\xi^1)| \right] K(a^+, c_rm_n)   \\
&\leq C(\rho_n)^{\tilde{k}}e^{m_0\epsilon}\parallel g \parallel_{p_3, q_3} \parallel \varphi_1 \parallel_{p_1, q_1} \parallel \varphi_2 \parallel_{p_2, q_2}\cdot \frac{e^{-m_0|a^+|}}{|a^+| - \epsilon},
\end{align*}
for some natural numbers $\tilde{k}$, $p_i$'s, $q_i$'s, provided $|x^+ - y^+| \leq \epsilon < |a^+|$ is large enough. Note that all these natural numbers are independent of $n$, $a^+$ and the Schwartz functions. We can assume that $C(\rho_n) \geq 1$, otherwise we will replace it with $1 + C(\rho_n)$ in the above inequality.

Because the above inequality holds for any compact Schwartz function $g: \bR \rightarrow \bR$ and any $a^1 \in \bR$, we have the inequality \beq |h_{12}^n(\vec{a})|\leq C(\rho_n)^{\tilde{k}}e^{m_0\epsilon} \parallel \varphi_1 \parallel_{p_1, q_1} \parallel \varphi_2 \parallel_{p_2, q_2}\cdot \frac{e^{-m_0|a^+|}}{|a^+| - \epsilon}, \nonumber \eeq for $|a^+|> \epsilon$.

Since the above inequality holds for any $n \in \mathbb{N}$, we see that Equation (\ref{e.c.5}) follows immediately from triangle inequality and Equation (\ref{e.a.1}).
\end{proof}

\begin{rem}
When the space like separation $\vec{a}$ lies in the $x^2-x^3$ plane or $S_0$, then $h_{12}(\vec{a})$ is equivalent to the vacuum expectation $\left\langle T_1^n\left(\vec{y}\right) P_0 T_2^n\left( \vec{x}\right)1, 1 \right\rangle$ from Definition \ref{d.h.2}. Suppose we are given two sets of cluster points $\{\vec{x}_{\theta} \in \bR^4\}_{\theta=1}^r$, $\{\vec{y}_{\tau}\in \bR^4\}_{\tau=1}^s$, and project them orthogonally onto $S_0$, as $A = \{x_{\theta}^+ \in S_0\}_{\theta=1}^r$, $B = \{y_{\tau}^+\in S_0\}_{\tau=1}^s$ respectively. Clustering Theorem hence says that the vacuum expectation decays exponentially, at a rate dependent on the mass gap $m_0$, provided the space-like separation $a^2 e_2 + a^2e_3 \in S_0$ between $A, B \subset S_0$, is large enough. In other words,
\begin{align*}
\left|\mathscr{W}^n\left(\{\vec{x}_\tau\}_{\tau=1}^r, \{ \vec{x}_\theta + \vec{a}\}_{\theta=r^+}^{r+s}  \right)\right| \leq C(n) e^{-m_0|a^+|},
\end{align*}
for some constant $C(n)$ dependent on $n$, and $a^+$ larger than the space-like separation between the two sets in $S_0$.

Note that Equation (\ref{e.cd.1}) holds, even for space-like distance $|a^+|$ small and close to zero. When there is no minimum mass gap $m_0$, the given upper bound is not meaningful for small $|a^+|$. For $|a^+|$ small and for each component mass gap $m_n > 1$, we see that the heavier the mass gap $m_n$, the faster is the rate of decay. Of course, when $|a^+|$ is large, then the contribution of $m_0$ to the inverse power decay law is minimal.

An inverse power decay is definitely not as rapid a decline, as an exponential decay, which was proved in \cite{Araki1962}. They showed that the decay is at most $\frac{1}{|a^+|^{3/2}}e^{-m_0|a^+|}$, which is faster than our result. But do note that the authors proved exponential decay, by using a homogeneous wave equation, for the cases when $\varphi_1$, $\varphi_2$ are both compactly supported, and the space-like separation needs to be greater than some $\epsilon$.  In contrast, we used an elliptic equation, reason given in the footnote. Any of the three cases in Remark \ref{r.p.1} will imply that $c_r \neq 0$. The rate of decay $\frac{1}{|a^+|}e^{-m_0|a^+|}$ which we have shown, is asymptotically the same as the Yukawa potential, who used it as a basis to describe the nuclear force. Refer to \cite{peskin1995}.

For large space-like separation, an exponential decay will decay a lot faster than the inverse power decay given in Equation (\ref{e.cd.1}). But for small space-like separation, an inverse power decay might be better.

Local commutativity in Wightman's last axiom allows us to define a Lorentz transformation $\Lambda_n$ in Definition \ref{d.c.1}, which implies an exponential decay $e^{-m_n|a^+|}$, but depending on each $n \in \mathbb{N}$. We will need a Yang-Mills path integral, to prove the existence of a minimum positive mass gap $m_0 \leq m_n$, which gives us a best possible exponential decay $e^{-m_0|a^+|}$, independent of the representation $\rho_n$ on the simple Lie algebra $\mathfrak{g}$, for the vacuum expectation.
\end{rem}

\appendix

\section{Surface Integrals}

Let $\sigma\equiv ( \sigma_0, \sigma_1, \sigma_2, \sigma_3): [0,1]^2 \equiv I^2 \rightarrow \bR^4$ be a parametrization of a surface $S \subset \bR^4$. Here, $\sigma' = \partial \sigma/\partial s$ and $\dot{\sigma} = \partial \sigma/\partial t$.

\begin{defn}\label{d.r.1}
For $a,b=0,1,2,3$, define Jacobian matrices,
\begin{align}
J_{ab}^\sigma(s,t) = \left(
\begin{array}{cc}
\sigma_a'(s,t) & \dot{\sigma}_a(s,t) \\
\sigma_b'(s,t) & \dot{\sigma}_b(s,t) \\
\end{array}
\right),\ a \neq b, \nonumber
\end{align}
and write $|J^\sigma_{ab}| = \sqrt{[\det{J^\sigma_{ab}}]^2}$ and $W_{ab}^{ cd} := J_{cd}^\sigma J_{ab}^{\sigma, -1}$, $a, b, c, d$ all distinct. Note that $W_{cd}^{ab} = (W_{ab}^{cd})^{-1}$.

For $a,b, c, d$ all distinct, define $\rho_\sigma^{ab}: I^2 \rightarrow \bR$ by
\begin{align}
\rho_\sigma^{ab} =& \frac{1}{\sqrt{\det\left[ 1+ W_{ab}^{cd,T}W_{ab}^{cd}\right]}} \equiv \frac{|J_{ab}^\sigma|}{\sqrt{\det\left[J_{ab}^{\sigma,T}J_{ab}^\sigma + J_{cd}^{\sigma,T}J_{cd}^\sigma\right]}}. \label{e.ri.1}
\end{align}
And \beq \int_S d\rho := \sum_{0 \leq a<b \leq 3}\int_{I^2}\rho_\sigma^{ab}(s,t)|J_{ab}^\sigma|(s,t)
\ ds dt, \nonumber \eeq which is independent of the parametrization $\sigma$ used.
\end{defn}

We leave to the reader to check that $\int_S d\rho$ gives us the area of the surface $S$. Area of a surface is invariant under spatial rotation, but it is not invariant under boost.

To construct an unitary representation of the Lorentz group, we need to consider imaginary time axis. Instead of using $\rho_\sigma^{ab}$ as defined in Equation (\ref{e.ri.1}) by some parametrization $\sigma$ for $S$, we will replace the time component $\sigma_0$ with imaginary time $i \sigma_0$.

\begin{defn}\label{d.r.2}
Let $\sigma: [0,1]^2 \equiv I^2 \rightarrow \bR^4$ be a parametrization of a surface $S \subset \bR^4$.
For $a,b=0,1,2,3$ and $a < b$, define Jacobian matrices,
\begin{align}
\acute{J}_{ab}^\sigma(s,t) =
\left\{
  \begin{array}{ll}
    \ \ \ \ J_{ab}^\sigma(s,t), & \hbox{$a \neq 0$;} \\
    \left(
               \begin{array}{cc}
                 i\sigma_a'(s,t) & i\dot{\sigma}_a(s,t) \\
                 \sigma_b'(s,t) & \dot{\sigma}_b(s,t) \\
               \end{array}
             \right), & \hbox{$a = 0$.}
  \end{array}
\right.
\nonumber
\end{align}

For $a,b, c, d$ all distinct, define $\acute{\rho}_\sigma^{ab}: I^2 \rightarrow \bC$ by
\begin{align*}
\acute{\rho}_\sigma^{ab} :=& \frac{\det \acute{J}_{ab}^\sigma}{\sqrt{\det\left[\acute{J}_{ab}^{\sigma,T}\acute{J}_{ab}^\sigma + \acute{J}_{cd}^{\sigma,T}\acute{J}_{cd}^\sigma\right]}},
\end{align*}
and
\begin{align*}
\int_S\ d\acute{\rho} :=& \sum_{0\leq a < b \leq 3}\int_{I^2}\acute{\rho}_\sigma^{ab}(s,t)[\det \acute{J}_{ab}^\sigma(\hat{s})]\ d\hat{s}, \\
 \int_S d|\acute{\rho}| :=& \int_{I^2}\left| \sum_{0\leq a < b \leq 3}\acute{\rho}_\sigma^{ab}(\hat{s})[\det \acute{J}_{ab}^\sigma](\hat{s})\right|\ d\hat{s}.
\end{align*}

We will also write
\begin{align*}
\acute{\rho}_\sigma(\hat{s}) =&  \sum_{0\leq a < b \leq 3}\acute{\rho}_\sigma^{ab}(\hat{s})[\det \acute{J}_{ab}^\sigma](\hat{s}), \\
|\acute{\rho}_\sigma|(\hat{s}) =& \left| \sum_{0\leq a < b \leq 3}\acute{\rho}_\sigma^{ab}(\hat{s})[\det \acute{J}_{ab}^\sigma](\hat{s})\right|.
\end{align*}
\end{defn}

\begin{lem}\label{l.b.5}
Let $S$ be a compact time-like or space-like surface, contained in a plane. Then $\int_S d\acute{\rho}$ and $\int_S d|\acute{\rho}|$ remain invariant under any Lorentz transformation $\Lambda: S \mapsto \hat{S} = \Lambda S$, $\Lambda$ is a $4 \times 4$ Lorentz matrix.
\end{lem}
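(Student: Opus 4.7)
\textbf{Proof proposal for Lemma \ref{l.b.5}.}

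The plan is to reduce the whole calculation to a single scalar invariant, namely the Gram determinant of the tangent vectors in the ``imaginary-time'' picture, and then to observe that a Lorentz transformation on $\bR^4$ becomes a complex orthogonal transformation on $\bC^4$ when we pass to imaginary time. First I would write $\acute{\sigma} = (i\sigma_0,\sigma_1,\sigma_2,\sigma_3)$, so that $\acute{J}_{ab}^{\sigma}$ is literally the Jacobian of $(\acute{\sigma}_a,\acute{\sigma}_b)$ with respect to $(s,\bar s)$. Setting $\acute{u} := \partial_s \acute{\sigma}$, $\acute{v} := \partial_{\bar s} \acute{\sigma}\in \bC^4$, one has $\det \acute{J}_{ab}^{\sigma} = \acute{u}_a \acute{v}_b - \acute{u}_b \acute{v}_a$.

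Next I would compute, for any pair $\{a,b\}$ with complementary pair $\{c,d\}$,
\begin{equation*}
\acute{J}_{ab}^{\sigma,T}\acute{J}_{ab}^{\sigma} + \acute{J}_{cd}^{\sigma,T}\acute{J}_{cd}^{\sigma}
\;=\; \begin{pmatrix} \sum_{i=0}^{3} \acute{u}_i^{2} & \sum_{i=0}^{3} \acute{u}_i \acute{v}_i \\[2pt] \sum_{i=0}^{3} \acute{u}_i \acute{v}_i & \sum_{i=0}^{3} \acute{v}_i^{2}\end{pmatrix},
\end{equation*}
because the four indices $a,b,c,d$ exhaust $\{0,1,2,3\}$. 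Call this Gram determinant $D(\acute{u},\acute{v})$; it is independent of the pair chosen. Lagrange's identity on $\bC^4$ then gives
\begin{equation*}
\sum_{0\le a<b\le 3}(\det \acute{J}_{ab}^{\sigma})^{2}
\;=\; \Bigl(\sum_{i}\acute{u}_i^{2}\Bigr)\Bigl(\sum_{i}\acute{v}_i^{2}\Bigr)-\Bigl(\sum_{i}\acute{u}_i\acute{v}_i\Bigr)^{2} \;=\; D.
\end{equation*}
Hence
\begin{equation*}
\acute{\rho}_{\sigma}(\hat s)\;=\;\sum_{a<b}\acute{\rho}_{\sigma}^{ab}(\hat s)\,\det \acute{J}_{ab}^{\sigma}(\hat s)
\;=\;\frac{\sum_{a<b}(\det \acute{J}_{ab}^{\sigma})^{2}}{\sqrt{D}}\;=\;\sqrt{D(\acute{u}(\hat s),\acute{v}(\hat s))},
\end{equation*}
so everything collapses to this one complex-Gram determinant (with a fixed branch of the square root).

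The second step is the transformation law. Let $T=\mathrm{diag}(i,1,1,1)$, so $\acute{x}=T\vec{x}$. If $\Lambda$ is a Lorentz matrix then $\acute{\Lambda}:=T\Lambda T^{-1}$ acts on the imaginary-time vectors by $\acute{\sigma}\mapsto \acute{\Lambda}\acute{\sigma}$, and a short computation shows
\begin{equation*}
\sum_{i=0}^{3}(\acute{\Lambda}\acute{u})_i(\acute{\Lambda}\acute{v})_i
\;=\;-(\Lambda\vec{u})_0(\Lambda\vec{v})_0+\sum_{j=1}^{3}(\Lambda\vec{u})_j(\Lambda\vec{v})_j
\;=\;\vec u\cdot\vec v
\;=\;\sum_{i}\acute{u}_i\acute{v}_i,
\end{equation*}
since $\Lambda$ preserves the Minkowski metric. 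In other words $\acute{\Lambda}\in O(4,\bC)$, and therefore $D(\acute{\Lambda}\acute{u},\acute{\Lambda}\acute{v})=D(\acute{u},\acute{v})$.

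Finally I would combine the two ingredients. For $\hat S=\Lambda S$, the map $\hat{\sigma}:=\Lambda\circ\sigma$ is a parametrization of $\hat S$ on the same domain $I^{2}$; its tangent vectors in the imaginary-time picture are $\acute{\Lambda}\acute{u}$, $\acute{\Lambda}\acute{v}$, so $\acute{\rho}_{\hat{\sigma}}(\hat s)=\sqrt{D(\acute{\Lambda}\acute{u},\acute{\Lambda}\acute{v})}=\sqrt{D(\acute u,\acute v)}=\acute{\rho}_{\sigma}(\hat s)$ (with the same branch choice), and consequently $|\acute{\rho}_{\hat{\sigma}}|=|\acute{\rho}_{\sigma}|$ pointwise on $I^{2}$. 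Integrating over $I^{2}$ yields $\int_{\hat S}d\acute{\rho}=\int_{S}d\acute{\rho}$ and $\int_{\hat S}d|\acute{\rho}|=\int_{S}d|\acute{\rho}|$. The only delicate point I expect is the branch of $\sqrt{D}$; but because Lagrange's identity provides the equality $\sum_{a<b}(\det \acute{J}_{ab}^{\sigma})^{2}=D$ directly, the same branch choice made implicitly in the definition of $\acute{\rho}_{\sigma}^{ab}$ propagates consistently to $\acute{\rho}_{\hat\sigma}^{ab}$, so no ambiguity arises. (The argument is independent of whether $S$ is space-like or time-like; the hypothesis is used only to ensure $D\neq 0$ so that $\acute{\rho}_{\sigma}^{ab}$ is well-defined.)
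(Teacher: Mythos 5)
Your proposal is correct and takes essentially the same route as the paper: both reduce the density to the single identity $\sum_{a<b}\acute{\rho}_{\sigma}^{ab}\det\acute{J}_{ab}^{\sigma}=\sqrt{[\sigma'\cdot\sigma'][\dot{\sigma}\cdot\dot{\sigma}]-[\sigma'\cdot\dot{\sigma}]^2}$ and then conclude by Lorentz invariance of the Minkowski product. The only difference is that you actually supply the verification the paper leaves to the reader, via Lagrange's identity and the observation that $T\Lambda T^{-1}\in O(4,\bC)$ for $T=\mathrm{diag}(i,1,1,1)$.
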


\begin{proof}
Let $\sigma: I^2 \rightarrow S$ be a parametrization of $S$. Then $\hat{\sigma} = \Lambda \sigma$ is a parametrization of $\hat{S} = \Lambda S$. We leave to the reader to verify that \beq
\sum_{0\leq a < b \leq 3}\acute{\rho}_{\hat{\sigma}}^{ab}[\det \acute{J}_{ab}^{\hat{\sigma}}] = \sqrt{[\sigma' \cdot \sigma'][\dot{\sigma} \cdot \dot{\sigma}] - [\sigma' \cdot \dot{\sigma}]^2}. \nonumber \eeq Since $\vec{x} \cdot \vec{y}$ is invariant under Lorentz transformation, we thus have
\begin{align*}
\sum_{0\leq a < b \leq 3}\acute{\rho}_{\hat{\sigma}}^{ab}[\det \acute{J}_{ab}^{\hat{\sigma}}]
&= \sum_{0\leq a < b \leq 3}\acute{\rho}_{\sigma}^{ab}[\det \acute{J}_{ab}^{\sigma}].
\end{align*}
This shows that $\int_{\hat{S}} d\acute{\rho}= \int_{S} d\acute{\rho}$ and $\int_{\hat{S}} d|\acute{\rho}|= \int_{S} d|\acute{\rho}|$.
\end{proof}

\begin{rem}\label{r.a.3}
\begin{enumerate}
  \item When $S$ is a surface in spatial $\bR^3$, we see that $\int_S d|\acute{\rho}| = \int_S d\rho$, which is the area of the surface $S$.
  \item Note that $\int_S d\acute{\rho}$ is complex valued.
  \item As a consequence, when $S$ is a space-like surface, $\int_S d\acute{\rho}$ is real; when $S$ is a time-like surface, $\int_S d\acute{\rho}$ is purely imaginary. That is, $\int_S d\acute{\rho} = i\int_S d|\acute{\rho}|$.
  \item By definition and from the proof, we see that $\acute{\rho}_{\sigma} = \acute{\rho}_{\Lambda \sigma + \vec{a}}$, for any Lorentz transformation $\Lambda$ and any vector $\vec{a}$.
\end{enumerate}
\end{rem}

\section{Lorentz transformation of a space-like vector}

\begin{lem}\label{l.b.3}
Let $S$ be a space-like rectangular surface contained in a plane. Without loss of generality, we assume a parametrization of $S$ given in Definition \ref{d.ts.1}.
From Definition \ref{d.a.1}, we can define a basis $\{\hat{f}_a\}_{a=0}^3$.

Let $0 \neq \vec{v}$ be a space-like directional vector lying in the time-like plane $\vec{x} + S^\flat$, i.e. $\vec{v} = s\hat{f}_0 + \bar{s}\hat{f}_1 \in S^\flat$, whereby $|\bar{s}| > |s|$, and $\vec{x} \in S$.
There exists a finite sequence of translations and Lorentz transformations, which depends on $\vec{x}, \vec{v}$, such that $\vec{v}\in \vec{x} + S^\flat \mapsto -\vec{v} \in \vec{x} + S^\flat$.
\end{lem}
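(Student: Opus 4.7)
The plan is to produce the map $\vec{v}\mapsto-\vec{v}$ via a \emph{boost--rotate--unboost} conjugation entirely inside the restricted Lorentz group, so no translation is actually required (one may compose with any translation and its inverse, if desired). Since $\vec{v}=s\hat f_0+\bar s\hat f_1$ lies in the time-like plane $S^\flat$ and satisfies $|\bar s|>|s|$ by the space-like hypothesis, we can bring it into the ``rest frame'' by a single boost along the $\hat f_0$--$\hat f_1$ direction.

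First, define a boost $\Lambda_1\in L_+^\uparrow$ acting in the $\hat f_0$--$\hat f_1$ plane with rapidity $\phi$ chosen so that $\tanh\phi=-s/\bar s$; this is legitimate because $|s/\bar s|<1$. A direct calculation using $\cosh\phi=|\bar s|/\sqrt{\bar s^{\,2}-s^2}$ gives
\[
\Lambda_1\vec v \;=\; c\,\hat f_1,\qquad c:=\sgn(\bar s)\sqrt{\bar s^{\,2}-s^2}\neq 0,
\]
so the image has no $\hat f_0$ component. Second, apply the purely spatial rotation $R$ by $\pi$ radians in the space-like 2-plane spanned by $\{\hat f_1,\hat f_2\}$; this is a Lorentz transformation in $L_+^\uparrow$ because both vectors have Minkowski norm $+1$, and it acts as $\hat f_1\mapsto-\hat f_1$, $\hat f_2\mapsto-\hat f_2$ while fixing $\hat f_0$ and $\hat f_3$. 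In particular $R(c\,\hat f_1)=-c\,\hat f_1$. Third, apply the inverse boost $\Lambda_1^{-1}$; by linearity
\[
\Lambda_1^{-1}(-c\,\hat f_1)\;=\;-\Lambda_1^{-1}(c\,\hat f_1)\;=\;-\vec v.
\]

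Thus $\Lambda:=\Lambda_1^{-1}\,R\,\Lambda_1\in L_+^\uparrow$ is a composition of three restricted Lorentz transformations satisfying $\Lambda\vec v=-\vec v$, as required. If the statement is interpreted as an action on the affine plane $\vec x+S^\flat$ (so that one should send the \emph{point} $\vec x+\vec v$ to the point $\vec x-\vec v$), one simply sandwiches $\Lambda$ between a translation by $-\vec x$ and a translation by $+\vec x$, which is why the lemma formally allows both translations and Lorentz transformations.

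There is no real obstacle here: the only non-trivial ingredient is the existence of the hyperbolic angle $\phi$, which is guaranteed exactly by the space-like condition $|\bar s|>|s|$. The same construction manifestly fails if $\vec v$ is null or time-like, since then the required rapidity would have $|\tanh\phi|\geq 1$; this is consistent with the general fact that no restricted Lorentz transformation can send a future-directed time-like vector to its negative. The lemma therefore provides the geometric input needed in Remark~\ref{r.p.3} and in the CPT discussion, where $\vec\xi\mapsto-\vec\xi$ on space-like separations in the hyperplane spanned by $\{\hat P(\rho_n)\hat f_0+\hat H(\rho_n)\hat f_1,\hat f_2,\hat f_3\}$ is realised inside the restricted Poincar\'e group.
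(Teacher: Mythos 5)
Your proof is correct. The computation of the rapidity ($\tanh\phi=-s/\bar s$, valid precisely because $|\bar s|>|s|$), the identity $\Lambda_1\vec v=\sgn(\bar s)\sqrt{\bar s^2-s^2}\,\hat f_1$, and the fact that a $\pi$-rotation in the space-like $\hat f_1$--$\hat f_2$ plane lies in $L_+^\uparrow$ are all right, so $\Lambda=\Lambda_1^{-1}R\,\Lambda_1$ indeed satisfies $\Lambda\vec v=-\vec v$, and conjugating by the translations $T_{\pm\vec x}$ gives the affine statement $\vec x+\vec v\mapsto \vec x-\vec v$ that the lemma is used for. The underlying mechanism (boost to kill the $\hat f_0$ component, flip by a $\pi$-rotation, boost back) is the same as in the paper, but your organization differs in a worthwhile way: the paper first conjugates by $\bar\Lambda$ into the standard frame $\{e_a\}$ and then inserts the extra rotations $R_3, R_2, R_3^{-1}$ and a final $R_1$ so that the spatial offset $x^2e_2+x^3e_3$ of the base point is carried through and restored, using translations only to reduce to the case $0\in S$; the resulting map sends the point $\vec x+\vec v$ to $\vec x-\vec v$ but does not fix $\vec x$ nor obviously preserve the plane. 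Your version is shorter, fixes $\vec x$, and since both $\Lambda_1$ and $R$ preserve ${\rm span}\{\hat f_0,\hat f_1\}$, it makes explicit that the affine plane $\vec x+S^\flat$ is mapped onto itself, which matches the way the lemma is invoked in Remark \ref{r.p.3} and the CPT discussion; the price is that you rely on translations in an essential way, whereas the paper's chain shows the point map can be realized by Lorentz transformations alone once $S$ contains the origin. Your closing observation that the construction must fail for null or time-like $\vec v$ is also consistent with the paper's Remark \ref{r.p.3}.
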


\begin{proof}
First, assume that $S$ contains the zero vector. Without any loss of generality, we assume that $S$ is parametrized according to Definition \ref{d.ts.1}, with $\vec{a} = 0$. From this parametrization, we can define $\{\hat{f}_a\}_{a=0}^3$ as in Definition \ref{d.a.1}, and it is easy to see that we have a Lorentz transformation $\bar{\Lambda}$ which sends $\bar{\Lambda}: \hat{f}_a \mapsto e_a$, for $a=0, \cdots, 3$. Hence, $\bar{\Lambda}S \subseteq S_0$.

Then, $\bar{\Lambda}\vec{x} = \sum_{a=0}^3 x^a e_a = x^2 e_2 + x^3 e_3$.
Let $R_i$ be a rotation about $x^i$-axis, and $\Lambda$ be boost in the $e_1$ direction, that sends
\beq \Lambda: se_0 + \bar{s}e_1 \mapsto \left[\sgn(\bar{s}) \sqrt{\bar{s}^2 - s^2}\right] e_1, \nonumber \eeq whereby $\sgn(\bar{s})$ is the sign of $\bar{s}$.

Then, we have
\begin{align*}
\vec{x}& + \vec{v} = s\hat{f}_0 + \bar{s}\hat{f}_1 + x^2\hat{f}_2 +x^3\hat{f}_3  \\
&\longrightarrow_{\bar{\Lambda}} (s, \bar{s}, x^2, x^3) \equiv se_0 + \bar{s}e_1 + x^2 e_2 + x^3 e_3 \longrightarrow_{\Lambda} (0, \sgn(\bar{s}) \sqrt{\bar{s}^2 - s^2}, x^2, x^3) \\
&\longrightarrow_{R_3} (0, \sgn(\bar{s}) \sqrt{x^{2,2} + \bar{s}^2 - s^2}, 0, x^3) \longrightarrow_{R_2} -(0, \sgn(\bar{s}) \sqrt{x^{2,2} + \bar{s}^2 - s^2}, 0, x^3) \\
&\longrightarrow_{R_3^{-1}} -(0, \sgn(\bar{s}) \sqrt{\bar{s}^2 - s^2}, x^2, x^3) \longrightarrow_{\Lambda^{-1}} -(s, \bar{s}, x^2, x^3) \\
&\longrightarrow_{R_1} (-s, -\bar{s}, x^2, x^3) \longrightarrow_{\bar{\Lambda}^{-1}} -s\hat{f}_0 - \bar{s}\hat{f}_1 + x^2\hat{f}_2 +x^3\hat{f}_3.
\end{align*}
Hence, we have \beq \tilde{\Lambda}(\vec{x}, \vec{v}) := \bar{\Lambda}^{-1}R_1 \Lambda^{-1} R_3^{-1} R_2 R_3 \Lambda\bar{\Lambda}, \nonumber \eeq which sends $\vec{v} \in  \vec{x} + S^\flat \mapsto -\vec{v} \in \vec{x} + S^\flat$. Note that all the rotations depend on $\vec{x}$ and $\vec{v}$, except $R_1$ and $R_2$, which rotate by angle $\pi$ radians.

Suppose $\tilde{S}$ does not contain the origin. Then, we consider for $\vec{a} \in \tilde{S}$, $S = U(-\vec{a},1)\tilde{S} $, which contains the origin. For any space-like vector $\vec{v} \in \vec{x}-\vec{a} + S^\flat$, we can map $\vec{v} \mapsto -\vec{v}$, using the above transformations. Finally, apply $U(\vec{a}, 1)$ to translate back to $\tilde{S}$.
\end{proof}

\begin{rem}\label{r.p.3}
On the $x^2-x^3$ plane or $S_0$, any two points on it are space-like separated. Clearly, there is a Lorentz transformation (rotation) which transforms $\vec{p} \in S_0 \mapsto -\vec{p} \in S_0$. In general, we see that there is a sequence of translations and Lorentz transformations which transforms $\vec{p} \in S \mapsto -\vec{p} \in S$, for some space-like plane $S$.

On a time-like plane $S$, we do not have a rotation in the Lorentz group that maps a time-like vector $\vec{p} \in S \mapsto -\vec{p} \in S$. It is not possible to do time and space inversion simultaneously for a time-like vector using Lorentz transformations.
\end{rem}

\begin{lem}\label{l.b.4}
Refer to Definition \ref{d.a.1}. Let $S^\flat$ be a time-like plane spanned by $\{\hat{f}_0, \hat{f}_1\}$. Fix a time-like vector $u \in S^\flat$. For any time-like or space-like vector $v \in S^\flat$, there exist a $\phi$, $\theta$ depending on $u$ and $v$ respectively, such that
\begin{align*}
v \cdot u =& \sinh(\phi-\theta) = \left(
  \begin{array}{c}
    \cosh(\phi-\theta) \\
    \sinh(\phi-\theta) \\
  \end{array}
\right) \cdot
\left(
  \begin{array}{c}
    0 \\
    1 \\
  \end{array}
\right),
\end{align*}
when $v$ is space-like; and
\begin{align*}
v \cdot u =& -\cosh(\phi-\theta) = \left(
  \begin{array}{c}
    \cosh(\phi-\theta) \\
    \sinh(\phi-\theta) \\
  \end{array}
\right) \cdot
\left(
  \begin{array}{c}
    1 \\
    0 \\
  \end{array}
\right),
\end{align*}
when $v$ is time-like.
\end{lem}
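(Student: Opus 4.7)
The plan is to exploit the fact that $S^\flat$ is a 2-dimensional Minkowski plane: the restriction of the bilinear form to $S^\flat$ is given by $\hat{f}_0 \cdot \hat{f}_0 = -1$, $\hat{f}_1 \cdot \hat{f}_1 = 1$, $\hat{f}_0 \cdot \hat{f}_1 = 0$. Writing $u = u^0 \hat{f}_0 + u^1 \hat{f}_1$, the hypothesis that $u$ is time-like means $-u^{0,2}+u^{1,2} < 0$. Assuming (as the stated identities implicitly require) that $u$ is a unit time-like vector, the point $(u^0, u^1)$ lies on one branch of the hyperbola $-x^2 + y^2 = -1$, and admits the canonical hyperbolic parametrization $u = \cosh\phi\, \hat{f}_0 + \sinh\phi\, \hat{f}_1$ for a unique $\phi \in \bR$ depending on $u$ (with the branch chosen by fixing, say, the future-pointing orientation).

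Next I would split into the two cases for $v$. If $v$ is unit and space-like, then $(v^0, v^1)$ lies on a branch of the hyperbola $-x^2 + y^2 = 1$, and may be written as $v = \sinh\theta\, \hat{f}_0 + \cosh\theta\, \hat{f}_1$ for a unique $\theta$ depending on $v$. A direct calculation using the hyperbolic subtraction formula gives
\[
v \cdot u \;=\; -\cosh\phi \sinh\theta + \sinh\phi \cosh\theta \;=\; \sinh(\phi - \theta),
\]
which matches the first claim. If instead $v$ is unit and time-like, parametrize $v = \cosh\theta\, \hat{f}_0 + \sinh\theta\, \hat{f}_1$ on the appropriate branch; then
\[
v \cdot u \;=\; -\cosh\phi\cosh\theta + \sinh\phi\sinh\theta \;=\; -\cosh(\phi - \theta),
\]
giving the second identity.

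The only subtlety is the initial branch/orientation choice in parametrizing each hyperbola: different choices shift $\phi$ or $\theta$ by a sign flip or translation but do not affect the structural $\sinh$ or $-\cosh$ of a difference, since the hyperbolic addition formulas are the only content of the computation. Overall this lemma is simply the 2D Minkowski analogue of the Euclidean identity that the dot product of two unit vectors in $\bR^2$ is $\cos(\phi-\theta)$, and there is no substantive obstacle beyond this bookkeeping. The payoff is immediate for Remark \ref{r.ex.4}: $\sinh(\phi-\theta)$ vanishes only on the discrete locus $\phi = \theta$, while $\cosh(\phi-\theta)$ never vanishes, which is exactly the obstruction to local commutativity when $\hat{H}(\rho_n)\hat{f}_0 + \hat{P}(\rho_n)\hat{f}_1$ fails to be time-like.
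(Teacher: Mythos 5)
Your proof is correct and is essentially the paper's own argument: the paper writes the unit time-like and space-like vectors as $\Lambda(\phi)(1,0)^T$ and $\Lambda(\theta)(0,1)^T$ with $\Lambda(\theta)$ the standard boost matrix, which is exactly your hyperbolic parametrization, and then applies the same hyperbolic addition formulas. Your explicit remark that the identities implicitly require unit vectors (and a branch choice) is a point the paper leaves tacit, but it does not change the argument.
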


\begin{proof}
Any vector on $S^\flat$ can be represented as a 2-vector $(a,b)^T$, which are the coordinates with respect to the basis $\{\hat{f}_0, \hat{f}_1\}$. Define \beq \Lambda(\theta) =
\left(
  \begin{array}{cc}
    \cosh \theta &\ \sinh \theta \\
    \sinh \theta &\ \cosh \theta \\
  \end{array}
\right). \label{e.a.3} \eeq For a time-like vector, we will write it as $\Lambda(\phi)(1,0)^T$.
For any space-like vector, we will write it as $\Lambda(\theta)(0,1)^T$. Then, we have
\begin{align*}
\Lambda(\theta)
\left(
  \begin{array}{c}
    0 \\
    1 \\
  \end{array}
\right) \cdot
\Lambda(\phi)
\left(
  \begin{array}{c}
    1 \\
    0 \\
  \end{array}
\right) =&
\left(
  \begin{array}{c}
    \sinh \theta \\
    \cosh \theta \\
  \end{array}
\right)\cdot
\left(
  \begin{array}{c}
    \cosh \phi \\
    \sinh \phi \\
  \end{array}
\right) \\
=& -\sinh\theta \cosh \phi + \cosh\theta \sinh \phi = \sinh(\phi-\theta) \\
=&
\left(
  \begin{array}{c}
    \cosh(\phi-\theta) \\
    \sinh(\phi-\theta) \\
  \end{array}
\right) \cdot
\left(
  \begin{array}{c}
    0 \\
    1 \\
  \end{array}
\right).
\end{align*}
A similar calculation will show that \beq
\Lambda(\theta)
\left(
  \begin{array}{c}
    1 \\
    0 \\
  \end{array}
\right) \cdot
\Lambda(\phi)
\left(
  \begin{array}{c}
    1 \\
    0 \\
  \end{array}
\right) = - \cosh(\theta-\phi) =
\left(
  \begin{array}{c}
    \cosh(\phi-\theta) \\
    \sinh(\phi-\theta) \\
  \end{array}
\right) \cdot
\left(
  \begin{array}{c}
    1 \\
    0 \\
  \end{array}
\right)
. \nonumber \eeq
\end{proof}

From the calculations, when $\theta = \phi$, we see that $e^{-iu \cdot v} = \cos(0)=1$, which is real. This happens when $v$ is space-like, $u$ is time-like. However, if both are time-like or both are space-like, then we have $e^{-iu \cdot v} = \cos(u \cdot v) - i\sin(u \cdot v) \neq 1$, unless $u\cdot v = n\pi$ for $n \in \mathbb{Z}$.


\end{document}